\documentclass[american,aps,prx,reprint,superscriptaddress,nofootinbib,longbibliography,floatfix]{revtex4-2}

\usepackage{amsmath,amssymb}

\usepackage{amsthm}

\usepackage{thmtools}
\declaretheorem[name=Theorem]{theorem}

\usepackage{amsfonts}

\usepackage{color}
\usepackage{graphicx}
\usepackage[american]{babel}
\usepackage[utf8]{inputenc}
\usepackage{times}
\usepackage{braket} 				%Dirac-Notation

\usepackage{subcaption}
\usepackage{verbatim}
\usepackage{quantikz}
\usepackage{enumitem}
\usepackage{float}
\usepackage{bbold}
\usepackage{mathrsfs}  
\usepackage{titlesec}
\newcommand{\ketbra}[2]{|#1\rangle\!\langle#2|}
\newcommand{\shortminus}{\mathbin{\scalebox{0.7}[1.0]{$-$}}}

\definecolor{mygrey}{gray}{0.35}
\definecolor{myblue}{rgb}{0.2,0.2,0.8}
\definecolor{myzard}{cmyk}{0,0,0.05,0}
\definecolor{mywhite}{rgb}{1,1,1}
\definecolor{myred}{rgb}{0.9,0.1,0.}
\definecolor{goldenyellow}{rgb}{1.0, 0.87, 0.0}
\definecolor{cornellred}{rgb}{0.7, 0.11, 0.11}
\definecolor{textgreen}{RGB}{25,160,34}
\definecolor{dartmouthgreen}{rgb}{0.05, 0.5, 0.06}
\newcommand{\mR}{\mathcal{R}}
\DeclareMathOperator*{\argmax}{arg\,max}

\DeclareMathOperator{\Prob}{P}
\DeclareMathOperator{\mOO}{mO}
\usepackage[colorlinks=true,citecolor=myblue,linkcolor=myblue,urlcolor=myblue]{hyperref}

\newtheoremstyle{customStyle1}  % name of the style to be used
{0pt}       % measure of space to leave above the theorem. E.g.: 3pt
{0pt}       % measure of space to leave below the theorem. E.g.: 3pt
{\normalfont}   % name of font to use in the body of the theorem
{\parindent}        % measure of space to indent
{\em}  % name of head font
{. --}   	 % punctuation between head and body
{.5em}       % space after theorem head
{\thmname{#1}\thmnumber{ #2}\thmnote{ (#3)}}  % Manually specify head

\newtheorem{proposition}[theorem]{Proposition}

\newtheorem{defin}[theorem]{Definition}

\newtheorem{lem}[theorem]{Lemma}
\newtheorem{lemma}[theorem]{Lemma}
\newtheorem{cor}[theorem]{Corollary}

\usepackage{cancel}
\usepackage{soul}

\newcommand{\id}{{\mathbb{1}}}

\DeclareMathOperator{\CPTP}{CPTP}

\DeclareMathOperator{\idChan}{id}
\DeclareMathOperator{\idChannel}{id}

\DeclareMathOperator{\F}{F}

\newcommand{\cldots}{\mathord{.\kern-0.08em.\kern-0.08em.}}

\let\O\relax 
\DeclareMathOperator{\O}{O}
\DeclareMathOperator{\Pure}{Pure}

\newcommand{\norm}[1]{\left\lVert#1\right\rVert}
\DeclareMathOperator{\Comb}{Comb}

\DeclareMathOperator{\Herm}{Herm}
\DeclareMathOperator{\trace}{Tr}
\newcommand{\Tr}[1]{\trace\left[#1\right]}
\newcommand{\partTr}[2]{\trace_{#1}\left[#2\right]}
\newcommand{\I}{\text{I}}

\DeclareMathOperator{\Dist}{Dist}
\DeclareMathOperator{\Cost}{Cost}
\DeclareMathOperator{\T}{T}
\DeclareMathOperator{\supp}{supp}
\DeclarePairedDelimiter\floor{\lfloor}{\rfloor}
\DeclarePairedDelimiter{\ceil}{\lceil}{\rceil}

\DeclareMathOperator{\cMIO}{cMIO}
\DeclareMathOperator{\cDI}{cDI}
\DeclareMathOperator{\cDIO}{cDIO}

\DeclareMathOperator{\mMIO}{mMIO}
\DeclareMathOperator{\mDI}{mDI}

\DeclareMathOperator{\mFO}{mO}
\DeclareMathOperator{\cO}{cO}
\DeclareMathOperator{\nO}{nO}
\DeclareMathOperator{\mO_1}{mO_1}
\DeclareMathOperator{\cfO}{cfO}
\DeclareMathOperator{\nMIO}{nMIO}
\DeclareMathOperator{\nDI}{nDI}

\newcommand{\Choi}{Choi-Jamiołkowski}

\DeclareMathOperator{\ChoiF}{Choi}
\DeclareMathOperator{\MIO}{MIO}
\DeclareMathOperator{\DIO}{DIO}
\DeclareMathOperator{\DI}{DI}
\newcommand{\In}{\text{in}}
\newcommand{\Out}{\text{out}}
\newcommand{\all}{\text{all}}

\def\mE{\mathcal{E}}

\def\mF{\mathcal{F}}
\def\mN{\mathcal{N}}
\def\mM{\mathcal{M}}

\def\mL{\mathcal{L}}
\def\mP{\mathcal{P}}
\def\mS{\mathcal{S}}
\def\mD{\mathcal{D}}
\def\mT{\mathcal{T}}
\def\mV{\mathcal{V}}
\def\mG{\mathcal{G}}
\def\mQ{\mathcal{Q}}

\def\mU{\mathcal{U}}

\def\mK{\mathcal{K}}

\DeclareMathOperator{\D}{D}

\DeclareMathOperator{\aff}{aff}

\newcommand{\TITEL}{The Structure of Higher-Order Quantum Resources with an Application to Coherence}

\begin{document}

	\author{Felix Ahnefeld}
	\email{felix.ahnefeld@uni-ulm.de}
	\affiliation{Institute of Theoretical Physics, Ulm University, Albert-Einstein-Allee 11, D-89081 Ulm, Germany}
	\author{Thomas Theurer}
	\email{tth@math.ku.dk}
	\affiliation{Department of Mathematical Sciences, University of Copenhagen, Universitetsparken 5, 2100, Denmark}
	\author{Martin B. Plenio}
	\email{martin.plenio@uni-ulm.de}
	\affiliation{Institute of Theoretical Physics, Ulm University, Albert-Einstein-Allee 11, D-89081 Ulm, Germany}
    \affiliation{Center for Integrated Quantum Science and Technology (IQST), D-89081 Ulm, Germany }

    \title{ \TITEL}
    
	\date{\today}
    \begin{abstract}
        We develop a general framework for higher-order quantum resources that treats the resourcefulness of quantum states, channels, and supermaps (also called processes) in a unified manner. By generalizing well-established concepts from resource theories of states and channels, we introduce and study supermap resource monotones and supermap distillation and dilution. As free supermaps, we consider free networks (supermaps that can be implemented with free channels), compatible supermaps (sets of supermaps closed under arbitrary composition), and maximally free superchannels. Specializing to approximate channel-to-channel conversions, we show that under mild assumptions, the conversion error with respect to the diamond distance and the worst-case infidelity coincide. We then apply the general framework to resource theories of dynamical coherence, where we establish the existence of largest sets of compatible supermaps. By characterizing these sets with semidefinite constraints, we show that they are strictly larger than the corresponding free networks but do not contain all maximally free superchannels.  We finish by investigating the operational differences of these sets in one-shot channel distillation and dilution, providing solutions for both tasks in the parallel setting and for dilution in the adaptive setting.
    \end{abstract}
	\maketitle

\section{INTRODUCTION}
Entanglement, coherence, asymmetry, athermality, and non-stabilizerness~\cite{Bennett1996,Horodecki2009,Aberg2006,Baumgratz2014,Gour2008,Brandao2013,Bravyi2005,Veitch2014} are examples of quantum resources that provide advantages in communication~\cite{Bennett1992,Bennett1993,LipkaBartosik2020,Korzekwa2022}, computation~\cite{Bruss2011,Howard2014,Shi2017,Ahnefeld2022,Naseri2022,Ahnefeld2025},  sensing~\cite{Giovannetti2006,Pezze2009,Tan2021,Lecamwasam2024,Ahnefeld2025,Chen2026}, and other applications. Quantum resource theories provide a rigorous framework for describing, quantifying, and manipulating such resources. Historically, state-based, or \textit{static}, resource theories~\cite{Chitambar2019, Gour2025}, which focus on the resources contained in quantum states, were developed first. However, the primary goal of quantum technologies is to perform tasks, e.g., to transmit information or to solve a computational problem. Ultimately, this is all achieved by quantum operations, which motivated the development of channel-based, or \textit{dynamical}, resource theories, in which quantum channels themselves are treated as resources that can be quantified and manipulated~\cite{Theurer2019,Liu2019arxiv,Liu2020,Liu2019,Gour2019,Regula2021b,Regula2021c,Kim2021,Gour2020,Saxena2020,Gour2020b,Gour2021b,Luo2025,Berk2021}. This approach includes and generalizes static resource theories in the sense that not all dynamical resource theories can be reduced to static resource theories~\cite{Theurer2019}.

Formulating such theories requires specifying how the available channel-resources may be processed. Initially, this was approached by defining free superchannels, i.e., maps from quantum channels to quantum channels. An operational choice is to call the superchannels that can be implemented by a free pre- and postprocessing free~\cite{Theurer2019, Liu2019arxiv, Liu2020}, while the most permissive choice is the so-called maximally free superchannels~\cite{Regula2021b, Regula2021c, Kim2021}, which guarantee the minimal requirement that free channels are mapped to free channels. Superchannels that also guarantee this when applied to subsystems are called completely resource non-generating~\cite{Gour2020,Gour2020b,Saxena2020,Gour2021b}, and all three choices allow building resource theories well-suited to describe, for example, channel-to-channel conversions. However, multiple channel-resources can be used not only in parallel but also sequentially, with intermediate processing and quantum memory, as formalized by so-called quantum supermaps or processes~\cite{Chiribella2008,Chiribella_2008,Chiribella2009, Taranto2025}. When considering multi-copy settings, it is thus insufficient to only define free superchannels. Instead, one must define free supermaps -- and this is exactly what we do in this work. As we will show, this not only allows us to consider multi-copy channel settings, but also to build higher-order quantum resource theories which treat general quantum objects, including states, channels, and supermaps in a unified manner.

\begin{figure*}[ht]
    \centering
    \includegraphics[width=1\linewidth]{ 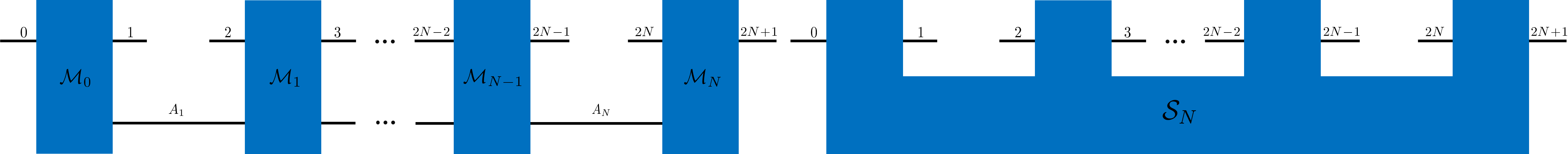}
    \caption{A quantum network comprised of channels $\mathcal{M}_0,\ldots, \mathcal{M}_N$ as depicted on the left side is a specific implementation of a supermap $\mathcal{S}_N$ shown on the right.}
    \label{fig:NetworkvsComb}
\end{figure*}

As a core concept, we introduce \textit{compatible} sets of free supermaps, i.e., sets of supermaps that are closed under arbitrary compositions. This encapsulates the golden rule of resource theories that resources cannot be created for free. If a set of free channels is fixed and satisfies a few natural assumptions, an operationally well-motivated set of compatible free supermaps is again the set of supermaps implementable by free channels, which we call free networks. However, we stress again that our framework allows us to handle general resource theories that are not reducible to channels. Generalizing notions from dynamic and static resource theories, we then define resource monotones and one-shot conversions of supermaps, including distillation and dilution.

After introducing the general framework, we return to approximate channel-to-channel conversion under maximally free superchannels and extend the error bounds of Ref.~\cite{Regula2021b} from worst-case fidelity to diamond distance. For isometric target channels satisfying suitable resource-theoretic conditions, these bounds become tight and establish an exact equality between the optimal diamond distance and the optimal worst-case infidelity. Moreover, they establish bounds on the conversion error under the more restrictive sets of free networks and compatible supermaps. 

As an application, we study general resource theories derived from dynamical coherence theories. For creation-incoherent~\cite{Aberg2006,Diaz2018}, detection-incoherent~\cite{Theurer2019}, and dephasing-covariant channels~\cite{Chitambar2016b,Marvian2016,Chitambar2017,Meznaric2013}, we prove the existence of largest sets of compatible supermaps. By characterizing these sets with semidefinite constraints, we show that they are strictly larger than the corresponding free networks but do not contain all maximally free superchannels. We conclude by investigating the operational differences between these sets in one-shot channel distillation and dilution, providing solutions for both tasks in the parallel setting and for dilution in the adaptive setting.

\section{NOTATION AND PRELIMINARIES}\label{sec:Notation}
In this work, we restrict ourselves to finite-dimensional quantum systems. If we deal with few systems, we label them with capital Roman letters such as $A,B$, and when we need to enumerate many systems for convenience, we label them by $j$ instead of $A_j$. Throughout this work, we use both sub- and superscripts to denote the systems an operator acts on. If we require matrix exponents, we explicitly state this. Quantum states will be denoted by small Greek letters, and the set of all quantum states is denoted by $\D$. Quantum channels, also called quantum operations, i.e., linear, completely positive, and trace-preserving maps that transform quantum states, will be denoted by calligraphic Latin letters, with the exception of the identity channel, which is denoted by $\idChannel$. If required, we explicitly denote the systems on which a quantum channel acts as $\smash{\mathcal{M}^{A\to B}}$ to emphasize that $\mathcal{M}$ is a channel from system $A$ to $B$, i.e., from the set of linear operators $\mathcal{L}(\mathcal{H}_A)$ to $\mathcal{L}(\mathcal{H}_B)$.  The set of all quantum channels from $A$ to $B$ is denoted as $\CPTP(A\to B)$. Quantum supermaps, i.e., linear maps that act on (up to) $N$ quantum channels, are denoted by calligraphic letters too, e.g., $\mathcal{S}_N$. Throughout this work, we extensively use the {\Choi} representation of quantum channels and supermaps. The (unnormalized) Choi state~\cite{Choi1975} of a linear map (and particularly a quantum channel) $\mathcal{N}^{A\to B}$  is defined as
\begin{align}\label{eq:ChoiState}
    J_\mathcal{N}^{AB}&:= d_A\left(\idChannel^A \otimes\, \mathcal{N}^{ \Tilde{A} \to B} \right) \ketbra{\phi}{\phi}_{A,\tilde{A}} \nonumber\\
    &= \sum_{n,m} \ketbra{n}{m}_A \otimes \mathcal{N}^{ \Tilde{A}\to B}\left(\ketbra{n}{m}_{\Tilde{A}} \right),
\end{align}
where $\tilde{A}\cong A$ is a copy of the input system $A$ of dimension $d_A$, and $\phi$ denotes a maximally entangled state. The action of such a linear map in the Choi representation can be expressed as
\begin{align}
    \mathcal{N}^{A\to B}(X_A):=\partTr{A}{(X_A^T\otimes \id_B)J_\mathcal{N}^{AB}},
\end{align}
where $T$ denotes the transpose in the same basis as in Eq.~\eqref{eq:ChoiState}. This provides a one-to-one correspondence between linear maps and linear operators known as the {\Choi} isomorphism~\cite{Jamiolkowski1972, Choi1975}. Moreover, $\mathcal{N}$ is completely positive iff $J_\mathcal{N} \geq 0$, and $\mathcal{N}^{A\to B}$ is trace-preserving iff $\partTr{B}{J_\mathcal{N}^{AB}}=\id_A$.
Whenever we speak of a Choi ``state'' of a channel, we refer to an unnormalized state as in Eq.~\eqref{eq:ChoiState}. If clear from the context, we use the Choi state of a channel $J_\mN$ and the channel $\mN$ itself interchangeably. We also omit system labels for operators, channels, and superchannels if there is no risk of confusion.

\subsection{Quantum supermaps}\label{sec:Supermaps}
Next, primarily following Ref.~\cite{Chiribella2009}, we briefly review how the Choi representation of quantum channels can be extended to quantum supermaps. Throughout this work, we restrict ourselves to quantum processes with a \textit{definite causal order}; for more details on quantum processes with indefinite causal order, see, e.g., Ref.~\cite{Taranto2025}. 

The left-hand side of Fig.~\ref{fig:NetworkvsComb} depicts a quantum network composed of quantum channels $\mathcal{M}_0,\ldots, \mathcal{M}_N$. For $0\leq j\leq N$, the systems $2j$ are accessible inputs to the network, the systems $2j+1$ are accessible outputs, and the systems $A_{j+1}$ are inaccessible auxiliary systems. Importantly, different networks can lead to the same effective dynamics on the accessible input and output systems: On the auxiliary system $A_j$ between two successive channels $\mathcal{M}_j$ and $\mathcal{M}_{j+1}$, one can, for example, insert any unitary followed by its inverse and absorb the unitaries into new channels $\mathcal{M}_j^\prime$ and $\mathcal{M}_{j+1}^\prime$ without changing the accessible dynamics. 
All networks leading to the same effective, accessible dynamics are described by a single supermap $\mathcal{S}_N$. To depict it, one uses the serrated object on the right-hand side of Fig.~\ref{fig:NetworkvsComb}, where the inaccessible auxiliary systems are hidden. A supermap $\mS_N$ of order $N$ (with definite causal order) acting on (up to ) $N$ quantum channels $\mathcal{N}_1,\ldots, \mathcal{N}_N$ can thus always be implemented by a quantum network consisting of a set of channels $\mathcal{M}_0,\ldots, \mathcal{M}_N$ in the sense that~\cite{Chiribella2009}
\begin{align}
    &\mathcal{S}_N[\mathcal{N}_1,\ldots,\mathcal{N}_N]\\
    &:= \mathcal{M}_{N}(\idChannel \otimes \mathcal{N}_N)\mathcal{M}_{N\!-\!1}  \ldots 
\mathcal{M}_1 (\idChannel \otimes \mathcal{N}_1)\mathcal{M}_0, \nonumber
\end{align}
where we suppressed all system indices. To the supermap $S_N$ implemented by this network, one can now assign an analogue of the Choi state, which is called a quantum comb: For each of the channels $\mathcal{M}_j$, let $J_{\mathcal{M}_j}$ denote its Choi state as defined in Eq.~\eqref{eq:ChoiState}.  Using the system labels from Fig.~\ref{fig:NetworkvsComb}, recall that this is an unnormalized state on the composite system $2j, 2j+1, A_j, A_{j+1}$. The comb $J_{\mathcal{S}_N}$ associated with the network is then defined as
\begin{align}\label{eq:NetworkLinkProduct}
J_{\mathcal{S}_N}^{0,\ldots,2N+1}
&:=\trace_{A_1,\ldots,A_N}
\Bigl[
\bigl(\id \otimes (J_{\mathcal{M}_0}
)^{T_{A_1}}\bigr)
\nonumber\\
&\qquad
\bigl(\id \otimes (J_{\mathcal{M}_1})^{T_{A_2}}\bigr)
\cdots \bigl(\id \otimes J_{\mathcal{M}_N}\bigr)\Bigr],
\end{align}
where $T_X$ denotes the partial transpose on system $X$. 
Importantly, two networks implement the same supermap iff their combs are the same~\cite{Chiribella2008,Chiribella_2008,Chiribella2009}, which justifies writing $ J_{\mathcal{S}_N}$ in Eq.~\eqref{eq:NetworkLinkProduct}. An operator $J_{\mS_N}$ that is a comb, as in Eq.~\eqref{eq:NetworkLinkProduct}, satisfies certain constraints that arise from being implemented by a sequence of quantum channels; in particular, for a $J_{\mathcal{S}_N} \in \mathcal{L} ( \otimes_{j=0}^{2N+1} \mathcal{H}_j )$, there exists a sequence of positive semidefinite operators $J_{\mathcal{S}_N}^{(j)} \in \mathcal{L} ( \otimes_{k=0}^{2j+1} \mathcal{H}_k) $ with $0\leq j \leq N$  and $J_{\mathcal{S}_N}=J_{\mathcal{S}_N}^{(N)}$ satisfying
\begin{subequations}\label{eq:TPcondition}
    \begin{align}
    & \partTr{2j+1}{J_{\mathcal{S}_N}^{(j)}}= J^{(j-1)}_{\mathcal{S}_N} \otimes \id_{2j} \quad \forall\, 1\leq j \leq N,  \\
    & \partTr{1}{J_{\mathcal{S}_N}^{(0)}}=\id_0.
\end{align}
\end{subequations}
In essence, this formalizes that discarding the output system of $\mM_N$ implies that its accessible input system is irrelevant - one could thus directly trace out the (accessible and inaccessible) input of $\mM_N$ and consider a new network with one fewer channel - and then do this recursively. As shown in Refs.~\cite{Chiribella2008,Chiribella2009}, these conditions are not only \textit{necessary} for $J_{\mS_N}$ to be the comb representation of a deterministic supermap $\mathcal{S}_N$ with definite causal order, but also \textit{sufficient}. Therefore, Eqs.~\eqref{eq:TPcondition}, together with the positive semidefinite constraint, characterize all deterministic supermaps with definite causal order. These conditions are semidefinite constraints and describe the action of a supermap via a single object rather than a product of (Choi representations of) channels. As intuitive from the network representation, from here on, 
we will use the convention that for a given supermap $\mS_N$, the system labels denote a specific quantum system at a specific (time) step. The causal order is thus fixed by the system labels, and each label can only appear once.
We denote the set of all quantum combs of order $N$ (with arbitrary finite-dimensional input and output systems) as
\begin{align}\label{eq:CombSet}
\Comb_N\!:=\! \Bigl\{&\!J_{\mS_N}\!:\! J_{\mS_N}\!=\!J_{\mS_N}^{(N)}\geq 0, \partTr{1}{J_{\mS_N}^{(0)}}\!=\!\id_0,\\ 
&\partTr{2j+1}{J_{\mS_N}^{(j)}}\!=\!J_{\mS_N}^{(j-1)}\!\otimes\!\id_{2j}\ \forall j: 1\leq j\leq N\! \Bigr\},\nonumber
\end{align} 
where we use the convention that for every $1\le j\le N$, not \textit{both} systems $2j$ and $2j-1$ are trivial, i.e., $\Comb_N$ contains combs with precisely $N$ ``input slots". The corresponding sets of supermaps are defined as
\begin{align}
    \mathfrak{S}_N := \left\{ \mS_N: J_{\mS_N} \in \Comb_N \right\},
\end{align}
and $\mathfrak{S}:=\bigcup_{N\in \mathbb{N}_0} \mathfrak{S}_N$, where $\mathfrak{S}_0 :=\CPTP$ is the set of quantum channels and $\mathfrak{S}_1$ the set of superchannels. With this choice of convention, we have that the sets of supermaps of distinct orders are disjoint, i.e.,  $\mathfrak{S}_N \cap \mathfrak{S}_M =\emptyset $ if $N\neq M$.

\begin{figure}[ht]
    \centering

    \begin{subfigure}{\columnwidth}
        \centering
        \scalebox{0.6}{\includegraphics[width=\linewidth]{ 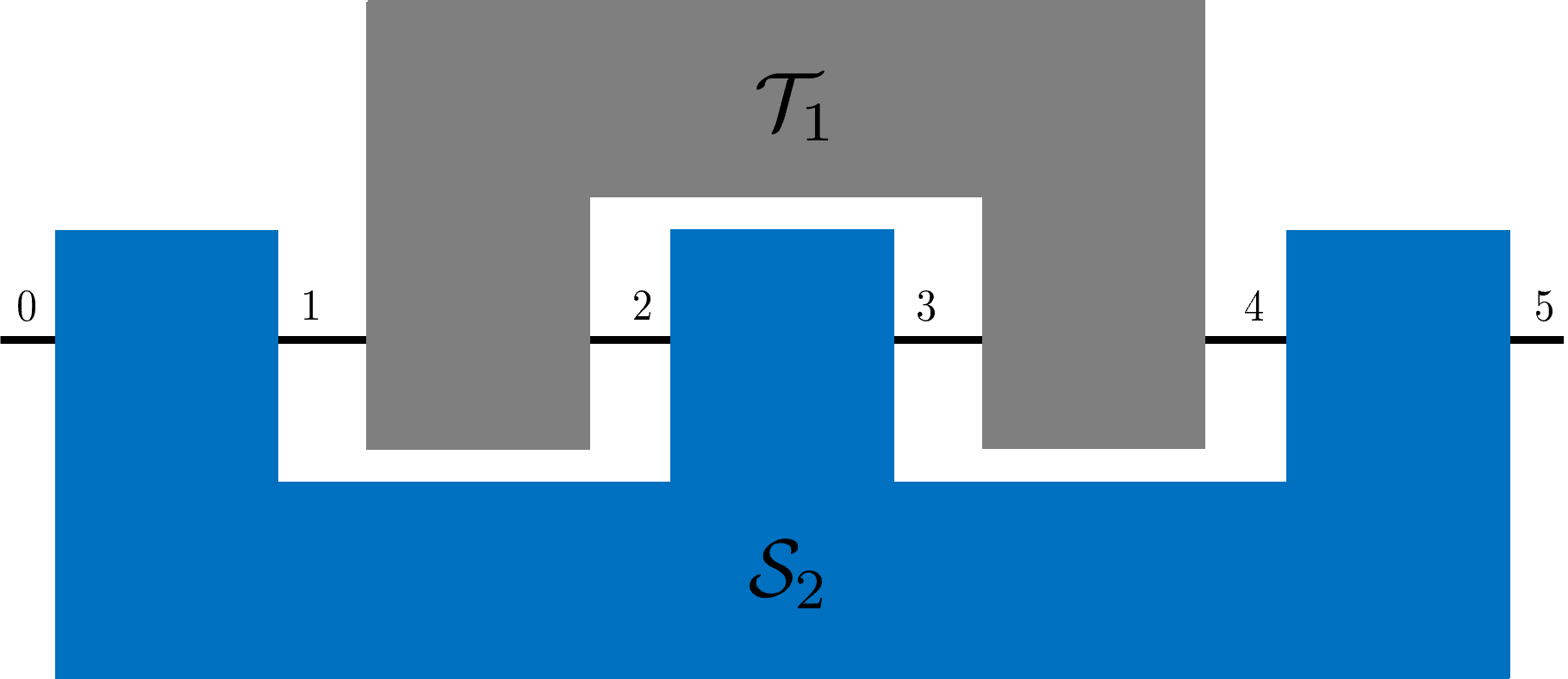}}
        \caption{Composing the supermaps $\mS_2$ and $\mT_1$ results in a channel.}
        \label{fig:first}
    \end{subfigure}
    
    \vspace{0.5em}

        \begin{subfigure}{\columnwidth}
        \centering
        \includegraphics[width=\linewidth]{ 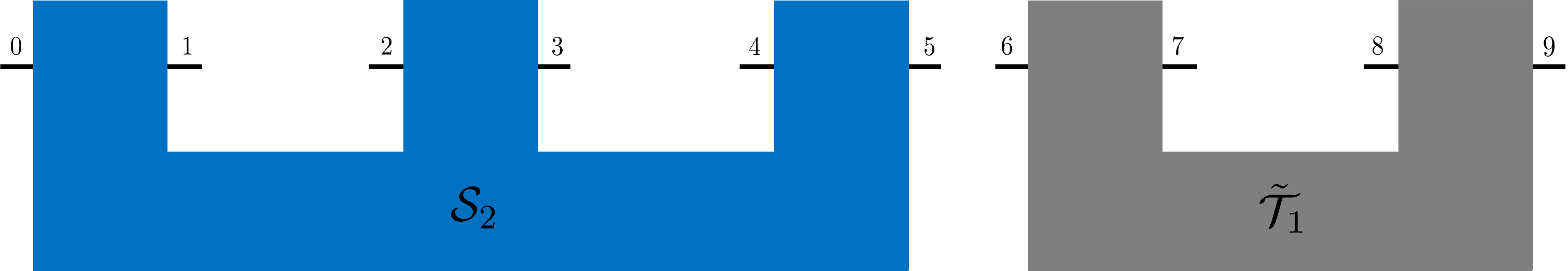}
        \caption{Composing the supermaps $\mS_2$ and $\tilde{\mT}_1$ results in a supermap.}
        \label{fig:second}
    \end{subfigure}

        \vspace{0.5em}

    \begin{subfigure}{\columnwidth}
        \centering
        \scalebox{0.72}{\includegraphics[width=\linewidth]{ 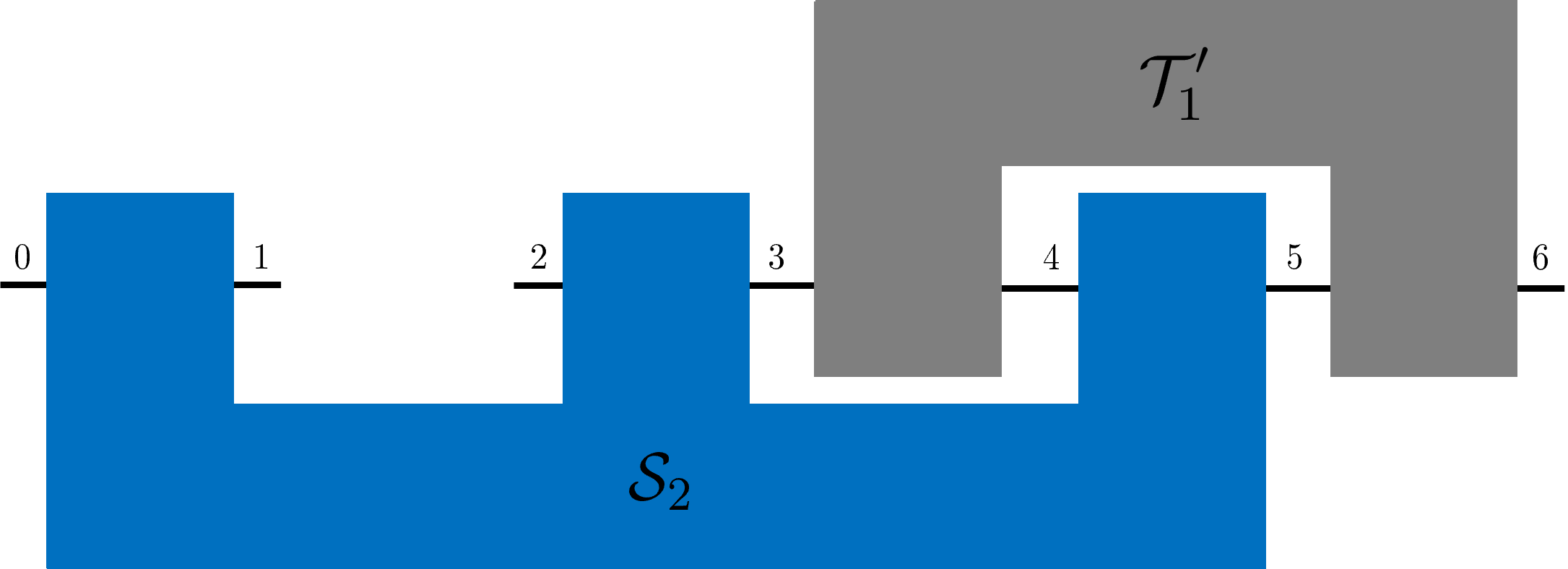}}
        \caption{Composing the supermaps $\mS_2$ and $\mT_1^\prime$ results in a superchannel.}
        \label{fig:third}
    \end{subfigure}

    \caption{Composing quantum supermaps. A quantum supermap $\mS_2$ is composed with $\mT_1$, $\tilde{\mT}_1$, and $\mT_1'$ by connecting all input and output systems that appear in both supermaps. It is always assumed that this is consistent; see the main text for details.}
    \label{fig:Composition}
\end{figure}

Next, consider two supermaps $\mS_N, \mT_M \in \mathfrak{S}$, let $N_{\In}$ and $N_{\Out}$ denote the set of all input and output systems of $\mS_N$, respectively, and let $N_{\all}=N_{\In} \cup N_{\Out}$. Analogously, define the sets $M_{\In}, M_{\Out}$ and $M_{\all}$ for the supermap $\mT_M$. If  
\begin{enumerate}[label=\roman*)]
    \item $N_{\In}\cap M_{\In}=\emptyset$ and $N_{\Out}\cap M_{\Out}=\emptyset$, and 
    \item the two supermaps have a consistent causal order, i.e., if $A,B \in N_{\all}\cap M_{\all}$, then either $A$ is causally prior to $B$ or  $B$ is causally prior to $A$ in \textit{both} supermaps,
\end{enumerate}
we can compose $\mS_N$ and $\mT_M$ in a well-defined manner by interlacing them as illustrated in Fig.~\ref{fig:Composition}.  i.e., we ``connect" all systems with the same labels, while systems with distinct labels remain accessible. Here, condition i) is due to our convention that each system label denotes a quantum system at a specific (time) step, and ii) due to our assumption of an underlying definite causal order. The comb representing the resulting supermap $\mR_K$ is then obtained via the so-called link product~\cite{Chiribella2009} defined as
\begin{align}\label{eq:defLinkProduct}
    J_{\mR_K}\!&= \!\partTr{ M_{\all} \cap N_{\all}}{\big(\mkern-1mu\id_{ N_{\all} \mkern-1mu\backslash\mkern-1mu  M_{\all} }\!\otimes \!J_{\mT_M}^{T_{ M_{\all}\cap N_{\all}}}\mkern-1mu\big)\mkern-1mu\big(\mkern-1muJ_{\mS_N}\!\otimes \!\id_{ M_{\all} \backslash  N_{\all}}\mkern-1mu\big)} \nonumber \\
    &=: J_{\mS_N}*J_{\mT_M},
\end{align}
where we recall that $T_X$ denotes the partial transpose on system $X$. We emphasize that if i) and ii) are satisfied, which we will implicitly assume from now on in any link product, $\mR_K$ is again a supermap with a consistent and definite causal order in which any system label appears at most once. We also denote compositions of supermaps simply as $\mR_K=\mS_N*\mT_M$, by which we mean that the comb representation of $\mR_K$ is given by the link product in Eq.~\eqref{eq:defLinkProduct}. As seen in Fig.~\ref{fig:second}, the order $K$ of the supermap $\mR_K$ is bounded between $0\leq K\leq N+M+1$.

\subsection{Distances for quantum channels and supermaps}
To assess errors in transformations of quantum states, channels, and supermaps, we require distance measures that quantify differences between such objects. For quantum states $\rho$ and $\sigma$, the  trace distance $\frac{1}{2}\norm{\rho-\sigma}_1$ induced by the trace norm $\norm{X}_1= \text{Tr}[ \smash{\sqrt{X^\dagger X}}]$  is such a quantity. A closely related quantity is the (squared) Uhlmann fidelity $F(\rho,\sigma)= || \smash{\sqrt{\rho}} \smash{\sqrt{\sigma}}||_1^2$. In the following, we will consider different well-known extensions of these quantities to the level of channels: the diamond distance induced by the diamond norm $\norm{\cdot}_\diamond$~\cite{Kitaev1997,Aharonov1998,Kitaev2002,Watrous2009,Watrous2013,Watrous2018}, the entanglement-assisted worst-case channel fidelity $F_{\min}$~\cite{Belavkin2005,Gilchrist2005}, and the (normalized) Choi state fidelity $F_{\ChoiF}$~\cite{Raginsky2001,Nielsen2002,Gilchrist2005}. For quantum channels $\mN^{A\to B}$ and $\mM^{A\to B}$, they are defined as
\begin{align}
    &\tfrac{1}{2}\norm{\mN\!-\!\mM}_{\diamond} \!:=\! \tfrac{1}{2}\sup_{\rho_{AR}}\! \norm{\left(\idChan_R \otimes \mN\!-\!\idChan_R \otimes \mM\right)(\rho_{AR})}_1 \!, \label{eq:def_DiamondDistance}\\
    &\smash[b]{F_{\min}\mkern-1mu(\mkern-2mu\mN,\mkern-2.5mu\mM\mkern-2mu)} \!:= \mkern-5mu\inf_{\rho_{AR}}\! F\Big(\mkern-4mu \idChan_R \otimes \mN(\mkern-1mu\rho_{AR}\mkern-1mu),\mkern-2mu\idChan_R \otimes\mM(\mkern-1mu\rho_{AR}\mkern-1mu) \mkern-4mu\Big), \label{eq:WorstCaseFidelity}\\
    &F_{\ChoiF}(\mN,\mM) \!:=\! \tfrac{1}{d_{A}^2}F\left(J_{\mN}^{AB},J_{\mM}^{AB}\right) \label{eq:ChoiStateFidelity},
\end{align}
where the infimum and supremum over $\rho_{AR}$ are to be understood as an optimization over bipartite states with an auxiliary system $R$ of arbitrary dimension. For finite-dimensional systems, it is sufficient to consider auxiliary systems $R\cong A$, see, e.g., Refs.~\cite{Belavkin2005, Puzzuoli2017}. The diamond distance and the entanglement-assisted worst-case fidelity quantify complementary operational notions of distinguishability and similarity for quantum channels. The diamond distance determines the optimal bias in single-shot discrimination of two channels when arbitrary entangled inputs and joint measurements are allowed. Conversely, the worst-case fidelity evaluates the smallest output fidelity achievable over all (possibly entangled) input states, thereby quantifying the degree to which two channels are similar. This notion can be made quantitative via a channel-based analogue of the Fuchs-van de Graaf inequality~\cite{Fuchs1999}. In particular, the worst-case fidelity and the diamond distance bound one another in direct analogy with fidelity and trace distance for quantum states, i.e., 
\begin{equation}\label{eq:FuchsVanDeGraaf_channels}
    1\!-\!\sqrt{\smash[b]{F_{\min}(\mkern-2mu\mN,\mkern-2.5mu\mM\mkern-2mu)}} \!\leq \! \tfrac{1}{2} \norm{\mN\!-\!\mM}_\diamond \!\leq\! \sqrt{1\!-\!\smash[b]{F_{\min}(\mkern-2mu\mN,\mkern-2.5mu\mM\mkern-1mu)}}.
\end{equation}
Thus, a small diamond distance implies high worst-case fidelity, while low worst-case fidelity implies a large operational distinguishability between the channels. As we show in Appendix~\ref{sec:Appen_GenDiaDist}, the lower bound in Eq.~\eqref{eq:FuchsVanDeGraaf_channels} can be sharpened if $\mN$ maps pure states to pure states in a complete sense, i.e., if $\idChan \otimes\,\mN(\psi)$ is pure for every pure $\psi$. This is equivalent to $\mN$ being an isometry in the sense that $\mN(\rho)=V\rho V^\dagger$ for some isometry $V$, see Ref.~\cite[Thm.~3.1]{Davies1976} and Ref.~\cite[Lem.~3.2]{Belzig2025}. Note that this also includes $\mN$ being a pure-state preparation channel, that is, an isometry from a one-dimensional system to a fixed pure state. For an isometric channel $\mV$, the sharpened Fuchs-van de Graaf inequalities are
\begin{align}\label{eq:FuchsVanDeGraaf_channels_sharp}
    1\!-\!F_{\min}(\mV,\mM)  \!\leq \! \tfrac{1}{2} \norm{\mV\!-\!\mM}_\diamond \! \leq\! \sqrt{1\!-\! F_{\min}(\mV,\mM)}.
\end{align}
Moreover, for any channel $\mM$ and any isometric channel $\mV$, 
\begin{align}\label{eq:InequalityChain} 
     1\!-\!F_{\min}(\mM,\mV) \!\geq \!1\!-\!F_{\ChoiF}(\mM,\mV),
\end{align}
which follows from choosing a maximally entangled state $\phi$ in Eq.~\eqref{eq:WorstCaseFidelity}, and if one of the two channels $\mM,\mN$ is an isometry, then $F_{\ChoiF}(\mN,\mM)=\tfrac{1}{d_A^2}\Tr{J_{\mN}^{AB}J_{\mM}^{AB}}$. The Choi state fidelity is closely related to the average-case fidelity over uniform pure states, see Refs.~\cite{Horodecki1999,Nielsen2002}

Throughout this work, we will consider conversion distances between quantum channels quantified by the diamond distance, worst-case fidelity, and Choi state fidelity. 
The generalized diamond distance between two quantum supermaps $\mS_N,\mT_N \in \mathfrak{S}$ defined as
\begin{align}\label{eq:diamond_generalized_main}
    d_{\diamond,N} (\mS_N,\mT_N) \!:= \!\min\big\{ &\lambda\!\geq\! 0: \! J_{\mS_N}\!-\!J_{\mT_N}\!\leq\! \lambda J_{\mM_N}, \nonumber \\
    &J_{\mM_N} \!\in\! \Comb_N \big\}
\end{align}
is an extension of the diamond distance to quantum supermaps~\cite{Chiribella2009,Gutoski2012}. The generalized diamond distance is contractive under compositions with quantum supermaps, see Lem.~\ref{def:GenDiamondDist} in Appendix~\ref{sec:Appen_GenDiaDist}, where we also discuss its interpretation and show how it can be evaluated via a semidefinite program (SDP)~\cite{Vandenberghe1996,Boyd2004}.

\section{CONSTRUCTING HIGHER-ORDER RESOURCE THEORIES}\label{sec:FreeSupermaps} 

Starting from a set of free channels $\O$ (also called free operations), we show in this section how to construct consistent higher-order resource theories. We start with free superchannels and then extend our definitions to free supermaps of arbitrary order. As defined in the last section, let $\mathfrak{S}_N$ denote the set of all deterministic quantum supermaps of order $N$ with arbitrary finite-dimensional input and output systems. Moreover, recall that a superchannel $\mS^{(A\to B)\to (C\to D)} \in \mathfrak{S}_1$ can be used to transform a channel $\mN^{ A\to B}$ into another channel $\mM^{C\to D}=S[\mN]=S*\mN$ by connecting systems $A$ and $B$ as shown on the right-hand side of Fig.~\ref{fig:freeSuperchannelsSets}. 

To define free supermaps, we start by fixing a set of free channels\footnote{Going back even one step further, the zero step would be to fix the systems we are considering. In this work, for the sake of a clean presentation, we consider all finite-dimensional systems. One can, however, extend the framework to include infinite-dimensional systems, and it is straightforward to consider only subsets of the finite-dimensional systems. In this case, a natural choice is to demand that if systems $A$ and $B$ are considered, then, also $A\otimes B$ is considered.
}.
Recall that $\CPTP(A \to B)$ denotes the set of all quantum channels from quantum systems $A$ to $B$, and let $\CPTP:=\cup_{A,B}\CPTP(A \to B)$ denote the set of all quantum channels with arbitrary finite-dimensional input and output systems. Let $\O(A\to B)\subseteq \CPTP(A\to B)$ denote a non-empty set of free channels from $A$ to $B$, and let $\O:=\cup_{A,B} O(A\to B)$~\footnote{The assumption that for all $A,B$, $\O(A\to B)$ is non-empty is again for the sake of a cleaner presentation.}. From here on, we use the shorthand notation $\mM^{A\to B} \in \O$ to denote that $\mM^{A\to B}$ is contained in $\O(A\to B)$. Standard assumptions on the set of free channels $\O$ are
\begin{enumerate}[label=\alph*), series=list1]
    \item The identity channels $\idChan$ are free, i.e., for all $A$, $\idChan^A \in \O(A\to A)$. \label{assump:identity}
    \item The set $\O$ is closed under sequential compositions, i.e., $\forall A,B,C$ and $ \mN^{A\to B},\mM^{B\to C} \in \O$, it holds that  $\,\mM^{B\to C}\circ \mN^{A\to B}  \in \O(A\to C)$. \label{assump:Composition}
\end{enumerate} 
Assumption~\ref{assump:identity} expresses that leaving a system unchanged requires no resources. Assumption~\ref{assump:Composition} guarantees that sequentially composing free operations cannot generate resources, thereby capturing the intuition that doing something free twice in a row is still free.

In static resource theories, for a given set of free states, the choice of free channels is generally not unique; see, e.g., Refs.~\cite{Chitambar2019, Gour2025}. The same applies to the choice of free superchannels when considering a fixed set of free channels. In this work, we consider three classes of free superchannels, as well as their generalizations to free supermaps, namely \textit{free networks}, \textit{completely free superchannels}, and \textit{maximally free superchannels}, which we define now. The different defining features are illustrated in Fig.~\ref{fig:freeSuperchannelsSets}. A minimal requirement~\cite{Gour2020,Regula2021b, Regula2021c} for a superchannel $\mS^{(A\to B)\to (C\to D)} \in \mathfrak{S}_1$ to be considered free is that  $\mS^{(A\to B)\to (C\to D)}[\mM^{A\to B}] \in \O(C\to D)$ for all $\mM^{A\to B}\in \O$. This means that if one applies a free superchannel to a free channel, as shown on the right-hand side of Fig.~\ref{fig:freeSuperchannelsSets}, the resulting channel is also free, and it is thus impossible to generate resources at no cost in this manner. The set of all superchannels satisfying this property is called the set of maximally free superchannels, denoted by
\begin{align}\label{eq:maximally_free_Superchannels}
    \mFO_1((A\mkern-2mu\to \mkern-2mu B)\mkern-2mu\to &(C\mkern-2mu\to\mkern-2mu D)):= \Big\{ \mS^{(A\to B)\to (C\to D)} \in \mathfrak{S}_1: \nonumber \\
    &\mS*\mM \in  \O \ \forall \mM\in \O(A\to B)\Big\}, 
\end{align}
and we again write $\mFO_1$ if we consider the union over arbitrary finite-dimensional input and output systems. 

In static resource theories, whenever one intends to talk about the resource content of composite systems, a typical convention is to demand that the set of free channels is completely free~\cite{Chitambar2019, Gour2025}, i.e., 
\begin{enumerate}[label=\alph*), resume=list1]
    \item If $\mM^{A\to B}\in \O(A\to B)$, then $\idChan^R \otimes \,\mM^{A\to B} \in \O(AR\to B R)$, \label{assumpt:complfree}
\end{enumerate}
which ensures that if one applies a free operation to a subsystem of a free state, the resulting state remains free. A further common assumption is that
\begin{enumerate}[label=\alph*), resume=list1]
    \item For all $A$ and $B$, the set $\O(A\to B)$ is topologically closed and convex. \label{assump:ClosedConvex}
\end{enumerate}
This ensures that (classical) mixing does not increase the resources at hand and that operations which can be approximated arbitrarily well by free operations are also coined as free. This assumption is also convenient from a technical perspective.

The notion of complete resource non-generation can be extended to channel transformations via superchannels by demanding that a superchannel $\mS^{(A\to B)\to (C\to D)} \in \mathfrak{S}$ is considered free only if it converts any free channel $\mM\in O(AR_1\to BR_2)$ to a free channel $\mE:=\mS*\mM \in \O(CR_1\to DR_2)$, see the middle of Fig.~\ref{fig:freeSuperchannelsSets}. The set of all superchannels satisfying this property is often called ``completely free" in the literature~\cite{Gour2020, Saxena2020,Gour2021b} and is formally defined as
\begin{align}\label{eq:completely_free_Superchannels}
    \cfO_1&((A\mkern-2mu\to\mkern-2mu B)\to  (C\mkern-2mu\to\mkern-2mu D))\!:=\Bigr
    \{ \mS^{(A\to B)\to (C\to D)}\!\in\!\mathfrak{S}_1: \nonumber \\
    &\quad \mS\!*\!\mM\in \!\O \,\forall\mM\in\! \O(AR_1\mkern-2mu\to\mkern-2muBR_2), \forall R_1, \!R_2\!\Big\}.
\end{align}
Lastly, as discussed for general networks in the previous section, any superchannel can be implemented using a pre- and post-processing channel together with an auxiliary system. An operational approach is thus to define free superchannels as the superchannels that can be implemented via free pre- and post-processing, see the left-hand side of Fig.~\ref{fig:freeSuperchannelsSets}, i.e.,
\begin{align} \label{eq:free_network_Superchannels}
    &\nO_1((A\mkern-2mu\to\mkern-2mu B)\mkern-2mu\to\mkern-2mu(C\mkern-2mu\to\mkern-2mu D))\!:= \big\{ \mS^{(A\mkern-2mu\to\mkern-2mu B)\to(C\mkern-2mu\to\mkern-2mu D)}\!\in\!\mathfrak{S}_1: \quad \, \nonumber \\
    &\,\,\mS \mkern-5mu=\mkern-4mu \mM_2\!*\!\mM_1, \mM_1\!\in\mkern-4mu \O(C\mkern-6mu\to\mkern-6mu AR),\mM_2 \!\in\mkern-4mu \O(BR\mkern-6mu\to \mkern-6mu D)\!\big\}.
\end{align}
If the set of free channels satisfies assumptions~\ref{assump:identity} to \ref{assumpt:complfree}, then 
\begin{align}\label{eq:SuperChannelInclusions}
    \nO_1 \subseteq \cfO_1 \subseteq \mFO_1.
\end{align}
Here, $\cfO_1 \subseteq \mFO_1$ follows immediately from choosing trivial systems $R_1,R_2$, and $\nO_1 \subseteq \cfO_1$ holds since assumptions \ref{assump:identity}, \ref{assump:Composition}, and \ref{assumpt:complfree} ensure that the concatenation of free channels leads to free channels. As we will show later, these sets are, in general, different (Prop.~\ref{prop:nonfreenetworks}). 
\begin{figure*}[ht]
    \centering
    \scalebox{0.8}{\includegraphics[width=1\linewidth]{ 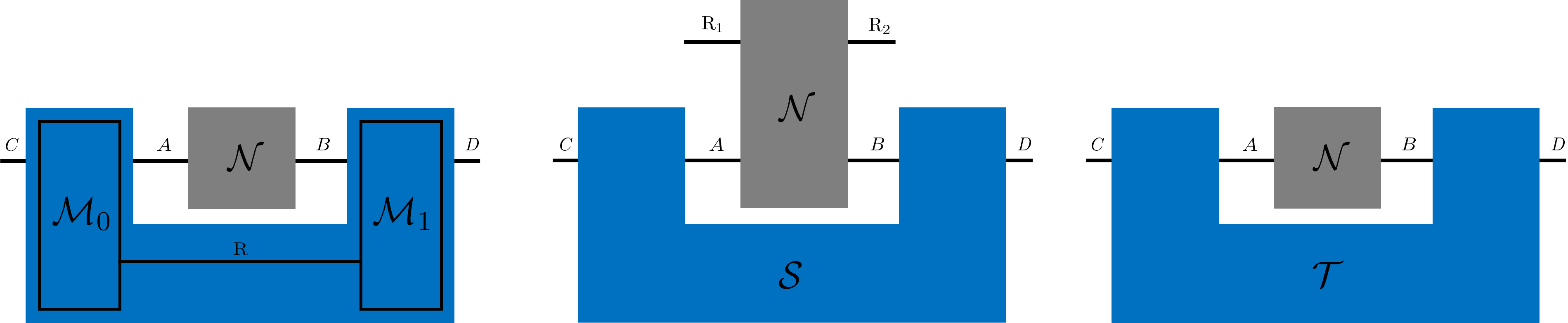}}
    \caption{A free network is composed of free pre-and postprocessing channels $\mM_0, \mM_1 \in \O$ as shown on the left. A completely free superchannel $\mS^{(A\to B) \to (C\to D)}$  transforms any free channel $\mN$ to a free channel in a complete sense, i.e., the resulting channel $\mS*\mN$ shown in the middle must be free for arbitrary systems $R_1, R_2$ and all free channels $\mN$. A maximally free superchannel $\mT^{(A\to B) \to (C\to D)}$ transforms any free channel $\mN^{A\to B}$ to another free channel $\mT*\mN$ as shown on the right. }
    \label{fig:freeSuperchannelsSets}
\end{figure*}

Next, we extend these definitions to arbitrary (deterministic) quantum supermaps. A straightforward generalization of $\nO_1$ are the supermaps $\mS_N\in \mathfrak{S}_N$ that can be implemented by a network consisting only of free operations, see Fig.~\ref{fig:NetworkvsComb}.

\begin{defin}\label{def:freeNetwork} 
The set of free networks with respect to a set of free channels $\O$ is given by
\begin{align}
    &\nO_N \!:=\!\big\{ \mS_N\!\in\!\mathfrak{S}_N\!:\!\mathcal{S}_N[\mathcal{N}_1,\!\cldots,\mathcal{N}_N]\!=\!\mathcal{M}_{N}(\idChannel \otimes \mathcal{N}_N)\mathcal{M}_{N\!-\!1} \nonumber \\
    &\quad (\idChannel \otimes \mathcal{N}_{N-1})\mathcal{M}_{N-2}\cldots \mathcal{M}_1 (\idChannel \otimes \mathcal{N}_1)\mathcal{M}_0,\mM_i\mkern-4mu\in \mkern-4mu\O \forall i\big\}. 
\end{align}
The set of all free networks will be denoted by $\nO := \cup_{N \in \mathbb{N}_0} \nO_N$, where $ \nO_0:= \O$.
\end{defin}

 As argued, e.g., in Ref.~\cite{Theurer2019}, considering free networks as free supermaps is convincing from an operational perspective; for instance, if one thinks about a quantum optical experiment where one combines beam splitters and phase shifters. However, characterizing free networks is generally difficult (see, e.g., Sec.~\ref{sec:charac_compatitbleSupermaps} and Refs.~\cite{Theurer2019,Ahnefeld2025}), and as in the case of superchannels (recall Eq.~\eqref{eq:SuperChannelInclusions}), it is thus convenient to consider relaxations. 

If $\O$ in Def.~\ref{def:freeNetwork} satisfies assumptions~\ref{assump:identity}, \ref{assump:Composition}, and~\ref{assumpt:complfree}, any network resulting from the concatenation of two free networks is free too, i.e., the free networks inherit the ``no free lunch" property of the free channels. This motivates the following definition, which captures the spirit of the notion of ``completely free".
 \begin{defin}\label{def:cO}
We call a set $\mathfrak{C}\subseteq\mathfrak{S}$ \textit{compatible} if it is closed under all compositions, i.e., iff
\begin{align}
    \mS* \mT\in \mathfrak{C} \ \forall \mS, \mT \in \mathfrak{C}.
\end{align}
If $\mathfrak{C}$ is compatible and $\mathfrak{C}_0:=\mathfrak{C}\cap \mathfrak{S}_0=\O$, we call $\mathfrak{C}$ compatible with respect to $\O$.
\end{defin}
Operationally, demanding that free supermaps are compatible guarantees that it is impossible to generate resources by composing free supermaps. Moreover, any set of supermaps that is compatible with respect to $\O$ contains only completely free superchannels (see Eq.~\eqref{eq:completely_free_Superchannels} and the middle of Fig.~\ref{fig:freeSuperchannelsSets}). However, since composing a superchannel and a channel does not necessarily yield another channel (recall Fig.~\ref{fig:Composition}), compatibility is the stronger requirement. 
 
For a fixed set of free channels $\O$, a priori, it is not guaranteed that there exists \textit{any} set of supermaps that is compatible with $\O$  (e.g., if the free channels themselves are not closed under composition). However, as noted above, if the free channels satisfy assumptions \ref{assump:identity} to \ref{assumpt:complfree}, the free networks are. On the contrary, it is possible that multiple sets of supermaps $\mathfrak{C}, \mathfrak{C}^\prime$ that are compatible with $\O$ exist, which could also be incomparable in the sense that neither $\mathfrak{C}\subseteq \mathfrak{C}^\prime$ nor $\mathfrak{C}^\prime\subseteq \mathfrak{C}$. If a largest set of supermaps that is compatible with $\O$ exists (in the sense that it contains all other sets compatible with $\O$), we call it ``the compatible supermaps" with respect to $\O$ and denote it as $\cO$, with  $\cO_N:=\cO\cap\mathfrak{S}_N$ (and in particular $\cO_0=\O$). In the following, we will see examples of maximal sets of compatible supermaps (Thm.~\ref{thm:CompatibleSupermaps}) that contain more than just free networks (Prop.~\ref{prop:nonfreenetworks}). It is also important to note that one can define compatible sets of free supermaps that are non-empty but do not contain any channels -- start, for example, by choosing a set of free superchannels with the property that their output and input systems are disjoint and consider their compositions as free. Such an approach might, for example, be relevant in an open system setting, see Ref.~\cite{Berk2021}.

It is possible to imagine different generalizations of the set of maximally free superchannels to the level of arbitrary supermaps which might be relevant in different scenarios: One example is to demand that if one applies a free supermap $\mS_N$ to $N$ free channels as on the right-hand side of Fig.~\ref{fig:freeSuperchannelsSets}, i.e., such that the first input and the last output of $\mS_N$ are the only accessible systems of the resulting supermap, then the resulting channel is free. This is of relevance in the case of adaptive multi-copy channel distillation~\cite{Regula2021c,Fang2022}. However, other potential generalizations are possible, e.g., to inductively demand that $S_{N+1}*S_N$ is free if $S_N$ was and only the first input and the last output of $\mS_{N+1}$ remain accessible (see Fig.~\ref{fig:Composition}(a)). We stress, however, that only the notion of compatibility guarantees ``no free lunch" under arbitrary compositions. 

\section{ONE-SHOT CONVERSION OF SUPERMAPS}\label{sec:GeneralConversion}
Two fundamental tasks in quantum resource manipulation are resource distillation and dilution~\cite{Chitambar2019,Gour2025}. They have been investigated extensively for quantum states~\cite{Winter2016,Zhao2018,Regula2018,Horodecki2013,Liu2019,Takagi2020} and, more recently, in specific settings for quantum channels~\cite{Gour2019,Liu2019arxiv,Liu2020,Kim2021,Regula2021b, Regula2021c,Saxena2020,Gour2020}. Importantly, these tasks can be formulated at the general level of higher-order quantum transformations~\cite{Takagi2024}, thereby unifying resource theories of states, channels, and supermaps within a common framework. To this end, fix a set of free supermaps and a designated family $\T$ of target supermaps that serve as reference resources.

To quantify the resourcefulness of the target supermaps, we will use resource monotones. Resource monotones are functionals that are \textit{monotonic under free supermaps}, which reflects the intuition that resources cannot be generated at no cost. To formalize this notion, consider the example of channels first: A functional $M$ from quantum channels to the real numbers is called a channel resource monotone under the set of free superchannels $F\subseteq\mathfrak{S}_1$ iff for all $\mN\in \CPTP$, $ M(\mS[\mN]) \leq M(\mN)$ whenever the application of $\mS\in F$ to $\mN$ is free (which implicitly implies that it is well-defined). Further desirable properties may include an operational interpretation, faithfulness (that is, $M(\mN)=0$ iff $\mN$ is free), and convexity, while for practical applications, computational tractability is often equally important~\cite{Regula2018,Chitambar2019,Gour2025}. It is straightforward to extend the notion of monotones to supermaps by demanding that transforming supermaps by means of free supermaps cannot increase the resources at hand. 
\begin{defin}
    A functional $M$ from $\mathfrak{S}$ to the real numbers is called a resource monotone under a set of free supermaps $F\subset\mathfrak{S}$ iff for all $ \mT_M \in \mathfrak{S}$
    \begin{align}
        M(\mS_N*\mT_M) \leq M(\mT_M)
    \end{align}
    whenever the transformation of $\mT_M$ by $\mS_N\in F $ is free.
\end{defin} 
To construct such monotones, it is convenient to start from a distance between quantum supermaps that is contractive under composition with quantum supermaps. One such distance is the generalized diamond distance (see Eq.~\eqref{eq:diamond_generalized_main}), which we discuss in more detail in Appendix~\ref{sec:Appen_GenDiaDist}.
\begin{restatable}{proposition}{SupermapMonotone}
\label{prop:SupermapMonotone}
The functional $M_{\mathfrak{C}}$ defined for $\mS_N \in \mathfrak{S}_N$ as
\begin{align}\label{eq:SupermapMontone}
    M_{\mathfrak{C}}(\mS_N)&\!:=\!\inf_{\mR_N \in \mathfrak{C}_N} d_{\diamond,N}(\mS_N,\mR_N).
\end{align}
is a resource monotone under a compatible set of free supermaps $\mathfrak{C}$. If the sets $\mathfrak{C}_N$ can be characterized by semidefinite constraints, this monotone can be computed via a semidefinite program.
\end{restatable}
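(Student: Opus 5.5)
The plan is to combine the contractivity of the generalized diamond distance under composition (Lem.~\ref{def:GenDiamondDist} in Appendix~\ref{sec:Appen_GenDiaDist}) with the closure of $\mathfrak{C}$ under composition. Fix $\mS_N\in\mathfrak{C}$ and $\mT_M\in\mathfrak{S}_M$ such that the composition $\mS_N*\mT_M$ is well defined, and let $K$ denote its order. We must show $M_{\mathfrak{C}}(\mS_N*\mT_M)\le M_{\mathfrak{C}}(\mT_M)$.

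\textbf{Step 1: transport near-optimal free approximants.} If $\mathfrak{C}_M$ is empty, the right-hand side is $+\infty$ and there is nothing to prove. Otherwise, for any $\epsilon>0$ choose $\mR_M\in\mathfrak{C}_M$ with $d_{\diamond,M}(\mT_M,\mR_M)\le M_{\mathfrak{C}}(\mT_M)+\epsilon$. Since $\mR_M$ has the same input and output system labels and the same causal order as $\mT_M$, the composition $\mS_N*\mR_M$ is well defined by the same conditions i) and ii). Compatibility of $\mathfrak{C}$ gives $\mS_N*\mR_M\in\mathfrak{C}$. Because the link product fixes the order $K$ purely through the system labels, we also have $\mS_N*\mR_M\in\mathfrak{C}_K$, so it is a feasible point in the infimum defining $M_{\mathfrak{C}}(\mS_N*\mT_M)$.

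\textbf{Step 2: contractivity.} Next we use
\[
d_{\diamond,K}(\mS_N*\mT_M,\mS_N*\mR_M)\le d_{\diamond,M}(\mT_M,\mR_M).
\]
If one wants to verify this directly from the definition in Eq.~\eqref{eq:diamond_generalized_main}, the argument runs as follows. Take a feasible pair $(\lambda, J_{\mM_M})$ with $J_{\mT_M}-J_{\mR_M}\le\lambda J_{\mM_M}$. Apply the link product with the fixed positive operator $J_{\mS_N}$; this is linear and positivity preserving, since the partial transpose of $J_{\mS_N}$ is irrelevant after the trace. This yields
\[
J_{\mS_N*\mT_M}-J_{\mS_N*\mR_M}\le\lambda\, J_{\mS_N*\mM_M},
\]
and $J_{\mS_N*\mM_M}\in\Comb_K$ because the composition of combs is a comb. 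Combining Steps 1 and 2 gives $M_{\mathfrak{C}}(\mS_N*\mT_M)\le M_{\mathfrak{C}}(\mT_M)+\epsilon$, and letting $\epsilon\to0$ proves monotonicity.

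\textbf{Step 3: the SDP claim.} Write
\[
M_{\mathfrak{C}}(\mS_N)=\inf\bigl\{\lambda:\ J_{\mS_N}-J_{\mR_N}\le \lambda J_{\mM_N},\ J_{\mM_N}\in\Comb_N,\ J_{\mR_N}\in\mathfrak{C}_N\bigr\}.
\]
The product $\lambda J_{\mM_N}$ is not jointly linear, so substitute $Y:=\lambda J_{\mM_N}$. The conditions in Eq.~\eqref{eq:CombSet} are homogeneous except for the normalization, so $Y$ ranges over positive operators obeying the same recursive partial-trace equalities with $\partTr{1}{Y^{(0)}}=\lambda\id_0$. The objective $\lambda$ and all constraints are then linear or semidefinite in $(\lambda,Y,J_{\mR_N})$, given that $\mathfrak{C}_N$ is semidefinite-representable. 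Hence the problem is an SDP.

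The main obstacle is the bookkeeping in Step 1: ensuring that $\mS_N*\mR_M$ genuinely lands in $\mathfrak{C}_K$ with the same order and labels as $\mS_N*\mT_M$, so that the infimum is taken over the correct set. The homogenization in Step 3 also needs some care in the degenerate case $\lambda=0$.
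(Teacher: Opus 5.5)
Your proof is correct and follows the paper's argument: compatibility keeps the composition of a free supermap with a (near-)optimal free approximant inside $\mathfrak{C}_K$, and contractivity of the generalized diamond distance (Lem.~\ref{def:GenDiamondDist}) then gives monotonicity. For the SDP part, the paper inserts $\mR_N\in\mathfrak{C}_N$ into the two-sided formulation of Eq.~\eqref{eq:SDP_gen_diamond_dist}, whereas you homogenize the one-sided formulation via $Y=\lambda J_{\mM_N}$; the two are equivalent by that same lemma, and your $\lambda=0$ case is harmless because it forces $Y=0$ and hence $J_{\mS_N}=J_{\mR_N}$.
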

In the above Proposition, the optimization over $\mR_N$ is understood to range over free supermaps with the same input and output systems as $\mS_N$ such that the diamond distance is well-defined. From now on, this will always be implicitly assumed.
For $N=0$, the monotone in Prop.~\ref{prop:SupermapMonotone} is precisely the minimal diamond distance of a channel to the set of free channels, and by associating a state with its preparation channel, we obtain the minimal trace distance to the set of free states.

With this at hand, we return to distillation and dilution of higher-order quantum resources. Distillation asks for the most resourceful target in $\T$ that can be approximately obtained from a given supermap $\mS_N$ via free transformations, whereas dilution asks for the least resourceful target in $\T$ from which a given supermap $\mS_N$
can be approximately simulated with the help of free transformations. To quantify how well such an approximate transformation can be achieved, one fixes a suitable metric for quantum supermaps, for example the generalized diamond distance in Eq.~\eqref{eq:diamond_generalized_main}. For an allowed error $\epsilon$, the corresponding one-shot distillable resource of a supermap $\mS_n$ under a set of free supermaps $\F$ is then formally defined as
\begin{align}\label{eq:Dist}
    \Dist_{\T, \F}^{\epsilon,\diamond}(\mS_N)=\sup\big\{ M(\mT)& : d_{\diamond,K}(\mS_N*\mR_M,\mT_K) \leq \epsilon, \nonumber \\
    &\mT_K \in \T, \mR_M\in \F \big\},
\end{align}
where $M$ is a monotone under $F$. Analogously, the resource cost is defined as 
\begin{align}\label{eq:Dil}
    \Cost_{\T,\F}^{\epsilon,\diamond}(\mS_N)= \inf\Big\{ M(\mT)& : d_{\diamond,N}(\mR_M*\mT,\mS_N) \leq \epsilon, \nonumber \\
    &\mT \in \T, \mR_M\in \F \Big\}.
\end{align}

\subsection{One-shot manipulation of quantum channels under maximally free superchannels} \label{sec:channel_conversion_general}
In this section, we specialize this general framework to the manipulation of quantum channels by superchannels. Accordingly, both the input $\mE$ and the target $\mT$ are quantum channels (and $\T$ contains only channels), while the allowed transformations are superchannels. The manipulation of quantum channels with free superchannels has been investigated in a number of dynamical resource theories, see, e.g., Refs.~\cite{Kim2021, Liu2020,Saxena2020,Regula2021b}. In particular, Ref.~\cite{Regula2021b} considers the maximally free superchannels in Eq.~\eqref{eq:maximally_free_Superchannels} as free superchannels and characterizes approximate one-shot dynamical resource conversions with respect to the worst-case fidelity (as defined in Eq.~\eqref{eq:WorstCaseFidelity}) for a broad class of general resource theories. In the remainder of this section, we show that the underlying arguments of Ref.~\cite{Regula2021b} also apply for approximation with respect to the diamond distance. This will allow us to show that under suitable conditions, the distillable resources defined with respect to the diamond distance and the worst-case fidelity coincide. 

To establish these extensions of the arguments of Ref.~\cite{Regula2021b}, we require several key definitions, resource quantifiers, and structural results introduced and employed in Ref.~\cite{Regula2021b}. Among them are the standard robustness $R_{s,\O}$~\cite{Vidal1999,Yuan2021,Takagi2021} and the generalized robustness $R_{\max,\O}$~\cite{Steiner2003,Harrow2003,Datta2009,Takagi2019} of a quantum channel 
\begin{align}
    R_{s,\mkern-1mu\O}(\mkern-1mu\mN\mkern-1mu)\mkern-4mu &=\mkern-4mu\inf_{\lambda\geq 0, \mK,\mM\in \O} \{ \mkern-1mu 1\!+\!\lambda \!:\! \mN\!+\!\lambda \mM \mkern-4mu =\mkern-4mu (1\mkern-4mu+\mkern-4mu\lambda) \mK \mkern-1mu\} \label{eq:def_R_s},\\
    R_{\max,\mkern-1mu\O}(\mkern-1mu\mN\mkern-1mu)\mkern-4mu&=\mkern-20mu\smash{\inf_{\substack{\lambda\geq 0 \\ \mM\in \CPTP,\mK\in \O}}} \{ \mkern-1mu 1\mkern-4mu+\mkern-4mu\lambda \!:\! \mN\!+\!\lambda \mM \mkern-4mu =\mkern-4mu (1\mkern-4mu+\mkern-4mu\lambda) \mK \mkern-1mu\}.\label{eq:def_R_max}
\end{align}
\vspace{0.1cm}

\noindent  The distinction between standard and generalized robustness becomes particularly relevant when considering full-dimensional resource theories, i.e., those for which $\CPTP(A\!\to\! B)\subseteq\operatorname{span}(\O(A\!\to\! B))$, and reduced-dimensional resource theories, i.e., those for which $\CPTP(A\!\to \!B) \nsubseteq \operatorname{span}(\O(A\!\to\! B))$~\cite{Regula2020}. 

Moreover, we employ two channel monotones based on the hypothesis testing relative entropy  $R_H^\epsilon(\rho||\sigma)= \max\{\Tr{P\sigma}^{-1}: 0\leq P\leq \id, \Tr{P\rho}\geq 1-\epsilon\}$ (see Ref.~\cite{Buscemi2010,Yuan2019,Gour2021,Wang2021}), namely
\begin{align}
    R_{\!H\mkern-2mu,\mkern-2mu\O\mkern-2mu}^\epsilon(\mkern-1mu\mE\mkern-1mu) \mkern-4mu&=\mkern-8mu \inf_{\mM\in \O\mkern-2mu}\mkern-6mu \max_\psi  R_H^\epsilon(\mkern-1mu\idChan \otimes\mE(\mkern-1mu\psi\mkern-1mu)|\mkern-2mu|\mkern-1mu\idChan\otimes \mM(\mkern-1mu\psi\mkern-1mu)), \label{eq:R_H,O} \\
    R_{\!H\mkern-2mu,\aff(\mkern-2mu\O\mkern-2mu)}^\epsilon(\mkern-1mu\mE\mkern-1mu) \mkern-4mu&=\mkern-8mu \inf_{\mM\mkern-2mu\in \aff(\mkern-2mu\O\mkern-2mu)}\mkern-6mu \max_\psi  R_H^\epsilon(\mkern-1mu\idChan \otimes\mE(\mkern-1mu\psi\mkern-1mu)|\mkern-2mu|\mkern-1mu\idChan\otimes \mM(\mkern-1mu\psi\mkern-1mu)), \label{eq:R_H,O_aff}
\end{align}
where the maximization is over pure states $\psi^{AR}$ with an auxiliary system $R\cong A$ fixed by the channel $\mE^{A\to B}$ and $\aff$ denotes the affine hull. In the case of $\epsilon=0$, the hypothesis testing relative entropy reduces to the min-relative entropy~\cite{Datta2009}, which motivates the definition 
\begin{align}
    R_{\min,\O}(\mE):=\mkern-8mu \inf_{\mM\in \O\mkern-2mu}\mkern-6mu \max_\psi  R_H^{\epsilon=0}(\mkern-1mu\idChan \otimes\mE(\mkern-1mu\psi\mkern-1mu)|\mkern-2mu|\mkern-1mu\idChan\otimes \mM(\mkern-1mu\psi\mkern-1mu)).
\end{align}
In the affine case, we \textit{define}
\begin{align}
    R_{\min,\aff(\O)}(\mN)\!:=\! \sup_{\psi}\big\{&\tfrac{1}{\lambda}\! :\! \Tr{\Pi_{\mN,\psi} (\idChan \otimes \mM)(\psi)}\!=\!\lambda  \nonumber \\
    &\quad \forall \mM\in \O\big\}
\end{align}
where $\Pi_{\mN,\psi}:= \Pi_{\supp\left(\idChan \otimes \mN (\psi) \right)}$ is the projector onto the support of $\idChan \otimes \,\mN(\psi)$, and where we adopt the convention that $\sup \emptyset =0$. We emphasize that, unlike in the non-affine case, there potentially is a gap between $R_{\min,\aff(\O)}(\mN)$ and $R_{\!H\mkern-2mu,\aff(\mkern-2mu\O\mkern-2mu)}^{\epsilon=0}(\mkern-1mu\mE\mkern-1mu) $, see Ref.~\cite[SM Remark below Thm.~1]{Regula2021b}. Moreover, for $m>0$, we will also use the following monotones introduced in Ref.~\cite{Regula2021b},
\begin{subequations}\label{eq:G_O}
\begin{align}
G_{\O}(\mE,m)=\max  &\big\{ \Tr{J_\mE W}: 0\leq W \leq \rho \otimes \id, \rho \in \D, \nonumber \\
&\,\, \Tr{W J_\mM}\leq \frac{1}{m}\, \forall \mM \in \O\big\},\\
    G_{\aff(\O)}(\mE,m)=\sup &\big\{ \Tr{J_\mE W}: 0\leq W \leq \rho \otimes \id, \rho \in \D, \nonumber \\
    &\,\, \Tr{W J_\mM}=\frac{1}{m} \,\forall \mM \in \O\big\},
\end{align}
\end{subequations}
and we will use the convention that $G_{\aff(\O)}(\mE,0)=1$. Using these monotones, for a target channel $\mN$ that is an isometry, Ref.~\cite{Regula2021b} establishes bounds on the optimal conversion distance with respect to the worst-case fidelity under the set of maximally free superchannels. In particular, for any full-dimensional resource theory satisfying assumption \ref{assump:ClosedConvex},  Ref.~\cite[Thm.~6]{Regula2021b} shows that $\max_{\mS_1\in \mO_1} F_{\min}(\mS_1[\mE],\mN)$ is upper and lower bounded by $G_{\O}(\mE,R_{\min ,\O}(\mN))$ and $G_{\O}(\mE,R_{s,\O}(\mN))$ respectively, and similarly for reduced-dimensional resource theories. 

We now turn to the first part of our extension and show that, under these assumptions, the same bounds hold if the conversion distance is measured with the diamond distance instead of the worst-case fidelity.

\begin{restatable}{theorem}{DiamondBoundBoundsG}
\label{thm:DiamondBound_Bounds_G}
    Let $\mE$ and $\mN$ be quantum channels with $\mN$ an isometry and let $\O$ denote a set of free channels satisfying assumption~\ref{assump:ClosedConvex}. If $\O$ is full-dimensional, then
    \begin{align}
       1- G_{ \O}(\mE,R_{s ,\O}(\mN)) &\geq  \min_{\mS \in \mO_1} \tfrac{1}{2}\norm{\mS[\mE]-\mN}_\diamond \nonumber \\
       &\geq 1- G_{\O}(\mE,R_{\min ,\O}(\mN)),
       \end{align}
       and if $\O$ is reduced-dimensional, then
       \begin{align}
       1\mkern-6mu \shortminus\mkern-4mu G_{\aff\mkern-2mu(\mkern-2mu\O\mkern-2mu)}\mkern-1mu(\mkern-2mu\mE\mkern-1mu,\mkern-1mu\mkern-2mu R_{\max ,\mkern-1mu\O\mkern-1mu}(\mkern-2mu\mN\mkern-1mu)\mkern-1mu) \mkern-5mu&\geq\!  \min_{\mS \in\mO_1 } \tfrac{1}{2}\norm{\mS[\mE]-\mN}_\diamond \quad\quad \quad \quad\quad \,\,\,  \, \, \, \quad \nonumber \\
       &\geq\mkern-6mu 1\mkern-6mu \shortminus \mkern-4mu G_{\aff(\mkern-2mu\O\mkern-2mu)}\mkern-2mu(\mkern-2mu\mE\mkern-2mu,\mkern-1mu\mkern-2mu R_{\min ,\mkern-1mu\aff(\mkern-2mu\O\mkern-2mu)}(\mkern-2mu\mN\mkern-1mu)\mkern-1mu).
    \end{align}
    If additionally $R_{\min,\O}(\mN)\!=\!R_{s,\O}(\mN)$ or $R_{\min,\aff(\O)}(\mN)\!=\!R_{\max,\O}(\mN)$, respectively, then the diamond conversion distance and the conversion distance with respect to the worst-case fidelity coincide, i.e., 
    \begin{align}\label{eq:diamond_fidelity_equal_convDist}
        \min_{\mS \in \mO_1} \tfrac{1}{2}\norm{\mS[\mE]-\mN}_\diamond &=1\!-\!\max_{\mS \in \mO_1} F_{\min}(\mS[\mE],\mN) \nonumber \\
        &= 1\!-\! G_{ \O}(\mE,R_{s ,\O}(\mN)),
    \end{align}
    in the full-dimensional case 
    and analogously in the reduced-dimensional case.
    \end{restatable}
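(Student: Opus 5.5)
The plan is to treat the two inequalities separately and then combine them with the sharpened Fuchs--van de Graaf inequality in Eq.~\eqref{eq:FuchsVanDeGraaf_channels_sharp}.

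\textbf{Lower bound.} We start from the worst-case fidelity lower bound of Ref.~\cite[Thm.~6]{Regula2021b}, namely $\max_{\mS\in\mO_1}F_{\min}(\mS[\mE],\mN)\le G_{\O}(\mE,R_{\min,\O}(\mN))$. Since $\mN$ is isometric, the left inequality of Eq.~\eqref{eq:FuchsVanDeGraaf_channels_sharp} gives $\tfrac12\norm{\mS[\mE]-\mN}_\diamond\ge 1-F_{\min}(\mS[\mE],\mN)$ for every $\mS$. Minimizing over $\mS\in\mO_1$ then yields the lower bound $1-G_{\O}(\mE,R_{\min,\O}(\mN))$ directly. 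The reduced-dimensional case is identical, using the affine version of the bound in Ref.~\cite{Regula2021b}.

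\textbf{Upper bound.} Here Fuchs--van de Graaf only gives a square root, so the achievability construction of Ref.~\cite{Regula2021b} has to be redone in the diamond distance. That construction takes an optimal $W$ (with $0\le W\le\rho\otimes\id$) in the program for $G_{\O}(\mE,m)$, where $m=R_{s,\O}(\mN)$, and builds a measure-and-prepare superchannel. Concretely, it feeds a purification of $\rho$ into $\mE$ and measures the POVM $\{W',\,\rho\otimes\id-W'\}$, suitably normalized. On the first outcome it applies $\mN$; on the other it applies the channel $\mM$ coming from the standard robustness decomposition $\mN+(m-1)\mM=m\mK$ with $\mM,\mK\in\O$. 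This output is $\mS[\mE]=p\,\mN+(1-p)\mM$ with $p=\Tr{J_\mE W}=G_{\O}(\mE,m)$. The construction is maximally free because for free inputs $\Tr{WJ_{\mM'}}\le 1/m$, so the output is a mixture of $\mK$ and $\mM$, which is free by convexity (assumption~\ref{assump:ClosedConvex}).

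With this form, the diamond estimate is immediate: $\tfrac12\norm{p\mN+(1-p)\mM-\mN}_\diamond=(1-p)\tfrac12\norm{\mM-\mN}_\diamond\le 1-p$. The main point to verify is that the optimal superchannel in Ref.~\cite{Regula2021b} really outputs an exact convex combination $p\mN+(1-p)\mM$. It could instead produce something whose fidelity, rather than diamond distance, is controlled. If the construction there includes extra twirling or adjustments so that $p$ exceeds $G$ only in fidelity, I would rebuild it from the robustness decomposition as above, padding the first outcome so that the weight on $\mN$ is exactly $G$. The reduced-dimensional case is analogous, with $R_{\max,\O}$ and equality constraints: $\mM$ may then be an arbitrary channel, and freeness follows because $\Tr{WJ_{\mM'}}=1/m$ for free $\mM'$ makes the output exactly $\mK$.

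\textbf{Equality.} If $R_{\min,\O}(\mN)=R_{s,\O}(\mN)$, the two diamond bounds coincide, giving $\min_{\mS\in\mO_1}\tfrac12\norm{\mS[\mE]-\mN}_\diamond=1-G_{\O}(\mE,R_{s,\O}(\mN))$. By Ref.~\cite[Thm.~6]{Regula2021b}, the fidelity bounds also coincide and give $\max_{\mS\in\mO_1} F_{\min}(\mS[\mE],\mN)=G_{\O}(\mE,R_{s,\O}(\mN))$. Together these give Eq.~\eqref{eq:diamond_fidelity_equal_convDist}. The reduced-dimensional case follows in the same way.
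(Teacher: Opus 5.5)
Your argument follows the paper's proof. The lower bound is Ref.~\cite[Thm.~6]{Regula2021b} plus the sharpened Fuchs--van de Graaf inequality. The upper bound uses the same measure-and-act superchannel $\mS[\mL]=\Tr{J_\mL W}\,\mN+\Tr{J_\mL(\rho\otimes\id-W)}\,\mM$ built from the robustness decomposition. Your direct estimate $(1-p_\mE)\tfrac12\norm{\mM-\mN}_\diamond\le 1-p_\mE$ is equivalent to the paper's choice $Z=(1-p_\mE)J_\mN$ in the diamond-norm SDP, and the equality follows by sandwiching. Your own construction is already self-contained, so the hedging about what Ref.~\cite{Regula2021b} does internally is unnecessary.

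What is missing is one case of the reduced-dimensional upper bound. There $R_{\max,\O}(\mN)$ can be $+\infty$. For example, take $\O$ to be the qubit channels fixing $\ketbra{0}{0}$ and $\mN$ the bit flip. Every free $\mK$ has $\bra{01}J_\mK\ket{01}=0$ while $\bra{01}J_\mN\ket{01}=1$, so $J_\mN\le sJ_\mK$ never holds. The claimed bound then reads $1-G_{\aff(\O)}(\mE,\infty)$, with constraint $\Tr{WJ_{\mM'}}=0$ for all $\mM'\in\O$. In that case no decomposition $\mN+(m-1)\mM=m\mK$ exists, so your freeness argument has nothing to work with.

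The paper treats this case separately. Pick any $\mK\in\O$ and set $\mS[\mL]=\Tr{J_\mL W}\,\mN+\Tr{J_\mL(\rho\otimes\id-W)}\,\mK$. Then $\mS[\mM']=\mK$ for every free $\mM'$, and the same estimate gives $1-\Tr{J_\mE W}$.

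Relatedly, you should justify that the robustness decompositions exist with $m$ exactly equal to $R_{s,\O}(\mN)$ (which is finite in the full-dimensional case), or to $R_{\max,\O}(\mN)$ when that is finite. The infima are attained by compactness of $\O(A\to B)$ and $\CPTP(A\to B)$. This matters because your freeness step uses $qm\le 1$, respectively $q=1/m$, at precisely that value of $m$; an approximating decomposition with a larger $m$ would not suffice.
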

    
We emphasize that Eq.~\eqref{eq:diamond_fidelity_equal_convDist} goes beyond the sharpened Fuchs-van de Graaf inequality in Eq.~\eqref{eq:FuchsVanDeGraaf_channels_sharp}, from which one cannot deduce equality of the two conversion distances except when an exact conversion is possible. Moreover, we note that neither Thm.~\ref{thm:DiamondBound_Bounds_G} nor Thm.~6 in Ref.~\cite{Regula2021b} is stronger, since via the sharpened  Fuchs-van de Graaf inequality, each implies parts of the other. As a straightforward extension of Thm.~\ref{thm:DiamondBound_Bounds_G}, and as already mentioned in Ref.~\cite{Regula2021b}, if one considers the Choi state fidelity as defined in Eq.~\eqref{eq:ChoiStateFidelity}, and if additionally for 
\begin{align}\label{eq:R_min_tilde}
    \tilde{R}_{\min , \O}(\mN)= \min_{\mM \in \O} R_{\min}\left( \idChan \otimes \, \mN (\phi) || \idChan\otimes \,\mM(\phi)\right),
\end{align}
where $\phi$ denotes a maximally entangled state, it holds that $\tilde{R}_{\min , \O}(\mN)=R_{s,\O}(\mN)$, then 
\begin{align}
    \min_{\mS \in \mO_1} \tfrac{1}{2}\norm{\mS[\mE]-\mN}_\diamond &=  1-\max_{\mS \in \mO_1} F_{\min}(\mS[\mE],\mN)  \nonumber \\
    &= 1-\max_{\mS \in \mO_1} F_{\ChoiF}(\mS[\mE],\mN) \nonumber \\
    &=1- G_{\O}(\mE,R_{s ,\O}(\mN)).
\end{align}

As we will see later in Sec.~\ref{sec:CoherenceManipulation}, even if we consider sets of free superchannels other than the maximally free superchannels, similar results can be established in the resource theories of dynamical coherence. 

Next, we relate the diamond conversion distance to inequalities between \textit{smoothed} resource monotones, thereby establishing the analogue of Ref.~\cite[Thm.~1 and 3]{Regula2021b}: Let $R_{\O}$ denote any resource monotone, and let $R_{\O}^{\epsilon, \cdot}$ denote the monotone smoothed with respect to the diamond distance and the worst-case fidelity, respectively, i.e.,
\begin{subequations}
    \begin{align}
        R_{\O}^{\epsilon,\diamond}(\mE)&:= \inf_{ \mE^\prime \in \CPTP}\big\{R_{\O}(\mE^\prime)\!:\! \tfrac{1}{2}\norm{\mE-\mE^\prime}_\diamond \leq \epsilon \big\}, \label{eq:DiamondSmoothing}\\
        R_{\O}^{\epsilon,F}(\mE)&:= \inf_{ \mE^\prime \in \CPTP}\big\{R_{\O}(\mE^\prime)\!:\! F_{\min}(\mE,\mE^\prime) \!\geq\! 1\!- \!\epsilon \big\} \label{eq:FidelitySmoothing}.
    \end{align}
\end{subequations}
We now have the following diamond distance analogue of Ref.~\cite[Thm.~1]{Regula2021b}.

\begin{restatable}{theorem}{ThmOneequivalent}
\label{thm:Thm1_equivalent}
Let $\mE$ and $\mN$ be quantum channels and let $\O$  be a set of free channels satisfying assumption~\ref{assump:ClosedConvex}. Furthermore, let $R_{\O}$ denote any monotone under $\mO_1$. If there exists a superchannel $\mS_1 \in \mO_1$ such that $\frac{1}{2}\norm{\mS_1[\mE]-\mN}_\diamond \leq \epsilon$, then $R_{\O}^{\delta,\diamond}(\mE) \geq R_{\O}^{\epsilon+\delta,\diamond}(\mN)$ for any $ 0\leq \delta \leq 1$. Conversely, for any choices of $\epsilon,\delta\geq 0$ with $\epsilon+\delta\leq 1$, if $R_{H,\O}^{\delta} (\mE)\geq R_{s,\O}^{\epsilon,\diamond}(\mN)$ or $R_{H,\aff(\O)}^{\delta} (\mE)\geq R_{\max , \O}^{\epsilon,\diamond}(\mN)$, then there exists a superchannel $\mS_1\in \mO_1$ such that $\frac{1}{2}\norm{\mS_1[\mE]-\mN}_\diamond\leq \epsilon+\delta $.
\end{restatable}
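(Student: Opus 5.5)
Both directions follow the pattern of Ref.~\cite[Thm.~1]{Regula2021b}; the only change is that the fidelity is replaced by the diamond distance. Two facts make this work: the diamond distance obeys the triangle inequality, and it is contractive under superchannels (the $N=1$ case of the contractivity in Lem.~\ref{def:GenDiamondDist}).

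\emph{Forward direction.} Let $\mS_1\in\mO_1$ satisfy $\tfrac12\norm{\mS_1[\mE]-\mN}_\diamond\le\epsilon$. Take any $\mE'$ with $\tfrac12\norm{\mE-\mE'}_\diamond\le\delta$. By contractivity, $\tfrac12\norm{\mS_1[\mE']-\mS_1[\mE]}_\diamond\le\delta$. The triangle inequality then gives $\tfrac12\norm{\mS_1[\mE']-\mN}_\diamond\le\epsilon+\delta$. So $\mS_1[\mE']$ is feasible in the smoothing defining $R^{\epsilon+\delta,\diamond}_{\O}(\mN)$, and monotonicity of $R_\O$ under $\mO_1$ yields
\begin{align}
R^{\epsilon+\delta,\diamond}_{\O}(\mN)\le R_\O(\mS_1[\mE'])\le R_\O(\mE').
\end{align}
Taking the infimum over $\mE'$ proves the claim. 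If $\epsilon+\delta>1$, the ball of radius $\epsilon+\delta$ contains every channel, and the inequality is trivial.

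\emph{Converse direction.} Assume $R^{\delta}_{H,\O}(\mE)\ge R^{\epsilon,\diamond}_{s,\O}(\mN)$. Choose $\mN'$ with $\tfrac12\norm{\mN-\mN'}_\diamond\le\epsilon$ and $R_{s,\O}(\mN')$ arbitrarily close to the infimum; compactness together with assumption~\ref{assump:ClosedConvex} should let us attain it. Write $\mN'+\lambda\mM=(1+\lambda)\mK$ with $\mM,\mK\in\O$ and $1+\lambda=R_{s,\O}(\mN')$. On the other side, pick a free channel $\mM_0$, a pure state $\psi$, and a test $0\le P\le\id$ that attain $R^\delta_{H,\O}(\mE)$. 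The superchannel is the measure-and-prepare construction of Ref.~\cite{Regula2021b}:
\begin{align}
\mS[\mathcal{X}]=\Tr{P(\idChan\otimes\mathcal{X})(\psi)}\,\mN'+\Tr{(\id-P)(\idChan\otimes\mathcal{X})(\psi)}\,\mM.
\end{align}
Strictly speaking, the input channel of $\mN'$ is realized by feeding its input in through the superchannel. This requires only that $\mN'$ and $\mM$ are channels, not isometries.

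Freeness of $\mS$ is exactly the content of the robustness and hypothesis-testing conditions in Ref.~\cite{Regula2021b}. The obstacle is that the minimizing $\mM_0$ in $R^\delta_{H,\O}$ is inside an $\inf$, so the test is not fixed by $\mE$ alone. This step is the one I expect to be hardest. I would handle it through the minimax argument of Ref.~\cite{Regula2021b}, which produces a single $\psi$ and a single $P$ such that $\Tr{P(\idChan\otimes\mM'')(\psi)}\le 1/R_{s,\O}(\mN')$ for all $\mM''\in\O$. With this bound, $\mS[\mM'']$ is a convex combination of $\mN'$ and $\mM$ with weight at most $1/(1+\lambda)$ on $\mN'$. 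The robustness decomposition then shows it lies in $\O$, using convexity.

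\emph{Error estimate.} Since $\Tr{P(\idChan\otimes\mE)(\psi)}\ge1-\delta$, we have $\mS[\mE]=(1-p)\mN'+p\mM$ with $p\le\delta$. Hence $\tfrac12\norm{\mS[\mE]-\mN'}_\diamond\le p\le\delta$, and the triangle inequality gives the bound $\epsilon+\delta$. Working with the diamond ball here, rather than the fidelity ball of Ref.~\cite{Regula2021b}, is what makes the errors add linearly.

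The reduced-dimensional case runs the same way with the affine quantities. The test now satisfies $\Tr{P(\idChan\otimes\mM'')(\psi)}=1/m$ with equality for all free $\mM''$, and $\mM$ is an arbitrary channel as in $R_{\max,\O}$. Equality of the weights makes every free channel map to the same mixture, which is free by the $R_{\max,\O}$ decomposition.
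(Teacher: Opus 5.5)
Your forward direction is correct and is essentially the paper's argument; you merge its two steps (monotonicity of the smoothed quantity under $\mO_1$, then the triangle inequality) into one. Your full-dimensional converse is also correct. The single test you need is the reformulation of \cite[Lem.~5]{Regula2021b} in Eq.~\eqref{eq:R_H_smooth}, which the paper also uses. That test only gives $q:=p_{\mM''}\le 1/h\le 1/r$, with $h:=R^{\delta}_{H,\O}(\mE)$ and $r:=R_{s,\O}(\mN')$. Your level-$r$ decomposition then gives $\mS[\mM'']=qr\,\mK+(1-qr)\mM\in\O$ directly, which is slightly simpler than the paper's lifting to level $h$.

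The reduced-dimensional case, however, fails as you describe it. There $h:=R^{\delta}_{H,\aff(\O)}(\mE)$, and the test from Eq.~\eqref{eq:R_H_aff_smooth} gives $p_{\mM''}=1/h$ for all free $\mM''$. Your decomposition $\mN'+(r-1)\mM=r\mK$ lives at level $r:=R_{\max,\O}(\mN')\le h$, with $\mM$ only in $\CPTP$. Hence $\mS[\mM'']=\tfrac1h\mN'+(1-\tfrac1h)\mM=\tfrac rh\,\mK+(1-\tfrac rh)\mM$. For $h>r$ this is in general not free; for instance, for $\MIO$ and $\mN'=\mF_d$ the channel $\mM$ creates coherence. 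Equality of the weights only helps if the common weight is exactly $1/r$, and the hypothesis $h\ge r$ does not give you that.

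The missing step is to lift the decomposition to level $h$. For $1<h<\infty$, the channel $\tilde{\mQ}:=\frac{(r-1)\mM+(h-r)\mK}{h-1}$ satisfies $\mN'+(h-1)\tilde{\mQ}=h\mK$. Then $\mS[\mL]:=p_\mL\mN'+\bar p_\mL\tilde{\mQ}$ maps every free channel to $\mK$; this is Eq.~\eqref{eq:Q_tilde_affine}. Alternatively, you can dilute the test to $(1-\beta)P+\beta\id$ so that the weight becomes exactly $1/r$. If $h=r$, nothing needs to change.

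You also need a separate argument for $h=\infty$. In reduced-dimensional theories this can come with $r=\infty$, for example when $\O$ consists of the channels fixing $\ketbra{0}{0}$ and $\mN$ is the qubit bit flip. Then no robustness decomposition of any $\mN'$ exists. In that case $p_{\mM''}=0$ for all free $\mM''$, so you can take $\mS'[\mL]:=p_\mL\mN+\bar p_\mL\mK_0$ with any $\mK_0\in\O$. This superchannel lies in $\mO_1$ and has error at most $1-p_\mE\le\delta$.
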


We discuss in Appendix~\ref{sec:Appen_GeneralConversion} how this Theorem is related to Ref.~\cite[Thm.~1]{Regula2021b}. If $\mN$ is an isometry, by specializing to the monotones based on the hypothesis-testing relative entropy, as a simple Corollary of Thm.~3 in Ref.~\cite{Regula2021b}, we can strengthen the direct part of the above Theorem.
\begin{restatable}{cor}{ThmThreeequivalent}
\label{thm:Thm3_equivalent}
    Let $\mE$ and $\mN$ be quantum channels with $\mN$ an isometry, and let $\O$  be a set of free channels satisfying assumption~\ref{assump:ClosedConvex}. If there exists a free superchannel $\mS_1\in \mO_1$ such that $\frac{1}{2}\norm{\mS_1[\mE]-\mN}_\diamond \leq \epsilon$, then $R_{H,\O}^{\epsilon}(\mE) \geq R_{\min ,\O}(\mN)$ and $R_{H,\aff(\O)}^{\epsilon}(\mE) \geq R_{\min ,\aff(\O)}(\mN)$.
\end{restatable}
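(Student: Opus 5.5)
The plan is to reduce the diamond-distance hypothesis to a worst-case fidelity statement and then invoke Ref.~\cite[Thm.~3]{Regula2021b}. That theorem says, for isometric $\mN$: if some $\mS_1\in\mO_1$ satisfies $F_{\min}(\mS_1[\mE],\mN)\geq 1-\epsilon$, then $R_{H,\O}^{\epsilon}(\mE)\geq R_{\min,\O}(\mN)$ and $R_{H,\aff(\O)}^{\epsilon}(\mE)\geq R_{\min,\aff(\O)}(\mN)$. So it suffices to show that the diamond bound implies this fidelity bound.

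\textbf{Step 1 (conversion).} Assume $\frac{1}{2}\norm{\mS_1[\mE]-\mN}_\diamond\leq\epsilon$ with $\mS_1\in\mO_1$. Since $\mN$ is an isometric channel, the left inequality of the sharpened Fuchs-van de Graaf relation, Eq.~\eqref{eq:FuchsVanDeGraaf_channels_sharp}, applies with $\mV=\mN$ and $\mM=\mS_1[\mE]$. It gives
\begin{align}
1-F_{\min}(\mN,\mS_1[\mE])\leq \tfrac{1}{2}\norm{\mN-\mS_1[\mE]}_\diamond\leq\epsilon .
\end{align}
Because $F_{\min}$ is symmetric in its arguments, this yields $F_{\min}(\mS_1[\mE],\mN)\geq 1-\epsilon$. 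The same superchannel $\mS_1\in\mO_1$ therefore achieves a worst-case fidelity error of at most $\epsilon$.

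\textbf{Step 2 (apply the cited theorem).} Feed this into Ref.~\cite[Thm.~3]{Regula2021b}, checking that its hypotheses match ours. The free set $\O$ is closed and convex by assumption~\ref{assump:ClosedConvex}. The target $\mN$ is an isometry. The allowed transformations are the maximally free superchannels. The monotones $R_{H,\O}^{\epsilon}$, $R_{H,\aff(\O)}^\epsilon$, $R_{\min,\O}$ and $R_{\min,\aff(\O)}$ are defined exactly as in Ref.~\cite{Regula2021b}; the affine $R_{\min}$ is deliberately defined via the support-projector formula to match that reference. Both inequalities of the corollary then follow directly.

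\textbf{Main obstacle.} The delicate point is that the inequality in Step 1 must be linear in the infidelity, not the square-root version of Eq.~\eqref{eq:FuchsVanDeGraaf_channels}. Only the linear version preserves the same $\epsilon$ in the hypothesis-testing smoothing parameter. This is exactly where the isometry assumption is used, through the sharpened bound established in Appendix~\ref{sec:Appen_GenDiaDist}.

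It is also worth confirming that the error convention of Ref.~\cite[Thm.~3]{Regula2021b} is $F_{\min}\geq 1-\epsilon$ and not $\sqrt{F_{\min}}\geq 1-\epsilon$. If the latter convention is used, the result still holds, since $F_{\min}\geq1-\epsilon$ implies $\sqrt{F_{\min}}\geq F_{\min}\geq 1-\epsilon$.
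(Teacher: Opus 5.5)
Your proposal is correct and follows the paper's proof: you use the isometry of $\mN$ to apply the sharpened Fuchs-van de Graaf bound, Eq.~\eqref{eq:FuchsVanDeGraaf_channels_sharp}, and obtain $F_{\min}(\mS_1[\mE],\mN)\geq 1-\epsilon$. You then invoke Ref.~\cite[Thm.~3]{Regula2021b} for both $\O$ and $\aff(\O)$.
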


The above one-shot conversion bounds allow us to address distillation and dilution with maximally free superchannels $\mO_1$. For the remainder of this section, we choose the resource monotone to be $M(\mT)=R_{\min, \O}(\mT)$ (see Eq.~\eqref{eq:min_entropy}). While this choice might seem arbitrary at first glance, it is meaningful in that it recovers conventional definitions of the resource cost and distillable resource in various resource theories, for more information see, e.g., Ref.~\cite{Regula2021b}. With this choice, 
\begin{align}
    \Dist_{\T, \mO_1}^{\epsilon,\diamond}(\mE)=&\sup\Big\{R_{\min ,\O}(\mT): \frac{1}{2}\norm{\mS[\mE]-\mT}_\diamond \leq \epsilon, \nonumber \\
    &\quad\qquad\mT \in \T, \mS\in \mO_1 \Big\},  \\
    \Cost_{\T,\mO_1}^{\epsilon,\diamond}(\mE)=& \inf\Big\{ R_{\min ,\O}(\mT) : \frac{1}{2}\norm{\mE-\mS[\mT]}_\diamond \leq \epsilon, \nonumber \\
    &\quad\qquad\mT \in \T, \mS\in  \mO_1\},
\end{align}
and analogously, define 
\begin{align}
    \Dist_{\T, \mO_1}^{\epsilon,F}(\mE):=&\sup\Big\{R_{\min ,\O}(\mT): F_{\min}(\mS[\mE],\mT) \geq 1\!-\!\epsilon,  \nonumber\\
    &\quad\qquad\mT \in \T, \mS\in \mO_1 \Big\},  \\
    \Cost_{\T,\mO_1}^{\epsilon,F}(\mE):=& \inf\Big\{ R_{\min ,\O}(\mT) : F_{\min}(\mE,\mS[\mT]) \geq 1\!-\!\epsilon, \nonumber \\
    &\quad\qquad\mT \in \T, \mS\in  \mO_1\}.
\end{align}
Furthermore, let  
\begin{subequations}
\begin{align}
    \floor{x}_{\T} :=& \sup\{R_{\min, \O}(\mT):  R_{\min , \O}(\mT)\leq x, \mT\in \T\},  \\
    \ceil{x}_{\T} :=& \inf\{R_{\min , \O}(\mT):  R_{\min, \O}(\mT)\geq x, \mT\in \T \}.
\end{align}
\end{subequations}
With this notation at hand, the following is a direct Corollary of Ref.~\cite[Cor.~2 and~4]{Regula2021b}, Thm.~\ref{thm:Thm1_equivalent}, and Cor.~\ref{thm:Thm3_equivalent}.

\begin{cor}
\label{cor:Cost_Dist_mO}
Let $\mE$ be a quantum channel, $\star \in \{F,\diamond \}$, and $\O$ a set of free channels satisfying assumption~\ref{assump:ClosedConvex}. If $R_{\min,\O}(\mT)=R_{s,\O}(\mT) \, \forall \mT \in \T$, then
\begin{align}
    \Cost_{\T, \mO_1}^{\epsilon,\star}(\mE) =\ceil{R_{s, \O}^{\epsilon,\star}(\mE)}_T,
\end{align}
and if additionally $\mT$ is an isometry for every $\mT\in\T$, then
\begin{align}
    \Dist_{\T, \mO_1}^{\epsilon,\star}(\mE)=\floor{R_{H,\O}^\epsilon(\mE)}_T.
\end{align}
If $R_{\min,\aff(\O)}(\mT)=R_{\max,\O}(\mT) \, \forall \mT \in \T$, then
\begin{align}
    \Cost_{\T, \mO_1}^{\epsilon,\star}(\mE)= \ceil{R_{\max , \O}^{\epsilon,\star}(\mE)}_T,
\end{align}
and if additionally $\mT$ is an isometry for every $\mT\in\T$, then
\begin{align}
    \Dist_{\T, \mO_1}^{\epsilon,\star}(\mE)=\floor{R_{H,\aff(\O)}^\epsilon(\mE)}_T.
\end{align}
\end{cor}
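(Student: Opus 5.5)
The corollary follows by combining the $F$-versions already proven in Ref.~\cite[Cor.~2 and~4]{Regula2021b} with the diamond-distance one-shot results established above (Thm.~\ref{thm:Thm1_equivalent} and Cor.~\ref{thm:Thm3_equivalent}). The proof therefore splits by $\star$. For $\star=F$, all four identities are exactly Ref.~\cite[Cor.~2 and~4]{Regula2021b}, after checking that the definitions of $\Cost^{\epsilon,F}$, $\Dist^{\epsilon,F}$, $\floor{\cdot}_T$ and $\ceil{\cdot}_T$ match theirs. For $\star=\diamond$, I would rerun the two-sided argument of those corollaries, using the diamond-distance statements in place of their fidelity counterparts. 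I treat the full-dimensional case; the reduced-dimensional case is verbatim with $R_{s,\O}\to R_{\max,\O}$, $R_{H,\O}\to R_{H,\aff(\O)}$ and $R_{\min,\O}\to R_{\min,\aff(\O)}$.

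\emph{Cost.} For ``$\geq$'', take any feasible pair $(\mT,\mS)$ with $\mS\in\mO_1$ and $\tfrac12\norm{\mE-\mS[\mT]}_\diamond\le\epsilon$. Since $R_{s,\O}$ is a monotone under $\mO_1$, we get $R_{s,\O}(\mT)\ge R_{s,\O}(\mS[\mT])\ge R^{\epsilon,\diamond}_{s,\O}(\mE)$, where the last step uses that $\mS[\mT]$ is admissible in the smoothing \eqref{eq:DiamondSmoothing}. Equivalently, this is Thm.~\ref{thm:Thm1_equivalent} with $\delta=0$ applied to the conversion $\mT\to\mE$. By hypothesis $R_{\min,\O}(\mT)=R_{s,\O}(\mT)$, so every feasible value is at least $R^{\epsilon,\diamond}_{s,\O}(\mE)$, hence at least its $T$-ceiling.

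For ``$\leq$'', take any $\mT\in\T$ with $R_{\min,\O}(\mT)\ge R^{\epsilon,\diamond}_{s,\O}(\mE)$. Since $R_{H,\O}^{0}=R_{\min,\O}$, this reads $R_{H,\O}^{\delta=0}(\mT)\ge R^{\epsilon,\diamond}_{s,\O}(\mE)$. The converse part of Thm.~\ref{thm:Thm1_equivalent}, with input $\mT$ and target $\mE$, then provides $\mS\in\mO_1$ with error at most $\epsilon$. Taking the infimum over such $\mT$ gives the $T$-ceiling. I need to check that the converse direction of Thm.~\ref{thm:Thm1_equivalent} does not require the target $\mE$ to be an isometry. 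As stated, it does not.

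\emph{Distillation.} For ``$\leq$'', suppose $\mS\in\mO_1$ achieves $\tfrac12\norm{\mS[\mE]-\mT}_\diamond\le\epsilon$ with $\mT\in\T$ an isometry. Cor.~\ref{thm:Thm3_equivalent} gives $R_{H,\O}^\epsilon(\mE)\ge R_{\min,\O}(\mT)$, so the supremum is at most $\floor{R_{H,\O}^\epsilon(\mE)}_T$. For ``$\geq$'', take $\mT\in\T$ with $R_{\min,\O}(\mT)\le R^\epsilon_{H,\O}(\mE)$. Using $R_{\min,\O}(\mT)=R_{s,\O}(\mT)\ge R^{0,\diamond}_{s,\O}(\mT)$, the converse part of Thm.~\ref{thm:Thm1_equivalent} with $\delta=\epsilon$ and error parameter $0$ yields a conversion $\mE\to\mT$ with diamond error at most $\epsilon$.

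\emph{Main obstacle.} The delicate point is attainment and edge conventions. The floor and ceiling are a supremum and an infimum over $\T$, so I would argue elementwise for each admissible $\mT$ and then pass to the sup/inf. I also need the convention $R_H^0=R_{\min}$ to hold exactly in the non-affine case. In the affine case there is a possible gap between $R_{\min,\aff(\O)}$ and $R^{0}_{H,\aff(\O)}$, so there I would use $R^0_{H,\aff(\O)}\ge R_{\min,\aff(\O)}$, which is the direction actually needed for the cost bound. Finally, I would check that assumption~\ref{assump:ClosedConvex} guarantees that the minima over $\mO_1$ are attained.
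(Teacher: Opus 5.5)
Your proposal is correct and follows the paper's route. The paper gives no separate proof: it states the corollary as a direct consequence of Ref.~\cite[Cor.~2 and~4]{Regula2021b} for $\star=F$, and of Thm.~\ref{thm:Thm1_equivalent} and Cor.~\ref{thm:Thm3_equivalent} for $\star=\diamond$, which is exactly the two-sided argument you spell out.

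The only bookkeeping point is the reduced-dimensional case. There the objective and $\floor{\cdot}_{\T}$, $\ceil{\cdot}_{\T}$ are still defined through $R_{\min,\O}$, not $R_{\min,\aff(\O)}$. Your substitution is nonetheless harmless, because $R_{\min,\aff(\O)}\le R_{\min,\O}\le R_{\max,\O}$ holds for every channel. The hypothesis $R_{\min,\aff(\O)}(\mT)=R_{\max,\O}(\mT)$ therefore makes all three quantities coincide on $\T$, but you should state this explicitly.
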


Before we discuss the content of the above Corollary, we want to comment on its applicability. As detailed in Refs.~\cite{Liu2019, Regula2020, Regula2021b}, the condition $R_{\min,\O}(\mT)=R_{s,\O}(\mT)$ holds in many full-dimensional resource theories for common sets of target channels, and the condition that $R_{\min,\aff(\O)}(\mT)=R_{\max,\O}(\mT)$ holds in many reduced-dimensional ones. In either case, the distillable resources with respect to the diamond distance and worst-case fidelity coincide. In contrast, the costs can differ, since the relevant robustness is smoothed over a different metric. This concludes our extension of the fidelity-based framework of Ref.~\cite{Regula2021b} to the diamond distance. We emphasize again that in this subsection, we considered the set of maximally free superchannels. This is crucial for the proofs of Thm.~\ref{thm:DiamondBound_Bounds_G} and Thm.~\ref{thm:Thm1_equivalent}, and likewise for the proofs of Ref.~\cite[Thm.~1, Thm.~3, and Thm.~6]{Regula2021b}. At their core are ``measure-and-act" superchannels of the form
\begin{align}\label{eq:Superchannel_maxFree}
    \mS[\mE]\mkern-4mu=\mkern-4mu\Tr{ \idChan \otimes\,\mE(\psi) P} \mN \mkern-4mu+\mkern-4mu\Tr{\idChan \otimes\,\mE(\psi)(\id\mkern-4mu -\mkern-4mu P)} \mQ,
\end{align}
where $\mE$ is the input channel, $\mN$ the target channel, and the state $\psi$, the POVM element $P$, and the channel $Q$ are solutions of the optimization problems related to the monotones appearing in these Theorems (see the Appendix for details). This ensures that $\mS \in \mO_1$. However, without further assumptions, it neither guarantees that $\mS$ is completely free in the sense of Eq.~\eqref{eq:completely_free_Superchannels}, nor that it admits an implementation by a free network as in Eq.~\eqref{eq:free_network_Superchannels}. In fact, we will see in Prop.~\ref{cor:DIConversionDistance} in the next section that in coherence theory, neither is true. Maximally free superchannels are thus, in general, not closed under arbitrary composition. 

To conclude this section, we note that one-shot distillation and dilution of dynamical resources with respect to the diamond distance were previously considered in Ref.~\cite {Yuan2020}. However, Ref.~\cite{Yuan2020} contains no proofs and considers resource manipulation with free networks, whereas  Ref.~\cite{Regula2021b} and the present section consider the maximal set of free superchannels.

\section{APPLICATION TO DYNAMICAL COHERENCE}\label{sec:Coherence}
Resource theories of coherence~\cite{Streltsov2017} quantify the value of superposition with respect to a fixed orthonormal basis $\{\ket{i}\}_i$ referred to as the incoherent basis. A quantum state $\sigma$ is called incoherent (or free) iff it is diagonal in the incoherent basis, i.e., iff $\Delta(\sigma)=\sigma$, where $\Delta(\rho)=\sum_i \ketbra{i}{i} \rho \ketbra{i}{i}$ denotes total dephasing in the incoherent basis. We denote the set of incoherent states by $\I$. States not contained in $\I$ are called coherent and are thus considered resourceful. The literature considers several classes of free channels~\cite {Streltsov2017,Aberg2006,Baumgratz2014,Winter2016,Yadin2016,Chitambar2016,Chitambar2016b,Chitambar2017,Marvian2016} that emerge, for example, from practical or conceptual considerations. In the following, we focus on the largest sets of free channels that cannot create, detect, or both create and detect coherence, respectively~\cite{Aberg2006,Diaz2018,Theurer2019, Meznaric2013,Chitambar2016,Chitambar2016b,Marvian2016,Chitambar2017,Liu2017}.

\begin{defin}
\label{def:FreeOperation}
    A quantum channel $\mathcal{N}^{A\to B}$ is called
    \begin{enumerate}[label=\alph*)]
        \item Creation-incoherent or maximally incoherent iff $\mathcal{N}^{A\to B} \circ\Delta_A =\Delta_B\circ\mathcal{N}^{A\to B}\circ \Delta_A$,
        \item Detection-incoherent iff $\Delta_B\circ\mathcal{N}^{A\to B}  =\Delta_B \circ\mathcal{N}^{A\to B}\circ \Delta_A$,
        \item Creation-detection-incoherent or dephasing-covariant iff $\Delta_B\circ\mathcal{N}^{A\to B}  =\mathcal{N}^{A\to B}\circ\Delta_A$.
    \end{enumerate}
    The set of all creation-incoherent, detection-incoherent, and dephasing-covariant channels (of arbitrary input and output systems) will be labeled as $\MIO$, $\DI$, and $\DIO$ respectively.
\end{defin}

For these resource theories of dynamical coherence,  assumptions~\ref{assump:identity} to \ref{assump:ClosedConvex} are satisfied. Using the Choi representation of quantum channels, it becomes clear that a channel $\mN$ is contained in $\MIO$ iff $(\Delta_A\otimes \idChan_B) J_\mathcal{N}^{AB}=(\Delta_A \otimes \Delta_B) J_\mathcal{N}^{AB}$. Throughout this work, we will make extensive use of total dephasing maps acting on subsystems only, and will simply denote the channel action $(\Delta_A\otimes \idChan_B)(X_{AB})$ as $\Delta_A X_{AB}$ from now on. Analogously, a channel is in $\DI$ iff $\Delta_B J_{\mN}^{AB}=\Delta_{AB} J_{\mN}^{AB}$ and in $\DIO$ iff $\Delta_A J_{\mN}^{AB}=\Delta_{B} J_{\mN}^{AB}$. 

The largest compatible sets $\cMIO, \cDI$, and $\cDIO$, as defined in the discussion following Def.~\ref{def:cO}, exist. Using the system labels from Fig.~\ref{fig:NetworkvsComb}, they admit the following characterization. 

\begin{restatable}{theorem}{CompatibleSupermaps}
\label{thm:CompatibleSupermaps}
Let $\O \in\{\MIO, \DI, \DIO \} $.  Then, $\cO$ exists and
\begin{enumerate}[label=\alph*)]
    \item A supermap satisfies  $\mS_N \in \cMIO$ iff $\forall j: 0\leq j\leq N$ it holds that~\cite{Ahnefeld2025}
        \begin{align}\label{eq:MIOCompFromMaxSet}
        \Delta_{0,2,\cldots, 2j} J_{\mathcal{S}_N}=\Delta_{0,2,\cldots, 2j} \Delta_{1,3,\cldots, 2j+1} J_{\mathcal{S}_N}.
    \end{align}
    \item A supermap satisfies  $\mS_N \in \cDI$ iff $\forall j: 0\leq j\leq N$ it holds that
        \begin{align}\label{eq:DICompFromMaxSet}
        \Delta_{2N+1,\cldots,2j+1} J_{\mathcal{S}_N}=\Delta_{2N+1,\cldots,2j+1} \Delta_{2N,\cldots, 2j} J_{\mathcal{S}_N}.
    \end{align}
    \item A supermap satisfies  $\mS_N \in \cDIO$ iff $\forall j: 0\leq j\leq N$ it holds that
    \begin{subequations}\label{eq:DIOCompFromMaxSet}
        \begin{align}
        &\Delta_{0,2,\cldots, 2j} J_{\mathcal{S}_N}=\Delta_{0,2,\cldots, 2j} \Delta_{1,3,\cldots, 2j+1} J_{\mathcal{S}_N},  \\
        &\Delta_{2N+1,\cldots,2j+1} J_{\mathcal{S}_N}=\Delta_{2N+1,\cldots,2j+1} \Delta_{2N,\cldots, 2j} J_{\mathcal{S}_N}.
    \end{align}
    \end{subequations}
\end{enumerate}
\end{restatable}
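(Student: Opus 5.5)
The plan is to take the set described by the conditions in the statement as the candidate and prove two things about it. Fix $\O\in\{\MIO,\DI,\DIO\}$ and let $\mathfrak{C}^\O$ be the set of all $\mS_N\in\mathfrak{S}$, for all $N$, whose combs satisfy Eq.~\eqref{eq:MIOCompFromMaxSet}, Eq.~\eqref{eq:DICompFromMaxSet}, or Eqs.~\eqref{eq:DIOCompFromMaxSet}, respectively. First, $\mathfrak{C}^\O$ is compatible with respect to $\O$. Second, every set $\mathfrak{C}$ compatible with respect to $\O$ satisfies $\mathfrak{C}\subseteq\mathfrak{C}^\O$. Together these give existence of $\cO$ and $\cO=\mathfrak{C}^\O$. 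The $\DI$ case is the time-reversed dual of the $\MIO$ case, with inputs and outputs exchanged and the recursion run from the last tooth backwards. The $\DIO$ case is the intersection of the two. I would therefore write out $\MIO$ in detail and indicate the modifications for the others.

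\emph{Step 1 (channels).} For $N=0$ the conditions reduce to $\Delta_0 J=\Delta_0\Delta_1 J$ (respectively $\Delta_1J=\Delta_{01}J$, or both). These are exactly the Choi characterizations of $\MIO$, $\DI$, $\DIO$ given after Def.~\ref{def:FreeOperation}, so $\mathfrak{C}^\O\cap\mathfrak{S}_0=\O$.

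\emph{Step 2 (closure under link products).} I would rewrite the $\MIO$ condition at level $j$ operationally: if incoherent (diagonal) operators are fed into all inputs $0,2,\ldots,2j$, then the marginal comb $J^{(j)}$ produces outputs that are diagonal on $1,3,\ldots,2j+1$. By causality, $J^{(j)}$ is obtained by tracing the later outputs, and the later inputs are irrelevant, so the condition on $J^{(j)}$ is equivalent to the stated condition on $J_{\mS_N}$. For a composition $\mS*\mT$ with a consistent causal order, every accessible input of the composite lies either in $\mS$ or in $\mT$, and the connected systems are routed between the two combs in causal order. I would argue by induction along the merged causal order that every connected system carries a diagonal operator whenever all accessible inputs up to that point are diagonal. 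The induction uses that $\Delta$ commutes with the partial transpose and with partial traces on other systems, and that $\Tr{X^T \Delta Y}=\Tr{\Delta(X)^T Y}$. Concretely, inserting $\Delta$ on each linked system $L$ is free: whenever the producing comb already outputs something dephased on $L$, we have $\Delta_L$ applied to it equal to itself. Applying this step by step yields the condition for $\mS*\mT$ at each level. The $\DI$ version runs the same induction backwards in time, with ``no detection'' stated as: dephasing all outputs from $2j+1$ onward allows dephasing all inputs from $2j$ onward.

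\emph{Step 3 (maximality).} This is the step I expect to be the main obstacle. Let $\mathfrak{C}$ be compatible with respect to $\O$ and $\mS_N\in\mathfrak{C}$, and fix $j$. Since $\O\subseteq\mathfrak{C}$, I link $\mS_N$ with the following free channels. For $k\le j$, I use the swap channel $(2k\!-\!1)\otimes X_k\to Y_k\otimes(2k)$, which stores the tooth output in the fresh register $Y_k$ and injects a fresh input $X_k$ into system $2k$. This is in $\O$ because permutations of incoherent-basis registers lie in $\DIO\subseteq\MIO\cap\DI$, together with assumption~\ref{assumpt:complfree}. For $k>j$, I use a trace-and-prepare channel with an incoherent state, which is in $\DIO$. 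Compatibility forces the iterated link product to lie in $\mathfrak{C}\cap\mathfrak{S}_0=\O$. The result is the channel from $(0,X_1,\ldots,X_j)$ to $(Y_1,\ldots,Y_j,2j+1)$ whose Choi operator, up to relabeling and partial trace over the later teeth, equals $J^{(j)}$ (tensored with identity factors). Its $\MIO$ Choi condition is then precisely Eq.~\eqref{eq:MIOCompFromMaxSet} at level $j$. The delicate points are the following. (i) I must check that these channels satisfy the consistency requirements i) and ii) for link products. The $X_k$ are new input labels and may be ordered causally before everything, or the links can be performed one at a time, each time yielding a lower-order supermap. (ii) In the $\DI$ case the trace-and-prepare channels must be placed on the early teeth $k\le j$ instead, so that $J_{\mS_N}$ traced appropriately yields the backward condition. (iii) The intermediate link products need not lie in $\mathfrak{S}_0$; only the final one does, which is all that is used. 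If the direct linking creates ordering issues, I would fall back on linking one slot at a time and using induction on $N$.
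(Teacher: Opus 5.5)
Your architecture (candidate set, closure by pulling dephasings through the link product, maximality by linking with free networks) is the paper's. Step~3, however, has a genuine gap. It comes from the claim in Step~2 that the level-$j$ condition on $J_{\mS_N}$ is equivalent to the same condition on the marginal comb $J^{(j)}$.

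That equivalence is false. In Eq.~\eqref{eq:MIOCompFromMaxSet} the systems $2j+2,\ldots,2N+1$ are neither dephased nor traced. The condition therefore also forbids coherence on $1,\ldots,2j+1$ that only appears jointly with coherent later inputs.

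A counterexample with $N=1$: take the pre-processing $\rho_0\mapsto\Tr{\rho_0}\,\Phi^+_{1R}$ and the post-processing $\mE_1^{2R\to 3}(A\otimes B)=\Tr{A}\Delta(B)+\Tr{AX}\bigl(B-\Delta(B)\bigr)$ from the proof of Prop.~\ref{prop:nonfreenetworks}.
\begin{itemize}
\item $J^{(0)}=\id_0\otimes\id_1/2$ and $\Delta_{02}J_{\mS}=\Delta_{0123}J_{\mS}$, so every marginal condition holds.
\item Yet $\Delta_0J_{\mS}\neq\Delta_{01}J_{\mS}$.
\item Indeed $\mS[\idChan]$ prepares $\ket{+}$, so $\mS\notin\mMIO_1$.
\end{itemize}

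Your testers cannot detect this. In your networks every input after tooth $j$ is either an input of the composite channel, and hence dephased in its $\MIO$ condition, or is fed an incoherent state. So you only ever extract marginal-type conditions, all of which this $\mS$ satisfies.

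The $\DI$ case fails in mirror image. For $N=1$, the condition $\Delta_3J=\Delta_{32}J$ leaves $0$ and $1$ untouched. Your wiring either makes $1$ an output of the composite, so it gets dephased, or discards it.

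The missing idea is a family of free channels that are not classical wiring but transfer the relevant coherences onto a fresh register while remaining free. This is what Lem.~\ref{lem:DIOTester} provides.
\begin{itemize}
\item The $\DIO$ channels $\mK,\mL$ map $(2j-1,B)$ to $2j$, with $B=(B_{2j},B_1,\ldots,B_{2j-3})$, and are built from the states $\ket{\psi^\pm}$.
\item Their entire off-diagonal part on the fresh input $B_{2j}$ is tied to $\tfrac12(\ketbra{kn}{lm}\pm\ketbra{lm}{kn})$ on $(2j-1,2j)$ and to $\ketbra{\vec k}{\vec l}$ on the stored earlier outputs.
\item They consume those earlier outputs, so these are never dephased.
\end{itemize}
You link them into $\mS_N$, with swaps on the other teeth, and apply $(\idChan-\Delta_{B_{2j}})$ to the $\DI$ condition of the composite channel. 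Adding the resulting $\mK$- and $\mL$-equations isolates every element $\bra{\vec k}\bra{lm}\Delta_{2N+1,\ldots,2j+1}J_{\mS_N}\ket{kn}\ket{\vec l}$ with $n\neq m$, which is exactly Eq.~\eqref{eq:DICompFromMaxSet}. For $N=1$ one simply evaluates $\mS_1[\mK]$ with $B=B_2$.

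The channels $\mN,\mM$ of the same lemma are the corresponding $\MIO$ testers (cf.\ Ref.~\cite{Ahnefeld2025}). Your closure argument in Step~2 works once you drop the marginal reformulation and pull the dephasings through the full link product.
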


As we show in Appendix~\ref{sec:Appen_Coherence}, for $\O\in \{\MIO,\DI,\DIO \}$ we have $\cfO_1=\cO_1$, i.e., the superchannels in $\cO$ are exactly the completely free superchannels $\cfO_1$ introduced in  Eq.~\eqref{eq:completely_free_Superchannels}. 

In Sec.~\ref{sec:FreeSupermaps}, we discussed two other classes of free superchannels, namely free networks and maximally free superchannels. To demonstrate that these sets are generally different, we provide an explicit counterexample using MIO and DI as free operations.

\begin{restatable}{proposition}{nonfreenetworks}
\label{prop:nonfreenetworks}
Let $\O\in \{\MIO,\DI\}$. Then it holds that $\nO_1 \subsetneq \cO_1 \subsetneq \mO_1$.
\end{restatable}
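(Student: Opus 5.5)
The plan is to establish the two inclusions first, using results already stated, and then to prove strictness by two separate explicit constructions.

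\textbf{Inclusions.} For $\O\in\{\MIO,\DI\}$, assumptions~\ref{assump:identity}--\ref{assumpt:complfree} hold. Hence the free networks $\nO$ form a compatible set with respect to $\O$. Since $\cO$ exists by Thm.~\ref{thm:CompatibleSupermaps} and is the largest such set, $\nO\subseteq\cO$, and intersecting with $\mathfrak{S}_1$ gives $\nO_1\subseteq\cO_1$. For the second inclusion I will use $\cO_1=\cfO_1$, shown in Appendix~\ref{sec:Appen_Coherence}, together with Eq.~\eqref{eq:SuperChannelInclusions}, which gives $\cfO_1\subseteq\mFO_1$. Alternatively, I can read it off from the $N=1$ conditions of Thm.~\ref{thm:CompatibleSupermaps} by linking with $J_\mM$ for $\mM\in\O$.

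\textbf{Strictness of $\cO_1\subsetneq\mO_1$.} I would exhibit a maximally free superchannel whose comb violates the $j=0$ condition of Thm.~\ref{thm:CompatibleSupermaps}.

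For $\MIO$, the candidate acts as follows. The pre-processing discards the input $C$ and prepares $\ket{+}$ on system $1$ (the input of the slot). The post-processing applies a Hadamard to system $2$ and then the dephasing $\Delta$. Every channel inserted into the slot is therefore mapped to a channel with incoherent output, so $\mS\in\mFO_1$. However, $J_\mS$ is coherent on system $1$, so $\Delta_0J_\mS\neq\Delta_0\Delta_1J_\mS$. Operationally, inserting the free swap channel $AR_1\to BR_2$ routes $\ket{+}$ to the accessible output $R_2$, so $\mS\notin\cfO_1=\cMIO_1$.

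For $\DI$, I would use the time-reversed (dual) construction, in which the comb is coherent in the slot's output system $2$ rather than in its input. The post-processing measures system $2$ in the $\{\ket{\pm}\}$ basis and outputs the outcome incoherently on $D$. The pre-processing feeds a fixed incoherent input, so for every $\DI$ slot channel the relevant statistics are fixed and the resulting channel is detection-incoherent. Inserting the swap then lets a coherent $R_1$ input be detected, so $\mS\notin\cfO_1=\cDI_1$. Checking this with Eq.~\eqref{eq:DICompFromMaxSet} is a direct calculation. The main care needed is choosing nontrivial input and output systems $C,D$, so that the free set with trivial input is not all channels.

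\textbf{Strictness of $\nO_1\subsetneq\cO_1$, the main obstacle.} Membership in $\cO_1$ is an SDP condition, whereas $\nO_1$ requires the existence of a decomposition $\mM_1*\mM_0$ with $\mM_0,\mM_1\in\O$ and an auxiliary system of arbitrary size. My first approach is to find a quantity that every free network cannot increase but that some compatible superchannel does increase. The natural choice is a channel monotone under free pre- and post-processing that is finer than preservation of free channels, for instance a coherence-generating (for $\MIO$) or coherence-detecting (for $\DI$) robustness that is evaluated on the network's auxiliary system. I would then construct a superchannel satisfying Eq.~\eqref{eq:MIOCompFromMaxSet} (respectively Eq.~\eqref{eq:DICompFromMaxSet}) in which coherence is created on the memory in the pre-processing and consumed in the post-processing. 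Such creation of memory coherence is forbidden for an $\MIO$ pre-processing $\mM_0$, yet the comb conditions never constrain the memory. The hard part is proving that no alternative decomposition exists, including every dilation of the memory. If a monotone argument is not tight enough, I would instead derive a necessary linear condition on $J_\mS$ valid for all free networks, for example by linking with suitable non-free test channels and tracking the dephased auxiliary system, and exhibit an element of $\cO_1$ that violates it.
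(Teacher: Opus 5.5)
Your inclusions are correct, and so are your witnesses for $\cO_1\subsetneq\mFO_1$. They are more elementary than the paper's route. The paper cites Ref.~\cite{Ahnefeld2025} for $\MIO$, and for $\DI$ it deduces strictness from the SDP-certified gap $M_{\mOO,2}(\mE)>M_{\cO,2}(\mE)$ of Prop.~\ref{cor:DIConversionDistance}; that route gives an operational separation, whereas yours gives a set-theoretic one, which is all the statement needs. For $\MIO$, preparing $\ket{+}$ in the slot and dephasing the slot output sends every slot channel to a replacement channel with incoherent output, hence into $\MIO$. Meanwhile $J_{\mS}=\id_0\otimes\ketbra{+}{+}_1\otimes J^{23}$ violates $\Delta_0J_{\mS}=\Delta_0\Delta_1J_{\mS}$. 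For $\DI$, measuring the slot output in $\{\ket{\pm}\}$ sends every slot channel to a replacement channel, which always lies in $\DI$. So the reason is not that ``the relevant statistics are fixed'' (they depend on the slot channel), and no care about $C$ is needed. Meanwhile $J^{23}=\ketbra{+}{+}\otimes\ketbra{0}{0}+\ketbra{-}{-}\otimes\ketbra{1}{1}$ violates $\Delta_3J_{\mS}=\Delta_3\Delta_2J_{\mS}$.

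The genuine gap is $\nO_1\subsetneq\cO_1$, which is the core of the statement and of the paper's proof. You give neither a concrete witness in $\cO_1$ nor any argument excluding all decompositions $\mM_1*\mM_0$ with $\mM_0,\mM_1\in\O$ and memory $R$ of arbitrary dimension. The tools you sketch do not close this. A robustness ``evaluated on the network's auxiliary system'' is not a function of $\mS$, because the memory is exactly what an alternative implementation may change. For $\MIO$, the natural operational monotones provably do not separate the two sets: by Thm.~\ref{thm:Cost} and Thm.~\ref{thm:CoherenceOptimalConversion_main}, $\nMIO_1$ and $\cMIO_1$ give identical Fourier-transform costs and distillation errors. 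Likewise, ``coherence created on the memory'' is not an invariant. An $\MIO$ network can store the input index classically and regenerate the needed coherent states from coherence supplied through the slot, so the diagonal blocks of your candidate can be reproduced. What must be shown impossible is keeping coherence \emph{between} branches.

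The paper supplies exactly this. For $\MIO$ it takes the superchannel $\mS$ defined by the pre-processing $\ket{i}\mapsto\ket{i}_1\ket{\phi_i}_{R'}$, with $\ket{\phi_0}=\ket{0}$ and $\ket{\phi_1}=\ket{+}$, and a post-processing in which the memory controls an $X$ on system $2$. This gives $\mS(\ketbra{i}{j}_0\otimes\ketbra{+}{+}_2)=\ketbra{i}{j}_1\otimes\ketbra{\phi_i}{\phi_j}_3$, and $\mS$ satisfies \eqref{eq:MIOCompFromMaxSet}. For any $\MIO$ decomposition the argument runs as follows:
\begin{enumerate}
\item Purity of the marginal on system $1$ forces $\mM_0(\ketbra{i}{i})=\ketbra{i}{i}\otimes\tau_i$, with $\tau_i$ diagonal because $\mM_0\in\MIO$.
\item The induced channel $\mN=\mM_1(\ketbra{+}{+}_2\otimes\cdot)$ must map $\tau_i$ to the distinct pure states $\phi_i$, so $\tau_0$ and $\tau_1$ have orthogonal supports.
\item Lem.~\ref{lem:BlockMatrix_supp} applied to $J_{\mM_0}$ confines $\mM_0(\ketbra{0}{1})$ to $\ketbra{0}{1}\otimes\Pi_0\tilde C\Pi_1$.
\item Kraus operators plus trace preservation, together with $\braket{\phi_0|\phi_1}\neq0$, force $\mN(\ketbra{k}{l})=0$ on these cross terms. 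This kills the required coherence $\ketbra{0}{1}\otimes\ketbra{\phi_0}{\phi_1}$.
\end{enumerate}
For $\DI$ the dual argument has memory coherence be created (allowed for $\DI$) and then required to be detected. The induced POVM $\mN^\dagger(\ketbra{i}{i})$ is diagonal because of $\DI$, so the memory supports lie in orthogonal incoherent blocks. Hence $\Delta_R$ annihilates the stored cross term, and $\DI$ of $\mM_1$ forces $\Delta_3\mS(\ketbra{0}{1}_0\otimes\ketbra{1}{1}_2)=0$, a contradiction. For $\DI$ your monotone idea can in fact be realized: Lem.~\ref{lem:DI_xiBound} with Thm.~\ref{thm:Cost} shows that some $\mS\in\cDI_1$ maps $\mF_2$ exactly to a qutrit unitary, which every $\nDI_1$ network misses by at least $2/45$.
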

We mentioned earlier that the set of compatible supermaps provides a useful relaxation to the set of free networks. For coherence, a concrete example is  Ref.~\cite{Ahnefeld2025}, where this relaxation was used to prove a quantitative link between coherence and the performance of optimal phase estimation protocols when the available coherence is limited. 

The constraints in Thm.~\ref{thm:CompatibleSupermaps} and Eq.~\eqref{eq:CombSet} are semidefinite constraints, allowing us to employ the powerful tools of convex optimization and SDPs. For $\cMIO,\cDI$ or $\cDIO$ as free supermaps, this implies that the monotones in Prop.~\ref{prop:SupermapMonotone} can be efficiently computed\footnote{Here, ``efficiently" means that an approximate solution within a prescribed accuracy $\epsilon$ can be computed in polynomial time in the input size and $\log(1/\epsilon)$; see Ref.~\cite{Vandenberghe1996, Boyd2004}.}. Additionally, Thm.~\ref{thm:CompatibleSupermaps} allows us to express conversion distances, e.g., with respect to the generalized diamond distance, as an SDP. 

\subsection{Manipulation of dynamical coherence}\label{sec:CoherenceManipulation}
In Sec.~\ref{sec:GeneralConversion}, we considered channel distillation and dilution with the maximal set of free superchannels for general resource theories and sets of target channels $\T$. Here, we specialize this to dynamical coherence but consider the more complicated compatible supermaps and free networks instead. For dynamical coherence, a natural choice of a set of target channels is $\T:= \{\mF_d: d\in \mathbb{N}\}$, where $\mF_d$ denotes the $d$-dimensional quantum Fourier transform. We will justify this shortly with the help of the following Theorem, in which we consider the conversion distance from multiple copies of $\mF_d$ to arbitrary channels under both adaptive and parallel protocols.

\begin{restatable}{theorem}{DynamicalCoherenceCost}
\label{thm:Cost}
Let $\mE$ be any quantum channel, let $\mF_d$ denote the d-dimensional Fourier transform, and let $N\geq 1, d\geq 2$. Then, for  $\F \in \{\nMIO_1,\cMIO_1,\mMIO_1\}$,
   \begin{align}\label{eq:Cost_SDP_MIO}
    \inf_{\mS \in \F} \tfrac{1}{2}\norm{\mS[\mF_d^{\,\otimes N}]\!-\!\mE}_\diamond =&\min \big\{\tfrac{1}{2}\norm{\mM-\mE}_\diamond: \mM \in \CPTP, \nonumber \\
    & \quad R_{\max,\MIO}(\mM)\leq d^N\big\} , \\    
    \sup_{\mS \in \F} F_{\min}(\mS[\mF_d^{\otimes N}],\mE) =&\max\! \big\{ F_{\min}(\mM,\mE): \mM \in \CPTP, \nonumber \\
    &\quad   R_{\max,\MIO}(\mM)\leq d^N\big\},
\end{align}
and for $\F \in \{\cDI_1,\mDI_1\}$, 
   \begin{align}\label{eq:Cost_SDP_DI}
    \min_{\mS \in \F} \tfrac{1}{2}\norm{\mS[\mF_d^{\,\otimes N}]\!-\!\mE}_\diamond =&\min \big\{\tfrac{1}{2}\norm{\mM-\mE}_\diamond: \mM \in \CPTP, \nonumber \\
    & \quad  R_{\max,\DI}(\mM)\leq d^N\big\},
     \\    
    \sup_{\mS \in \F} F_{\min}(\mS[\mF_d^{\otimes N}],\mE) =&\max \big\{ F_{\min}(\mM,\mE): \mM \in \CPTP, \nonumber \\
    &\quad   R_{\max,\DI}(\mM)\leq d^N\big\}.
\end{align}
However, there exists a unitary channel $\mU$ for which
    \begin{align}\label{eq:different_network_cost_DI}
       \inf_{\mS \in \nDI_1} \tfrac{1}{2}\norm{\mS[\mF_2]-\mU}_\diamond > \min_{\mS \in \cDI_1} \tfrac{1}{2}\norm{\mS[\mF_2]-\mU}_\diamond.  
    \end{align}
Furthermore, recalling that $\mS_N[\mF_d,\cldots, \mF_d]$ denotes the sequential application of the supermap $\mS_N$ to $N$ copies of $\mF_d$, for $\O \in\{\MIO,\DI\}$, 
    \begin{align}\label{eq:Cost_sequential}
    \!\min_{\mS_N \mkern-2mu\in \cO_N} \mkern-7mu\tfrac{1}{2}\mkern-3mu\norm{\mS_N[\mF_d,\mkern-2mu\cldots, \mkern-2mu \mF_d ]\mkern-4mu \shortminus\mkern-4mu\mE}_\diamond \mkern-5mu =\mkern-5mu \min_{\mS \mkern-2mu\in \cO_1} \mkern-5mu\tfrac{1}{2}\mkern-3mu\norm{\mS[\mF_d^{\otimes N}]\mkern-4mu \shortminus\mkern-4mu\mE}_\diamond\!.
    \end{align}
\end{restatable}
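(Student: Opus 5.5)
The proof splits into four claims: the parallel identities, which reduce to a simulation statement for $\mMIO_1$/$\mDI_1$ plus an explicit realizability argument; the strict separation for $\nDI_1$; and the sequential identity.

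\textbf{Parallel identities.} For each $\F$ I will prove both inclusions of the feasible sets of channels. Let $\mathcal{A}=\{\mS[\mF_d^{\otimes N}]:\mS\in\F\}$ and $\mathcal{B}=\{\mM\in\CPTP: R_{\max,\O}(\mM)\le d^N\}$. Then the optimal diamond distance and worst-case fidelity over $\mathcal{A}$ equal those over $\mathcal{B}$, and $\mathcal{B}$ is compact, so minima and maxima are attained.

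For the inclusion $\mathcal{A}\subseteq\mathcal{B}$ with $\F=\mO_1$, I use that $R_{\max,\O}$ is a monotone under maximally free superchannels. If $\mF+\lambda\mM'=(1+\lambda)\mK$, applying $\mS$ gives a decomposition of $\mS[\mF]$ with the same $\lambda$. The remaining input is $R_{\max,\O}(\mF_d^{\otimes N})\le d^N$, which I would verify directly. For $\MIO$, I would exhibit $\mF_d^{\otimes N}+(d^N-1)\mM'=d^N\mK$ with $\mK$ the MIO channel $\rho\mapsto \Delta(\mF_d^{\otimes N}\rho\,\mF_d^{\dagger\otimes N})$ averaged appropriately. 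The standard candidate is $\mK(\rho)=\Tr{\rho}\,\id/d^N$ composed with suitable dephasing. For $\DI$, I would use a fully dephased channel. Since $\nO_1\subseteq\cO_1\subseteq\mO_1$, this inclusion holds for all three choices of $\F$.

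For the converse $\mathcal{B}\subseteq\mathcal{A}$, I must show that every $\mM$ with $R_{\max,\O}(\mM)\le d^N$ arises as $\mS[\mF_d^{\otimes N}]$ for some $\mS$ in the smallest relevant set: $\nMIO_1$ for MIO, and $\cDI_1$ for DI. Write $\mM=d^N\mK-(d^N-1)\mM'$ with $\mK\in\O$. I will build a measure-and-act scheme of the type in Eq.~\eqref{eq:Superchannel_maxFree}. Prepare $\ket{0}^{\otimes N}$, apply $\mF_d^{\otimes N}$ to obtain the maximally coherent state $\ket{+}^{\otimes N}$, and then use the standard fact that the maximally coherent state plus an MIO channel simulates any channel with generalized robustness at most $d^N$. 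This is the dynamical version of the state result, via a two-outcome incoherent test $\{\ketbra{+}{+},\id-\ketbra{+}{+}\}$ controlling $\mM$ versus $\mM'$.

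Both the preprocessing (preparing $\ket{0}$ and appending the input) and the postprocessing must be free, and this is the main obstacle. For MIO I expect the postprocessing $\rho_{\text{in}}\otimes\sigma\mapsto \Tr{P\sigma}\mM(\rho_{\text{in}})+\Tr{(\id-P)\sigma}\mM'(\rho_{\text{in}})$ to be MIO. This is checked on incoherent inputs, where $\Tr{P\sigma}=1/d^N$ and the output is $\mK(\rho)$, which is free. For DI the analogous postprocessing is not obviously DI, because it detects coherence of $\sigma$. Hence I only claim $\cDI_1$ there, and I would verify the conditions of Thm.~\ref{thm:CompatibleSupermaps}(b) directly on the comb.

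\textbf{Separation for $\nDI_1$.} I would exploit that a DI network is a DI preprocessing $\mM_0$ followed by a DI postprocessing $\mM_1$, so the accessible classical statistics cannot depend on coherence in between. The plan is to pick a unitary $\mU$, say a Hadamard-type or phase unitary, compute the $\cDI_1$ value from Eq.~\eqref{eq:Cost_SDP_DI} as an SDP, and bound the $\nDI_1$ value from below. For the lower bound I would choose an auxiliary test (an input state and measurement) showing that every DI network output $\mM_1(\idChan\otimes\mF_2)\mM_0$ stays a fixed distance from $\mU$. If no clean analytic bound emerges, I would fall back on numerics with a rigorous dual certificate.

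\textbf{Sequential identity.} The direction $\le$ follows by embedding parallel use into a sequential comb in $\cO_N$: the parallel comb composed with wires is compatible, because $\cO$ is closed under composition and identity wires are free. For $\ge$, note that $\mS_N[\mF_d,\dots,\mF_d]$ is a composition of $\mS_N$ with $\mF_d$'s. Since $R_{\max,\O}$ is submultiplicative and monotone under compatible compositions, the output has $R_{\max}\le d^N$, and the first part of the proof then gives the bound.
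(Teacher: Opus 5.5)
\textbf{MIO part.} Your MIO achievability and the inclusion $\{\mS[\mF_d^{\otimes N}]:\mS\in\F\}\subseteq\{\mM:R_{\max,\O}(\mM)\le d^N\}$ follow the paper. One small correction concerns the witness for $R_{\max,\O}(\mF_d^{\otimes N})\le d^N$. The depolarizing channel, and equally the fully dephased $\Delta\circ\mF_d^{\otimes N}\circ\Delta$ (whose Choi operator is $\id/d^N$), only certify $d^{2N}$. The correct witnesses are $\Delta\circ\mF_d^{\otimes N}\in\MIO$ and $\mF_d^{\otimes N}\circ\Delta\in\DI$, via the pinching inequality $J\le d_X\Delta_XJ$.

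\textbf{DI achievability.} This is a genuine gap: your measure-and-act superchannel is not in $\cDI_1$, so the comb check you defer will fail. Write $\omega_0=\ketbra{0}{0}^{\otimes N}$, $P=\ketbra{+}{+}^{\otimes N}$, and $\mM+(D-1)\mM'=D\mK$ with $D=d^N$. The comb is $J=\omega_0^1\otimes P^2\otimes J_\mM^{03}+\omega_0^1\otimes(\id-P)^2\otimes J_{\mM'}^{03}$, hence
\[
\Delta_3J-\Delta_{32}J=\omega_0\otimes(P-\Delta P)\otimes\Delta_3\bigl(J_\mM-J_{\mM'}\bigr).
\]
This vanishes only if $\Delta\circ\mM=\Delta\circ\mM'$. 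Together with the decomposition, that gives $\Delta\circ\mM=\Delta\circ\mK=\Delta\circ\mK\circ\Delta=\Delta\circ\mM\circ\Delta$, i.e.\ $\mM\in\DI$. Operationally, the superchannel reads out coherence arriving in the slot and converts it into output populations.

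The paper instead tests against the full Choi projector $P=J_{\mF_d^{\otimes N}}/D$ on systems $1,2$. It sets $Q=D\Delta_2P$, a projector with $P\le Q$, $\Delta_2Q=Q$ and $\Delta_1Q=\id/D$, and dephases the output in the non-matching branches:
\[
J_{\tilde\mS}=\tfrac1D\bigl(P^T\otimes J_\mM+(Q-P)^T\otimes\Delta_3J_\mM+(\id-Q)^T\otimes\Delta_3J_{\mM'}\bigr).
\]
After $\Delta_3$ only the incoherent $Q$ survives, and $\Delta_{321}J=\tfrac1D\id^{12}\otimes\Delta_3J_\mK$ is $\Delta_0$-invariant since $\mK\in\DI$. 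Finally, $QP=P$ gives $\tilde\mS[\mF_d^{\otimes N}]=\mM$. Your sequential argument (order preservation of the link product plus $\mS_N[\mK,\ldots,\mK]\in\O$) is otherwise fine. For $\O=\DI$, however, it reduces to the parallel $\cDI_1$ value and so inherits this gap.

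\textbf{Separation for $\nDI_1$.} Here you have a search plan, not an argument, and the suggested route cannot work. For $d=2$, $N=1$, every qubit-input target has $R_{\max,\DI}\le d_A=2$, so its $\cDI_1$ error is zero. A separation therefore needs a strictly positive lower bound over all DI networks. Your candidates are reachable exactly by $\nDI_1$: phase gates are DI, and Hadamard-type gates are $\mF_2$ up to monomial (hence DI) unitaries. Numerical dual certificates are also unavailable. $\nDI_1$ has unbounded memory and no known SDP description, and the SDP-characterized supersets $\cDI_1$ and $\mDI_1$ return exactly the value you want to beat.

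The missing idea is a network-specific structural bound. For a $D$-dimensional target unitary with $D>d$, the paper proves that every $\mS\in\nDI_1$ satisfies
\[
\tfrac12\norm{\mS[\mF_d]-\mU}_\diamond\ge\min_{i,j}|\langle i|U|j\rangle|^2.
\]
The proof probes with inputs $U^\dagger\ketbra{i}{i}U$. The DI post-processing pulls the output measurement back to a diagonal POVM on slot output and memory. The DI pre-processing makes the memory-conditioned coarse-grainings $Y_r$ diagonal. Pigeonhole then applies, since only $d<D$ labels can be hit per memory value. The paper combines this with a qutrit unitary having $\xi(U)=2/45$ but $R_{\max,\DI}(\mU)<2$.
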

As promised, this justifies choosing the set of target channels as $\T= \{\mF_d: d\in \mathbb{N}\}$: We show in Appendix~\ref{sec:Appen_Coherence} (see Lem.~\ref{cor:R_Fd}) that for $\O\in \{\MIO,\DI\}$, it holds that $R_{\max,\O}(\mF_d)=d$, $R_{\max, \MIO}(\mE^{A\to B}) \leq d_B$, and $R_{\max, \DI}(\mE^{A\to B}) \leq d_A$. The above Theorem then implies that Fourier transforms can be mapped into all other channels of equal or lower dimensions, therefore serving as ``golden units".

Moreover, the first part of the Theorem implies that even though the sets of free supermaps $\F\in\{\nMIO,\cMIO,\mMIO\}$ are distinct (cf. Prop.~\ref{prop:nonfreenetworks}), they all lead to the same resource cost 
\begin{align}
    \Cost_{\T,\F}^{\epsilon,\star}(\mE) = \ceil{R_{\max, \MIO }^{\epsilon, \star}(\mE)},
\end{align}
where $\star \in\{\diamond,F\}$, and analogously for $\F \in\{\cDI,\mDI \}$
\begin{align}
    \Cost_{\T,\F}^{\epsilon,\star}(\mE) = \ceil{R_{\max, \DI }^{\epsilon, \star}(\mE)}.
\end{align}
In all these cases, the resource cost is determined by the generalized robustness of the target channel $\mE$ smoothed over different metrics depending on how the conversion error is defined. Notice the absence of $\nDI$ here; Eq.~\eqref{eq:different_network_cost_DI} in the Theorem implies that even in the parallel case, the cost under $\nDI$ differs from the one under $\cDI$ and $\mDI$. It is currently unclear whether this is just a coincidence or reflects a deeper reason. One fundamental difference between MIO and DI is that DI  is a genuine dynamical resource theory in the sense that all states are free: The resource lies in a channel’s ability to convert input coherence into detectable population differences, a capability that cannot be reduced to a property of states (see Ref.~\cite[SM Sec.~VIII]{Theurer2019}). In contrast, MIO is defined by the preservation of incoherent states, and a channel's resourcefulness is therefore closely tied to the underlying state resource theory. Computing the cost under $\nDI$ is an open problem, in part because it is unclear how to characterize the set of free networks and because a relaxation to $\cDI$ changes the cost.

To conclude the discussion of the Theorem, we note that Eq.~\eqref{eq:Cost_sequential} implies that for the resource cost under compatible supermaps, adaptive and parallel protocols perform equally well.

We emphasize that the resource cost of channels has been studied before: In the creation-incoherent setting, Ref.~\cite{Diaz2018} related the implementation cost of channels to the smoothed max-robustness. In particular, given a target channel $\mE$ and a maximally coherent resource state $\psi_d$ of dimension $d$, a MIO operation consuming this coherence is used to simulate $\mE$ up to an error $\epsilon$ in diamond distance. Notice that by associating a resource state with its preparation channel $\mP_{\psi_d}$, the Fourier transform $\mF_d$ and $\mP_{\psi_d}$ can be interconverted using MIO networks: Starting from $\mF_d$, simply prepare $\psi_d=F_d\ket{0}$. Conversely, starting from $\mP_{\psi_d}$, use it to simulate the target channel $\mF_d$, which can be achieved exactly (according to Ref.~\cite{Diaz2018} or Thm.~\ref{thm:Cost}). Moreover, Ref.~\cite{Liu2020} considered adaptive protocols with free supermaps $\nMIO$ and obtained an asymptotic characterization in terms of a regularized smoothed max-robustness. 
We extend these results to the one-shot regime by showing that adaptive $\cMIO$ (and thus, by the first part of Thm.~\ref{thm:Cost}, $\nMIO$) protocols that use $N$ copies of a Fourier transform offer no advantage over a parallel protocol. 
To the best of our knowledge, results similar to those in Thm.~\ref{thm:Cost} were not previously known in the detection-incoherent setting. 

We now turn to the distillation of Fourier transforms, distinguishing again between parallel and adaptive distillation protocols on $N$ copies of a channel $\mE$. We begin with parallel distillation, where we require the monotones
\begin{subequations}\label{def:M_cO}
    \begin{align}
        M_{\cMIO,d}(\mE) \! &:=\!\max\big\{\!\Tr{X \mE(\ketbra{j}{j})} \!:\ 0\leq \!X\!\leq \!d\id, \nonumber \\ 
        &\qquad \qquad \quad  \Delta X=\id, 0\le j \le d_A\!-\!1\big\}, \label{eq:C_D} \\
        M_{\cDI,d}(\mE) \!: &=\! \max \Bigg\{ \sum_{i=0}^{d_B-1}\braket{i|\mE(\rho_i)| i} \!:\! \rho_i,\sigma \in \D, \rho_i\! \leq\! d \sigma, \nonumber \\
        &\qquad \qquad  \Delta \rho_i \!=\!\Delta \sigma \, \forall i: 0 \!\leq \!i \!\leq\! d_B\!-\!1\Bigg\},\label{eq:D_d}
    \end{align}
\end{subequations}
defined on all quantum channels $\mE^{A\to B}$.
As we show in the following Theorem, these monotones determine the optimal parallel conversion distances under compatible superchannels to a Fourier transform $\mF_d$ of fixed dimension $d$. To do so, we cannot use the general tools from Ref.~\cite{Regula2021b} and Sec.~\ref{sec:GeneralConversion} for the reason explained around Eq.~\eqref{eq:Superchannel_maxFree}. Instead, we rely on Thm.~\ref{thm:CompatibleSupermaps} and a twirling map on the level of Choi states (see Sec.~\ref{sec:Twirling} in the Appendix for more details) as the main technical ingredients to arrive at the following Theorem.

\begin{restatable}{theorem}{CoherenceOptimalConversion}
\label{thm:CoherenceOptimalConversion_main}
Let $\mE$ be any quantum channel, and let $\mF_d$ denote the d-dimensional Fourier transform. Suppose $d\geq 2$ and let $ \O \in \{\MIO,\DI \}$. Then, 
    \begin{align}
        \min_{\mS \in \cO_1} \tfrac{1}{2}\norm{\mS[\mE]-\mF_d}_\diamond &=  1-\max_{\mS \in \cO_1}F_{\min}(\mS[\mE],\mF_d) \nonumber \\
        &= 1-\max_{\mS \in \cO_1}F_{\ChoiF}(\mS[\mE],\mF_d) \nonumber \\
        &=1-\tfrac{1}{d} M_{\cO,d}(\mE).
    \end{align}
    Additionally, for $\O=\MIO$ this optimum can be achieved using a free ``measure-and-act" network.
\end{restatable}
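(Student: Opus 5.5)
We prove the chain of equalities through a cyclic sequence of inequalities. Since $\mF_d$ is unitary, the sharpened Fuchs--van de Graaf inequality \eqref{eq:FuchsVanDeGraaf_channels_sharp} gives, for every $\mS\in\cO_1$,
\[
\tfrac{1}{2}\norm{\mS[\mE]-\mF_d}_\diamond \geq 1-F_{\min}(\mS[\mE],\mF_d).
\]
Eq.~\eqref{eq:InequalityChain} then gives $1-F_{\min}\geq 1-F_{\ChoiF}$. Hence
\[
\min_{\mS}\tfrac12\norm{\cdot}_\diamond \geq 1-\max_{\mS} F_{\min}\geq 1-\max_{\mS} F_{\ChoiF}.
\]
It therefore suffices to prove two things: (i) $\max_{\mS\in\cO_1}F_{\ChoiF}(\mS[\mE],\mF_d)\leq \tfrac1d M_{\cO,d}(\mE)$, and (ii) there is an $\mS\in\cO_1$ with $\tfrac12\norm{\mS[\mE]-\mF_d}_\diamond\leq 1-\tfrac1d M_{\cO,d}(\mE)$.

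\textbf{Converse (i).} Write $F_{\ChoiF}(\mS[\mE],\mF_d)=\tfrac{1}{d^2}\Tr{(J_\mS*J_\mE)J_{\mF_d}}$. This can be rewritten as $\Tr{J_\mE W}$, where $W$ is obtained by contracting $J_\mS$ with $J_{\mF_d}^T$ on the outer systems $0,3$. Thm.~\ref{thm:CompatibleSupermaps} gives the coherence constraints on $J_\mS$ for $j=0,1$, and the comb conditions \eqref{eq:CombSet} give the normalization.

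The main tool is a twirl. Both the objective and the constraints are invariant under conjugating the outer systems by the symmetry group of $\mF_d$ that preserves incoherence, namely diagonal phase unitaries $Z$ on the input and Weyl shifts $X$ (with the compensating Fourier-conjugate operations on the output). Averaging over this group (the twirling map of App.~\ref{sec:Twirling}) lets us assume that $J_\mS$ has the form $\tfrac{1}{d^2}J_{\mF_d}\otimes A+(\id-\tfrac{1}{d^2}J_{\mF_d})\otimes B$ on the outer systems.

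For MIO, the constraint $\Delta_0 J=\Delta_0\Delta_1 J$ together with the twirled form should force the effective operator $X$ acting on the output of $\mE$ to satisfy $0\le X\le d\id$ and $\Delta X=\id$. Meanwhile the input to $\mE$ becomes a mixture over incoherent states $\ketbra{j}{j}$, because a MIO preprocessing from a trivial or incoherent input can only feed incoherent states. Maximizing the resulting linear functional over $j$ yields $\tfrac1d M_{\cMIO,d}(\mE)$.

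For DI, the dual constraint $\Delta_3 J=\Delta_3\Delta_2 J$ plays the symmetric role. The postprocessing effectively measures the output of $\mE$ in the incoherent basis $\ket{i}$, and conditional on the outcome $i$ the input states $\rho_i$ must have common diagonal $\Delta\sigma$ and satisfy $\rho_i\le d\sigma$. This reproduces \eqref{eq:D_d}.

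\textbf{Achievability (ii).} For MIO, take optimal $X$ and $j$, and build the measure-and-act network:
\begin{itemize}
\item prepare $\ketbra{j}{j}$ (a free preparation);
\item apply $\mE$;
\item measure $\{X/d,\ \id-X/d\}$;
\item on the first outcome implement $\mF_d$, otherwise implement some channel $\mQ$.
\end{itemize}
This is the form \eqref{eq:Superchannel_maxFree}. The problem is that $\mF_d$ alone is not free, so it cannot appear directly. Instead we choose $\mQ$ so that the combination $p\mF_d+(1-p)\mQ$ is implementable freely given the classical flag, and the constraints $\Delta X=\id$ and $X\le d\id$ are exactly what make the joint instrument MIO. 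Concretely, we combine the measurement outcome coherently with a free channel in the spirit of the $R_{\max,\MIO}(\mF_d)=d$ decomposition (Lem.~\ref{cor:R_Fd}).

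The diamond error is then at most $1-\tfrac1d\Tr{X\mE(\ketbra{j}{j})}$. The bound is uniform over inputs because the twirled $\mQ$ is orthogonal to $\mF_d$ on every input. For DI, we verify that the analogous superchannel, which prepares $\rho_i$ conditionally in a Choi-level construction, satisfies the DI conditions of Thm.~\ref{thm:CompatibleSupermaps}. We check this directly on the comb, not as a network.

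\textbf{Main obstacle.} The hardest step is the converse: reducing the twirled comb constraints exactly to the feasible sets in \eqref{def:M_cO}. In particular, for DI one must show that the conditional input states can be taken with a common dephased part, and that the bound $\rho_i\le d\sigma$ follows from positivity of the twirled comb.
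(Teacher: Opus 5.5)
Your outer inequality chain and your MIO measure-and-act network match the paper's. The converse, however, has a genuine gap at the twirl. The only group you can legitimately use is $\{U_{n,m}=W_{n,m}\otimes F_dW_{n,m}^*F_d^\dagger\}$, since both factors are monomial and therefore commute with every dephasing in Thm.~\ref{thm:CompatibleSupermaps}. This group is abelian of order $d^2$ and acts on $\mathbb{C}^d\otimes\mathbb{C}^d$ through $d^2$ distinct characters $\omega_d^{nk+lm}$ on the common eigenvectors $\ket{\Psi_{k,l}}$. Twirling over it therefore only yields $J_\mS=\sum_{k,l}Y_{k,l}^{12}\otimes\Pi_{k,l}^{03}$, with $d^2$ independent blocks (Lem.~\ref{lem:TwirlingInvariantChannel}). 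Your two-block form $\Pi_{0,0}\otimes A+(\id-\Pi_{0,0})\otimes B$ is what a twirl over all $V\otimes F_dV^*F_d^\dagger$ would give; note also that the projector is $\Pi_{0,0}=\tfrac{1}{d}J_{\mF_d}$, not $\tfrac{1}{d^2}J_{\mF_d}$. That larger group does not preserve the constraints, because $F_dV^*F_d^\dagger$ is generally not monomial. Nor is the ansatz without loss of generality. The cMIO conditions force $A,B$ to be $\Delta_1$-diagonal with $\Delta_{12}A=\Delta_{12}B$ and $A+(d^2-1)B=M\otimes\id$. Hence the blocks $X_i$ of $A$ satisfy $\Delta X_i=\tfrac{m_i}{d^2}\id$, and $d^2F_{\ChoiF}\le d\sum_i\Tr{X_i}=d_2$. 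For $\mE=\mF_d$ this caps $F_{\ChoiF}$ at $1/d$, while the identity superchannel, a free network, attains $1$.

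The missing step is the actual reduction: push the constraints through the $d^2$-block form using $\Delta_0\Pi_{k,l}=\tfrac{1}{d}P_l$ and $\Delta_3\Pi_{k,l}=\tfrac{1}{d}Q_k$ (Lem.~\ref{lem:UnitaryRewriting}). For MIO they become conditions on the column sums $\sum_kY_{k,l}$: each is $\Delta_1$-diagonal with an $l$-independent $\Delta_{12}$-part. For DI the same holds for the row sums $\sum_lY_{k,l}$ with $\Delta_2$. Group $Y=Y_{0,0}$, $T=\sum_{k\ge1}Y_{k,0}$ (resp.\ $\sum_{l\ge1}Y_{0,l}$), and collect the remaining blocks into $S$. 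You may then absorb $T$ into $Y$, since the constraints only see $Y+T$ and $\Tr{TJ_\mE^T}\ge0$. For MIO this leaves $Y=\Delta_1Y$, $Y+S=M\otimes\id$ and $(d-1)\Delta_{12}Y=\Delta_{12}S$. Equivalently, $Y=\sum_i\ketbra{i}{i}\otimes X_i$ with $0\le X_i\le m_i\id$, $\Delta X_i=\tfrac{m_i}{d}\id$ and $\sum_im_i=d$; rescaling and maximizing over $i$ gives $M_{\cMIO,d}$. For DI one gets $Y=\sum_iX_i\otimes\ketbra{i}{i}$, where the $X_i=\rho_i$ are states with $\rho_i\le d\sigma$ and $\Delta\rho_i=\Delta\sigma$. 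Your network heuristics cannot replace this step. It is false that free preprocessing feeds only incoherent states into $\mE$: the identity superchannel passes the arbitrary input of system $0$ straight into $\mE$, and in any case $\cMIO_1\supsetneq\nMIO_1$. The restriction to $\mE(\ketbra{j}{j})$ appears only after twirling, through $Y=\Delta_1Y$. Two smaller points on achievability. First, the MIO error bound needs only $\tfrac12\norm{\mQ-\mF_d}_\diamond\le1$; the input-wise orthogonality you invoke is false, since with $\mQ=(d\,\Delta\circ\mF_d-\mF_d)/(d-1)$ both channels send $F_d^\dagger\ketbra{0}{0}F_d$ to $\ketbra{0}{0}$. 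Second, for DI you still have to specify the superchannel and verify both DI conditions; the paper uses $J_\mG=\tfrac{d}{d-1}\sum_{k\ge1}\Pi_{k,0}$ with $X^{12}=\sum_i\rho_i\otimes\ketbra{i}{i}$.
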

The conversion distances to Fourier transforms are thus determined by $M_{\cO,d}(\mE)$, which makes it relevant to understand the properties of these monotones. For every channel $\mE$, it holds that $M_{\cO,d}$ is bounded between $\tfrac{1}{d}\leq \tfrac{1}{d} M_{\cO,d}(\mE) \leq 1$ and in Appendix~\ref{sec:Appen_Coherence} (Cor.~\ref{cor:Monotones_properties}),  we show that $ M_{\cO,d}(\mE)/d$ is non-increasing in $d$, i.e., for all $d_1\leq d_2$ we have $ M_{\cO,d_1}(\mE)/d_1 \geq M_{\cO,d_2}(\mE)/d_2$. This reflects the intuition that at fixed resources $\mE$, distilling a Fourier transform of higher dimension becomes increasingly hard. 
In Appendix~\ref{sec:Appen_Coherence} (Prop.~\ref{cor:DIO_Conv}), we show that a statement analogous to Thm.~\ref{thm:CoherenceOptimalConversion_main} holds in the creation-detection-incoherent setting, i.e., for $\O=\DIO$. 

A direct consequence of Thm.~\ref{thm:CoherenceOptimalConversion_main} is that for $\O\in \{\MIO,\DI\}$ and $\star \in\{\diamond, F\}$,
\begin{align}\label{eq:dist_M_O}
        \Dist_{\T, \cO}^{\epsilon,\star}(\mE)=\sup\big\{ d\in \mathbb{N} : M_{\cO,d}(\mE) \geq d(1-\epsilon)\big\}.
\end{align}
For $\O=\MIO$, Thm.~\ref{thm:CoherenceOptimalConversion_main} further implies that the same distillable resources are attained under $\nMIO$ supermaps, whereas it remains an open question whether the same is true in the detection-incoherent setting. We emphasize that although we are not considering the set of maximally free superchannels as in Sec.~\ref{sec:GeneralConversion}, the optimal conversion distances with respect to the diamond distance, worst-case fidelity, and Choi fidelity coincide. Therefore, identical conversion distances (and thus distillable resources) with respect to these different metrics are not necessarily a property arising from considering the set of maximally free superchannels. 
This is an interesting observation in itself, and leads us to pose the question if it is possible to identify general conditions under which the conversion distances for different sets of free supermaps coincide.

We now return to optimal resource conversion under $\mMIO_1$ and $\mDI_1$ and compare it to Thm.~\ref{thm:CoherenceOptimalConversion_main}. To this end, let $\Prob_n:=\{(p_0,\ldots,p_{n-1}), p_i\geq 0\, \forall i,\sum_i p_i=1\}$ denote sets of probability distributions and consider the monotones
\begin{subequations}\label{eq:M_mO,d}
    \begin{align}
        M_{\mMIO,d}(\mE)\! &:=\! \max\Bigg\{\sum_{i=0}^{d_A-1} \Tr{X_i\mE(\ketbra{i}{i})}: p,q\in \Prob_{d_A}, \nonumber \\
        & \quad \quad 0\leq X_i\leq d q_i \id,\ \Delta X_i=p_i \id \Bigg \}, \label{eq:M_mMIO,d} \\
        M_{\mDI,d}(\mE) \! &:=\! \max \Bigg\{\!\sum_{i=0}^{d_B-1} \!\braket{i|\mE(\rho_i)| i} \!:\! \rho_i,\sigma \!\in \!\D, \rho_i\! \leq\! d \sigma, \nonumber \\
        & \quad\Delta \rho_i \!=\!\Delta \rho_j \, \forall i,j: 0\!\leq\! i,j \!\leq \!d_B\!-\!1\Bigg\}\label{eq:M_mDI,d}
    \end{align}
\end{subequations}
on channels $\mE^{A\to B}$. The following Proposition characterizes the optimal conversion distance to Fourier transforms under $\mMIO_1$ and $\mDI_1$ in terms of these monotones.

\begin{restatable}{proposition}{ConversionMaximalSet}
\label{prop:ConversionMaximalSet}
Let $\mE$ be any quantum channel, $\mF_d$ the $d$-dimensional Fourier transform, $d\geq 2$, and $ \O \in \{\MIO,\DI \}$, then
\begin{align}
   \min_{\mS \in \mO_1}\tfrac{1}{2}\norm{\mS[\mE]-\mF_d}_\diamond&= 1-\max_{\mS \in \mO_1} F_{\min}(\mS[\mE],\mF_d)  \nonumber \\
   &=1-\frac{1}{d}M_{\mOO,d}(\mE).
\end{align}
\end{restatable}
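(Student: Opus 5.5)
The plan is to reduce everything to the general machinery of Thm.~\ref{thm:DiamondBound_Bounds_G}, and then evaluate the monotone $G$ explicitly for the Fourier target. MIO and DI are both full-dimensional resource theories; I expect to confirm this in the appendix by exhibiting enough free channels spanning $\CPTP(A\to B)$, e.g.\ replacement channels together with dephasing-covariant unitaries and permutations. Both also satisfy assumption~\ref{assump:ClosedConvex}, and $\mF_d$ is unitary, hence an isometry. So Thm.~\ref{thm:DiamondBound_Bounds_G} applies, and by its last part it suffices to prove $R_{\min,\O}(\mF_d)=R_{s,\O}(\mF_d)=d$. Then the diamond conversion distance equals $1-\max_{\mS\in\mO_1}F_{\min}(\mS[\mE],\mF_d)=1-G_{\O}(\mE,d)$. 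The remaining task is to show $G_{\O}(\mE,d)=\tfrac1d M_{\mOO,d}(\mE)$.

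\textbf{Step 1: the robustness values.} Lem.~\ref{cor:R_Fd} gives $R_{\max,\O}(\mF_d)=d$. I would establish the chain $d\le R_{\min,\O}(\mF_d)\le R_{\max,\O}(\mF_d)\le R_{s,\O}(\mF_d)\le d$. The lower bound on $R_{\min}$ uses the input $\psi=\phi$ with a suitable projector. For MIO, take the input $\ketbra{0}{0}$: the output $\mF_d(\ketbra00)$ is the maximally coherent state, which has overlap $1/d$ with every incoherent state. For DI, take the maximally entangled input and the projector onto $\ket{i}$ after measuring; I expect overlap $1/d$ with any DI output. The upper bound $R_s\le d$ needs an explicit decomposition $\mF_d+(d-1)\mM=d\mK$ with $\mM,\mK\in\O$. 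I would try $\mK$ as a Fourier-twirled mixture, e.g.\ $\mK=\tfrac1d\mF_d+(1-\tfrac1d)\mM$. Choosing $\mM$ so that $\mK$ is free is a standard construction: a mixture of the "dephased" Fourier channel and completely depolarizing-type free channels. I expect it to go through via the twirling technique of Sec.~\ref{sec:Twirling}.

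\textbf{Step 2: evaluating $G_{\O}(\mE,d)$.} Here $G_{\O}(\mE,d)=\max\{\Tr{J_\mE W}:0\le W\le\rho\otimes\id,\ \Tr{WJ_\mM}\le 1/d\ \forall\mM\in\O\}$. I would dualize the constraint over $\O$ by characterizing the cone of $W$ satisfying $\Tr{WJ_\mM}\le 1/d$ on all free Choi states, using the Choi conditions for MIO and DI stated after Def.~\ref{def:FreeOperation}.

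For MIO, the free channels include all $\ketbra{i}{i}\mapsto\sigma_i$ with $\sigma_i$ incoherent, together with arbitrary action on the off-diagonal input part subject to CPTP. A symmetrization argument (dephasing $W$ on the input system, which does not change $\Tr{J_\mE W}$ after a suitable twirl of $\mE$'s input) should reduce $W$ to the block form $\sum_i\ketbra ii\otimes X_i/d$. Then $0\le W\le\rho\otimes\id$ becomes $0\le X_i\le d q_i\id$ with $q=\operatorname{diag}\rho$. The freeness constraint, maximized over incoherent $\sigma_i$, reads $\max_{\sigma}\sum_i\Tr{X_i\sigma_i}\le 1$ with $\sigma_i$ diagonal. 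The equality form $\Delta X_i=p_i\id$ should arise by complementary slackness or by a shift argument showing that the optimum can be taken with saturated diagonal. This reproduces Eq.~\eqref{eq:M_mMIO,d}.

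For DI, the analogous reduction puts $W$ in output-block form $\sum_i \rho_i^T\otimes\ketbra ii$. The DI constraint, that the diagonal output statistics depend only on the dephased input, forces $\Delta\rho_i$ to be independent of $i$. The operator inequality $W\le\rho\otimes\id$ becomes $\rho_i\le d\sigma$ after normalization, giving Eq.~\eqref{eq:M_mDI,d}.

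\textbf{Main obstacle.} The hardest step is the exact reduction of $W$ to these block forms. One must show that restricting $W$ (dephasing on the input for MIO, on the output for DI) loses no optimality, and that the inequality constraints of $G$ can be replaced by the equality constraints appearing in $M_{\mOO,d}$. I would try first to construct, from any feasible $W$, an explicit feasible point of $M_{\mOO,d}$ with no smaller value, and conversely to build from an optimal $(X_i,p,q)$ or $(\rho_i,\sigma)$ a feasible $W$, checking the free-channel constraint on extreme points of $\O$.
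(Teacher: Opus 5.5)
There is a genuine gap at the foundation. $\MIO$ and $\DI$ are \emph{reduced}-dimensional, not full-dimensional, so the full-dimensional branch of Thm.~\ref{thm:DiamondBound_Bounds_G} that you build on cannot deliver the result. Membership in $\MIO$ (resp.\ $\DI$) is the homogeneous linear condition $\Delta_A J=\Delta_{AB}J$ (resp.\ $\Delta_B J=\Delta_{AB}J$) on Choi operators, so it holds on all of $\operatorname{span}(\O)$. But $J_{\mF_d}$ violates both conditions; for example, $\Delta_A J_{\mF_d}=\sum_i\ketbra{i}{i}\otimes F_d\ketbra{i}{i}F_d^\dagger\neq\sum_i\ketbra{i}{i}\otimes\id/d=\Delta_{AB}J_{\mF_d}$. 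Hence $\mF_d\notin\operatorname{span}(\O)$, and the decomposition you want in Step~1, $\mF_d+(d-1)\mM=d\mK$ with $\mM,\mK\in\O$, cannot exist: it would give $\mF_d=d\mK-(d-1)\mM\in\operatorname{span}(\O)$. So $R_{s,\O}(\mF_d)=\infty$, your chain breaks at $R_{s,\O}(\mF_d)\le d$, and no twirling construction can repair it. Such a decomposition exists only with $\mM\in\CPTP$, which is the statement $R_{\max,\O}(\mF_d)=d$ (e.g.\ $\mK=\mF_d\circ\Delta$ for DI). That is the reduced-dimensional setting.

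In that setting the relevant quantity is $G_{\aff(\O)}(\mE,d)$, whose constraint is the \emph{equality} $\Tr{WJ_\mM}=1/d$ for all $\mM\in\O$. Your inequality-constrained $G_{\O}$ only upper-bounds it, and complementary slackness is not where the equality constraints come from. Through the characterization of $\aff(\O)^*$ in Lem.~\ref{lem:dualCones}, the affine equality is exactly what produces $\Delta X_i=p_i\id$ in $M_{\mMIO,d}$ and $\Delta\rho_i=\Delta\rho_j$ in $M_{\mDI,d}$. It also makes the block reduction you flag as the main obstacle immediate for MIO, since $W-\Delta_AW=-\tfrac1d(P-\Delta P)\otimes\id$ has zero overlap with every channel's Choi operator.

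For MIO the correct route is the reduced-dimensional branch together with $R_{\max,\MIO}(\mF_d)=R_{\min,\aff(\MIO)}(\mF_d)=d$ (Lem.~\ref{cor:R_Fd}). For DI even that branch only gives the achievability half $1-G_{\aff(\DI)}(\mE,d)$. The reason is that $R_{\min,\aff(\DI)}(\mF_d)\neq R_{\max,\DI}(\mF_d)$ in general; for instance $R_{\min,\aff(\DI)}(\mF_2)=0$, which makes the lower bound trivial.

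The converse for DI needs a direct argument. Apply the sharpened Fuchs--van de Graaf bound at the inputs $\rho_i=F_d^\dagger\ketbra{i}{i}F_d$ and average over $i$. The resulting functional is $\Tr{W_{\mS}J_\mE}$ for some $W_{\mS}$, and this $W_{\mS}$ is feasible for $G_{\aff(\DI)}(\mE,d)$ because $\Delta\rho_i=\id/d$ and $\mS[\mM]\in\DI$ for every $\mM\in\DI$. Your DI intuition in Step~1 (a Fourier-rotated input followed by a measurement in $\{\ket{i}\}$) is essentially the seed of this argument. It has to be applied to the conversion problem directly, not routed through $R_{\min,\O}$ and $R_{s,\O}$.
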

Again, note that here the worst-case fidelity and diamond distance based conversion distances yield the same conversion error. However, the conversion error is different from the one under compatible superchannels derived in Thm.~\ref{thm:CoherenceOptimalConversion_main}.
\begin{restatable}{proposition}{DIConversionDistance}
\label{cor:DIConversionDistance}
    Let $\O \in \{\MIO,\DI\}$.  There exist quantum channels $\mE$ such that
    \begin{align}
        M_{\mOO,2}(\mE) > M_{\cO,2}(\mE).
    \end{align}
    In particular, this implies that the optimal conversion distances under $\cO_1$ and $\mO_1$ can differ.
\end{restatable}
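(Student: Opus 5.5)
We prove the statement by exhibiting, for each $\O\in\{\MIO,\DI\}$, an explicit channel $\mE$ for which a feasible point of the program defining $M_{\mOO,2}(\mE)$ in Eq.~\eqref{eq:M_mO,d} beats an upper bound on $M_{\cO,2}(\mE)$ from Eq.~\eqref{def:M_cO}. The "in particular" clause then follows at once from Thm.~\ref{thm:CoherenceOptimalConversion_main} and Prop.~\ref{prop:ConversionMaximalSet}, since the two conversion distances are $1-\tfrac12 M_{\cO,2}(\mE)$ and $1-\tfrac12 M_{\mOO,2}(\mE)$.

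\emph{The MIO case.} The only difference between the two programs is where the weights sit. In $M_{\cMIO,2}$ a single input $\ketbra{j}{j}$ is used, with one $X$ satisfying $\Delta X=\id$ and $X\le 2\id$. In $M_{\mMIO,2}$ we may spread weights $p_i$ over several inputs, with $X_i\le 2q_i\id$. This suggests a channel $\mE$ on a qubit input that sends $\ketbra{0}{0}$ and $\ketbra{1}{1}$ to two \emph{different} partially coherent states, for instance $\ket{+_\theta}$ states whose phases differ or whose coherences point in incompatible directions.

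\begin{enumerate}[label=\arabic*)]
\item Compute $M_{\cMIO,2}$. For a fixed output qubit state $\rho$, the optimum over $0\le X\le 2\id$ with $\Delta X=\id$ is $1+2|\rho_{01}|$, attained at $X=\id+e^{i\phi}\ketbra{0}{1}+h.c.$. Hence $M_{\cMIO,2}(\mE)=\max_j\big(1+2|\mE(\ketbra{j}{j})_{01}|\big)$.
\item Produce a better feasible point for $M_{\mMIO,2}$. This needs some care: with $p=q$ the bound collapses back to a convex combination, so the gain must come from decoupling $p$ and $q$. The natural move is to use an output of dimension larger than $2$, or even a single input with $d_B>2$, where $X_i\le 2q_i\id$ with $\Delta X_i=p_i\id$ permits off-diagonal mass not achievable with the normalization $\Delta X=\id$.
\end{enumerate}
If that route fails, the fallback is to solve both SDPs numerically on a small family, such as a qutrit output or amplitude-damping-type channels, and then certify the gap analytically with a dual feasible solution for $M_{\cMIO,2}$.

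\emph{The DI case.} The programs differ only in the constraint $\Delta\rho_i=\Delta\sigma$ for all $i$ (compatible) versus $\Delta\rho_i=\Delta\rho_j$ (maximal). The latter allows $\Delta\sigma$ to differ from the common diagonal of the $\rho_i$. We therefore choose $\mE$ to be a measure-and-prepare-type channel that detects coherence, for example $\mE(\rho)=\sum_i\braket{\pm_i|\rho|\pm_i}\ketbra{i}{i}$ measuring in a rotated basis. For this $\mE$ the optimum wants $\rho_0,\rho_1$ to be the pure states $\ket{+},\ket{-}$, which have equal diagonals. Under the constraint $\rho_i\le 2\sigma$, a pure $\rho_i$ forces $\sigma$ to contain $\rho_i/2$, so $\sigma=\id/2$ works for both. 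The gap must therefore come from a biased rotated basis or a nonuniform diagonal in which $\Delta\sigma\ne\Delta\rho_i$ helps. I would parametrize a qubit $\mE$ measuring in a tilted basis and compare closed-form optima.

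\emph{Main obstacle.} The real difficulty is the exact evaluation of the compatible monotone, which is needed as an upper bound. I plan to handle it by writing the dual SDP and exhibiting an explicit dual certificate matching a primal value.
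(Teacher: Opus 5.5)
Your proposal is a search strategy, not a proof. The statement is a pure existence claim, so its whole content is three things: an explicit channel, a primal point for $M_{\mOO,2}$, and a \emph{certified} upper bound on $M_{\cO,2}$. You supply none of these, and the concrete routes you commit to provably cannot work.

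\emph{MIO:} your step~1 formula is right, but it already rules out qubit outputs. For $d_B=2$, every feasible $X_i$ in $M_{\mMIO,2}$ has the form $X_i=p_i\id+c_i\ketbra{0}{1}+\bar c_i\ketbra{1}{0}$ with $|c_i|\le p_i$ (from $X_i\ge 0$). Hence $\Tr{X_i\mE(\ketbra{i}{i})}\le p_i\bigl(1+2|\mE(\ketbra{i}{i})_{01}|\bigr)$, so $M_{\mMIO,2}(\mE)\le M_{\cMIO,2}(\mE)$. Your ``single input'' variant forces $p=q=(1)$ and makes the two programs identical. What remains ($d_B\ge 3$, $d_A\ge 2$) is exactly where the work lies, and there your step~1 formula no longer evaluates $M_{\cMIO,2}$. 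The missing idea is to let one branch use an $X_i$ whose top eigenvalue is three times its diagonal, paid for on the other branch via $q\neq p$. The paper does this as follows:
\begin{itemize}
\item $\ket{\psi}=\tfrac{1}{\sqrt3}(\ket0+\ket1+\ket2)$ and $\mE(Y)=\braket{0|Y|0}\tau_0+\braket{1|Y|1}\tau_1$, with $\tau_0=\id/6+\ketbra{\psi}{\psi}/2$ and $\tau_1=(\id-\ketbra{\psi}{\psi})/2$;
\item $p=(\tfrac13,\tfrac23)$, $q=(\tfrac12,\tfrac12)$, $X_0=\ketbra{\psi}{\psi}$, $X_1=\id-\ketbra{\psi}{\psi}$, which gives $M_{\mMIO,2}(\mE)\ge 5/3$;
\item $\Tr{X}=3$ and $0\le\braket{\psi|X|\psi}\le 2$ give $M_{\cMIO,2}(\mE)\le 3/2$.
\end{itemize}

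\emph{DI:} the qubit tilted-basis family you intend to parametrize cannot show a gap, and neither can any other qubit-input channel. Let $M_i=\mE^\dagger(\ketbra{i}{i})$ and $m_i=(M_i)_{01}$, so $\sum_i m_i=0$. In both programs the $\rho_i$ share a diagonal $(a,1-a)$. With $c_i=(\rho_i)_{01}$, the objective is therefore $1+2\,\mathrm{Re}\sum_i m_i\bar c_i$, which is invariant under $c_i\mapsto c_i-2b$ for $b=\sigma_{01}$. Positivity of $2\sigma-\rho_i$ gives $|2b-c_i|^2\le x(1-x)\le\tfrac14$ with $x=2\sigma_{00}-a$, whether or not $\Delta\sigma=\Delta\rho_i$. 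Hence both monotones are bounded by $1+\sum_i|m_i|$. The compatible one attains this bound with $\sigma=\id/2$ and $\rho_i=\tfrac12\bigl(\id+u_i\ketbra{0}{1}+\bar u_i\ketbra{1}{0}\bigr)$, where $u_i=m_i/|m_i|$ (any unit phase if $m_i=0$).

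So a separation requires $d_A\ge 3$. The paper uses $d_A=4$, $d_B=3$, an explicit three-outcome POVM, a primal point giving $M_{\mDI,2}(\mE)\ge 37/22$, and an explicit dual point giving $M_{\cDI,2}(\mE)\le 13/8$. Your ``main obstacle'' paragraph correctly identifies that such a dual certificate is needed, but leaves both it and the choice of channel undone; a numerical search is not a substitute. Your deduction of the ``in particular'' clause from Thm.~\ref{thm:CoherenceOptimalConversion_main} and Prop.~\ref{prop:ConversionMaximalSet} is fine.
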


Next, we turn to the question of whether adaptive protocols provide an advantage for coherence distillation. If the set of free channels contains identity channels (cf. assumption~\ref{assump:identity}) and SWAP channels (which is the case for dynamical coherence), adaptive protocols must perform at least as well as parallel protocols. However, as stated in Eq.~\eqref{eq:Cost_sequential} in Thm.~\ref{thm:Cost}, for the dilution of dynamical coherence, adaptive protocols provide no advantage. In contrast, we show in the following Proposition that adaptive $\cMIO$ protocols can outperform the parallel ones when it comes to one-shot distillation.

\begin{restatable}{proposition}{SequentialAdvantage}
\label{prop:SequentialAdvantage}
There exists a unitary channel $\mU$ such that
\begin{align*}
    \smash{\min_{\mS_2 \in \cMIO_2}} \tfrac{1}{2}\norm{\mS_2[\mU,\mU]\!-\!\mF_2}_\diamond \!<\! \smash{\min_{\mS_1 \in \cMIO_1}} \tfrac{1}{2}\norm{\mS_1[\mU^{\otimes 2}]\!-\!\mF_2}_\diamond\!.
\end{align*}
In particular, this implies that in the finite-copy setting, adaptive distillation can outperform parallel distillation even for unitary channels. 
\end{restatable}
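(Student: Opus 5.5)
Both sides of the claimed inequality are conversion distances to the isometric target $\mF_2$, so I will evaluate each one separately. For the parallel side I use Thm.~\ref{thm:CoherenceOptimalConversion_main} with $\mE=\mU^{\otimes 2}$: it gives $\min_{\mS_1\in\cMIO_1}\tfrac12\norm{\mS_1[\mU^{\otimes2}]-\mF_2}_\diamond = 1-\tfrac12 M_{\cMIO,2}(\mU^{\otimes 2})$. Since $\mU^{\otimes2}$ is unitary, the monotone in Eq.~\eqref{eq:C_D} is explicit:
\begin{align*}
M_{\cMIO,2}(\mU^{\otimes2})=\max_{j}\ \max\big\{\bra{v_j}X\ket{v_j}: 0\le X\le 2\id,\ \Delta X=\id\big\},
\end{align*}
where $\ket{v_j}=U^{\otimes 2}\ket{j}$. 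This is a small SDP in dimension $4$. For a fixed vector $\ket{v}$, the value is strictly below $2$ unless $\ket{v}$ has enough coherence. For example, it equals $2$ when $\ket{v}$ is maximally coherent on a two-element support, in which case $X=\id+$ an off-diagonal block gives $\bra{v}X\ket{v}=2$. I therefore want a $\mU$ whose product $U\otimes U$ maps every incoherent basis vector $\ket{j}$ to a state that is not maximally coherent on any pair, so that the parallel error is strictly positive.

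For the adaptive side I need an explicit comb in $\cMIO_2$, checked against Eq.~\eqref{eq:MIOCompFromMaxSet}; alternatively I can give a free network of MIO channels, which is automatically in $\cMIO$. Sequential use helps because the intermediate processing can undo or rotate coherence between the two uses. The candidate is a single-qubit unitary $\mU$ that is a ``square root'' of a maximally coherent operation, for example a rotation $U=e^{-i\pi\sigma_y/8}$, or more generally $U$ with $U^2=F_2$ up to incoherent unitaries (diagonal phases and permutations, which are MIO). The network is then $\mM_0=\idChan$, $\mM_1=\idChan$ (or an incoherent correction), $\mM_2=\idChan$, which gives $\mS_2[\mU,\mU]=\mU\circ\mU=\mF_2$ exactly, so the adaptive error is $0$. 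It remains to check that the same $\mU$ has positive parallel error. The product $U\otimes U$ acts on $\ket{j}$ as a product of states $U\ket{a}\otimes U\ket{b}$, each of which has $|\braket{0|U|a}|^2=\cos^2(\pi/8)\neq\tfrac12$. I then evaluate the SDP above, either by a dual certificate or analytically: the optimal $X$ for a pure $\ket{v}$ with amplitudes $c_k$ and $\Delta X=\id$, $X\le 2\id$ can be bounded through $X-\id$, which has zero diagonal and norm at most $1$. This gives $\bra{v}X\ket{v}\le 1+\max\bra{v}Y\ket{v}$ over $Y$ with zero diagonal and $-\id\le Y\le\id$. I would bound this by $1+(\sum_k|c_k|)^2-1$ capped via an operator-norm argument, or more crudely exhibit a feasible dual solution numerically showing the value is $<2$.

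The main obstacle is making this parallel upper bound rigorous, i.e.\ proving $M_{\cMIO,2}(\mU^{\otimes2})<2$. I would first try the dual of the SDP, $\min\{2\Tr{Z}+\Tr{D}: Z\ge0,\ Z+D\ge \ketbra{v}{v},\ D \text{ diagonal}\}$, choosing $Z$ and $D$ explicitly for the four vectors $U^{\otimes2}\ket{j}$, which are related by symmetry (two cases: $a=b$ and $a\ne b$). If the choice $U^2=F_2$ turns out to give value exactly $2$ for some $j$ in the parallel case, I would instead perturb to a unitary whose single-use square composes to $\mF_2$ only after an intermediate incoherent unitary, for example $U=F_2^{1/2}$ via the principal root, and recompute. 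The adaptive side stays exact because the intermediate step is free.
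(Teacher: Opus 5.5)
Your plan has the same structure as the paper's proof. It uses the same unitary $U=e^{-i\pi\sigma_y/8}$ and an exact two-use protocol. Note that $U^2\neq F_2$, so the incoherent correction is genuinely needed, e.g.\ a $\sigma_z$ in $\mM_0$, since $U^2\sigma_z=F_2$. It also reduces the parallel side, via Thm.~\ref{thm:CoherenceOptimalConversion_main}, to proving $M_{\cMIO,2}(\mU^{\otimes 2})<2$.

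The gap is that this strict inequality, the only nontrivial step, is never established, and the arguments you sketch for it do not deliver it:
\begin{itemize}
\item The heuristic ``not maximally coherent on any pair'' is not sufficient. The product state $\ket{{+}{+}}=\tfrac12\sum_{k=0}^{3}\ket{k}$ attains the value $2$ with $X=\tfrac13\left(2\id+4\ketbra{{+}{+}}{{+}{+}}\right)$.
\item Your entrywise bound $\bra{v}Y\ket{v}\le\sum_{k\neq l}|c_k||c_l|=(\sum_k|c_k|)^2-1$ evaluates to $(1+1/\sqrt2)^2-1\approx1.91$ for $\ket{v}=U^{\otimes2}\ket{00}$. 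After capping with $\norm{Y}_\infty\le1$ you only get $\bra{v}X\ket{v}\le 2$, with no strict inequality.
\item Your fallback of perturbing $U$ shows you have not yet determined whether your chosen $U$ works at all.
\end{itemize}

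The missing ingredient is an explicit dual certificate.
\begin{itemize}
\item Since $U\ket{1}=-\sigma_z\sigma_x\,U\ket{0}$, the four vectors $U^{\otimes2}\ket{j}$ differ by local monomial unitaries. These leave the feasible set $\{0\le X\le2\id,\ \Delta X=\id\}$ invariant, so the single vector $\ket{\psi}=U^{\otimes 2}\ket{00}$ suffices and no case distinction is needed.
\item Write $\ket{\psi}=c\ket{00}+\sqrt{1-c^2}\ket{\mu}$ with $c=\cos^2(\pi/8)$ and $\ket{\mu}\perp\ket{00}$, and let $\ket{\phi_\pm}=(\ket{00}\pm\ket{\mu})/\sqrt2$.
\item Choose the diagonal $D=(2c^2-1)\ketbra{00}{00}$ and $Z=(1-c^2+c\sqrt{1-c^2})\ketbra{\phi_+}{\phi_+}\ge0$.
\item Then $Z+D-\ketbra{\psi}{\psi}=\bigl(c\sqrt{1-c^2}-(1-c^2)\bigr)\ketbra{\phi_-}{\phi_-}\ge0$, because $c\ge1/\sqrt2$.
\item The dual value is $2\Tr{Z}+\Tr{D}=1+2c\sqrt{1-c^2}\approx1.89<2$. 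The primal point $X=\id+\ketbra{00}{\mu}+\ketbra{\mu}{00}$ shows this value is exact.
\end{itemize}
Hence the parallel distance equals $\tfrac12-c\sqrt{1-c^2}>0$, while the adaptive one is $0$.
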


Here, it is worth noting that Ref.~\cite[Cor.~1 and the following discussion]{Liu2020} showed that even in the asymptotic setting, adaptive distillation of unitary channels can outperform parallel distillation when MIO-networks are considered free. Solving the problem of adaptive finite-copy channel distillation under cMIO and cDI is currently an open question, but is in principle approachable via our SDPs. This is related to the question of whether optimal adaptive distillation protocols can be reduced to ``quasi-parallel" ones, as in Ref.~\cite{Ahnefeld2025} for phase-estimation. 

\section{DISCUSSION AND OUTLOOK}\label{sec:discussion_outlook}
In this work, we develop a general framework for higher-order quantum resources that treats the resourcefulness of quantum states, channels, and higher-order processes in a unified manner. By generalizing the spirit of completely resource non-generating superchannels to arbitrary supermaps, we introduce the notion of \textit{compatibility}, which guarantees that \textit{arbitrary} compositions of free supermaps remain free.
Additionally, we consider free networks, which decompose into free quantum channels, and maximally free superchannels.

In the case of approximate channel-to-channel conversion under maximally free superchannels, we extend the results of Ref.~\cite{Regula2021b} from the worst-case fidelity to the diamond distance and show that under suitable conditions, the approximate one-shot distillable resources with respect to the worst-case fidelity and the diamond distance coincide.

Specializing our framework to dynamical coherence, we prove the existence of largest sets of supermaps compatible with creation-incoherent, detection-incoherent, and dephasing-covariant channels and characterize them in terms of semidefinite constraints. With this, we prove that the three sets of free supermaps we consider are, in general, different. We continue by investigating the operational differences of these sets in one-shot channel distillation and dilution, for which we provide solutions in both the parallel and the adaptive settings: in the creation-incoherent setting, free networks and the maximal set of compatible supermaps lead, for example, to the same one-shot channel-cost, while in the detection-incoherent setting, there exist channels for which this is not the case. Moreover, we show that in the finite-copy setting, adaptive distillation protocols outperform parallel ones on certain channels.

Our findings suggest several directions for future research. Within resource theories of dynamical coherence, our characterizations of the compatible supermaps allow us to formulate the conversion distance between arbitrary supermaps as the solution of a semidefinite program. This opens the possibility to investigate the coherence cost and distillable coherence of arbitrary supermaps beyond channels. A similar approach is of course possible in other resource theories and could be used to determine, e.g., bounds on the non-stabilizerness needed to implement a quantum strategy that are automatically optimized over all gate decompositions.

When considering other resource theories, a natural first step is to see whether our coherence-specific results extend to resource theories that admit resource-destroying channels. A resource-destroying channel sends every state to a free state while leaving all free states invariant~\cite{Liu2017}, thereby providing a simple characterization of channels that cannot create, detect, or both create and detect resources analogous to Def.~\ref{def:FreeOperation}. It is then easy to see that variations of the conditions in Thm.~\ref{thm:CompatibleSupermaps} guarantee compatibility, and it remains to check whether they are necessary.

An example of such a resource theory is athermality under Gibbs-preserving operations, where the Gibbs-replacement channels are resource-destroying. In this setting, compatible supermaps might be a useful tool for investigating thermodynamic restrictions in relevant tasks, compare, e.g., Ref.~\cite{Chen2026}, which investigated parameter estimation under energy-constrained supermaps. As shown recently in Ref.~\cite{Yoeli2026}, it is possible to construct resource theories in which different systems are equipped with different resource-destroying maps, thereby unifying the resource theories of coherence, athermality, and nonuniformity.  Starting from the destruction-covariant free channels of Ref.~\cite{Yoeli2026}, one could seek compatible higher-order extensions that accommodate different resource restrictions within a single set of supermaps. This would provide a setting for studying conversions between coherence-related and thermodynamic channel resources, and it would be interesting to see which aspects of the unified state-conversion theory persist for channels and higher-order processes.

\section*{Acknowledgments}
We thank Koenraad Audenaert, Lea Lautenbacher, Rafael Wagner, Tore Friis, and F.A.'s cat Lina for discussions. T. T. acknowledges support from a Postdoc Scholarship on Quantum Algorithms or Quantum Software from the Danish e-infrastructure Consortium (DeiC). For numerics we used Refs.~\cite{cvx,Lofberg2004,Sturm1999}. We acknowledge OpenAI's GPT-5.6 Sol (and in part GPT-6 Astra) for text polishing and for simplifying the proof of Prop.~\ref{cor:DIConversionDistance}. All references were added manually. The Appendix also contains Refs.~\cite{Bhatia2000, Hayashi_2002, Winter2025,Hirche2023,Schwinger1960}.

\bibliography{bibliography.bib}

\appendix

\onecolumngrid
\makeatletter
\let\set@footnotewidth\@empty
\onecolumn@grid@setup
\makeatother

\section{GENERALIZED DIAMOND DISTANCE AND SUPERMAP MONOTONES}\label{sec:Appen_GenDiaDist}
In this appendix, we review and prove facts that we will use in other parts of this paper. We prove the sharpened lower bound in the Fuchs-van de Graaf inequalities in Eq.~\eqref{eq:FuchsVanDeGraaf_channels_sharp}, review an operational characterization of the generalized diamond distance, and show how to evaluate it via an SDP. We then discuss how this SDP is connected to other variational characterizations of the (generalized) diamond distance and prove Prop.~\ref{prop:SupermapMonotone}. 

We begin with the sharpened lower bound: For any state $\rho$ and pure state $\psi$, the variational characterization of the trace distance~\cite{Watrous2018} implies
\begin{align}\label{eq:boundTraceDist}
    \tfrac{1}{2}\norm{\rho-\psi}_1 &= \max_{0\leq P\leq \id}\Tr{P(\rho-\psi)} \geq \Tr{(\id-\psi)(\rho-\psi)}=1-\Tr{\psi\rho}=1-F(\psi,\rho).
\end{align}
For an arbitrary channel $\mM$ and an isometric channel $\mV=V\cdot V^\dagger$, this yields
\begin{align}\label{eq:FuchsVanDeGraaf_channels_sharp_appen}
     \tfrac{1}{2}\norm{\mM-\mV}_\diamond&=\sup_\rho \tfrac{1}{2}\norm{\idChan \otimes \,\mM(\rho)-\idChan \otimes \,\mV(\rho)}_1 \overset{\text{(i)}}{=} \sup_{\psi} \tfrac{1}{2}\norm{\idChan \otimes \,\mM(\psi)-\idChan \otimes \,\mV(\psi)}_1 \nonumber \\
 &\overset{\text{(ii)}}{\geq} 1- \inf_{\psi}F(\idChan \otimes\, \mM(\psi),\idChan \otimes\, \mV(\psi)) \overset{\text{(iii)}}{=} 1- F_{\min}(\mM,\mV),
\end{align}
where we used that (i) the trace norm is convex, and thus attains is maximum on pure states, (ii) the fact that $\idChan\otimes\,\mV(\psi)$ is pure together with Eq.~\eqref{eq:boundTraceDist}, and (iii) that the worst-case fidelity attains its minimum on pure states~\cite{Regula2021c}.

To review an operational characterization of the generalized diamond distance, we start from the diamond distance for quantum channels. The diamond distance in Eq.~\eqref{eq:def_DiamondDistance} admits an operational interpretation in terms of channel discrimination: A discrimination strategy consists of preparing an entangled input state, applying the unknown channel to one subsystem, and performing a joint measurement on the output and the reference system. The diamond distance equals the largest total-variation~\cite{Watrous2018} distance between the resulting outcome distributions, optimized over all such strategies~\cite{Watrous2018}. To extend this interpretation to quantum supermaps, the discrimination strategy must also account for their sequential structure. As illustrated in Fig.~\ref{fig:tester}, one prepares an initial entangled state, interacts with the unknown supermap through a sequence of (isometric) channels while retaining quantum memory, and finally performs a measurement, see also Ref.~\cite{Hirche2023}.

\begin{figure}[ht]
    \centering
    \scalebox{0.7}{\includegraphics[width=1\linewidth]{ 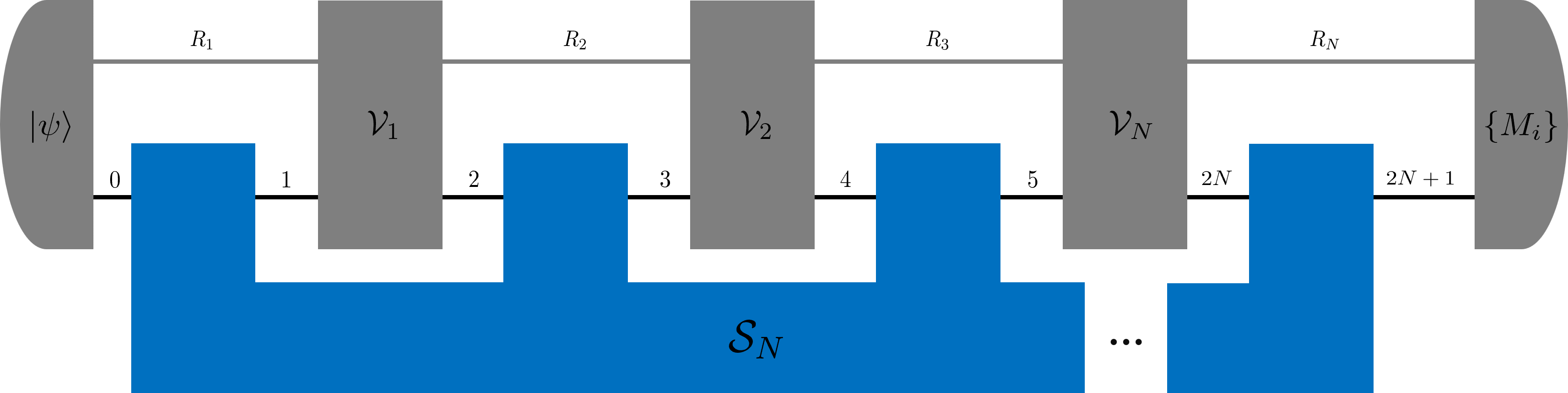}}
    \caption{Discrimination strategy between quantum supermaps consisting of a sequence of state preparation, isometries, and a final measurement. Without loss of generality, the initial state can be purified, and all inserted channels can be dilated to isometries.}
    \label{fig:tester}
\end{figure}

For two causally ordered supermaps $\mS_N,\mR_N$ with the same input and output systems, one therefore defines the generalized diamond distance as $d_{\diamond,N}(\mS_N,\mR_N):=\sup\tfrac12\sum_i\left|p(i|\mS_N)-p(i|\mR_N)\right|,$ 
where the supremum is over all such discrimination strategies. The comb formalism~\cite{Chiribella2009} reviewed in Sec.~\ref{sec:Supermaps} provides an exact characterization of those discrimination strategies in terms of so-called quantum testers~\cite{Chiribella2009}: A quantum tester is a collection of positive semidefinite operators $\{J_{\mP_i}\}$ such that $J_\mT:=\sum_i J_{\mP_i}$  is in the set (see Ref.~\cite[Def.~11 and Lem.~8]{Chiribella2009})
\begin{align}\label{eq:tester_conditions}
    \mathfrak{T}_N:=\Bigr\{& J_\mT\!=\! J_{\mT}^{(N)}\! \otimes\! \id^{2N+1}: J_{\mT}^{(j)} \geq 0,  \forall j: 1\leq j\leq N: \partTr{2j}{\!J_{\mT}^{(j)}}\!=\!J_{\mT}^{(j-1)} \!\otimes\! \id^{2j-1}, \Tr{J_{\mT}^{(0)}}\!=\!1  \Bigr\}.
\end{align}
Any discrimination strategy as in Fig.~\ref{fig:tester} corresponds to a tester $\{J_{\mP_i}\}$ in the sense that the probability of obtaining outcome $i$ in the final POVM is given by $p(i|\mS_N)=\Tr{J_{\mP_i}^T J_{\mS_N}}$, where the tester is independent of $\mS_N$.

Conversely, Ref.~\cite[Thm.~11]{Chiribella2009} establishes that every quantum tester can be realized by a discrimination strategy as in Fig.~\ref{fig:tester}. As such, the generalized diamond distance can equivalently be written as
\begin{align}\label{eq:generalizedDiamond_tester}
    d_{\diamond, N}(\mS_N,\mR_N) \!&= \!\max_{ \substack {J_{\mP_i}\geq 0\\ \sum_i J_{\mP_i} \in \mathfrak{T}_N}} \! \frac{1}{2} \sum_i \left| \Tr{J_{\mP_i}^T (J_{\mS_N}\!-\!J_{\mR_N})}  \right| = \max_{\!J_{\mT_N} \in \mathfrak{T}_N}  \frac{1}{2} \norm{\sqrt{J_{\mT_N}^T}(J_{\mS_N}\!-\!J_{\mR_N})\sqrt{J_{\mT_N}^T}}_1,
\end{align}
where the last equality follows from the tester realization in Ref.~\cite[Thm.~12]{Chiribella2009}, see Ref.~\cite[Chapter V]{Chiribella2009} for more details. 

The following Lemma connects this characterization of the diamond distance to the one in Eq.~\eqref{eq:diamond_generalized_main} in the main text.  Notably, similar characterizations of the generalized diamond distance appeared in Ref.~\cite{Gutoski2012} under the name ``r-norm". For the sake of self-containedness, we state the Lemma and its proof nevertheless.
\begin{restatable}{lemma}{LemmaDiamondDistance} \label{def:GenDiamondDist}
    Let $\mS_N,\mR_N\in \mathfrak{S}_N$ with the same input and output systems. Then
    \begin{align}\label{eq:Gen_Diamond_Dist}
        d_{\diamond, N}(\mS_N,\mR_N)=&\max_{\!J_{\mT_N} \in \mathfrak{T}_N}  \frac{1}{2} \norm{\sqrt{J_{\mT_N}^T}(J_{\mS_N}\!-\!J_{\mR_N})\sqrt{J_{\mT_N}^T}}_1=\min\left\{  \lambda\geq 0  : J_{\mS_N} - J_{\mR_N} \leq  \lambda J_{\mM_N},  \; J_{\mM_N} \in \Comb_N \right\},
    \end{align}
    which is the solution of the SDP 
    \begin{subequations}\label{eq:SDP_gen_diamond_dist}
    \begin{alignat}{2} 
        &\,\text{minimize}\quad &&  \frac{\lambda}{2} \\
        &\,\text{subject to}\quad &&  Y^{(j)} \in \Herm \,\forall j:0\leq j\leq N ,\lambda\in \mathbb{R}\\
        & && Y^{(N)} \geq J_{\mS_N}-J_{\mR_N} \geq -Y^{(N)} \\
        & &&  \partTr{2j+1}{Y^{(j)}}\leq \id^{2j} \otimes Y^{(j-1)} \, \forall j:1\leq j\leq N \label{eq:SDP_gen_diamond_dist_d}\\
        & &&  \partTr{1}{Y^{(0)}}\leq \lambda \id^0,\label{eq:SDP_gen_diamond_dist_e}
    \end{alignat}
    \end{subequations}
    where, without loss of generality, the inequalities in Eqs.~\eqref{eq:SDP_gen_diamond_dist_d} and~\eqref{eq:SDP_gen_diamond_dist_e} can be replaced by equalities.
    Moreover, the generalized diamond distance is contractive under compositions with supermaps~\cite{Chiribella2009}, i.e., for any $\mT_M\in \mathfrak{S}_M$ such that $\mT_M*\mS_N \in \mathfrak{S}_K$, it holds that $d_{\diamond, K}(\mT_M*\mS_N,\mT_M*\mR_N) \leq d_{\diamond, N}(\mS_N,\mR_N)$.
\end{restatable}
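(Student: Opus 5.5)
The plan is to prove the three parts in turn: first the tester formula equals the SDP value (a duality argument), then the reformulation of the SDP as the min over $\lambda J_{\mM_N}$, and finally contractivity.

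\textbf{Step 1: primal form.} Write the minimization in Eq.~\eqref{eq:Gen_Diamond_Dist} with $Y^{(N)}:=\lambda J_{\mM_N}$. Since $J_{\mM_N}\in\Comb_N$, the scaled sequence $Y^{(j)}=\lambda J_{\mM_N}^{(j)}$ satisfies the comb equalities with normalization $\lambda\id^0$. The condition $J_{\mS_N}-J_{\mR_N}\leq \lambda J_{\mM_N}$ is the one-sided constraint. I will add the two-sided version, $Y^{(N)}\geq -(J_{\mS_N}-J_{\mR_N})$, at no cost. Both combs satisfy the same normalization, so $\Delta:=J_{\mS_N}-J_{\mR_N}$ has all its reduced sequences vanishing, and hence $\Tr{T^{T}\Delta}=0$ for every tester. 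I then want to show that one-sided domination by $\lambda J_\mM$ yields two-sided domination by $\lambda J_\mM+J_{\mR_N}$-type combs. The cleaner route is to show that both SDPs (the one-sided and the two-sided) have the same dual.

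Replacing inequalities by equalities is a monotonicity argument. Given a feasible $Y^{(j)}$ with slack, go top-down: increase $Y^{(j)}$ by adding a positive term of the form $(\text{slack})\otimes\id/d$ so that the next-level constraint becomes an equality. This keeps $Y^{(N)}\geq\pm\Delta$ and does not change $\lambda$. Equality constraints together with $Y^{(N)}\geq|\pm\Delta|$-dominance force $Y^{(N)}\geq0$, so $Y/\lambda\in\Comb_N$.

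\textbf{Step 2: duality with testers.} Form the Lagrangian of SDP~\eqref{eq:SDP_gen_diamond_dist} with dual variables $P_\pm\geq0$ for $Y^{(N)}\mp\Delta\geq0$ and $Z^{(j)}\geq0$ for the chain constraints. Stationarity in $Y^{(N)},\dots,Y^{(0)}$ and $\lambda$ yields exactly the tester recursion in Eq.~\eqref{eq:tester_conditions} for $T:=P_++P_-$ (up to transpose), with $\Tr{T^{(0)}}=1$. The dual objective is $\Tr{(P_+-P_-)\Delta}$. Maximizing over the split $P_\pm$ gives the two-outcome tester value $\tfrac12\sum_i|\Tr{P_i^T\Delta}|$. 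Combined with Eq.~\eqref{eq:generalizedDiamond_tester}, which I take from Ref.~\cite{Chiribella2009}, this equals the $\sqrt{T}\Delta\sqrt{T}$ trace-norm expression: a binary POVM is optimal because $\tfrac12\|\cdot\|_1$ is attained by the projector onto the positive part. Strong duality follows from Slater, since a large multiple of a strictly positive comb (e.g.\ the normalized identity comb) is strictly feasible. Note that the factor $1/2$ in the objective appears because the two-sided constraint doubles the effective norm; the main obstacle is matching constants between the one-sided definition in Eq.~\eqref{eq:diamond_generalized_main} and the two-sided SDP.

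\textbf{Step 3: reconciling the constants.} I resolve this via the traceless property. If $\Delta\le\lambda M$ with $M$ a comb, define $Y:=\tfrac{\lambda}{2}M+\tfrac12\Delta$-type combinations, or more directly use $\Delta=\Delta_+-\Delta_-$. Dually, for a tester $T$ and $0\le P\le T$, the tracelessness of $\Delta$ on testers gives $\Tr{P^T\Delta}=-\Tr{(T-P)^T\Delta}$. Hence the one-sided dual optimum $\max\Tr{P^T\Delta}$ equals half of $\sum|\cdot|$, which pins the constant.

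\textbf{Step 4: contractivity.} Take an optimal $\lambda,J_{\mM_N}$ with $\Delta\le\lambda J_{\mM_N}$. The link product with $J_{\mT_M}\geq0$ is a positive map, so $\mT_M*\Delta\leq\lambda\,\mT_M*J_{\mM_N}$. The composed object $\mT_M*\mM_N$ is a comb in $\mathfrak{S}_K$ with the same systems, by the composition closure of Sec.~\ref{sec:Supermaps}. Therefore $\lambda$ is feasible for $d_{\diamond,K}$, which gives the claimed inequality.
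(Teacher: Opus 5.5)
Your argument is correct in substance and shares the paper's skeleton. Both proofs use Lagrangian duality between the SDP and binary testers $P_0+P_1\in\mathfrak{T}_N$; the paper dualizes from the tester side and you from the SDP side, which is immaterial. Both absorb slack with a fixed state to reach comb equalities, and both use positivity of the link product for contractivity.

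The genuine difference is how you match the one-sided program $\min\{\lambda\ge 0: X\le\lambda J_{\mM_N}\}$, with $X:=J_{\mS_N}-J_{\mR_N}$ (your $\Delta$), to the two-sided SDP. The paper does this in the primal with explicit combs:
\begin{itemize}
\item from a two-sided point $(\lambda,M)$ it forms $M'=M+X/\lambda\in\Comb_N$ and gets $X\le\tfrac{\lambda}{2}M'$;
\item from a one-sided point it forms $\tilde M=M-X/(2\lambda)\in\Comb_N$ and gets $-2\lambda\tilde M\le X\le 2\lambda\tilde M$.
\end{itemize}
You instead dualize the one-sided program. With multiplier $P$ for $Y^{(N)}\ge X$ and $T-P$ for $Y^{(N)}\ge 0$, its dual is $\max\{\Tr{PX}: 0\le P\le T\in\mathfrak{T}_N\}$. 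Since $\Tr{TX}=0$, this turns $\Tr{PX}$ into $\tfrac12\Tr{(P-(T-P))X}$, i.e.\ exactly Eq.~\eqref{eq:diamond_dist_generalized_intermediate}. This is valid and lands directly on the tester form. It costs a second strong-duality argument (a Slater point is $P=T/2$ with $T$ full rank), and it never exhibits the optimal comb, which the paper's maps give explicitly.

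The constants, which you identify as the main obstacle, are mishandled in several places:
\begin{itemize}
\item With objective $\lambda/2$, stationarity in $\lambda$ gives $\Tr{T^{(0)}}=\tfrac12$, not $1$. With your normalization the dual value would be $\sum_i|\Tr{P_i^TX}|$, twice the claim.
\item Adding $Y^{(N)}\ge -X$ is not free: it doubles the optimal $\lambda$. That is precisely why the SDP carries $\lambda/2$.
\item Your map $\tfrac{\lambda}{2}M+\tfrac12 X$ is correct only starting from a two-sided point (it is the paper's $\tfrac{\lambda}{2}M'$). Starting from $X\le\lambda M$ you need $2\lambda M-X$ instead.
\item The reduced operators $J^{(j)}_{\mS_N}-J^{(j)}_{\mR_N}$ need not vanish for $N\ge 1$. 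Only $\partTr{1}{J^{(0)}_{\mS_N}-J^{(0)}_{\mR_N}}=0$ holds, and that is all $\Tr{T^TX}=0$ requires.
\item The slack must be filled in increasing $j$, starting at $Y^{(0)}$, because enlarging $Y^{(j-1)}$ reopens slack at level $j$.
\item Your Slater point for the two-sided SDP meets the chain constraints with equality, so it is not strictly feasible for the inequality form. Either use the equality form, or apply Slater on the tester side as the paper does.
\item The one-sided program must keep $Y^{(N)}\ge 0$ as an explicit constraint; otherwise $Y=X$ with $\lambda=0$ is feasible.
\end{itemize}
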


\begin{proof}
    First, we rewrite the generalized diamond distance from Eq.~\eqref{eq:generalizedDiamond_tester}. Note that from the variational characterization of the trace distance~\cite{Watrous2018}, for every Hermitian $X$, we get
\begin{align}
     \norm{\sqrt{J_{\mT_N}^T} X\sqrt{J_{\mT_N}^T}}_1 = \max_{-\id \leq H\leq \id} \Tr{H\sqrt{J_{\mT_N}^T} X\sqrt{J_{\mT_N}^T}} =  \max\left\{ \Tr{DX}: -J_{\mT_N}^T \leq D\leq J_{\mT_N}^T \right\},
\end{align} 
where the last equality is due to the following argument: For a feasible $H$, choose $D:=\sqrt{J_{\mT_N}^T} H\sqrt{J_{\mT_N}^T}$, and thus, e.g.,
\begin{align}
    D\le \sqrt{J_{\mT_N}^T} \sqrt{J_{\mT_N}^T}=J_{\mT_N}^T.
\end{align}
The right-hand side is therefore an upper bound. 

Conversely, note that $-J_{\mT_N}^T\leq D\leq J_{\mT_N}^T$ implies $\supp(D) \subseteq \supp(J_{\mT_N}^T)$, and thus defining $H:=\left(J_{\mT_N}^T\right)^{-1/2} D\left(J_{\mT_N}^T\right)^{-1/2} $ (where the inverse is to be understood as the generalized Moore-Penrose inverse on the support of $J_{\mT_N}^T$) yields $-\id\leq H\leq \id$. 

Next, let $X=J_{\mS_N}-J_{\mR_N}$, and notice that the generalized diamond distance can thus be written as
\begin{align}\label{eq:diamond_dist_generalized_intermediate}
      d_{\diamond,N}(\mkern-1mu\mS_N,\!\mR_N\mkern-1mu) \!=\! \tfrac{1}{2}\max\left\{\mkern-1mu \Tr{DX}\!:\! \shortminus J_{\mT_N}^T\!\leq\! D\!\leq\! J_{\mT_N}^T, J_{\mT_N} \!\in\! \mathfrak{T}_N   \mkern-1mu\right\}\!=\!\tfrac{1}{2}\max\left\{\mkern-1mu\Tr{(P_0\!\shortminus\!P_1)X}\!:\! P_0,P_1\!\geq\! 0, P_0\!+\!P_1\!\in\! \mathfrak{T}_N   \mkern-1mu\right\}.
\end{align}
The last equality follows from relaxing the problem first by taking
\begin{align}
    P_0=\tfrac{1}{2}\left(J_{\mT_N}^T+D\right) \quad \text{and}\quad  P_1=\tfrac{1}{2}\left(J_{\mT_N}^T-D\right)
\end{align}
for a feasible $D$.
Then, clearly $P_0,P_1 \geq 0$, and $P_0+P_1 \in \mathfrak{T}_N$, since this set is invariant under transposition. Conversely, the equality is achieved by taking $J_{\mT_N}=(P_0+P_1)^T$ and $D=P_0-P_1$. 

We now switch to the dual problem of the right-hand side of Eq.~\eqref{eq:diamond_dist_generalized_intermediate} where, for convenience, we consider the optimization problem corresponding to $2d_{\diamond,N}$: We introduce positive semidefinite auxiliary variables $J_{\mT_N}$ and $J_{\mT}^{(j)}$, $0\leq j\leq N$, subject to $J_{\mT_N}=P_0+P_1$ and the tester normalization constraints in Eq.~\eqref{eq:tester_conditions}. Furthermore, we introduce Lagrange multipliers $Z\in \Herm, Y^{(j)} \in \Herm \, \forall j: 0\leq j\leq N$ and $\lambda \in \mathbb{R}$. The Lagrangian is then
\begin{align}
    L(P_0,P_1,J_{\mT_N}, \{J_{\mT}^{(j)} \}; Z, \{Y^{(j)} \},\lambda) =& \Tr{(P_0-P_1)X} +\Tr{Z(J_{\mT_N}- P_0-P_1)}+ \Tr{Y^{(N)}\left(J_{\mT}^{(N)} \otimes \id^{2N+1}-J_{\mT_N} \right)} \nonumber\\
    &+\sum_{j=1}^{N} \Tr{Y^{(j-1)}\left(J_{\mT}^{(j-1)} \otimes \id^{2j-1} -\partTr{2j}{J_{\mT}^{(j)}}\right)} +\lambda\left(1-\Tr{J_{\mT}^{(0)}}\right) \nonumber \\
    =& \lambda +\Tr{\left(X-Z \right)P_0}+\Tr{\left(-X-Z \right)P_1}+\Tr{\left(Z-Y^{(N)} \right)J_{\mT_N}}  \\
    &+\sum_{j=1}^{N} \Tr{\left(\partTr{2j+1}{Y^{(j)}}\!-\!\id^{2j} \otimes Y^{(j-1)}\right) J_{\mT}^{(j)}} \!+\!\Tr{\left(\partTr{1}{Y^{(0)}}-\lambda \id^0\right)J_{\mT}^{(0)}}.\nonumber
\end{align}
Taking the supremum over the primal variables yields the dual characterization
\begin{subequations}
\begin{alignat}{2} 
    d_{\diamond,N}(\mS_N,\mR_N)= &\,\text{minimize}\quad &&  \frac{\lambda}{2} \\
    &\,\text{subject to}\quad && Z\in \Herm, Y^{(j)} \in \Herm \,\forall j:0\leq j\leq N ,\lambda\in \mathbb{R}\\
    & && Z \geq X, Z\geq -X, Y^{(N)} \geq Z \\
    & &&  \partTr{2j+1}{Y^{(j)}}\leq \id^{2j} \otimes Y^{(j-1)} \, \forall j:1\leq j\leq N\\
    & &&  \partTr{1}{Y^{(0)}}\leq \lambda \id^0,
\end{alignat}
\end{subequations}
where strong duality holds by Slater's conditions~\cite{Boyd2004} applied to the choice  $P_0=P_1=J_{\mT_N}/2$ for a full-rank tester $J_{\mT_N}$. We can now eliminate the variable $Z$ by noting that $Z \geq X, Z\geq -X, Y^{(N)} \geq Z$ implies $Y^{(N)} \geq X$ and $Y^{(N)} \geq -X$. By choosing $Z=Y^{(N)}$, these conditions are also sufficient. Moreover, this also implies that $ Y^{(N)}\geq 0$, and thus,  $ Y^{(j)}\geq 0$ for all $j$. Then, 
\begin{align}
    d_{\diamond,N}(\mS_N,\mR_N)\!=\!\min\left\{\frac{\lambda}{2}: Y^{(N)} \geq X\geq -Y^{(N)}, Y^{(j)}\geq 0, \partTr{2j+1}{Y^{(j)}}\leq \id^{2j} \otimes Y^{(j-1)}, \partTr{1}{Y^{(0)}}\leq \lambda \id^0 \right\}.
\end{align}
Next, we note that without loss of generality, we can assume that the constraints $\partTr{2j+1}{Y^{(j)}}\leq \id^{2j} \otimes Y^{(j-1)}, \partTr{1}{Y^{(0)}}\leq \lambda \id^0$ hold with equality, see also Ref.~\cite[Appendix~A]{Gutoski2012}: Assume that we have a set of feasible variables for which this is not the case, let $\tau^1$ be some quantum state, and define $\tilde{Y}^{(0)}:=Y^{(0)}+\left(\lambda \id^0-\partTr{1}{Y^{(0)}}\right) \otimes \tau^1$. Feasibility implies that $\tilde{Y}^{(0)} \geq Y^{(0)}$ and by construction, $\partTr{1}{\tilde{Y}^{(0)}}=\lambda \id^0$. We can now proceed recursively: Assume that $\tilde{Y}^{(j-1)}$ with $\tilde{Y}^{(j-1)} \geq Y^{(j-1)}  $  has been defined and choose
\begin{align}
    \tilde{Y}^{(j)}:= Y^{(j)}+\left(\id^{2j}\otimes \tilde{Y}^{(j-1)}-\partTr{2j+1}{Y^{(j)}} \right) \otimes  \tau^{2j+1}
\end{align}
for some quantum state $\tau^{2j+1}$. Then, $\partTr{2j+1}{\tilde{Y}^{(j)}} = \id^{2j}\otimes \tilde{Y}^{(j-1)}$, and 
\begin{align}
    \id^{2j}\otimes \tilde{Y}^{(j-1)} \geq \id^{2j}\otimes Y^{(j-1)} \geq \partTr{2j+1}{Y^{(j)}}
\end{align}
implies $\tilde{Y}^{(j)} \geq Y^{(j)}$ and thus $\tilde{Y}^{(N)} \geq X\geq -\tilde{Y}^{(N)}$. We thus constructed a new set of feasible variables satisfying the intended inequality constraints with equality. Comparing with Eq.~\eqref{eq:CombSet}, the constraints on $\{\tilde{Y}^{(j)}\}$ correspond to the constraints on a comb rescaled by $\lambda$. Finally, we thus arrive at
\begin{align}
    d_{\diamond,N}(\mS_N,\mR_N)
     = & \min\left\{ \frac{\lambda}{2}  :  -\lambda J_{\mM_N}  \leq  X  \leq  \lambda J_{\mM_N},  \; J_{\mM_N} \in \Comb_N \right\}\nonumber \\
     = & \min\left\{  \lambda\geq 0  : J_{\mS_N} - J_{\mR_N} \leq  \lambda J_{\mM_N},  \; J_{\mM_N} \in \Comb_N \right\},
\end{align}
where the last line follows from a standard argument: Let the pair $(\lambda, J_{\mM_N})$ be feasible in the first line. Hence, 
\begin{align}
    J_{\mM_N'}:=J_{\mM_N}+X/\lambda \ge J_{\mM_N}-J_{\mM_N} =0,
\end{align}
where the division by $\lambda$ is justified because if $\lambda=0$, then automatically $0=X=J_{\mS_N}-J_{\mR_N}$.
Also since $X=J_{\mS_N}-J_{\mR_N}$ is the difference of two normalized combs, adding $X/\lambda$ preserves the partial trace conditions a comb needs to satisfy (see Eq.~\eqref{eq:CombSet}). Hence, $J_{\mM_N'}$ is a comb too and 
\begin{align}
    X\leq \lambda J_{\mM_N}  = \lambda J_{\mM_N'} -X 
\end{align}
implies $X\leq (\lambda/2)J_{\mM_N'}$. Choosing $\lambda'=\lambda/2$, we thus constructed a feasible point $(\lambda', J_{\mM_N'})$ for the second line with the same objective value.
Conversely, starting from the second line, if $X\leq\lambda J_{\mM_N}$, then $J_{\mM_N'}:=J_{\mM_N}-X/\lambda\in\Comb_N$ and $ J_{\widetilde\mM_N}:=(J_{\mM_N}+J_{\mM_N'})/2\in\Comb_N$, which satisfies 
\begin{align}
    X\le \lambda J_{\mM_N}=\frac{\lambda}{2}\left( J_{\mM_N}+ J_{\mM_N}\right)= \frac{\lambda}{2}\left( \frac{X}{\lambda} + J_{\mM_N'}+ J_{\mM_N}\right)=\frac{X}{2} +\lambda J_{\widetilde\mM_N}.
\end{align}
Together with the analogous lower bound, this implies that $-2\lambda  J_{\widetilde\mM_N}\leq X
\leq2\lambda J_{\widetilde\mM_N}$. Any feasible point in the second line thus implies a feasible point in the first line with the same objective value. In combination, this concludes the argument.

Lastly, we need to show that the generalized diamond distance is contractive under compositions with arbitrary supermaps. This essentially follows directly from its construction involving an optimization over all discrimination strategies, see also Ref.~\cite{Chiribella2009}. More formally, let $\mT_M \in \mathfrak{S}_M$ denote an arbitrary supermap such that the composition $\mT_M*\mS_N \in \mathfrak{S}_K$ for some $K$. Let $(\lambda^\star,J_{\mM_N^\star})$ denote an optimal feasible point of
\begin{align}
    d_{\diamond,N}(\mS_N,\mR_N) &=\min\left\{ \lambda \geq 0: J_{\mS_N}-J_{\mR_N}\leq \lambda J_{\mM_N}, J_{\mM_N} \in \Comb_N \right\}=\lambda^\star.
\end{align}
Then, using the fact that $J_{\mT_M}*\left(J_{\mS_N}-J_{\mR_N}\right) \leq \lambda^\star J_{\mT_M}*J_{\mM_N^\star} $ (which is a direct consequence of the definition of the link product in Eq.~\eqref{eq:defLinkProduct}),  together with $J_{\mT_M}*J_{\mM_N^\star} \in \Comb_K$ (and likewise for $\mR_N$), yields a feasible point in
\begin{align}
    d_{\diamond,K}(\mT_M*\mS_N,\mT_M*\mR_N) &=\min\left\{ \lambda \geq 0: J_{\mT_M}*(J_{\mS_N}-J_{\mR_N})\leq \lambda J_{\mM_K}, J_{\mM_K} \in \Comb_K\right\}  \leq d_{\diamond,N}(\mS_N,\mR_N),
\end{align}
and thus, $d_{\diamond,N}$ is contractive under supermaps.
\end{proof}

If we consider channels, i.e., in our convention, if $N=0$, the generalized diamond distance reduces to the regular diamond distance between two quantum channels, see Ref.~\cite{Chiribella2009}. The diamond distance between two channls $\mN^{A\to B},\mM^{A\to B}$ can be calculated using the standard SDP~\cite{Watrous2009,Watrous2013} 
\begin{align}\label{eq:SPD_DiamondDistance}
    \tfrac{1}{2}\norm{\mN-\mM}_\diamond =\min \left\{ \norm{\partTr{B}{Z^{AB}}}_\infty: Z^{AB}\geq 0, Z^{AB} \geq J_\mN^{AB}-J_\mM^{AB}\right\},
\end{align}
where $\norm{\cdot}_\infty$ denotes the operator norm. This is clear from the same argument used in the previous Lemma: Specialized to channels, it follows from Eq.~\eqref{eq:Gen_Diamond_Dist} that
\begin{align}\label{eq:Gen_Diamond_Dist_channels}
        \tfrac{1}{2}\norm{\mN-\mM}_\diamond=d_{\diamond, 0}(\mN,\mM)=\min\left\{ \lambda\!\geq\! 0: \! J_{\mN}\!-\!J_{\mM}\!\leq\! \lambda J_{\mK}, J_{\mK} \!\in\! \Comb_0 \right\}.
    \end{align}
Take a feasible point $(\lambda, J_{\mK}^{AB})$ on the right-hand side of Eq.~\eqref{eq:Gen_Diamond_Dist_channels}. Define $Z^{AB}:= \lambda J_\mK^{AB}$, which is clearly feasible in Eq.~\eqref{eq:SPD_DiamondDistance}, and satisfies $\norm{\partTr{B}{Z^{AB}}}_\infty=\norm{\lambda \id^A}_\infty=\lambda$. We thus constructed a feasible point for the optimization problem in Eq.~\eqref{eq:SPD_DiamondDistance} with the same objective value. Conversely, take any feasible $Z^{AB}$ in Eq.~\eqref{eq:SPD_DiamondDistance} and define $\lambda:= \norm{\partTr{B}{Z^{AB}}}_\infty$. Thus, $\partTr{B}{Z^{AB}} \leq \lambda \id^A$, and without loss of generality, we assume $\lambda >0$ (otherwise there is nothing to show). Now, choosing any $\tau\in \D$, define $J_{\mK}^{AB} := \tfrac{1}{\lambda}\left(Z^{AB}+\left(\lambda \id^A-\partTr{B}{Z^{AB}}\right)\otimes \tau^B\right)$. Then, $J_{\mK}^{AB}\geq 0$, $\partTr{B}{J_{\mK}^{AB}}=\id^A$, and $\lambda J_{\mK}^{AB} \geq J_{\mN}^{AB}-J_{\mM}^{AB}$. Hence, $(\lambda,J_{\mK}^{AB})$ is feasible in Eq.~\eqref{eq:Gen_Diamond_Dist_channels} and yields the same objective value, establishing the equality of Eq.~\eqref{eq:SPD_DiamondDistance} and Eq.~\eqref{eq:Gen_Diamond_Dist_channels}.

Lastly, Prop.~\ref{prop:SupermapMonotone} from the main text, which we repeat for readability, follows immediately from the latter Lemma. 
\SupermapMonotone*
\begin{proof}
Let $\mT_M\in \mathfrak{C}_M$ be any free supermap for which $\mT_M*\mS_N \in \mathfrak{S}_K$ for some $K$. From the fact that the set $\mathfrak{C}$ is compatible and thus closed under \textit{arbitrary} compositions, it follows that
\begin{align}
    M_{\mathfrak{C}}(\mT_M*\mS_N)&= \inf_{\mR_K \in \mathfrak{C}_K} d_{\diamond,K}(\mT_M*\mS_N,\mR_K) \leq \inf_{\mR_N \in \mathfrak{C}_N} d_{\diamond,K}(\mT_M*\mS_N,\mT_M*\mR_N ) \leq \inf_{\mR_N \in \mathfrak{C}_N} d_{\diamond,N}(\mS_N,\mR_N) \nonumber \\
    &= M_{\mathfrak{C}}(\mS_N),
\end{align}
where we also used that the generalized diamond distance is contractive (see Lem.~\ref{def:GenDiamondDist}).

Note that if $\mathfrak{C}_N$ is characterized by semidefinite constraints, then Eq.~\eqref{eq:SDP_gen_diamond_dist} implies that this monotone can be computed via the SDP
\begin{subequations}
    \begin{alignat}{2} 
        &\,\text{minimize}\, \quad &&  \frac{\lambda}{2} \\
        &\,\text{subject to}\quad &&  Y^{(j)} \in \Herm \,\forall j:0\leq j\leq N ,\lambda\in \mathbb{R}\\
        & && Y^{(N)} \geq J_{\mS_N}-J_{\mR_N} \geq -Y^{(N)} \\
        & &&  \partTr{2j+1}{Y^{(j)}}\leq \id^{2j} \otimes Y^{(j-1)} \quad \forall j:1\leq j\leq N \\
        & &&  \partTr{1}{Y^{(0)}}\leq \lambda \id^0 \\
        & && \mR_N \in \mathfrak{C}_N,
    \end{alignat}
    \end{subequations}
    where again some of the inequalities can be replaced by equalities.
\end{proof}
Recall that by our assumption~\ref{assump:ClosedConvex}, the sets of free channels from $A$ to $B$, $\O(A\to B)$, are closed, and thus 
\begin{align}\label{eq:cO_closed}
    \overline{\cO}_0=\overline{\O}=\bigcup_{A,B} \overline{\O(A\to B)}=\bigcup_{A,B} \O(A\to B) =\cO_0.
\end{align}
Now assume that $\cO$ exists and take $\mS_N,\mT_M \in \overline{\cO} $. By the definition of the closure, there exist sequences of supermaps $\mS_N^{(n)},\mT_M^{(n)} \in \cO$ converging to $\mS_N,\mT_N$, respectively. Since the link product is a bilinear map between fixed finite-dimensional matrix spaces, it is jointly continuous, and thus $\mS_N^{(n)}*\mT_M^{(n)}$ converges to $\mS_N*\mT_M$ (whenever the link-product is well-defined). As such, $\mS_N*\mT_M\in \overline{\cO}$. Therefore, $\overline{\cO}$ is itself a compatible family with the fixed free channels $\O$. Since $\cO$ is the largest such family, it must contain $\overline{\cO}$, and thus $\overline{\cO}=\cO$. This is interesting in itself because it shows that if a maximal set of compatible supermaps exists, then it is closed, which is desirable for the same reasons that lead to assumption~\ref{assump:ClosedConvex}. One of the convenient technical consequences is that if $\mathfrak{C}=\cO$, then the infimum in Prop.~\ref{prop:SupermapMonotone} is achieved: For fixed input and output systems and causal order, the set of deterministic supermaps is compact in the Choi representation. The corresponding set of compatible supermaps is closed by the preceding argument and since $\cO \subset \mathfrak{S}$ it is therefore compact. Since it is non-empty (it contains free networks if assumptions~\ref{assump:identity} to~\ref{assumpt:complfree} hold), every continuous real-valued objective on this set attains its minimum and maximum.

\section{ONE-SHOT MANIPULATION OF QUANTUM CHANNELS UNDER MAXIMALLY FREE SUPERCHANNELS}\label{sec:Appen_GeneralConversion}

In this section, we consider free channels $\O$ for which assumption~\ref{assump:ClosedConvex} holds and prove Thm.~\ref{thm:DiamondBound_Bounds_G}, Thm.~\ref{thm:Thm1_equivalent}, and Cor.~\ref{thm:Thm3_equivalent}. To this end, recall that
\begin{align}
    R_H^\epsilon(\rho||\sigma)= \max\{\Tr{P\sigma}^{-1}: 0\leq P\leq \id, \Tr{P\rho}\geq 1-\epsilon\},
\end{align} 
and that $\aff(\cdot)$ denotes the affine hull. We first rewrite Eqs.~\eqref{eq:R_H,O}, i.e., the monotones
\begin{subequations}
\begin{align}\label{eq:R_H,O_appen}
    R_{H,\O}^\epsilon(\mE^{A\to B}) &= \min_{\mM\in \O(A\to B)} \max_{\psi^{AR}\in\Pure(AR)} R_H^\epsilon(\idChan^{R} \otimes\, \mE^{A\to B}(\psi)||(\idChan^R\otimes \,\mM^{A\to B}(\psi^{AR})),\\
    R_{H,\aff(\O)}^\epsilon(\mE^{A\to B}) &= \inf_{\mM\in \aff(\O(A\to B))} \max_{\psi^{AR}\in\Pure(AR)} R_H^\epsilon(\idChan^{R} \otimes\, \mE^{A\to B}(\psi)||(\idChan^R\otimes \,\mM^{A\to B}(\psi^{AR})),
\end{align}
\end{subequations}
where $R\cong A$ and $\Pure(AR)$ denotes the set of all pure states on $AR$. As shown in Ref.~\cite[Lem.~5]{Regula2021b}, for any topologically closed and convex set $\O(A\to B)$, 
\begin{subequations}\label{eq:R_H_epsilon}
\begin{align}
    R_{H,\O}^{\epsilon}(\mN)^{-1}= \min \left\{ \lambda: \Tr{P (\idChan \otimes \,\mM)(\psi)}\leq \lambda \, \forall \mM\in \O, 0\leq P\leq \id, \psi\in \Pure, \Tr{P (\idChan \otimes \, \mN)(\psi)}\geq 1 -\epsilon\right\}, \label{eq:R_H_smooth}\\
    R_{H,\aff(\O)}^{\epsilon}(\mN)^{-1}= \min \left\{ \lambda: \Tr{P (\idChan \otimes\, \mM)(\psi)}= \lambda \, \forall \mM\in \O, 0\leq P\leq \id, \psi\in \Pure, \Tr{P (\idChan \otimes\, \mN)(\psi)}\geq 1 -\epsilon\right\} \label{eq:R_H_aff_smooth},
\end{align}
\end{subequations}
where we omitted the system indices since they are fixed by the channel $\mN^{A\to B}$ (and the auxiliary system is still $R\cong A$). If clear from the context, we continue to omit the system labels from now on. Furthermore, for an isometric channel $\mV$, note that according to Ref.~\cite[Eqs.~(A14) and~(A15)]{Regula2021b}, it holds that 
\begin{subequations}\label{eq:min_entropy_combined}
\begin{align}
    R_{\min,\O}(\mV)^{-1}&= \min \left\{ \lambda: \Tr{(\idChan \otimes\, \mV)(\psi)(\idChan \otimes\, \mM)(\psi)}\leq \lambda \, \forall \mM\in \O, \psi\in \Pure\right\}, \label{eq:min_entropy} \\
    R_{\min,\aff(\O)}(\mV)^{-1}&= \inf \left\{ \lambda: \Tr{(\idChan \otimes\, \mV)(\psi)(\idChan \otimes\, \mM)(\psi)}= \lambda \, \forall \mM\in \O, \psi\in \Pure\right\}\label{eq:min_entropy_aff},
\end{align}
\end{subequations}
where we use the convention that $\inf \emptyset=\infty.$ As noted in Ref.~\cite[SM Remark beneath Thm.~1]{Regula2021b}, it holds that $R_{\min,\O}(\mN)=R_{H,\O}^{\epsilon=0}(\mN)$, whereas, in general, $R_{H,\aff(\O)}^{\epsilon=0}(\mN) \neq R_{\min,\aff(\O)}(\mN)$. For $m>0$, the monotones from Eq.~\eqref{eq:G_O} can be written as~\cite[SM Sec.~A]{Regula2021b}
\begin{subequations}\label{eq:G_O_alternative}
\begin{align}
    G_{O}(\mE,m) &= \max \left\{\Tr{W \left(\idChan\otimes \, \mE \right)(\psi)} : 0\leq W\leq \id, \psi\in \Pure, \Tr{W\left(\idChan \otimes \,\mM\right)(\psi)} \leq \tfrac{1}{m} \, \forall \mM \in\O\right\}, \nonumber\\
    &= \max \left\{ \Tr{J_\mE W}: 0\leq W \leq \rho \otimes \id, \rho\in \D, \Tr{W J_\mM}\leq \tfrac{1}{m}\, \forall \mM \in \O\right\}, \label{eq:eq:G_O_alternativeA}\\
    G_{\aff(O)}(\mE,m) &= \sup \left\{\Tr{W \left(\idChan\otimes\, \mE \right)(\psi)} : 0\leq W\leq \id,\psi\in \Pure, \Tr{W\left(\idChan \otimes \,\mM\right)(\psi)} = \tfrac{1}{m} \, \forall \mM \in\O\right\} \nonumber \\
    &= \sup\left\{ \Tr{J_\mE W}: 0\leq W \leq \rho \otimes \id, \rho\in \D, \Tr{W J_\mM}=\tfrac{1}{m} \,\forall \mM \in \O\right\},
\end{align}
\end{subequations}
where the systems are again fixed by $\mE$, and we recall that we adopt the convention that $G_{\aff(O)}(\mE,0):=1$. For fixed quantum systems $A,B$, define the dual cones~\cite{Regula2021b}
\begin{subequations}\label{eq:dualCones_appen}
\begin{align}
    \O(A\to B)^*:=\left\{X^{AB}\in \Herm: \Tr{X^{AB}J_{\mM}^{AB}} \geq 0 \, \forall \mM^{A\to B}\in\O(A\to B)\right\}, \label{eq:dualCones_appen_non_affine} \\
    \aff(\O(A\to B))^*: =\left\{X^{AB}\in \Herm: \Tr{X^{AB}J_{\mM}^{AB}} = 0 \, \forall \mM^{A\to B} \in\O(A\to B)\right\},\label{eq:dualCones_appen_aff}
\end{align}
\end{subequations}
and we will again write $\O^*, \aff(\O)^*$ if the systems are clear from the context. For $m>0$, this allows us to rewrite Eqs.~\eqref{eq:G_O_alternative} as
\begin{subequations}\label{eq:G_O_cone}
\begin{align}
    G_{\O}(\mE,m)&=\max \left\{ \Tr{J_\mE W}: 0\leq W \leq \rho \otimes \id, \rho \in \D, \tfrac{\id}{d_A}-mW \in \O^*\right\}, \\
    G_{\aff(\O)}(\mE,m)&=\sup \left\{ \Tr{J_\mE W}: 0\leq W \leq \rho \otimes \id, \rho \in \D, \tfrac{\id}{d_A}-mW \in \aff(\O)^*\right\}.\label{eq:G_aff_O}
\end{align}
\end{subequations}
To see the equality between Eqs.~\eqref{eq:G_O_cone} and Eqs.~\eqref{eq:G_O_alternative}, let $\mE^{A\to B}$ and let $m>0$. For every $\mM^{A\to B}$ we have that $\Tr{J_{\mM}^{AB}}=d_A$ since $\mM$ is trace-preserving, and as such, 
\begin{align}
    \Tr{\left(\tfrac{\id^{AB}}{d_A}-mW^{AB}\right)J_{\mM}^{AB}}=1-m\Tr{W^{AB}J_{\mM}^{AB}}.
\end{align}
Thus, for $m>0$, the constraint $\Tr{W^{AB}J_{\mM}^{AB}}\leq 1/m$ for every $\mM\in\O(A\to B)$ is equivalent to $\id^{AB}/d_A-mW^{AB}\in\O(A\to B)^*$, while $\Tr{W^{AB}J_{\mM}^{AB}}=1/m$ for every $\mM\in\O(A\to B)$ is equivalent to $\id^{AB}/d_A-mW^{AB}\in\aff(\O(A\to B))^*$. \\

As mentioned in the main text, Ref.~\cite[Thm.~6]{Regula2021b} provides upper and lower bounds on the conversion distance with respect to the worst-case fidelity: If $\mE$ and $\mN$ are quantum channels, with $\mN$ an isometry, and if $\O$ is a set of free channel satisfying  assumption~\ref{assump:ClosedConvex}, then
    \begin{align}\label{eq:thm:Fidelity_bound_G}
        G_{\O}\left(\mE, R_{\min ,\O}(\mN)\right)\!&\geq \!\max_{\mS \in \mO_1} \!F_{\min}(\mS[\mE],\mN) \geq G_{\O}\left(\mE, R_{s,\O}(\mN)\right)
    \end{align}
    if $\O$ is full-dimensional and 
    \begin{align}
        G_{\aff(\O)}\!\left(\mE,\! R_{\min ,\aff(\O)}(\mN)\right)\!&\geq\mkern-4mu\max_{\mS \in \mO_1} \!F_{\min}(\mS[\mE],\mN) \geq \mkern-4muG_{\aff(\O)}\!\left(\mE,\! R_{\max ,\O}(\mN)\right)
    \end{align}
    if $\O$ is reduced-dimensional.

Here, we want to note that Ref.~\cite[Thm.~6]{Regula2021b} requires that the target channel $\mN^{A\to B}$ maps pure states to pure states in a complete sense, i.e., $\idChan^R \otimes \,\mN^{A\to B}(\psi^{AR})\in \Pure(BR) \, \forall R, \forall  \psi^{AR}\in\Pure(AR)$. As mentioned in the main text, according to Ref.~\cite[Thm.~3.1]{Davies1976} and Ref.~\cite[Lem.~3.2]{Belzig2025}, this is equivalent to $\mN^{A\to B}$ being an isometry.

To justify the use of a maximum over $\mO_1$ here, notice that the set of superchannels is compact in the Choi representation. For each fixed free channel $\mM\in\O$, the map $J_{\mS}\mapsto J_{\mS[\mM]}$ is linear between finite-dimensional spaces and therefore continuous. Since the set of free channels $\O$ is closed, the condition $\mS[\mM]\in\O$ defines a closed subset of superchannels. Intersecting these subsets over all free input channels shows that $\mO_1$ is closed and hence compact. It is also nonempty, since it contains superchannels that output a fixed free channel (and we assumed that $\O(A\to B)$ is also non-empty). The worst-case fidelity (and diamond distance) are continuous on this set, so their respective minima and maxima are attained.

We are now ready to prove Thm.~\ref{thm:DiamondBound_Bounds_G}, i.e., the analog of Ref.~\cite[Thm. 6]{Regula2021b} for the diamond distance.

\DiamondBoundBoundsG*
\begin{proof}
Let $\mE^{A\to B}$ denote an arbitrary input channel, and let $\mN^{C\to D}$ denote an isometric target channel. We start with the full-dimensional case. The lower bound on the diamond distance follows directly from
Ref.~\cite[Thm. 6]{Regula2021b} combined with the sharpened Fuchs-van de Graaf inequality in Eq.~\eqref{eq:FuchsVanDeGraaf_channels_sharp}, as
\begin{align}
     \min_{ \mS \in \mO_1} \tfrac{1}{2}\norm{\mS[\mE^{A\to B}]-\mN^{C\to D}}_\diamond& \overset{\eqref{eq:FuchsVanDeGraaf_channels_sharp}}{\geq}   1-  \max_{ \mS \in \mO_1}F_{\min}(\mS[\mE],\mN) \overset{\eqref{eq:thm:Fidelity_bound_G}}{\geq}  1-G_{\O}(\mE,R_{\min,\O}(\mN)). 
\end{align}
For the upper bound, take a feasible point $(\psi^{AR}, W^{BR})$ of    
    \begin{align}\label{eq:G_O_Rs}
        G_{\O}(\mE,\mkern-2mu R_{s,\O}(\mN)) \!\overset{\eqref{eq:eq:G_O_alternativeA}}{=}\! \max \left\{\Tr{W \!\left(\idChan\otimes \mE \right)(\psi)} \!:\! 0\!\leq\! W\!\leq\! \id, \psi\in \Pure, \Tr{W\!\left(\idChan \otimes \mM\right)(\psi)} \leq R_{s,\O}(\mN)^{-1} \, \forall \mM \!\in\!\O\right\}.
    \end{align}
    Let $(\mQ^\star,\mK^\star)$ denote an optimal feasible point in the optimization in Eq.~\eqref{eq:def_R_s} for $R_{s,\O}$ , i.e., $\mQ^\star,\mK^\star \in \O$ such that $\mN +(R_{s,\O}(\mN)-1) \mQ^\star = R_{s,\O}(\mN) \mK^\star$. Such a feasible point exists since $\O(A\to B)$ is assumed to be closed and non-empty, and the standard robustness is finite for $\O$ full-dimensional.

    Using Eq.~\eqref{eq:SPD_DiamondDistance}, see also Refs.~\cite{Watrous2009,Watrous2013}, recall that the optimal diamond conversion distance is given by the optimization problem
    \begin{align}\label{eq:diamond_optimization}
        \min_{\mS \in \mO_1} \tfrac{1}{2}\norm{\mS[\mE]-\mN}_\diamond = \min \left\{\norm{\partTr{D}{Z^{CD}}}_\infty: Z^{CD}\geq 0,Z^{CD} \geq J_{\mN}^{CD}- J_{\mS[\mE]}^{CD}, \mS \in \mO_1 \right\}.
    \end{align}
    From this, we can construct a superchannel following  Ref.~\cite{Regula2021b}:  Choose
    \begin{subequations}\label{eq:OptSuperChannel}
    \begin{align} 
        \mS[\mL] &:= \Tr{W\left( \idChan \otimes \,\mL\right)(\psi)} \,\mN + \Tr{(\id-W)\left( \idChan \otimes \,\mL\right)(\psi)} \, \mQ^\star=: p_\mL \mN + \bar{p}_\mL \mQ^\star, \\
     Z&:=(1-p_\mE)J_\mN,
    \end{align}
    \end{subequations}
    where $\mL$ denotes an arbitrary placeholder channel, and    $p_\mL:=\Tr{W\left( \idChan \otimes \,\mL\right)(\psi)}$ and $\bar{p}_\mL:=\Tr{(\id-W)\left( \idChan \otimes \,\mL\right)(\psi)}$. Notice that for any channel $\mL$, we have that $p_\mL+\bar{p}_\mL=1$. That $\mS$ is a valid superchannel follows from its decomposition into a pre- and post-processing channel $\mM_0$ and $\mM_1$, which was shown in Ref.~\cite{Regula2021b}: Take 
    \begin{subequations}\label{eq:Superchannel_decomposition}
    \begin{align}
        \mM_0^{C\to AR C^\prime}(X^{C})&\!:=\! \psi^{AR}\otimes X^{C^\prime}, \\ \mM_1^{BRC^\prime \to D}(Y^{BRC^\prime})&\!:=\! \mN^{C^\prime\!\to D}\left(\partTr{RB}{(W^{RB} \!\otimes\!\id^{C^\prime}\!)Y^{BRC^\prime}}\right)\!+\!(\mQ^\star)^{C^\prime\!\to D}\left( \partTr{RB}{((\id\!-\!W)^{BR} \!\otimes\!\id^{C^\prime})Y^{BRC^\prime}} \right)
    \end{align}
    \end{subequations}
    where $C^\prime\cong C$. Due to the feasibility of $(\psi,W)$ and $\mN,\mQ^\star \in \CPTP$, $\mM_0$ simply appends the state $\psi$ and $\mM_1$ is a ``measure-and-act" channel, and as such, $\mS \in \mathfrak{S}_1$. Next, we show that $\mS\in \mO_1$. To this end, let $\mM \in \O(A\to B)$ be an arbitrary but fixed free channel, and set $q:=\Tr{W (\idChan \otimes\, \mM)(\psi)} $. Feasibility of $(\psi,W)$ implies $0\leq q\leq R_{s,\O}(\mN)^{-1}=:s^{-1}$. Using $\mN+(s-1)\mQ^\star=s\mK^\star$, we obtain $\mS[\mM]=q \mN +(1-q)\mQ^\star=qs \mK^\star+(1-qs) \mQ^\star$. Combining $0\leq qs\leq 1$ with $\mK^\star,\mQ^\star \in\O$, and $\O$ being convex gives $\mS[\mM] \in\O$, and thus $\mS\in \mO_1$. Lastly, note that $Z = (1-p_\mE) J_\mN \geq 0$. Moreover, recall that from $0\leq W \leq \id$, it follows that $0\leq p_\mE=\Tr{W\left( \idChan \otimes\, \mE\right)(\psi)}\leq 1$ and thus $(1-p_\mE) J_{\mQ^\star}\ge0$. Together with $J_{\mS[\mE]}= p_\mE J_\mN + (1-p_\mE) J_{\mQ^\star} $, we conclude that $Z=(1-p_\mE) J_\mN \geq J_\mN-J_{\mS[\mE]}$, and thus $(\mS,Z)$ is a feasible choice in Eq.~\eqref{eq:diamond_optimization}.
    Thus
\begin{align}\label{eq:diamond_upperbound}
    \min_{\tilde{\mS}_1 \in \mO_1} \tfrac{1}{2}\norm{\tilde{\mS}_1[\mE]-\mN}_\diamond \leq \norm{(1-p_\mE) \partTr{D}{J_{\mN}}   }_\infty = 1-p_\mE =1-\Tr{W\left( \idChan \otimes\, \mE\right)(\psi)}.
\end{align}
Since this holds for every feasible pair $(\psi,W)$, we can take the infimum over all such pairs and obtain
\begin{align}
    \min_{\mS \in \mO_1} \tfrac{1}{2}\norm{\mS[\mE]-\mN}_\diamond \leq 1-G_{\O}(\mE,R_{s,\O}(\mN)). 
\end{align}
If additionally $R_{\min,\O}(\mN)=R_{s,\O}(\mN)$ holds, the upper and lower bounds coincide, and the diamond and the worst-case fidelity conversion distance are equal, i.e., 
\begin{align}\label{eq:DistanceFallTogether}
    \min_{\mS \in \mO_1} \tfrac{1}{2}\norm{\mS[\mE]-\mN}_\diamond =  1-\max_{\mS \in \mO_1} F_{\min}(\mS[\mE],\mN)  = 1- G_{\O}(\mE,R_{s ,\O}(\mN)).
\end{align}

The bounds in the reduced-dimensional case follow analogously, except for the proof that the superchannel $\mS$ (constructed from the relevant robustness SDP) is contained in $\mO_1$ and the possibility that the relevant robustness may diverge. Let $r:=R_{\max,\O}(\mN)$. First, assume that $r<\infty$. Then, there exist $\mQ^\star\in\CPTP$ and $\mK^\star\in\O$ such that $\mN+(r-1)\mQ^\star=r\mK^\star$. Let $\mM\in\O(A\to B)$ be arbitrary and again let $q:=\Tr{W(\idChan\otimes\mM)(\psi)}=r^{-1}$, where $(\psi,W)$ is now a feasible point for the optimization problem determining $G_{\aff(\O)}(\mE,r)$, define a superchannel $\mS$ as in Eq.~\eqref{eq:OptSuperChannel}. As such, $\mS[\mM]=\frac{1}{r}\mN +(1-\frac{1}{r})\mQ^\star = \frac{\mN+(r-1)\mQ^\star}{r}=\mK^\star\in\O$ (since $r\ge 1$ by construction). Since $\mM\in\O(A\to B)$ was arbitrary, $\mS\in \mO_1$. 
Next, assume that $r=\infty$. Take any feasible point $(W,\rho)$ in
\begin{align}\label{eq:G_affO_infty}
    G_{\aff(O)}(\mE,\infty)= \sup \big\{ \Tr{J_\mE W}: 0\leq W \leq \rho \otimes \id, \rho \in \D,  \Tr{W J_\mM}=0 \,\forall \mM \in \O\big\},
\end{align}
which exists since $W=0$ and $\rho=\tfrac{\id}{d}$ is always feasible. Define 
\begin{align}
    p_\mL:= \Tr{J_\mL W}, \quad \text{and} \quad \bar{p}_\mL:= \Tr{J_\mL\left( \rho\otimes \id-W\right)}. 
\end{align}
Feasibility gives $0\leq p_\mL:= \Tr{J_\mL W}\leq 1$ for any channel $\mL$, and crucially $p_\mM= \Tr{J_\mM W}=0$ for every $\mM\in \O$. Now, choose any $\mK \in \O$, and define
\begin{align}
    \mS[\mL]:= p_\mL \mN + \bar{p}_\mL \mK.
\end{align}
Then, clearly $\mS[\mM] \in \O \, \forall \mM\in \O$, and thus, $\mS\in \mO_1.$ Moreover, we have
\begin{align}
    \min_{\tilde{\mS} \in \mO_1} \frac{1}{2}\norm{\tilde{\mS}[\mE]-\mN}_\diamond \leq (1-p_\mE) \frac{1}{2} \norm{\mK-\mN}_\diamond \leq 1-p_{\mE} = 1-\Tr{J_\mE W}.
\end{align}
Since the feasible point $(W,\rho)$ was arbitrary, we can take the infimum over all feasible points $(W,\rho)$ on the right hand-side to obtain $\min_{\tilde{\mS} \in \mO_1} \frac{1}{2}\norm{\tilde{\mS}[\mE]-\mN}_\diamond\leq 1-G_{\aff(O)}(\mE,\infty).$

\end{proof}

Next, we move on to the proofs of Thm.~\ref{thm:Thm1_equivalent} and Cor.~\ref{thm:Thm3_equivalent}. Recall that for any monotone $R_{\O}$, $R_{\O}^{\epsilon, \cdot}$ denotes its smoothed versions, i.e., 
\begin{subequations}\label{eq:Smoothed_Monotone_appen}
    \begin{align}
        R_{\O}^{\epsilon,\diamond}(\mE)&:= \inf\left\{R_{\O}(\mE^\prime): \tfrac{1}{2}\norm{\mE-\mE^\prime}_\diamond \leq \epsilon , \mE^\prime \in \CPTP\right\}, \label{eq:Smoothed_Monotone_appen_diamond}\\
        R_{\O}^{\epsilon,F}(\mE)&:= \inf\left\{R_{\O}(\mE^\prime): F_{\min}(\mE,\mE^\prime) \geq 1- \epsilon , \mE^\prime \in \CPTP\right\}.
    \end{align}
\end{subequations}

The analogue of Ref.~\cite[Thm.~1]{Regula2021b} for the diamond distance is the following.

\ThmOneequivalent*
\begin{proof}
We first note that $R_{\O}$ being a monotone under $\mO_1$ implies that $R_{\O}^{\mu,\diamond}$ is too:
Let $\mE^\prime \in \CPTP$ satisfy $\tfrac{1}{2}\norm{\mE-\mE^\prime}_\diamond \leq \mu$. Let $\mS_1\in\mO_1$, then the data-processing inequality implies $\tfrac{1}{2}\norm{\mS_1[\mE]-\mS_1[\mE^\prime]}_\diamond\leq \tfrac{1}{2}\norm{\mE-\mE^\prime}_\diamond \leq \mu$. Therefore, $\mS_1[\mE^\prime]$  is feasible in the optimization problem defining  $R_{\O}^{\mu,\diamond}(\mS_1[\mE])$. Thus, 
\begin{align}
    R_{\O}^{\mu,\diamond}(\mS_1[\mE]) \leq R_{\O}(\mS_1[\mE^\prime]) \leq R_{\O}(\mE^\prime),
\end{align}
where we used that $R_{\O}$ is a monotone under $\mO_1$. Taking the infimum over all such $\mE^\prime \in \CPTP$ yields 
\begin{align}\label{eq:R_O_smoothed_monotone}
    R_{\O}^{\mu,\diamond}(\mS_1[\mE]) \leq  \inf\{R_{\O}(\mE^\prime):\tfrac{1}{2}\norm{\mE-\mE^\prime}_\diamond \leq \mu,\ \mE^\prime \in \CPTP \}=  R_{\O}^{\mu,\diamond}(\mE).
\end{align}
Now, for a fixed input and target channel $\mE$ and $\mN$, respectively, suppose there exists a superchannel $\mS_1 \in \mO_1$ such that $\frac{1}{2}\norm{\mS_1[\mE]-\mN}_\diamond\leq \epsilon$. Let $\mN^\prime$ be a channel such that $\tfrac{1}{2} \norm{\mS_1[\mE]-\mN^\prime}_\diamond \leq \delta$. The triangle inequality yields
\begin{align}
    \tfrac{1}{2}\norm{\mN-\mN^\prime}_\diamond = \tfrac{1}{2}\norm{\mS_1[\mE]-\mN^\prime-(\mS_1[\mE]-\mN)}_\diamond \leq\tfrac{1}{2} \norm{\mS_1[\mE]-\mN}_\diamond+\tfrac{1}{2}\norm{\mS_1[\mE]-\mN^\prime}_\diamond\leq \epsilon +\delta.
\end{align}
Hence, $\mN^\prime $ is feasible in $R_{\O}^{\epsilon+\delta, \diamond}(\mN)$, and as such, $R_{\O}^{\epsilon+\delta, \diamond}(\mN) \leq R_{\O}(\mN^\prime) $. Taking the infimum over all such  $\mN^\prime $ yields
\begin{align}
    R_{\O}^{\epsilon+\delta, \diamond}(\mN) \leq \inf\{ R_{\O}(\mN^\prime): \tfrac{1}{2} \norm{\mS_1[\mE]-\mN^\prime}_\diamond \leq \delta,\ \mN^\prime \in \CPTP \} = R_{\O}^{\delta, \diamond}(\mS_1[\mE]) \overset{\eqref{eq:R_O_smoothed_monotone}}{\leq } R_{\O}^{\delta, \diamond}(\mE).
\end{align}

Conversely, suppose that one of the following holds:
    \begin{align}\label{eq:def_h_r}
        h:=R_{H,\O}^{\delta} (\mE)\geq R_{s,\O}^{\epsilon,\diamond}(\mN)=: r \qquad \text{or} \qquad h:=R_{H,\aff(\O)}^{\delta} (\mE)\geq R_{\max ,\O}^{\epsilon,\diamond}(\mN)=: r.
    \end{align}
Let $(\psi^\star, P^\star)$ denote an optimal feasible point in the respective optimization problem in Eq.~\eqref{eq:R_H_epsilon} and define the linear functionals
    \begin{align}
        p_\mL:= \Tr{P^\star (\idChan \otimes\,\mL)(\psi^\star)} \quad \text{and} \quad \bar{p}_\mL:= \Tr{(\id-P^\star) (\idChan \otimes\,\mL)(\psi^\star)}.
    \end{align}
    For the fixed input channel $\mE$, feasibility in  Eq.~\eqref{eq:R_H_epsilon} gives $p_{\mE}\geq1-\delta$. For every channel $\mL$, the pair $(p_{\mL},\bar p_{\mL})$ is a probability distribution since $\{P^\star, \id-P^\star\}$ is a POVM (and $\mL\in \CPTP$, $\psi^\star$ a state). Moreover, from feasibility in Eq.~\eqref{eq:R_H_epsilon}, we have
    \begin{align}
        p_\mM \leq \frac{1}{h} \qquad \text{or} \qquad p_\mM = \frac{1}{h} , \qquad \forall \mM\in \O
    \end{align}
    in the full- and reduced-dimensional case, respectively. 

Next, we distinguish between the cases  $h<\infty$ and $h=\infty$\footnote{This case can appear when for example $\O(A\to B)=\{\mM^{A\to B} \in\CPTP: \mM^{A\to B}(\ketbra{0}{0}_A)=\ketbra{0}{0}_B\}$: Let $\mE(\rho)=X\rho X$ be the qubit bit-flip channel and consider Eq.~\eqref{eq:min_entropy_combined}, which implies $R_{\min,\O}(\mE)^{-1}\leq R_{\min,\aff(\O)}(\mE)^{-1}$. In Eq.~\eqref{eq:min_entropy_aff}, choose the state $\psi=\ketbra{0}{0}_R\otimes \ketbra{0}{0}_A$. Then, we have that $\Tr{(\idChan \otimes\, \mE)(\psi) (\idChan \otimes\, \mM)(\psi)}= 0 \, \forall \mM\in \O$. As such $0\geq R_{\min,\aff(\O)}(\mE)^{-1} \geq R_{\min,\O}(\mE)^{-1} \geq 0$, where the last inequality follows from $R_{\min,\aff(\O)}(\mE)$ being non-negative.}. In the latter case, we directly obtain $p_{\mM}=0$ for every $\mM\in\O$, while in the full-dimensional case, the same conclusion follows from $0\leq p_{\mM}\leq h^{-1}=0$. Thus, $p_{\mM}=0$ and $ \bar{p}_{\mM}=1
    \,\forall\,\mM\in\O$. Choose any fixed free channel $\mK_0\in\O$ (which exists since $\O$ is non-empty) and define
\begin{align}
    \mS_{1}^\prime[\mL]:=p_{\mL}\mN+\bar p_{\mL}\mK_0.
\end{align}
This is a valid superchannel (see, e.g., the discussion surrounding Eq.~\eqref{eq:Superchannel_decomposition}). For every $\mM\in\O$, we have $\mS_{1}^\prime[\mM]=\mK_0\in\O$, and thus $\mS_{1}^\prime \in \mO_1$. Moreover, 
\begin{align}
    \tfrac{1}{2}\norm{\mS_{1}^\prime[\mE]-\mN}_\diamond=\tfrac{1-p_\mE}{2} \norm{\mK_0-\mN}_\diamond \leq 1-p_{\mE}\leq \delta\leq \delta+\epsilon.
\end{align}
    
Next, consider $h<\infty$. Then, $r<\infty$ and the infimum in Eq.~\eqref{eq:Smoothed_Monotone_appen_diamond} is thus attained\footnote{For completeness, let $r:=R_{\star,\O}^{\epsilon,\diamond}(\mN)<\infty$, where $\star\in\{s,\max\}$. There exists a sequence of feasible tuples $(t_n,\mN_n^\prime,\mQ_n,\mK_n)$ such that $t_n\to r$, $\frac12\norm{\mN-\mN_n^\prime}_\diamond\leq\epsilon$, and $\mN_n^\prime+(t_n-1)\mQ_n=t_n\mK_n$. Here, $\mQ_n,\mK_n\in\O(A\to B)$ for $\star=s$, whereas $\mQ_n\in\CPTP(A\to B)$ and $\mK_n\in\O(A\to B)$ for $\star=\max$. Since both $\CPTP(A\to B)$ and $\O(A\to B)$ are compact, a subsequence converges to $(\mN^\prime,\mQ,\mK)$. Taking the limit preserves both the diamond-norm constraint and the robustness decomposition, yielding $R_{\star,\O}(\mN^\prime)\leq r$; the reverse inequality follows because $\mN^\prime$ is feasible in the smoothing optimization. Hence $R_{\star,\O}(\mN^\prime)=r$. }\label{footnote:inf_attain}. As such, there exist channels $\mN^\prime,\mQ\in\CPTP$ and $\mK\in\O$ satisfying 
    \begin{align}\label{eq:exist_channels}
         \tfrac{1}{2} \norm{\mN^\prime-\mN}_\diamond\leq \epsilon, \quad \text{and} \quad \mN^\prime +\left(r-1\right) \mQ=  r \mK,
    \end{align}
    where additionally, $\mQ \in\O$ in the full-dimensional case. For the moment, assume in addition that $1<h$ and notice that from Eq.~\eqref{eq:def_R_s} and Eq.~\eqref{eq:def_h_r}, we have $h\geq r\geq 1$, and can thus define the channel
    \begin{align}\label{eq:Q_tilde_affine}
        \tilde{\mQ}:=\frac{(r-1)\mQ+(h-r)\mK}{h-1}.
    \end{align}
    In the full-dimensional case, we also have that $\tilde{\mQ} \in \O$ since $\O$ is convex. Additionally, it holds that
    \begin{align}
        \mN^\prime+(h-1) \tilde{\mQ}=\mN^\prime +(r-1) \mQ+(h-r)\mK= h\mK.
    \end{align}
    Define the superchannel 
    \begin{align}
        \mS_1[\mL]:= p_\mL \mN^\prime +\bar{p}_\mL \tilde{\mQ},
    \end{align}
    which is indeed a valid superchannel following from the discussion surrounding Eq.~\eqref{eq:Superchannel_decomposition}. Let $\mM\in \O$ be arbitrary. Then, the latter expression implies that $\mS_1$ acts as
    \begin{align}
        \mS_1[\mM]= hp_\mM \mK +(1-hp_\mM) \tilde{\mQ}.
    \end{align}
    In the full-dimensional case, from $0\leq h p_\mM \leq 1$ together with $\mK,\tilde{\mQ} \in \O$, it follows that $\mS_1[\mM] \in \O$. Moreover, in the reduced-dimensional case,     $p_\mM= h^{-1}$ immediately implies that $\mS_1[\mM]= \mK \in \O$, and as such, we have $\mS_1 \in \mO_1$ in both cases. This case is concluded by noting that 
    \begin{align}
        \tfrac{1}{2}\norm{\mS_1[\mE]-\mN}_\diamond &\leq \tfrac{p_\mE}{2}\norm{\mN^\prime-\mN}_\diamond+\tfrac{1-p_\mE}{2}\norm{\tilde{\mQ}-\mN}_\diamond \leq \epsilon p_\mE + (1-p_\mE) = 1-p_\mE(1-\epsilon) \nonumber \\
        &\leq 1-(1-\delta)(1-\epsilon) \leq \epsilon+\delta. \label{eq:DiamondBound_epsilon_delta}
    \end{align}
    It remains to treat the case $h=1$. As noted above, this implies $r=h=1$ and thus, by Eq.~\eqref{eq:exist_channels}, that $\mN^\prime=\mK \in \O$. Next, define the superchannel
      \begin{align}\label{eq:MS_TILDe}
        \tilde{\mS_1}[\mL]= p_\mL \mN^\prime +\bar{p}_\mL \mK= (p_\mL+\bar{p}_\mL) \mK ,
    \end{align}
    where we recall that $p_\mL+\bar{p}_\mL=\Tr{(\idChan \otimes\,\mL)(\psi^\star)} =1$ for every trace-preserving channel $\mL$. Thus, for any $\mM\in \O$, we have $\tilde{\mS_1}[\mM]=\mK\in\O$. Finally,
        \begin{align}
        \tfrac{1}{2}||\tilde{\mS_1}[\mE]-\mN||_\diamond = \tfrac{1}{2}||\mN^\prime-\mN||_\diamond \leq \epsilon \leq \epsilon+\delta    
    \end{align}
    concludes the proof. 
\end{proof}

As promised in the main text, we now discuss the relation between Thm.~\ref{thm:Thm1_equivalent} and Ref.~\cite[Thm.~1]{Regula2021b}. 
The smoothings of a monotone $R_{\O}$ with respect to the worst-case fidelity and the diamond distance are related via 
\begin{align}\label{eq:SmoothedComparison}
    R_{\O}^{2\epsilon,F}(\mE)
    &\leq R_{\O}^{\epsilon,\diamond}(\mE)
    \leq R_{\O}^{\epsilon^{2},F}(\mE),
\end{align}
which follows directly from the (general) Fuchs-van de Graaf inequality in Eq.~\eqref{eq:FuchsVanDeGraaf_channels}: If $\frac{1}{2}\norm{\mE-\mE^\prime}_\diamond \leq \epsilon$, then $ F_{\min}(\mE,\mE^\prime) \geq (1-\epsilon)^2 =1-2\epsilon+\epsilon^2\geq 1-2\epsilon$, and thus, $R_{\O}^{\epsilon,\diamond}(\mE)\geq R_{\O}^{2\epsilon,F}(\mE)$. Conversely, if  $F_{\min}(\mE,\mE^\prime) \geq 1- \epsilon^2$, then $\frac{1}{2}\norm{\mE-\mE^\prime}_\diamond\leq \epsilon$, and thus $R_{\O}^{\epsilon,\diamond}(\mE) \leq R_{\O}^{\epsilon^2,F}(\mE)$. This allows us to show that Thm.~\ref{thm:Thm1_equivalent} implies the direct part of Ref.~\cite[Thm.~1]{Regula2021b}: If there exists a superchannel $\mS \in \mO_1$ such that $1-F_{\min}(\mS[\mE],\mN)\leq \epsilon$, it follows from Fuchs-van de Graaf that $\tfrac{1}{2}\norm{\mS[\mE]-\mN}_\diamond \leq \sqrt{\epsilon}$ and thus
\begin{align}
    R_{\O}^{\delta,F}\mkern-4mu(\mkern-1mu\mE\mkern-1mu) \mkern-4mu \overset{\eqref{eq:SmoothedComparison}}{\geq}\mkern-4muR_{\O}^{\sqrt\delta,\diamond}(\mkern-1mu\mE\mkern-1mu)\mkern-4mu\overset{\text{Thm.}~\ref{thm:Thm1_equivalent}}{\geq }\mkern-4mu R_{\O}^{\sqrt{\epsilon}\mkern-1mu+\mkern-1mu\sqrt\delta,\diamond}(\mkern-1mu\mN\mkern-1mu) \mkern-4mu\overset{\eqref{eq:SmoothedComparison}}{\geq}\mkern-4mu R_{\O}^{2(\sqrt{\epsilon}\mkern-1mu+\mkern-1mu\sqrt\delta),F}\mkern-4mu(\mkern-1mu\mN\mkern-1mu),
\end{align}
for any $ 0\leq \epsilon,\delta \leq 1$. The converse statements are incomparable at fixed error parameters, since they involve robustness measures smoothed with respect to different metrics.

\ThmThreeequivalent*
\begin{proof}
    Let $\mS_1\in \mO_1$ be such that $ \tfrac{1}{2} \norm{\mS_1[\mE]-\mN}_\diamond \leq \epsilon$. Since by assumption, $\mN$ is an isometric channel, the sharpened Fuchs-van de Graaf inequality in Eq.~\eqref{eq:FuchsVanDeGraaf_channels_sharp} implies $\epsilon \geq \tfrac{1}{2} \norm{\mS_1[\mE]-\mN}_\diamond \geq 1-F_{\min}(\mS_1[\mE],\mN)$, and thus $F_{\min}(\mS_1[\mE],\mN)\geq 1-\epsilon$. Next, invoke Ref.~\cite[Thm.~3]{Regula2021b}, from which it follows that for $X\in \{\O,\aff(\O)\}$, we have $ R_{H,X}^{\epsilon}(\mE) \geq R_{\min ,X}(\mN)$. 
\end{proof}

\section{DYNAMICAL COHERENCE}\label{sec:Appen_Coherence}

\subsection{Characterization of compatible supermaps}\label{sec:charac_compatitbleSupermaps}
In this section, we show the existence of $\cMIO, \cDI$, and $\cDIO$, as defined in the discussion after Def.~\ref{def:cO} and characterize these sets in terms of semidefinite constraints. To this end, in the following Lemma, we construct families of free channels, which in the Theorem thereafter will then allow us to deduce necessary conditions for compatibility. These results are a generalization of the MIO case treated in  Ref.~\cite{Ahnefeld2025}.

\begin{lem}\label{lem:DIOTester}
Let $1,2$ and $B_{2j},B_{1},\ldots, B_{2j-3}$ denote quantum systems, and let $B$ denote the joint quantum system composed of $B_{2j},B_{1},\ldots, B_{2j-3}$\footnote{Within the convention of labeling systems at a specific (time) step, we technically assume $j\geq 2$ here. With some ``abuse of notation", we will later also consider the case where $B_1,\ldots, B_{2j-3}$ are chosen to be trivial systems. Then, the system $B$ is composed solely of $B_2$, and $\ket{\psi^\pm}=\tfrac{1}{\sqrt{2}} ( \ket{0}_{B_2} \pm \ket{1}_{B_{2}} )$. The reasons for this will become apparent in the proof of the following Thm.~\ref{thm:CompatibleSupermaps}. 
\label{footnote:j_2_system}}. Let $\ket{\psi^\pm}$ be normalized states of system $B$ defined as
\begin{align}\label{eq:PsiPm}
    \ket{\psi^\pm}_B &:= \frac{1}{\sqrt{2}} \left( \ket{0}_{B_{2j}} \otimes \ket{ k_{B_{1}}\ldots k_{B_{2j-3}}}_{B_{1},\ldots, B_{2j-3}} \pm \ket{1}_{B_{2j}} \otimes \ket{l_{B_{1}}\ldots l_{B_{2j-3}}}_{B_{1},\ldots, B_{2j-3}} \right) \nonumber \\
    &= \frac{1}{\sqrt{2}}  \left(\ket{0}_{B_{2j}} \otimes \ket{ \Vec{k}}_{B_{1},\ldots, B_{2j-3}} \pm \ket{1}_{B_{2j}} \otimes \ket{\Vec{l}}_{B_{1},\ldots, B_{2j-3}}\right),
\end{align}
where all $k_{B_i}$ and $l_{B_i}$ are arbitrary but fixed indices. Let $d_X$ denote the dimension of system $X$ and for $k,l\in \{0,\ldots, d_1-1\}$ arbitrary but fixed, let
\begin{align}
    &D=\left( \id -\ketbra{k}{k}-\ketbra{l}{l}\right)_1 \otimes \frac{\id_2}{d_2} \otimes \frac{\id_B}{d_B}, \\
    & Q=  \left( \id_{1,B}- \frac{1}{2}\left(\ketbra{k}{k}+\ketbra{l}{l}\right)_{1}\otimes ( \ketbra{\psi^+}{\psi^+}_B+\ketbra{\psi^-}{\psi^-}_B) \right)  \otimes \frac{\id_2}{d_2}.
\end{align}
If $k\ne l$, for all $n,m \in \{0,\ldots, d_2-1\}$, the matrices
     \begin{align}
         &J_\mathcal{N}\mkern-4mu=\mkern-4mu D \mkern-4mu+\mkern-4mu \frac{1}{2}\!\left(\ketbra{kn}{kn}\!+\!\ketbra{lm}{lm}\right)_{1,2}\!\otimes\! \left(\ketbra{\psi^+}{\psi^+}\mkern-4mu+\mkern-4mu\ketbra{\psi^-}{\psi^-}\right)_B \mkern-4mu+\mkern-4mu\frac{1}{2}\!\left(\ketbra{kn}{lm} \mkern-4mu+\mkern-4mu\ketbra{lm}{kn}\right)_{1,2}\!\otimes\! \left(\ketbra{\psi^+}{\psi^+}\mkern-4mu-\mkern-4mu\ketbra{\psi^-}{\psi^-}\right)_B,\\
         &J_\mathcal{M}\mkern-4mu=\mkern-4mu D\mkern-4mu+\mkern-4mu\frac{1}{2}\!\left(\ketbra{kn}{kn}\mkern-4mu+\mkern-4mu\ketbra{lm}{lm}\right)_{1,2}\!\otimes\! \left(\ketbra{\psi^+}{\psi^+}\mkern-4mu+\mkern-4mu\ketbra{\psi^-}{\psi^-}\right)_B\mkern-4mu+\mkern-4mu\frac{1}{2}\!\left(\ketbra{kn}{lm} \mkern-4mu-\mkern-4mu\ketbra{lm}{kn}\right)_{1,2}\!\otimes\!\left(\ketbra{\psi^+}{\psi^-}\mkern-4mu-\mkern-4mu\ketbra{\psi^-}{\psi^+}\right)_{\mkern-2mu B}
    \end{align}
define the Choi states of channels $\mathcal{N}^{1\to 2,B} \in \DIO\subset\MIO$ and $\mathcal{M}^{1\to 2,B} \in \DIO \subset \MIO$. Moreover, for all  $n,m \in \{0,\ldots, d_2-1\}$ with $n\neq m$, the matrices
     \begin{align}
         &J_\mathcal{K}\mkern-4mu=\mkern-4mu Q\mkern-4mu+\mkern-4mu\frac{1}{2}\!\left(\ketbra{kn}{kn}\mkern-4mu+\mkern-4mu\ketbra{lm}{lm}\right)_{1,2}\!\otimes\!\left(\ketbra{\psi^+}{\psi^+}\mkern-4mu+\mkern-4mu\ketbra{\psi^-}{\psi^-}\right)_B\!+\!\frac{1}{2}\left(\ketbra{kn}{lm}\mkern-4mu+\mkern-4mu\ketbra{lm}{kn}\right)_{1,2}\!\otimes\!\left(\ketbra{\psi^+}{\psi^+}\mkern-4mu-\mkern-4mu\ketbra{\psi^-}{\psi^-}\right)_B,\\
         &J_\mathcal{L}\mkern-4mu=\mkern-4mu Q \mkern-4mu+\mkern-4mu\frac{1}{2}\!\left(\ketbra{kn}{kn}\mkern-4mu+\mkern-4mu\ketbra{lm}{lm} \right)_{1,2}\! \otimes\!\left(\ketbra{\psi^+}{\psi^+}\mkern-4mu+\mkern-4mu\ketbra{\psi^-}{\psi^-}\right)_B\!+\!\frac{1}{2}\!\left(\ketbra{kn}{lm}\mkern-4mu-\mkern-4mu\ketbra{lm}{kn} \right)_{1,2}\!\otimes\!\left(\ketbra{\psi^+}{\psi^-}\mkern-4mu-\mkern-4mu\ketbra{\psi^-}{\psi^+}\right)_B
    \end{align}
define Choi states of channels $\mathcal{K}^{1,B\to 2} \in \DIO\subset\DI$ and $\mathcal{L}^{1,B\to 2} \in \DIO \subset \DI$.

\end{lem}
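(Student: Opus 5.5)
The plan is to check three things for each of the four operators: (i) positivity, which gives complete positivity; (ii) the trace-preservation condition, namely that the partial trace over the output systems equals the identity on the input systems; and (iii) the Choi-level DIO condition $\Delta_{\text{in}}J=\Delta_{\text{out}}J$ recalled after Def.~\ref{def:FreeOperation}. The inclusions $\DIO\subset\MIO$ and $\DIO\subset\DI$ then follow directly from Def.~\ref{def:FreeOperation}. Two facts about system $B$ do most of the work. First, $P:=\ketbra{\psi^+}{\psi^+}+\ketbra{\psi^-}{\psi^-}=\ketbra{0\vec k}{0\vec k}+\ketbra{1\vec l}{1\vec l}$ is diagonal in the incoherent basis and is a rank-two projector with $\Tr{P}=2$. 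Second, $\ketbra{\psi^+}{\psi^+}-\ketbra{\psi^-}{\psi^-}$ and $\ketbra{\psi^+}{\psi^-}-\ketbra{\psi^-}{\psi^+}$ are traceless and purely off-diagonal in $B_{2j}$; they are combinations of $\ketbra{0\vec k}{1\vec l}$ and its adjoint.

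\textbf{Trace preservation.} For $\mathcal{N},\mathcal{M}$ (input $1$, outputs $2,B$), I will compute $\partTr{2,B}{\cdot}$ term by term.
\begin{itemize}
\item The $D$ term gives $\id_1-\ketbra{k}{k}-\ketbra{l}{l}$.
\item The diagonal term gives $\tfrac12(\ketbra{k}{k}+\ketbra{l}{l})\Tr{P}=\ketbra{k}{k}+\ketbra{l}{l}$.
\item The cross term vanishes because its $B$-factor is traceless.
\end{itemize}
Together these give $\id_1$. For $\mathcal{K},\mathcal{L}$ (inputs $1,B$, output $2$), I will compute $\partTr{2}{\cdot}$.
\begin{itemize}
\item The $Q$ term gives $\id_{1B}-\tfrac12(\ketbra{k}{k}+\ketbra{l}{l})\otimes P$.
\item The diagonal term adds back exactly $\tfrac12(\ketbra{k}{k}+\ketbra{l}{l})\otimes P$.
\item The cross term vanishes because $\partTr{2}{\ketbra{n}{m}}=0$ for $n\neq m$.
\end{itemize}

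\textbf{Positivity.} I will write the non-$D$ (respectively non-$Q$) part as a sum of two rank-one projectors. For $\mathcal{N},\mathcal{K}$ it equals $\ketbra{a_+}{a_+}+\ketbra{a_-}{a_-}$ with $\ket{a_\pm}=\tfrac{1}{\sqrt2}(\ket{kn}\pm\ket{lm})\otimes\ket{\psi^\pm}$. For $\mathcal{M},\mathcal{L}$ it equals the analogous sum with vectors $\tfrac{1}{\sqrt2}(\ket{kn}\otimes\ket{\psi^+}+\ket{lm}\otimes\ket{\psi^-})$ and $\tfrac{1}{\sqrt2}(\ket{kn}\otimes\ket{\psi^-}-\ket{lm}\otimes\ket{\psi^+})$; I will confirm the signs by expanding. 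The remaining term is handled as follows.
\begin{itemize}
\item $D\ge0$ trivially.
\item $Q\ge0$ because $\tfrac12(\ketbra{k}{k}+\ketbra{l}{l})\otimes P$ has eigenvalues at most $\tfrac12$. (When $k=l$ the statement does not apply to $\mathcal{K},\mathcal{L}$ anyway, and the bound still holds.)
\end{itemize}
Hence every $J\ge0$.

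\textbf{DIO condition.} Both $D$ and $Q$ are diagonal on every system, as is $\tfrac12(\ketbra{kn}{kn}+\ketbra{lm}{lm})\otimes P$. So both sides of the condition reduce to these terms, provided each dephasing kills the cross term. For $\mathcal{N},\mathcal{M}$:
\begin{itemize}
\item $\Delta_1$ kills the cross term since $k\neq l$.
\item $\Delta_{2,B}$ kills it since its $B$-factor is off-diagonal in $B_{2j}$.
\end{itemize}
For $\mathcal{K},\mathcal{L}$:
\begin{itemize}
\item $\Delta_{1,B}$ kills the cross term via the $B$-factor.
\item $\Delta_2$ kills it since $n\neq m$.
\end{itemize}
Thus $\Delta_{\text{in}}J=\Delta_{\text{out}}J$ in all four cases. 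The degenerate case of footnote~\ref{footnote:j_2_system}, with trivial $B_1,\ldots,B_{2j-3}$, is identical.

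I expect the only real obstacle to be the sign bookkeeping in the rank-two decomposition for $J_\mathcal{M}$ and $J_\mathcal{L}$, where the antisymmetric combinations must match up to give a Hermitian PSD operator. If the proposed vectors fail, I would fall back on checking the $2\times2$ block matrix $\tfrac12\begin{pmatrix}P & X\\ X^\dagger & P\end{pmatrix}$ directly, with $X$ the $B$-factor of the cross term, via the condition $\|X\|_\infty\le1$ on the range of $P$. This holds because $X$ acts there as a unitary.
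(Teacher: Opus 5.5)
Your proposal is correct and follows essentially the same route as the paper. It writes the non-$D$ (resp.\ non-$Q$) part as a sum of two rank-one projectors for positivity, computes the partial traces term by term for trace preservation, and notes that $D$, $Q$ and the $P$-term are incoherent while each dephasing in the DIO condition kills the cross term.

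There is one small correction. The lemma does claim that $\mathcal{K},\mathcal{L}$ are channels also for $k=l$, since only $n\neq m$ is required, and the proof of Thm.~\ref{thm:CompatibleSupermaps} actually uses that case. For $k=l$, the operator $\tfrac12(\ketbra{k}{k}+\ketbra{l}{l})\otimes P$ has eigenvalue $1$, not at most $\tfrac12$, so $Q\geq 0$ should be justified via $\tfrac12(\ketbra{k}{k}+\ketbra{l}{l})\otimes P\leq \id_{1B}$, as the paper does. All your remaining steps for $\mathcal{K},\mathcal{L}$ go through unchanged in that case.
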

\begin{proof}
We start by showing that the matrices $J_\mathcal{N},J_\mathcal{M},J_\mathcal{K},J_\mathcal{L}$ correspond to quantum channels, and then show that they are contained in $\DIO.$ Firstly, let us introduce the composite system $\tilde{B}=(B_1,\ldots, B_{2j-3})$, and note that from
\begin{align}
    &\ketbra{\psi^\pm}{\psi^{\pm}}_B= \frac{1}{2} \left(\ketbra{0}{0}_{B_{2j}} \otimes \ketbra{\Vec{k}}{\Vec{k}}_{\tilde{B}} +\ketbra{1}{1}_{B_{2j}} \otimes \ketbra{\Vec{l}}{\Vec{l}}_{\tilde{B}} \pm \ketbra{0}{1}_{B_{2j}} \otimes \ketbra{\Vec{k}}{\Vec{l}}_{\tilde{B}} \pm \ketbra{1}{0}_{B_{2j}} \otimes \ketbra{\Vec{l}}{\Vec{k}}_{\tilde{B}} \right), \\
    &\ketbra{\psi^+}{\psi^{-}}_B= \frac{1}{2} \left(\ketbra{0}{0}_{B_{2j}} \otimes \ketbra{\Vec{k}}{\Vec{k}}_{\tilde{B}} -\ketbra{1}{1}_{B_{2j}} \otimes \ketbra{\Vec{l}}{\Vec{l}}_{\tilde{B}} - \ketbra{0}{1}_{B_{2j}} \otimes \ketbra{\Vec{k}}{\Vec{l}}_{\tilde{B}} + \ketbra{1}{0}_{B_{2j}} \otimes \ketbra{\Vec{l}}{\Vec{k}}_{\tilde{B}} \right),
\end{align}
it follows that 
\begin{subequations}\label{eq:Psi_pm_identites}
\begin{align}
    &\ketbra{\psi^+}{\psi^{+}}_B + \ketbra{\psi^-}{\psi^{-}}_B= \ketbra{0}{0}_{B_{2j}} \otimes \ketbra{\Vec{k}}{\Vec{k}}_{\tilde{B}} +\ketbra{1}{1}_{B_{2j}} \otimes \ketbra{\Vec{l}}{\Vec{l}}_{\tilde{B}} =\Delta_B \left(\ketbra{\psi^+}{\psi^{+}}_B + \ketbra{\psi^-}{\psi^{-}}_B \right), \label{eq:Psi_pm_diagonal}\\
    &\ketbra{\psi^+}{\psi^{+}}_B - \ketbra{\psi^-}{\psi^{-}}_B=  \ketbra{0}{1}_{B_{2j}} \otimes \ketbra{\Vec{k}}{\Vec{l}}_{\tilde{B}} + \ketbra{1}{0}_{B_{2j}} \otimes \ketbra{\Vec{l}}{\Vec{k}}_{\tilde{B}} =(\idChannel_B-\Delta_B)\left(\ketbra{\psi^+}{\psi^{+}}_B - \ketbra{\psi^-}{\psi^{-}}_B\right), \label{eq:Psi_pm2_diagonal} \\
    &\ketbra{\psi^+}{\psi^{-}}_B - \ketbra{\psi^-}{\psi^{+}}_B=  \ketbra{1}{0}_{B_{2j}} \otimes \ketbra{\Vec{l}}{\Vec{k}}_{\tilde{B}}-\ketbra{0}{1}_{B_{2j}} \otimes \ketbra{\Vec{k}}{\Vec{l}}_{\tilde{B}}= (\idChannel_B-\Delta_B)\left( \ketbra{\psi^+}{\psi^{-}}_B - \ketbra{\psi^-}{\psi^{+}}_B\right), \label{eq:Psi_pm3_diagonal}
\end{align}
\end{subequations}
or in other words, the sum of $\psi^\pm$ is a diagonal matrix, whereas the difference of $\psi^\pm$ and and the mixed term $\ketbra{\psi^+}{\psi^{-}} - \ketbra{\psi^-}{\psi^{+}}$ are off-diagonal matrices. Next, for some arbitrary $0\leq k,l \leq d_1-1$ and $0\leq n,m \leq d_2-1$ with either $k\neq l$ or $n\neq m$, define the states
\begin{subequations}
\begin{align}
    & \ket{\phi^{\pm}}_{1,2}:=\frac{1}{\sqrt{2}} \left( \ket{kn}_{1,2} \pm \ket{lm}_{1,2}\right), \label{eq:PhiKLNM} \\
    &\ket{\lambda}_{1,2,B}:= \frac{1}{\sqrt{2}} \left(\ket{kn}_{1,2} \otimes \ket{\psi^+}_{B} + \ket{lm}_{1,2} \otimes \ket{\psi^-}_{B} \right),\\
    &\ket{\xi}_{1,2,B}:=\frac{1}{\sqrt{2}} \left( \ket{kn}_{1,2} \otimes \ket{\psi^-}_{B} -\ket{lm}_{1,2} \otimes \ket{\psi^+}_{B} \right).
\end{align}
\end{subequations}
From expanding the states,
\begin{subequations}
\begin{align}
    &2\ketbra{\phi^{\pm}}{\phi^{\pm}}_{1,2}= \ketbra{kn}{kn}_{1,2}+ \ketbra{lm}{lm}_{1,2} \pm \ketbra{kn}{lm}_{1,2} \pm \ketbra{lm}{kn}_{1,2}, \\
    &2 \ketbra{\lambda}{\lambda}_{1,2,B}\mkern-4mu=\mkern-4mu \ketbra{kn}{kn}_{1,2} \!\otimes\! \ketbra{\psi^+}{\psi^+}_{ \mkern-2mu B} \mkern-4mu+\mkern-4mu \ketbra{lm}{lm}_{1,2} \!\otimes\!\ketbra{\psi^-}{\psi^-}_{ \mkern-2mu B} +\left(\ketbra{kn}{lm}_{1,2} \otimes \ketbra{\psi^+}{\psi^-}_{ \mkern-2mu B} \mkern-4mu + h.c. \right), \\
    &2 \ketbra{\xi}{\xi}_{1,2,B}\mkern-4mu=\mkern-4mu \ketbra{kn}{kn}_{1,2} \!\otimes\! \ketbra{\psi^-}{\psi^-}_{ \mkern-2mu B} \mkern-4mu+\mkern-4mu \ketbra{lm}{lm}_{1,2} \!\otimes\!\ketbra{\psi^+}{\psi^+}_{ \mkern-2mu B} - \left(\ketbra{kn}{lm}_{1,2} \otimes \ketbra{\psi^-}{\psi^+}_{ \mkern-2mu B} \mkern-4mu + h.c. \right),
\end{align}
\end{subequations}
it follows by a straightforward calculation that 
    \begin{align}\label{eq:N_M_decomposition}
        &J_\mathcal{N}=  D +\ketbra{\phi^+}{\phi^+}_{1,2} \otimes \ketbra{\psi^+}{\psi^+}_B +\ketbra{\phi^-}{\phi^-}_{1,2} \otimes \ketbra{\psi^-}{\psi^-}_B ,\\
        &   J_\mathcal{M}=D+ \ketbra{\lambda}{\lambda}_{1,2,B}+\ketbra{\xi}{\xi}_{1,2,B},
    \end{align}
    and 
    \begin{align}\label{eq:K_L_decomposition}
        &J_\mathcal{K}=  Q +\ketbra{\phi^+}{\phi^+}_{1,2} \otimes \ketbra{\psi^+}{\psi^+}_B +\ketbra{\phi^-}{\phi^-}_{1,2} \otimes \ketbra{\psi^-}{\psi^-}_B,\\
        &   J_\mathcal{L}=Q+ \ketbra{\lambda}{\lambda}_{1,2,B}+\ketbra{\xi}{\xi}_{1,2,B}.
    \end{align}
Furthermore, we have that
\begin{align}
    D=\left( \id -\ketbra{k}{k}-\ketbra{l}{l}\right)_1 \otimes \frac{\id_2}{d_2} \otimes \frac{\id_B}{d_B} \overset{k\neq l}{\geq} 0,
\end{align}
which together with Eq.~\eqref{eq:N_M_decomposition} implies that  $J_\mathcal{N},J_\mathcal{M}\geq 0$ for $k\neq l$ and all $n,m$. The channels are trace-preserving since
    \begin{align}
        \partTr{2,B}{J_\mathcal{N}}&=\left( \id -\ketbra{k}{k}-\ketbra{l}{l}\right)_1+  \ketbra{k}{k}_1 +\ketbra{l}{l}_1  =\id_1,
    \end{align}
and complete analogously we get $\partTr{2,B}{J_\mathcal{M}}=\id_1$. Thus, $\mathcal{N}^{1\to 2,B}$ and $\mathcal{M}^{1\to 2,B}$ are valid channels for all valid choices of $n,m$ and $k\neq l$. Additionally, from
\begin{align}
    Q&= \left( \id_{1,B}- \frac{1}{2}\left(\ketbra{k}{k}+\ketbra{l}{l}\right)_{1}\otimes ( \ketbra{\psi^+}{\psi^+}_B+\ketbra{\psi^-}{\psi^-}_B) \right)  \otimes \frac{\id_2}{d_2} \geq \left( \id_{1,B}- \id_1\otimes \left( \ketbra{\psi^+}{\psi^+}_B+\ketbra{\psi^-}{\psi^-}_B\right) \right)  \otimes \frac{\id_2}{d_2} \nonumber \\
    &\overset{\eqref{eq:Psi_pm_diagonal}}{=} \id_1 \otimes \left( \id_{B}- \left( \ketbra{0}{0} \otimes \ketbra{\Vec{k}}{\Vec{k}} +\ketbra{1}{1} \otimes \ketbra{\Vec{l}}{\Vec{l}}\right)_B \right)  \otimes \frac{\id_2}{d_2}\geq 0,
\end{align}
together with Eq.~\eqref{eq:K_L_decomposition} it follows that  $J_\mathcal{K},J_\mathcal{L}\geq 0$ for all choices of $k,l,n,m$. Furthermore, for any $k,l$ and $n\neq m$ it follows that
\begin{align}
    \partTr{2}{J_\mathcal{K}}&=\partTr{2}{Q}+\frac{1}{2} \left(\ketbra{k}{k}\!+\!\ketbra{l}{l}\right)_{1} \otimes \left(\ketbra{\psi^+}{\psi^+}\!+\!\ketbra{\psi^-} 
    {\psi^-}\right)_B \nonumber \\
    &= \left( \id_{1,B}-  \frac{1}{2}\left(\ketbra{k}{k}\!+\!\ketbra{l}{l}\right)_{1}\otimes ( \ketbra{\psi^+}{\psi^+}_B\!+\!\ketbra{\psi^-}{\psi^-}_B) \right) +\frac{1}{2} \left(\ketbra{k}{k}\!+\!\ketbra{l}{l}\right)_{1} \otimes \left(\ketbra{\psi^+}{\psi^+}\!+\!\ketbra{\psi^-}{\psi^-}\right)_B  \nonumber \\
    &= \id_{1,B},
\end{align}
which implies that $\mathcal{K}^{1,B\to 2 }$ (and analogously $\mathcal{L}^{1,B\to 2}$) are valid channels for all choices of $n\neq m$ and $k, l$. It remains to show that the channels are contained in $\DIO$. Starting with $\mathcal{N}, \mathcal{M}$, let $k\neq l$ and let $n,m$ be arbitrary. Then, on one hand, we know that
    \begin{align}
        \Delta_1 J_\mathcal{N} &= \Delta_1 D +\frac{1}{2}\left(\ketbra{kn}{kn}+\ketbra{lm}{lm}\right)_{1,2} \otimes \left(\ketbra{\psi^+}{\psi^+}+\ketbra{\psi^-}{\psi^-}\right)_B \nonumber \\
        &=D+\frac{1}{2}\left(\ketbra{kn}{kn}+\ketbra{lm}{lm}\right)_{1,2} \otimes \Delta \left(\ketbra{\psi^+}{\psi^+}+\ketbra{\psi^-}{\psi^-}\right)_B,
    \end{align}
    where we used that $D$ is diagonal and Eq.~\eqref{eq:Psi_pm_diagonal}. 
On the other hand, we have
\begin{align}
    \Delta_{2,B} J_\mathcal{N} & =
    D+\frac{1}{2}\left(\ketbra{kn}{kn}+\ketbra{lm}{lm}\right)_{1,2} \otimes \Delta \left(\ketbra{\psi^+}{\psi^+}+\ketbra{\psi^-}{\psi^-}\right)_B.
\end{align}
Hence, $\Delta_1 J_\mathcal{N}= \Delta_{2,B} J_\mathcal{N} $, which is equivalent to  $\mathcal{N}^{1\to 2,B} \in \DIO$.  Analogously, by using Eq.~\eqref{eq:Psi_pm3_diagonal}, which implies
\begin{align}
      \Delta_B \left( \ketbra{\psi^+}{\psi^{-}} - \ketbra{\psi^-}{\psi^{+}} \right)=  \Delta_B (\idChannel-\Delta)_B\left( \ketbra{\psi^+}{\psi^{-}} - \ketbra{\psi^-}{\psi^{+}}\right)=0, 
\end{align}
we find that $\Delta_1 J_\mathcal{M}= \Delta_{2,B} J_\mathcal{M} $, and thus $\mathcal{M}^{1\to 2,B} \in \DIO.$  Similarly, exploiting Eq.~\eqref{eq:Psi_pm2_diagonal}, and Eq.~\eqref{eq:Psi_pm3_diagonal}, for $\mathcal{K}$ and $\mathcal{L}$, we find on one hand that
\begin{align}
    \Delta_{1,B} J_\mathcal{K} =\Delta_{1,B} J_\mathcal{L} &= \Delta_{1,B} Q + \frac{1}{2} \left(\ketbra{kn}{kn}\!+\!\ketbra{lm}{lm}\right)_{1,2} \otimes \Delta_B\left(\ketbra{\psi^+}{\psi^+}\!+\!\ketbra{\psi^-}{\psi^-}\right)_B,
\end{align}
and on the other hand that for every $n\neq m$ and every $k,l$
\begin{align}
    \Delta_{2} J_\mathcal{K} =\Delta_{2} J_\mathcal{L} &= Q + \frac{1}{2} \left(\ketbra{kn}{kn}\!+\!\ketbra{lm}{lm}\right)_{1,2} \otimes \left(\ketbra{\psi^+}{\psi^+}\!+\!\ketbra{\psi^-}{\psi^-}\right)_B \nonumber \\
    &= \Delta_{1,B} Q + \frac{1}{2} \left(\ketbra{kn}{kn}\!+\!\ketbra{lm}{lm}\right)_{1,2} \otimes \Delta_B\left(\ketbra{\psi^+}{\psi^+}\!+\!\ketbra{\psi^-}{\psi^-}\right)_B,
\end{align}
where we used that $Q$ is diagonal and Eq.~\eqref{eq:Psi_pm_diagonal}. This implies that $\mathcal{K}^{1,B\to 2},\mathcal{L}^{1,B\to 2} \in \DIO$ for all $n\neq m$ and all $k,l$. Lastly, recall that $\DIO \subset \MIO,\DI$, which completes the proof.
\end{proof}

We are now ready to show that for the resource theories of dynamical coherence, there exists a largest set of compatible supermaps in the sense of the discussion following Def.~\ref{def:cO}.

\CompatibleSupermaps*
\begin{proof}
We treat the three cases of $\O\in\{\MIO,\DI,\DIO\}$ separately.

a) In Ref.~\cite[Thm.~8 and Lem.~5 of the Supplemental Material]{Ahnefeld2025} it has been shown that the conditions in Eq.~\eqref{eq:MIOCompFromMaxSet} are necessary and sufficient for a supermap to be contained in $\cMIO$.

b) Consider DI next. Let $\mS_N\in \mathfrak{C} $, where $\mathfrak{C}$ denotes a set of supermaps compatible with $\DI$ in the sense of Def.~\ref{def:cO}. As a direct consequence of the set $\DI$ satisfying the assumptions~\ref{assump:identity} to ~\ref{assumpt:complfree}, the set of free networks must be compatible and $\nDI \subseteq \mathfrak{C}$. Since $\mathfrak{C}$ is compatible, it must hold that composing the fixed supermap $\mS_N$ with a network of free channels results in a compatible supermap. To exploit this observation, we will now construct free networks that will allow us to deduce that the dephasing conditions in Eq.~\eqref{eq:DICompFromMaxSet} are necessary. 

Let $B_k$ denote a copy of system $k$, which is fixed by the supermap $\mathcal{S}_N$ via the labeling convention in Fig.~\ref{fig:NetworkvsComb}. We treat the case of $\mS_N$ being a superchannel, i.e., $N=1$, separately; thus, we assume $N \geq 2$ for now and will return to $N=1$ later. Now take an arbitrary but fixed integer $1\leq j\leq N$ and let $\mathcal{K}^{2j-1, B_{2j},B_{1},\ldots, B_{2j-3} \to 2j}, \mathcal{L}^{2j-1, B_{2j},B_{1},\ldots, B_{2j-3} \to 2j} \in \DI$ as in  Lem.~\ref{lem:DIOTester} (where the systems 1 and 2 have to be renamed to $2j-1$ and $2j$, respectively). Let further $\mathcal{E}_{k}^{2k-1,B_{2k} \to B_{2k-1},2k } $ denote the channel that acts as the identity from system $2k-1$ to $B_{2k-1}$ and $B_{2k}$ to $2k$ (i.e., that acts as a SWAP channel). Suppressing all system labels, we consider two quantum networks comprised of the sequence of channels
\begin{subequations}\label{eq:freeNetworks_Testers}
 \begin{align}
    &\mathcal{E}_1,\ldots, \mathcal{E}_{j-1}, \mathcal{K}, \mathcal{E}_{j+1},\ldots, \mathcal{E}_N, \\
    &\mathcal{E}_1,\ldots, \mathcal{E}_{j-1}, \mathcal{L}, \mathcal{E}_{j+1},\ldots, \mathcal{E}_N,
\end{align}
\end{subequations}
which we insert into $\mS_N$ as depicted in Fig.~\ref{fig:SupermapMIOCompatible}. Recall that according to Lem.~\ref{lem:DIOTester}, the channels $\mK, \mL \in \DI$ for all choices of integers $k,l$ and $n\neq m$ (which were arbitrary but fixed in the definition of $\mathcal{K}$ and $\mathcal{L}$). Obviously, $\mathcal{E}_k \in \DI$, which implies that the two networks are free networks.

\begin{figure}[ht]
    \centering
    \scalebox{1}{\includegraphics[width=1\linewidth]{ 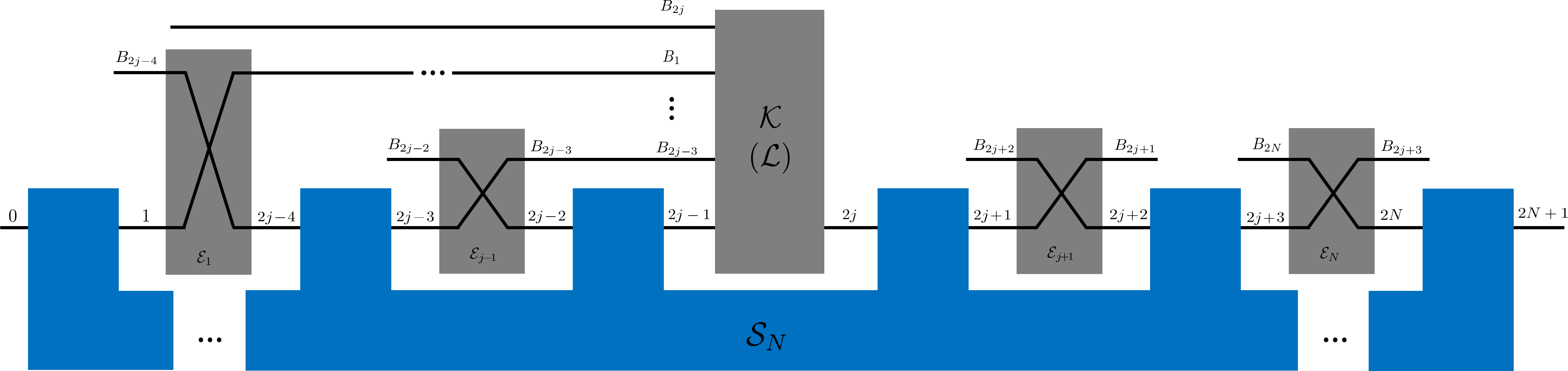}}
    \caption{Depiction of the free network (in gray) that is inserted into a supermap $\mS_N$ in order to derive necessary conditions for $\mS_N$ being an element of a set of supermaps that is compatible with $\DI$.}
    \label{fig:SupermapMIOCompatible}
\end{figure}

Let $\mathcal{T}$ and $\mathcal{W}$ denote the supermaps resulting from the composition of $\mathcal{S}_N$ with these networks, respectively. As discussed above, this implies that the supermaps $\mathcal{T}$ and $\mathcal{W}$ must be DI channels from all input to all output systems (if $n\neq m$). This is equivalent to 
\begin{subequations}\label{eq:Composition_DI_Tester}
        \begin{align}
        &\Delta_{2N+1,B_{2N-1},\ldots, B_{2j+1}} J_\mathcal{T}=  \Delta_{2N+1,B_{2N-1},\ldots, B_{2j+1}}  \Delta_{B_{2N},B_{2N-2},\ldots, B_{2j+2}, B_{2j},B_{2j-2},\ldots B_2,0} J_\mathcal{T}, \\
        &\Delta_{2N+1,B_{2N-1},\ldots, B_{2j+1}} J_\mathcal{W}=  \Delta_{2N+1,B_{2N-1},\ldots, B_{2j+1}}  \Delta_{B_{2N},B_{2N-2},\ldots, B_{2j+2}, B_{2j},B_{2j-2},\ldots B_2,0} J_\mathcal{W}.
    \end{align}
\end{subequations}
Importantly, notice that the right-hand side is being dephased on system $B_{2j}$, and thus, applying $\left(\idChannel_{B_{2j}}-\Delta_{B_{2j}}\right)$ to this expression yields
\begin{subequations}
        \begin{align}
        &\left(\idChannel_{B_{2j}}-\Delta_{B_{2j}}\right)\Delta_{2N+1,B_{2N-1},\ldots, B_{2j+1}} J_\mathcal{T}= 0, \\
         &\left(\idChannel_{B_{2j}}-\Delta_{B_{2j}}\right)\Delta_{2N+1,B_{2N-1},\ldots, B_{2j+1}} J_\mathcal{W}= 0.
    \end{align}
\end{subequations}
     By construction, composition with the channels $\mathcal{E}_j$ simply relabels system $j$ as $B_j$. Using this fact in the definition of the link product in Eq.~\eqref{eq:defLinkProduct} to express $J_\mathcal{T}$ and $J_\mathcal{W}$, respectively, we obtain
    \begin{subequations}\label{eq:Offdiag_expanded}
    \begin{align}
        &0= \left(\idChannel_{B_{2j}}-\Delta_{B_{2j}}\right) \Delta_{2N+1,B_{2N-1},\ldots, B_{2j+1}} \partTr{B_{1}\ldots, B_{2j-3},2j-1,2j}{\left( \id  \otimes \left( J_{\mathcal{K}}\right)^{T_{B_{1}\ldots, B_{2j-3},2j-1,2j}} \right)\left( \id \otimes J_{\mathcal{S}_N}\right)}, \label{eq:TOffDiagsNetwork} \\
        &0=\left(\idChannel_{B_{2j}}-\Delta_{B_{2j}}\right)\Delta_{2N+1,B_{2N-1},\ldots, B_{2j+1}} \partTr{B_{1}\ldots, B_{2j-3},2j-1,2j}{\left( \id  \otimes \left( J_{\mathcal{L}}\right)^{T_{B_{1}\ldots, B_{2j-3},2j-1,2j}} \right)\left( \id \otimes J_{\mathcal{S}_N}\right)},\label{eq:WOffDiagsNetwork}
    \end{align}  
    \end{subequations}
    where we suppressed the (renamed) system labels of $J_{\mathcal{S}_N}$ and the identities. Next, we invoke the structure of the channels $\mathcal{K}$ and $\mathcal{L}$ for the first time.  As defined in Lem.~\ref{lem:DIOTester}, let
        \begin{align}
        \ket{\psi^\pm}&= \frac{1}{\sqrt{2}} \left( \ket{0}_{B_{2j}} \otimes \ket{ k_{B_{1}}\ldots k_{B_{2j-3}}}_{B_{1},\ldots, B_{2j-3}} \pm \ket{1}_{B_{2j}} \otimes \ket{l_{B_{1}}\ldots l_{B_{2j-3}}}_{B_{1},\ldots, B_{2j-3}} \right) \nonumber \\
        &= \frac{1}{\sqrt{2}} \left(  \ket{0}_{B_{2j}} \otimes \ket{ \Vec{k}}_{B_{1},\ldots, B_{2j-3}} \pm \ket{1}_{B_{2j}} \otimes \ket{\Vec{l}}_{B_{1},\ldots, B_{2j-3}} \right),
    \end{align}
 for some arbitrary but fixed indices abbreviated as $\Vec{k}$ and $\Vec{l}$. Note that the only contribution in Eq.~\eqref{eq:Offdiag_expanded} on system $B_{2j}$ originates from the Choi state of $J_\mathcal{K}$ and $J_\mathcal{L}$, respectively. Crucially, according to Eq.~\eqref{eq:Psi_pm_identites}, which states that
\begin{align}
    &\ketbra{\psi^+}{\psi^{+}}_B - \ketbra{\psi^-}{\psi^{-}}_B=  \ketbra{0}{1}_{B_{2j}}  \otimes \ketbra{\Vec{k}}{\Vec{l}}_{B_1,\ldots, B_{2j-3}} + \ketbra{1}{0}_{B_{2j}}  \otimes \ketbra{\Vec{l}}{\Vec{k}}_{B_1,\ldots, B_{2j-3}},  \\
    &\ketbra{\psi^+}{\psi^{-}}_B- \ketbra{\psi^-}{\psi^{+}}_B=  \ketbra{1}{0}_{B_{2j}} \otimes \ketbra{\Vec{l}}{\Vec{k}}_{B_1,\ldots, B_{2j-3}}-\ketbra{0}{1}_{B_{2j}}  \otimes \ketbra{\Vec{k}}{\Vec{l}}_{B_1,\ldots, B_{2j-3}},
\end{align}
the only off diagonal terms on system $B_{2j}$ are given by
\begin{align}
    &\left(\idChannel_{B_{2j}}-\Delta_{B_{2j}}\right) \!J_\mathcal{K}^{T_{B_{1}\ldots, B_{2j-3},2j-1,2j}} \!=\!\frac{1}{2}\! \left( \ketbra{kn}{lm}\!+\!\ketbra{lm}{kn} \right)_{2j-1,2j}^{T} \!\otimes \!\left( \ketbra{\psi^+}{\psi^+}\!-\!\ketbra{\psi^-}{\psi^-}\right)_{B_{1}\ldots, B_{2j-3},B_{2j}}^{T_{B_{1}\ldots, B_{2j-3}}}, \\
    &\left(\idChannel_{B_{2j}}-\Delta_{B_{2j}}\right) \!J_\mathcal{L}^{T_{B_{1}\ldots, B_{2j-3},2j-1,2j}}  \!=\!\frac{1}{2}\! \left( \ketbra{kn}{lm}\!-\!\ketbra{lm}{kn} \right)_{2j-1,2j}^{T} \!\otimes \!\left( \ketbra{\psi^+}{\psi^-}\!-\!\ketbra{\psi^-}{\psi^+}\right)_{B_{1}\ldots, B_{2j-3},B_{2j}}^{T_{B_{1}\ldots, B_{2j-3}}}.
\end{align}
Inserting this into Eqs.~\eqref{eq:TOffDiagsNetwork} and~\eqref{eq:WOffDiagsNetwork}, respectively, yields
\begin{align}
    0\!&=\! \Delta_{2N+1,B_{2N-1},\cldots, B_{2j+1}}  \bra{{\Vec{k}}}_{B_1,\ldots, B_{2j-3}} \left( \bra{l m}_{2j-1,2j} J_{\mathcal{S}_N} \ket{kn}_{2j-1,2j} \! +\! \bra{k n}_{2j-1,2j}J_{\mathcal{S}_N} \ket{lm}_{2j-1,2j} \right)\ket{{\Vec{l}}}_{B_1,\ldots, B_{2j-3}}, \\
    0\!&=\!\Delta_{2N+1,B_{2N-1},\cldots, B_{2j+1}}  \bra{{\Vec{k}}}_{B_1,\ldots, B_{2j-3}} \left( \bra{l m}_{2j-1,2j} J_{\mathcal{S}_N} \ket{kn}_{2j-1,2j}  \!-\! \bra{k n}_{2j-1,2j}J_{\mathcal{S}_N} \ket{lm}_{2j-1,2j} \right)\ket{{\Vec{l}}}_{B_1,\ldots, B_{2j-3}}. 
\end{align}
Finally, adding the latter two expressions yields
\begin{align}
    0=\Delta_{2N+1,B_{2N-1},\ldots, B_{2j+1}}  \bra{{\Vec{k}}}_{B_1,\ldots, B_{2j-3}} \bra{l m}_{2j-1,2j} J_{\mathcal{S}_N} \ket{kn}_{2j-1,2j}\ket{{\Vec{l}}}_{B_1,\ldots, B_{2j-3}},
\end{align}
which is valid for all choices of $\Vec{k}$ and $\Vec{l}$, and all choices of $k,l$ and $n\neq m.$ After renaming the systems $B_j$ into $j$ again, this is equivalent to
    \begin{align}\label{eq:nextInput_dephased}
        \left(\idChannel_{{2j}}-\Delta_{{2j}}\right) \Delta_{2N+1,2N-1,\ldots, 2j+1} J_{\mathcal{S}_N}=0.
    \end{align}
Since this construction works for arbitrary $1\leq j\leq N$, Eq.~\eqref{eq:nextInput_dephased} holds for all $1\leq j\leq N$, and thus $\Delta_{2N+1,\ldots,2j+1} J_{\mathcal{S}_N}=\Delta_{2N+1,\ldots,2j+1} \Delta_{2N,\ldots, 2j} J_{\mathcal{S}_N} \quad \forall j: 1\leq j\leq N$. Notice the absence of $\Delta_{2N+1,\ldots,1} J_{\mathcal{S}_N}=\Delta_{2N+1,\ldots,1} \Delta_{2N,\ldots, 0} J_{\mathcal{S}_N}$ here, which, however, follows directly from the fact that $\mathcal{S}_N$ must be a $\DI$ channel from all its input to all its output systems. Since $\mathcal{S}_N$ is compatible by assumption, this yields $\Delta_{2N+1,\ldots 1}J_{\mathcal{S}_N} =\Delta_{2N+1,\ldots 1} \Delta_{2N,\ldots 0} J_{\mathcal{S}_N}$.

As promised earlier, we now address the $N=1$ case, i.e., it remains to show that the dephasing conditions in Eq.~\eqref{eq:DICompFromMaxSet} hold for any superchannel $\mS_1\in \mathfrak{C}$. The proof works analogously, where instead of composing $\mS_1$ with a network of free channels, we simply apply $\mS_1$ to the channels $\mK$ and $\mL$ from Lem.~\ref{lem:DIOTester} with the important caveat that we choose the systems $B_1,\ldots, B_{2j-3}$ to be trivial (cf. footnote~\ref{footnote:j_2_system}). As such $\mK^{1,B_2 \to 2}, \mL^{1,B_2 \to 2} \in \DI$ for the appropriate choice of indices as detailed in Lem.~\ref{lem:DIOTester}. In complete analogy, to the considerations from Eq.~\eqref{eq:Composition_DI_Tester} to Eq.~\eqref{eq:nextInput_dephased} we then obtain
\begin{align}\label{eq:DI_condition_superchannel}
    \Delta_3 J_{\mS_1}^{0123}=\Delta_{32} J_{\mS_1}^{0123} \quad \text{and} \quad  \Delta_{31} J_{\mS_1}^{0123}=\Delta_{3210} J_{\mS_1}^{0123},
\end{align}
where the second condition follows again from inserting the channel $\mE^{1,B_{2} \to 2, B_1}$ into $\mS_1$, i.e., considering $\mS_1$ as a channel from all input to all output systems. This finishes the proof that the conditions stated in the Theorem are necessary.

Moreover, that the dephasing conditions in Eq.~\eqref{eq:nextInput_dephased} already imply closedness under \textit{arbitrary} compositions, is essentially clear by inspection. More technically, if we take two arbitrary $\mS_N,\mT_M$ satisfying the dephasing conditions in Eq.~\eqref{eq:DICompFromMaxSet}, then we have to show that their composition $\mS_N*\mT_M$ must satisfy the conditions in Eq.~\eqref{eq:DICompFromMaxSet}. This follows similarly to Ref.~\cite[Lem.~5]{Ahnefeld2025}, and can be directly verified graphically by pulling dephasing maps through the composition as defined by the link product: For instance, using Fig.~\ref{fig:Composition}, a dephasing on the very last accessible output of $\mS_N*\mT_M$ can recursively be ``pulled through" all the composed systems until it can be ``pulled out" of the next accessible input system. The same can be recursively repeated for all other dephasing constraints to show that $\mS_N*\mT_M$ satisfies Eq.~\eqref{eq:DICompFromMaxSet} if  $\mS_N$ and $\mT_M$ do so to begin with. 

c) We finish the proof by noting that the free networks we used above are also contained in DIO. In complete analogy, following Eq.~\eqref{eq:Composition_DI_Tester} to Eq.~\eqref{eq:nextInput_dephased}, we obtain the necessary conditions of $\Delta_{2N+1,\ldots,2j+1} J_{\mathcal{S}_N}=\Delta_{2N+1,\ldots,2j+1} \Delta_{2N,\ldots, 2j} J_{\mathcal{S}_N} \quad \forall j: 0\leq j\leq N$. Moreover, the testers used in Ref.~\cite[Thm.~8]{Ahnefeld2025} are, in fact, also contained in $\DIO$ (cf. Lem.~\ref{lem:DIOTester}), and thus we obtain the remaining conditions of $\Delta_{0,2,\ldots, 2j} J_{\mathcal{S}_N}=\Delta_{0,2,\ldots, 2j} \Delta_{1,3,\ldots, 2j+1} J_{\mathcal{S}_N} \quad \forall j: 0\leq j\leq N,$ which shows that the conditions are necessary. That these conditions are also sufficient follows analogously from pulling the dephasing maps through the composition.
\end{proof}

In Eq.~\eqref{eq:completely_free_Superchannels} in the main text, we defined the set of completely free superchannels. The defining property for such superchannels is that they map free channels into free channels in a complete sense, and in particular, when the superchannel is applied to $\mK^{1,B_2\to 2}$. We emphasize that according to the latter proof, and particularly Eq.~\eqref{eq:DI_condition_superchannel} and the surrounding discussion, it holds that the completely free superchannels are precisely characterized by Eq.~\eqref{eq:DI_condition_superchannel}, and as such 
\begin{align}
\cO_1=\cfO_1 \quad \forall \O \in \{\MIO,\DI,\DIO\},
\end{align}
see also Ref.~\cite[Cor.~10]{Ahnefeld2025} for an explicit treatment of $\O=\MIO$. It remains an open question whether this property holds in general for any resource theory satisfying assumptions~\ref{assump:identity} to \ref{assumpt:complfree}.

In Sec.~\ref{sec:FreeSupermaps}, we discussed two other classes of free supermaps, namely free networks and maximally free supermaps. As claimed in the main text, these sets are, in general, not the same. We now set out to show this at the example of MIO and DI, respectively, as the set of free operations, and thereby provide the proof of Prop.~\ref{prop:nonfreenetworks}. To this end, we require the following Lemma.
\begin{lemma}\label{lem:BlockMatrix_supp}
    Let $A\in \mathbb{C}^{n\times n}$, $B\in \mathbb{C}^{m\times m}$, and $C\in \mathbb{C}^{n\times m}$, such that
    \begin{align}
    M=
        \begin{pmatrix}
            A & C \\
            C^\dagger & B
        \end{pmatrix}
        \geq 0
    \end{align}
    and let $\Pi_A$ and $\Pi_B$ denote projectors onto the support of $A$ and $B$, respectively. Then 
    \begin{align}
        C=\Pi_A C \Pi_B.
    \end{align}
\end{lemma}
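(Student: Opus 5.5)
The plan is to show that the range of $C$ lies in $\supp(A)$ and the range of $C^\dagger$ lies in $\supp(B)$. Multiplying by the projectors then gives the claim. The only input is positivity of $M$, tested against suitable block vectors.

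\emph{Range of $C$.} Take any $x\in\ker A$ and any $y\in\mathbb{C}^m$, and form $v_t=(x, t y)$ for real $t$ (and also with $t$ replaced by $it$). Positivity gives
\begin{align}
0\le v_t^\dagger M v_t = x^\dagger A x + 2t\,\mathrm{Re}(x^\dagger C y) + t^2 y^\dagger B y = 2t\,\mathrm{Re}(x^\dagger C y) + t^2 y^\dagger B y .
\end{align}
For $t\to 0$ of either sign this forces $\mathrm{Re}(x^\dagger C y)=0$. The imaginary replacement gives $\mathrm{Im}(x^\dagger C y)=0$. Hence $x^\dagger C=0$ for all $x\in\ker A$, i.e.\ $(\id-\Pi_A)C=0$, using that $A$ is Hermitian, so that $\ker A=\supp(A)^\perp$. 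Here $A\ge 0$ follows as a principal block of $M\ge 0$.

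\emph{Range of $C^\dagger$.} The same argument with the roles swapped, using vectors $(t x, y)$ with $y\in\ker B$, gives $C(\id-\Pi_B)=0$.

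\emph{Conclusion.} Combining the two identities,
\begin{align}
C=\Pi_A C=\Pi_A C\Pi_B .
\end{align}

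There is no real obstacle. The only care needed is handling the complex phase, which is done by testing with both $t$ and $it$, and noting that $A,B\ge 0$ so that their supports are the orthogonal complements of their kernels.
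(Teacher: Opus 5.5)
Your proof is correct and follows essentially the same route as the paper: both test $M\geq 0$ on block vectors to show $C^\dagger x=0$ for $x\in\ker A$ and $Cy=0$ for $y\in\ker B$, and then combine these to get $C=\Pi_A C\Pi_B$. The only difference is in the key step. The paper obtains it from the fact that $\langle u|M|u\rangle=0$ forces $\sqrt{M}u=0$ and hence $Mu=0$, applied to $u=x\oplus 0$, whereas you derive it directly from a first-order perturbation in $t$ and $it$.
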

\begin{proof}
 First notice that since $M\geq 0$, it follows that if $\braket{u|M|u}=0$, then $M\ket{u} =0$. This holds since $0=\braket{u|M|u}=\braket{\sqrt{M}u| \sqrt{M}u}= || \sqrt{M}\ket{u} ||^2$, and thus $\sqrt{M} \ket{u}=0$. Left-multiplying with $\sqrt{M}$ gives $M \ket{u}=0$. 

 Now let $\ket{x}\in \ker(A) \subseteq \mathbb{C}^n$, and define $\ket{u}=\ket{x} \oplus \ket{0}_m \in \mathbb{C}^{n+m}$, denoting that we attach the zero vector to $\ket{x}$. Then $\braket{u|M|u}= \braket{x|A|x}=0$. Therefore, 
 \begin{align}
     0= \begin{pmatrix}
            A & C \\
            C^\dagger & B
        \end{pmatrix}
        \begin{pmatrix}
            \ket{x} \\
            0
        \end{pmatrix}
        =\begin{pmatrix}
            A \ket{x}\\ C^\dagger \ket{x}
        \end{pmatrix},
 \end{align}
and thus, $C^\dagger \ket{x}=0$ for every $\ket{x}\in \ker(A)$. As such, for every $\ket{z} \in \mathbb{C}^m$ and every $\ket{x}\in \ker(A)$ we have $\braket{x| C| z}=\braket{C^\dagger x|z}=0$. Hence, $C\ket{z}\in \ker(A)^\perp =\supp(A)$. Projecting onto the support of $A$ leaves $C\ket{z}$ invariant, and since $\ket{z}$ is arbitrary, we find $\Pi_A C= C$. Secondly, let $\ket{y}\in \ker(B) \subseteq  \mathbb{C}^{m}$, and set $\ket{v}=\ket{0}_n \oplus \ket{y} \in \mathbb{C}^{n+m}$. Analogously, we obtain $C\ket{y}=0$ for every $\ket{y}\in \ker(B)$. Take any $\ket{z}  \in \mathbb{C}^{m}$. Then, decomposing $\ket{z}=\Pi_B \ket{z} +(\id-\Pi_B) \ket{z}$ yields
 \begin{align}
     C \ket{z}= C\Pi_B \ket{z} +C(\id-\Pi_B) \ket{z} =  C\Pi_B \ket{z},
 \end{align}
 where we used that $(\id-\Pi_B) \ket{z} \in \ker(B)$. Since $\ket{z}$ was arbitrary, we obtain $C=C\Pi_B$, and lastly, $C=\Pi_A C\Pi_B$.
\end{proof}

\nonfreenetworks*
\begin{proof}
    Consider the superchannel $\mS$ depicted in Fig.~\ref{fig:noFreeNetwork}, where systems $0,1,2,3,R^\prime$ are qubits. The post-processing channel $\mE_1^{2R^{\prime} \to 3}$, containing the controlled Pauli $X$ gate, acts on any $A,B$  (expressed in the incoherent basis by elements $B_{k,l}$) as
    \begin{align}\label{eq:E_2action}
        \mE_1^{2R^{\prime}\to 3}(A_2 \otimes B_{R^{\prime}}) &= \sum_{k,l=0}^1 \Tr{X^k A X^l}  B_{k,l} \ketbra{k}{l}_3= \sum_{k} \Tr{A} B_{k,k} \ketbra{k}{k}_3 + \sum_{k\neq l} \Tr{AX} B_{k,l} \ketbra{k}{l}_3 \nonumber \\
        &= \Tr{A} \Delta(B_3) +\Tr{AX}\left( B_3-\Delta(B_3)\right).
    \end{align}
 Let $\ket{\phi_0} :=\ket{0}$ and $\ket{\phi_1}:=H\ket{0}=\ket{+}$. The action of the superchannel $\mS$ on an arbitrary basis element $0\leq i,j,k,l\leq 1$ is
    \begin{align} \label{eq:E_2_action_basis}
        \mS(\ketbra{i}{j}_0 \otimes \ketbra{k}{l}_2)&= \left(\idChan^1 \otimes \, \mE_1^{2R^{\prime}\to 3} \right) \left(\mE_0^{0\to 1R^{\prime}}\left(\ketbra{i}{j}_0 \right)\otimes \ketbra{k}{l}_2\right) = \left(\idChan^1 \otimes \, \mE_1^{2R^{\prime}\to 3} \right) \left(\ketbra{i}{j}_1 \otimes \ketbra{\phi_i}{\phi_j}_{R^{\prime}}\otimes \ketbra{k}{l}_2\right) \nonumber \\
        &= \ketbra{i}{j}_1 \otimes  \left( \Delta\left(\ketbra{\phi_i}{\phi_j}_3\right) \delta_{k,l} + \left(\ketbra{\phi_i}{\phi_j}_3- \Delta(\ketbra{\phi_i}{\phi_j}_3) \right) \delta_{k+1,l} \right),
    \end{align}
    where the addition in the subscript of $\delta_{k+1,l}$ is modulo 2, or in other words, the term contributes iff $k\neq l$. In particular, Eq.~\eqref{eq:E_2_action_basis} implies that $\mS(\ketbra{i}{i}_0 \otimes \ketbra{k}{k}_2)= \ketbra{i}{i}_1 \otimes \Delta(\ketbra{\phi_i}{\phi_i}_3)$, which gives $\Delta_{02} J_{\mS}= \Delta_{0123} J_{\mS}$. Since $\Delta_{0} J_{\mS}= \Delta_{01} J_{\mS}$ also clearly holds, the superchannel $\mS\in \cMIO$ (see Thm.~\ref{thm:CompatibleSupermaps}). Moreover, for all elements $0\leq i,j\leq 1$,
    \begin{align}\label{eq:Action_S}
        \mS(\ketbra{i}{j}_0 \otimes \ketbra{+}{+}_2) &=\frac{1}{2} \sum_{k,l=0}^1\mS(\ketbra{i}{j}_0 \!\otimes\! \ketbra{k}{l}_2) = \ketbra{i}{j}_1\! \otimes\! \left( \Delta(\ketbra{\phi_i}{\phi_j}_3)  + \left(\ketbra{\phi_i}{\phi_j}_3- \Delta(\ketbra{\phi_i}{\phi_j}_3) \right) \right) \nonumber \\
        &=\ketbra{i}{j}_1 \otimes \ketbra{\phi_i}{\phi_j}_3.
    \end{align}
    We will now use this property to show by contradiction that there exists no decomposition of $\mS$ into a network of MIO operations. Let $R$ denote an auxiliary system of arbitrary finite dimension, and let $\mM_0^{0\to 1R},\mM_1^{2R\to 3} \in \MIO$ implement $\tilde{\mS}$ as shown in Fig.~\ref{fig:noFreeNetwork}. Assume that $\mS$ is equal to $\tilde{\mS}$. For convenience, define the states
    \begin{align}\label{eq:def_mu_i}
        \mu_i^{1R} := \mM_0^{0\to 1R}(\ketbra{i}{i}_0) \quad \forall i\in\{0,1\}.
    \end{align}

\begin{figure}[ht]
    \centering
    \includegraphics[width=0.8\linewidth]{ 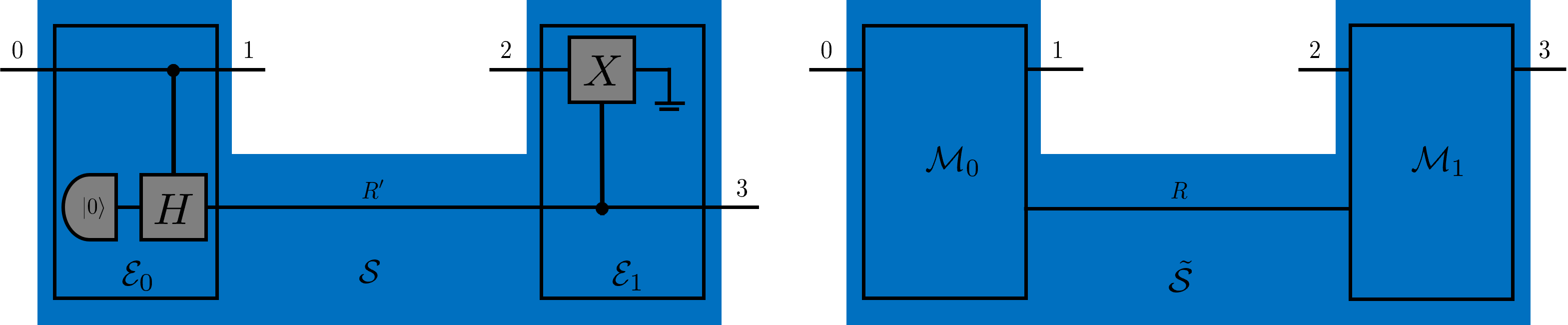}
    \caption{A superchannel $\mS\in \cMIO$, comprised of the pre- and post-processing channels $\mE_0,\mE_1$ . Here $H$ denotes the Hadamard and $X$ the Pauli $X$ gate. No superchannel $\tilde{\mS}$ comprised of arbitrary $\mM_0,\mM_1 \in \MIO$ can implement $\mS$.}
    \label{fig:noFreeNetwork}
\end{figure}

    If $\tilde{\mS}$ is equal to the superchannel $\mS$, it must in particular hold that $\partTr{3}{\mS(\ketbra{i}{i}_0 \otimes \ketbra{+}{+}_2)} = \partTr{3}{\tilde{\mS}(\ketbra{i}{i}_0 \otimes \ketbra{+}{+}_2)}$. Thus, as most easily seen from the figure,
    \begin{align}\label{eq:CondPartTrMu}
      \ketbra{i}{i}_1 \overset{\eqref{eq:Action_S}}{=} \partTr{3}{\mS(\ketbra{i}{i}_0 \otimes \ketbra{+}{+}_2)} &\overset{!}{=}  \partTr{3}{\tilde{\mS}(\ketbra{i}{i}_0 \otimes \ketbra{+}{+}_2)} = \partTr{3}{\left(\idChan^1 \otimes \, \mM_1^{2R\to 3} \right) \left(\mM_0^{0\to 1R}\left(\ketbra{i}{i}_0 \right)\otimes \ketbra{+}{+}_2\right)}\nonumber  \\
      &= \partTr{3}{\left(\idChan^1 \otimes \, \mM_1^{2R\to 3} \right) \left( \mu_i^{1R} \otimes \ketbra{+}{+}_2\right)} =  \partTr{R}{\mu_i^{1R}},
    \end{align}
    where in the last line we used that $\mM_1^{2R\to 3}$ is trace-preserving. Therefore,
    it must hold that 
    \begin{align}
        \partTr{R}{\mu_i^{1R}} \overset{!}{=}\ketbra{i}{i}_1  \quad \forall i\in\{0,1\},
    \end{align}
    and since the right-hand side is a pure state, there exist states $\tau_i^R $ such that
    \begin{align}\label{eq:mu_i_product}
        \mu_i^{1R} = \ketbra{i}{i}_1 \otimes \tau_i^R  \quad \forall i\in\{0,1\}.
    \end{align}
    However, the assumption $\mM_0 \in \MIO$ implies $\mu_i^{1R}=\Delta_{1R} \left(\mu_i^{1R} \right)$, and thus, there exists a conditional probability distribution $p_{j|i}$ such that
    \begin{align}\label{eq:mu_i_prop}
        \mu_i^{1R} =\ketbra{i}{i}_1 \otimes \tau_i^R = \ketbra{i}{i}_1 \otimes \sum_j p_{j|i}\ketbra{j}{j}_R  \quad \forall i\in\{0,1\},
    \end{align}
    since $\tau_i$ must be diagonal.  Defining the channel $\mN^{R\to 3}(\cdot):=\mM_1^{2R\to 3} \left(  \ketbra{+}{+}_2 \otimes (\cdot)\right)$ implies (using the same arguments as in Eq.~\eqref{eq:CondPartTrMu}) that
    \begin{align}
        \tilde{\mS}(\ketbra{i}{i}_0 \otimes \ketbra{+}{+}_2) &= \left( \idChan^1\otimes \, \mM_1^{2R\to 3}\right)(\mM_0^{0\to 1R}\left(\ketbra{i}{i}_0\right)\otimes \ketbra{+}{+}_2) \overset{\eqref{eq:def_mu_i}}{=}\left( \idChan^1\otimes \, \mM_1^{2R\to 3}\right)( \mu_i^{1R} \otimes \ketbra{+}{+}_2) \nonumber \\
        & \overset{\eqref{eq:mu_i_product}}{=}\ketbra{i}{i}_1\otimes \mM_1^{2R\to 3} \left(  \ketbra{+}{+}_2 \otimes \tau_i^R\right)= \ketbra{i}{i}_1\otimes \mN^{R\to 3}\left( \tau_i^R\right) \nonumber\\
        &\overset{\eqref{eq:Action_S}}{=} \ketbra{i}{i}_1 \otimes \ketbra{\phi_i}{\phi_i}_3 \quad \forall i\in \{0,1\},
    \end{align}
where in the last line we used that $\tilde{\mS}$ implements $\mS$ by assumption. This implies that
\begin{align}
    \ketbra{\phi_i}{\phi_i}_3=\mN^{R\to 3}\left( \tau_i^R\right) \overset{\eqref{eq:mu_i_prop}}{=} \sum_j p_{j|i} \,\mN^{R\to 3}\left(\ketbra{j}{j}_R \right) \quad \forall i\in\{0,1\}.
\end{align}
Since the left-hand side is pure, this, in turn, implies that for all $i\in\{0,1\}$,
\begin{align}\label{eq:phi_i_Eq}
    \ketbra{\phi_i}{\phi_i}_3= \mN^{R\to 3}\left(\ketbra{j}{j}_R \right) \quad \forall j: p_{j|i} >0,
\end{align}
which corresponds to basis elements $j$ contained in the supports of $\tau_i$, respectively. Additionally, note that $\tau_0$ and $\tau_1$ have orthogonal support: If this were not true, i.e., if there existed a $j$ such that $p_{j|0},p_{j|1}>0$, then Eq.~\eqref{eq:phi_i_Eq} would imply $\ket{\phi_0}=\ket{\phi_1}$, which is false. Let us now define the orthogonal projectors $\Pi_{i}$ onto the supports of $\tau_i$ respectively, i.e.,
\begin{align}\label{eq:Supp_tau}
    \Pi_{i} := \sum_{j: p_{j|i}>0} \ketbra{j}{j}.
\end{align}
Next, consider the Choi state of $\mM_0$ as a block matrix in the incoherent basis of  system $0$,
\begin{align}
    0\leq J_{\mM_0}^{01R}=\sum_{i,j=0}^{1} \ketbra{i}{j}_0 \otimes \mM_0^{\tilde{0}\to 1R}(\ketbra{i}{j}_{\tilde{0}})=
    \begin{pmatrix}
        \mu_0 & C \\
        C^\dagger &\mu_1
    \end{pmatrix},
\end{align}
 where $C:=\mM_0^{0\to 1R}(\ketbra{0}{1})$. We can now exploit the positive-semidefinitness of the Choi state and invoke Lem.~\ref{lem:BlockMatrix_supp}: Let $\Pi_{\mu_i}$ denote the projector onto the support of $\mu_i$, and $\Pi_i$ the projector onto the support of $\tau_i$ as defined in Eq.~\eqref{eq:Supp_tau}. Then $C$ must satisfy 
\begin{align}      
    C\overset{\mathrm{Lem.}~\ref{lem:BlockMatrix_supp}}{=} \Pi_{\mu_0} C \Pi_{\mu_1}= \left(\ketbra{0}{0}\otimes \Pi_0\right) C \left(\ketbra{1}{1}\otimes \Pi_1\right) =\ketbra{0}{1} \otimes \tilde{C}
\end{align}
for some $\tilde{C}=\Pi_0 \tilde{C} \Pi_1$. Therefore, we obtain
\begin{align} \label{eq:M_1_action}
    \mM_0^{0\to 1R}(\ketbra{0}{1}_0)= \ketbra{0}{1}_1 \otimes \tilde{C}_R \quad \text{with} \quad  \tilde{C}=\Pi_0 \tilde{C} \Pi_1.
\end{align}
Next, we consider 
\begin{align}\label{eq:Tilde_mS_action}
    \tilde{\mS}(\ketbra{0}{1}_0 \otimes \ketbra{+}{+}_2) &=\left(\idChan^{1} \otimes \,\mM_1^{2R\to 3}\right) \left(\mM_0^{0\to 1R}\left( \ketbra{0}{1}_0\right) \otimes \ketbra{+}{+}_2 \right) \overset{\eqref{eq:M_1_action}}{=} \ketbra{0}{1}_1 \otimes \mM_1^{2R\to3}\left( \ketbra{+}{+}_2 \otimes \tilde{C}_R\right) \nonumber \\
     &= \ketbra{0}{1}_1 \otimes \mN^{R\to 3}(\tilde{C}_R) \overset{\eqref{eq:M_1_action}}{=}\ketbra{0}{1}_1 \otimes \mN^{R\to 3}(\Pi_0 \tilde{C} \Pi_1).
\end{align}
Now take a Kraus representation for $\mN^{R\to 3}(\cdot)=\sum_n K_n (\cdot) K_n^\dagger $. Let $\ket{k} \in \supp(\tau_0)$ and $\ket{l} \in \supp(\tau_1)$, and recall from the above discussion that this implies that $\braket{k|l}=0$. From Eq.~\eqref{eq:phi_i_Eq}, we know that
\begin{align}
    \ketbra{\phi_0}{\phi_0}= \mN(\ketbra{k}{k}) =\sum_n K_n\ketbra{k}{k} K_n^\dagger,
\end{align}
and since the left-hand side is pure, each term must be proportional to $\ket{\phi_0}$, i.e., $K_n\ket{k}= \alpha_{n,k} \ket{\phi_0}$ for some $\alpha_{n,k}$. Likewise, from $ \ketbra{\phi_1}{\phi_1}= \mN(\ketbra{l}{l})$, one obtains $K_n\ket{l}= \alpha_{n,l} \ket{\phi_1}$. We can now invoke that the channel $\mN$ is trace-preserving, which yields, for all $k,l$ in the respective support
\begin{align}
    0=\braket{k|l} = \braket{k| \sum_n K_n^\dagger K_n |l} = \braket{\phi_0| \phi_1} \sum_n \alpha_{n,k}^* \alpha_{n,l}  \quad \forall k,l:  \ket{k} \in \supp(\tau_0),\ket{l} \in \supp(\tau_1).
\end{align}
Since $\braket{\phi_0| \phi_1} \neq 0$, this implies that for all indices $k,l$ such that $\ket{k} \in \supp(\tau_0),\ket{l} \in \supp(\tau_1)$, we have $\sum_n \alpha_{n,k}^* \alpha_{n,l} =0$. Using (the complex conjugate of) the latter expression yields that
\begin{align}\label{eq:N_Channel_0}
    \mN(\ketbra{k}{l})= \left( \sum_n \alpha_{n,k}\alpha_{n,l}^* \right) \ketbra{\phi_0}{\phi_1} =0 \quad \forall k,l:  \ket{k} \in \supp(\tau_0),\ket{l} \in \supp(\tau_1).
\end{align}
Finally, combining everything, we find that
\begin{align}
    \tilde{\mS}(\ketbra{0}{1}_0 \otimes \ketbra{+}{+}_2)& \overset{\eqref{eq:Tilde_mS_action}}{=}\ketbra{0}{1}_1 \otimes \mN^{R\to 3}(\Pi_0 \tilde{C}\Pi_1) = \ketbra{0}{1}_1 \otimes \sum_{\substack{\ket{k}\in \supp(\tau_0) \\ \ket{l}\in \supp(\tau_1)}} \tilde{C}_{k,l} \,\mN^{R\to 3}(\ketbra{k}{l})  \overset{\eqref{eq:N_Channel_0}}{=} 0.
\end{align}
This leads to the promised contradiction to Eq.~\eqref{eq:Action_S} and thus $\nMIO \subsetneq \cMIO$. That $\cMIO \subsetneq \mMIO$ was shown in Ref.~\cite{Ahnefeld2025}, see Fig.~6 in its Supplemental Material for an explicit example and the corresponding discussion around it.

In the detection-incoherent setting, consider the superchannel $\mS$ depicted in Fig.~\ref{fig:noFreeNetwork_DI}, where again all systems are qubits. The action of $\mS$ on an arbitrary basis element with $0\leq i,j,k,l \leq 1$ can be expressed as
 \begin{align} \label{eq:SuperChan_DI_action}
     \mS(\ketbra{i}{j}_0 \otimes \ketbra{k}{l}_2)=\delta_{k,l} \ketbra{i}{j}_1\otimes H^k\ketbra{i}{j}_3H^k.
 \end{align}
This implies that $J_\mS^{0123}$ satisfies $\Delta_3 J_{\mS}^{0123}=\Delta_{32} J_{\mS}^{0123}$ and $\Delta_{321} J_{\mS}^{0123}=\Delta_{3210} J_{\mS}^{0123}$ and thus $\mS\in \cDI$ (see Thm.~\ref{thm:CompatibleSupermaps}).
In particular, Eq.~\eqref{eq:SuperChan_DI_action} also implies that
\begin{align}
    \Delta_3\circ \mS(\ketbra{0}{1}_0 \otimes \ketbra{1}{1}_2) &= \ketbra{0}{1}_1\otimes \Delta(\ketbra{+}{-}_3) \neq 0, \label{eq:DI_contradiction}
\end{align}
which we will use to show that this superchannel cannot be implemented by a DI network. 

\begin{figure}[ht]
    \centering
    \includegraphics[width=0.8\linewidth]{ 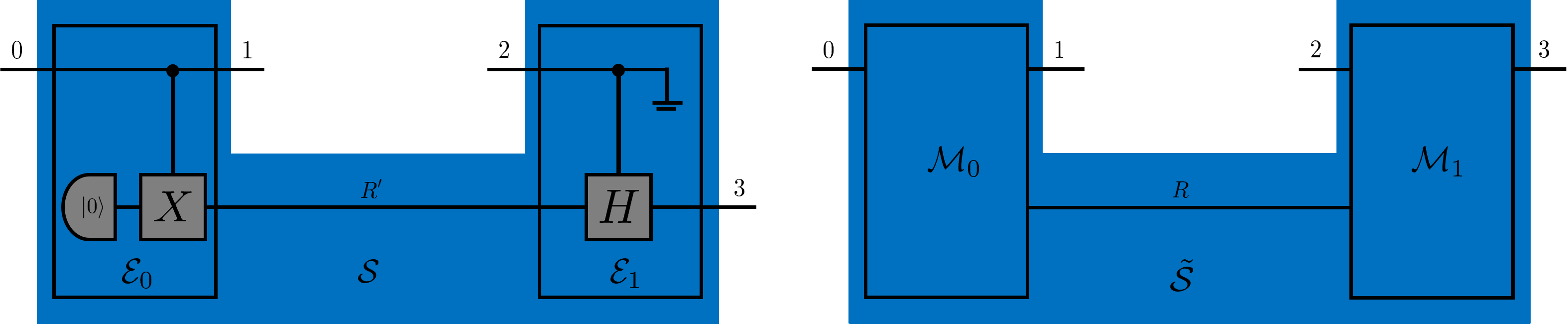}
    \caption{A superchannel $\mS\in\cDI$, comprised of the pre- and post-processing channels $\mE_0,\mE_1$ . Here $H$ denotes the Hadamard and $X$ the Pauli $X$ gate. No superchannel $\tilde{\mS}$ comprised of arbitrary $\mM_0,\mM_1 \in \DI$ can implement $\mS$.}
    \label{fig:noFreeNetwork_DI}
\end{figure}

Again, we proceed with a proof by contradiction using a quantum system $R$ of arbitrary but finite dimension. Let $\mM_0^{0\to 1R},\mM_1^{2R\to 3} \in \DI$, let $\tilde{\mS}$ be the superchannel implemented by these quantum channels, and define 
\begin{align}
    \sigma_i^{1R}:= \mM_0^{0\to 1R}(\ketbra{i}{i}_0) \quad \forall i\in \{0,1\}.
\end{align}
If $\tilde{\mS}$ is equal to $\mS$, it must hold that $\partTr{R}{ \sigma_i^{1R}}=\ketbra{i}{i}_1$, as most easily seen from Fig.~\ref{fig:noFreeNetwork_DI}, or more technically from
\begin{align}
     \partTr{R}{ \sigma_i^{1R}} =& \partTr{R}{\mM_0^{0\to 1R}(\ketbra{i}{i}_0)} = \partTr{3}{\left( \idChan^1\otimes \, \mM_1^{2R\to 3}\right)(\mM_0^{0\to 1R}\left(\ketbra{i}{i}_0\right)\otimes \ketbra{0}{0}_2) }\overset{!}{=}\partTr{3}{\mS(\ketbra{i}{i}_0 \otimes \ketbra{0}{0}_2)} \nonumber \\
     \overset{\eqref{eq:SuperChan_DI_action}}{=}&\ketbra{i}{i}_1,
\end{align}
where we used that $\mM_1$ is trace-preserving. Since the right-hand side is already pure, there must exist quantum states $\tau_i^R$  such that 
\begin{align}\label{eq:sigam_product}
    \sigma_i^{1R}=\ketbra{i}{i}_1\otimes \tau_i^R \quad \forall i\in\{0,1\}.
\end{align}
We next characterize the memory states $\tau_i^R$. In particular, we show that their supports lie in mutually orthogonal incoherent subspaces of $R$; this will provide the ingredient to invoke Lem.~\ref{lem:BlockMatrix_supp} and lead to a contradiction. To this end, define the channel $\mN^{R\to 3}(\cdot):=\mM_1^{2R\to 3}\left( \ketbra{0}{0}_2\otimes (\cdot)_R\right)$ and notice that
\begin{align}
    \ketbra{i}{i}_1 \otimes \ketbra{i}{i}_3&\overset{\eqref{eq:SuperChan_DI_action}}{=}\mS(\ketbra{i}{i}_0 \otimes \ketbra{0}{0}_2) \overset{!}{=} \left( \idChan^1\otimes \, \mM_1^{2R\to 3}\right)(\mM_0^{0\to 1R}\left(\ketbra{i}{i}_0\right)\otimes \ketbra{0}{0}_2) \nonumber \\
    &\overset{\eqref{eq:sigam_product}}{=}\ketbra{i}{i}_1\otimes  \mM_1^{2R\to 3}\left(\ketbra{0}{0}_2\otimes \tau_i^R  \right) = \ketbra{i}{i}_1\otimes  \mN^{R\to 3}(\tau_i^R)
\end{align}
enforces
\begin{align}\label{eq:DI_mN_map}
    \mN^{R\to 3}(\tau_i^R)=\ketbra{i}{i}_3 \quad \forall i\in\{0,1\}.
\end{align}
Moreover, $\mN\in \DI$, since by assumption, $\mM_1\in \DI$, and thus
\begin{align}
    \Delta_3 \mN(X_R)=\Delta_3\mM_1\left(\ketbra{0}{0}_2 \otimes X_R\right)=\Delta_3\mM_1\left(\ketbra{0}{0}_2 \otimes \Delta_R (X_R)\right)=\Delta_3 \mN(\Delta_R(X_R)).
\end{align} 
Choosing $M_i:= \left(\mN^{R\to 3}\right)^\dagger(\ketbra{i}{i}_3)$ defines a POVM $\{M_0,M_1\}$ since $\mN$ is a channel. Noticing that $\tau_i, M_i \geq 0 \, \forall i\in\{0,1\}$ together with
\begin{align}
    0 \overset{\eqref{eq:DI_mN_map}}{=} \Tr{\ketbra{i}{i}_3 \, \mN^{R\to 3}(\tau_{1-i}^R)} = \Tr{M_{i}^R \tau_{1-i}^R} \quad \forall i\in\{0,1\},
\end{align}
implies that $\supp(\tau_i) \subseteq \ker(M_{1-i})$. Now define the projectors $\Pi_{i}=\Pi_{\ker(M_{1-i})}$ therefore satisfying $\tau_i=\Pi_i \tau_i \Pi_i \quad \forall i\in\{0,1\}$. Moreover, $(\mN^{R\to 3})^\dagger \circ \Delta_3=\Delta_{R}\circ (\mN^{R\to 3})^\dagger \circ \Delta_3$ (or, in other words, $\mN$ cannot turn an incoherent measurement coherent) because $\mN\in \DI$. This implies $M_i=\Delta(M_i)$, and thus, there exists a conditional probability distribution $q_{i|r}$ such that 
\begin{align}
    M_i^R =\sum_{r=0}^{d_R-1} q_{i|r} \ketbra{r}{r}\quad \forall i\in\{0,1\} \quad \text{and}\quad \sum_{i=0}^{1}q_{i|r}=1 \quad \forall r: \{0,\ldots, d_R-1\}.
\end{align}
As such, we have $\Pi_{i}=\Delta(\Pi_{i})$, and from $M_0+M_1=\id$, it follows that $\Pi_0\Pi_1=0$. Next, define
\begin{align}
    C^{1R}:=\mM_0^{0\to 1R}(\ketbra{0}{1}_0).
\end{align}
Analogously to the MIO case, we want to exploit the positive-semidefinitness of the Choi state of $\mM_0$, i.e.,
\begin{align}
     0\leq J_{\mM_0}^{01R}=\sum_{i,j=0}^{1} \ketbra{i}{j}_0 \otimes \mM_0^{\tilde{0}\to 1R}(\ketbra{i}{j}_{\tilde{0}})=
    \begin{pmatrix}
        \sigma_0 & C \\
        C^\dagger &\sigma_1
    \end{pmatrix}.
\end{align}
To this end, let us define the projectors $P_i:=\Pi_{\supp(\tau_i)}$. Then, Lem.~\ref{lem:BlockMatrix_supp} together with Eq.~\eqref{eq:sigam_product} implies that
\begin{align}\label{eq:C_1R_projected}
    C^{1R}= \left(\ketbra{0}{0}_1 \otimes P_0^R \right)C^{1R}\left(\ketbra{1}{1}_1 \otimes P_1^R\right) = \ketbra{0}{1}_1 \otimes D^R
\end{align}
for some operator $D=P_0DP_1$. From $\tau_i=\Pi_i \tau_i \Pi_i$, it follows that the support of $\tau_i$ lies inside the subspace onto which $\Pi_i$ projects, and thus $\Pi_i P_i= P_i = P_i\Pi_i$. Hence, $D=P_0DP_1 = \Pi_0 P_0 D P_1  \Pi_1  = \Pi_0 D \Pi_1$. Moreover, it follows that
\begin{align}\label{eq:Delta_D_R}
\Delta_R(D_R)=\Delta_R(\Pi_0 D_R \Pi_1)=\Pi_0\,\Delta_R(D_R)\,\Pi_1=\Delta_R(D_R)\Pi_0\Pi_1=0,
\end{align}
where we used that $\Delta_R(D_R)$ commutes with the diagonal projectors $\Pi_0,\Pi_1$. Lastly, consider that
    \begin{align}\label{eq:DI_contradiction_impl}
        \Delta_3 \circ \tilde{\mS}(\ketbra{0}{1}_0 \otimes \ketbra{1}{1}_2) &= \Delta_3 \circ\left( \idChan^1\otimes \, \mM_1^{2R\to 3}\right)(\mM_0^{0\to 1R}\left(\ketbra{0}{1}_0\right)\otimes \ketbra{1}{1}_2) = \Delta_3 \circ\left( \idChan^1\otimes \, \mM_1^{2R\to 3}\right)(C^{1R}\otimes \ketbra{1}{1}_2) \nonumber \\
        &\overset{\eqref{eq:C_1R_projected}}{=} \ketbra{0}{1}_1 \otimes \Delta_3 \!\circ\!\mM_1^{2R\to 3}\left(\ketbra{1}{1}_2\otimes D^R\right) \!=\! \ketbra{0}{1}_1 \otimes \Delta_3 \circ\mM_1^{2R\to 3}\left(\ketbra{1}{1}_2\otimes \Delta_R(D^R)\right) \nonumber \\
        &\overset{\eqref{eq:Delta_D_R}}{=} 0,
    \end{align}
where in the second-to-last equality, we used that $\mM_1 \in \DI$ by assumption. This is a contradiction to Eq.~\eqref{eq:DI_contradiction}, and as such $\nDI_1 \subsetneq \cDI_1$. Lastly, that $\cDI\subsetneq \mDI$ follows directly from Prop.~\ref{cor:DIConversionDistance}: different conversion distances imply that there exists a superchannel contained in $\mDI_1$ but not in $\cDI_1$. 
\end{proof}

\subsection{Manipulation of dynamical coherence}\label{sec:Appen_Coherence_Manipulation}
In this section, we investigate the manipulation of dynamical coherence under different sets of free supermaps. As a key technical tool, we use a twirling channel, which we introduce and characterize in the following section.

\subsubsection{Weyl-Heisenberg twirling}~\label{sec:Twirling}
To investigate the manipulation of dynamical resources under compatible supermaps, and in particular, to show optimality of conversion distances, we rely on a twirling map (on the level of Choi states, or equivalently channels) based on a symmetry of the Fourier transform. In general, for any compact group $G$, let us define the twirling channel (acting on Choi states) as
\begin{align}\label{eq:GTwirl}
    \mT_G(J_\mN)=\int_G d\mu(g) (U_g\otimes V_g) J_\mN(U_g\otimes V_g)^\dagger,
\end{align}
where $d\mu(g)$ denotes the normalized Haar measure, and $U_g,V_g$ are unitary representations of the group $G$ on the input and output systems of the channel, respectively. Such a twirling can be interpreted as taking random concatenations (with respect to the Haar measure) with unitary pre- and post-processing channels $\mU_g,\mV_g$, i.e.,
\begin{align}
    \mT_G(J_\mN)= \int_G d\mu(g) J_{\mV_g\circ \mN\circ \,\mU_g^T} =: J_{\tilde{\mN}} \quad \text{where } \tilde{\mN}= \int_G d\mu(g) \mV_g\circ \mN\circ \mU_g^T
\end{align}
and $\mU_g^T(\cdot)=U_g^T(\cdot) U_g^*$ and  $\mV_g(\cdot)=V_g(\cdot) V_g^\dagger$ \footnote{Note that, more generally, one could also consider a twirling that involves random concatenations with unitaries and an auxiliary system. This could be useful for other applications, but is not necessary for our purposes.}. 
Using the standard identity 
\begin{align}\label{eq:MaxEntSwitch}
    (U\otimes \id) \ket{\phi}= (\id \otimes U^T) \ket{\phi},
\end{align}
where $\ket{\phi}=\tfrac{1}{\sqrt{d}}\sum_i\ket{ii}$, this follows from
\begin{align}
    (U_g \otimes V_g) J_\mN (U_g \otimes V_g)^\dagger &= \sum_{i,j} U_g\ketbra{i}{j}U_g^\dagger \otimes V_g\mN(\ketbra{i}{j})V_g^\dagger =\sum_{i,j} \ketbra{i}{j} \otimes V_g\mN(U_g^T\ketbra{i}{j}U_g^*)V_g^\dagger = J_{\mV_g\circ \mN\circ \,\mU_g^T}.
\end{align}
Specifying to dynamical coherence, recall that the Choi state of the Fourier transform is given by
\begin{align}
    J_{\mF_d}^{03}= d\, (\id^0 \otimes F_d^3) \ketbra{\phi}{\phi}^{03} (\id^0 \otimes F_d^3)^\dagger.
\end{align}
Using again Eq.~\eqref{eq:MaxEntSwitch}, it becomes clear that $J_{\mF_d}$ is invariant under the action of $U=(\id \otimes F_d) (V\otimes V^*) (\id \otimes F_d)^\dagger =V \otimes (F_d V^* F_d^\dagger)$ for \textit{any} unitary $V$. Our goal is to use this property to investigate resource manipulation. To this end, we start with a characterization of unitaries that commute with the complete dephasing operation $\Delta$, see also Ref.~\cite[Thm.~31]{Chitambar2016}.

\begin{lem}\label{lem:MonomialUnitary}
Let $\Delta$ denote the complete dephasing channel and let $\mU(\cdot) =U(\cdot)U^\dagger$ be a unitary channel. Then, 
\begin{align}
    \mU \circ \Delta=  \Delta\circ \mU,
\end{align}
iff $U$ is a monomial unitary, i.e., $U=P_\pi D$, where $P_\pi=\sum_k \ketbra{\pi(k)}{k}$ is a permutation matrix and $D=\sum_k \omega_k \ketbra{k}{k}$ is a diagonal unitary.
\end{lem}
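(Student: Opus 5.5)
The ``if'' direction is a direct computation. For $U=P_\pi D$ we have $U\ketbra{k}{k}U^\dagger=\ketbra{\pi(k)}{\pi(k)}$, because the phases $\omega_k$ cancel. Hence for any $X$,
\begin{align}
\mU\circ\Delta(X)=\sum_k X_{kk}\ketbra{\pi(k)}{\pi(k)}.
\end{align}
On the other side, $\Delta(UXU^\dagger)=\sum_m \braket{m|UXU^\dagger|m}\ketbra{m}{m}$, and $U^\dagger\ket{m}=\omega^*_{\pi^{-1}(m)}\ket{\pi^{-1}(m)}$. This gives $\braket{m|UXU^\dagger|m}=X_{\pi^{-1}(m),\pi^{-1}(m)}$, so after relabeling $m=\pi(k)$ the two expressions coincide.

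For the ``only if'' direction, assume $\mU\circ\Delta=\Delta\circ\mU$ and evaluate both sides on the diagonal projector $\ketbra{k}{k}$. The left-hand side is $U\ketbra{k}{k}U^\dagger$, a rank-one projector onto $U\ket{k}$. The right-hand side $\Delta(U\ketbra{k}{k}U^\dagger)$ is diagonal. A rank-one projector that is diagonal in the incoherent basis must equal $\ketbra{m}{m}$ for some basis vector. Therefore $U\ket{k}=\omega_k\ket{\pi(k)}$ for some index $\pi(k)$ and a phase $|\omega_k|=1$.

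Since $U$ is unitary, the images $U\ket{k}$ are orthonormal, so $\pi$ is injective and hence a permutation of the finite index set. It follows that $U=\sum_k \omega_k\ketbra{\pi(k)}{k}=P_\pi D$ with $D=\sum_k\omega_k\ketbra{k}{k}$.

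The only step that needs care is the claim that a diagonal rank-one projector is a basis projector. This holds because a diagonal matrix with a single nonzero eigenvalue, equal to $1$, has exactly one nonzero diagonal entry. No real obstacle is expected beyond this; the result also matches Ref.~\cite[Thm.~31]{Chitambar2016}.
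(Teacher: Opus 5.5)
Your proof is correct and follows essentially the same route as the paper. The ``if'' direction is the same direct computation based on $U\ketbra{k}{k}U^\dagger=\ketbra{\pi(k)}{\pi(k)}$. The ``only if'' direction evaluates the commutation relation on $\ketbra{k}{k}$, concludes that the rank-one projector onto $U\ket{k}$ is diagonal and hence $U\ket{k}=\omega_k\ket{\pi(k)}$, and then uses orthonormality to show that $\pi$ is a permutation.
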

\begin{proof}
If $U$ is a monomial unitary, then $U \ketbra{i}{i} U^\dagger=P_\pi D \ketbra{i}{i} D^\dagger P_\pi^\dagger=P_\pi \ketbra{i}{i} P_{\pi}^\dagger= \ketbra{\pi(i)}{\pi(i)}$. For any operator $X$, this implies that
\begin{align}
    \mU \Delta(X) &= \sum_i U\ketbra{i}{i} X\ketbra{i}{i} U^\dagger =  \sum_i \ketbra{\pi(i)}{\pi(i)} UXU^\dagger \ketbra{\pi(i)}{\pi(i)}=\sum_k \ketbra{k}{k} U X U^\dagger \ketbra{k}{k} \nonumber \\
    &= \Delta \mU (X),
\end{align}
where in the second-to-last step we relabeled the summation index $\pi(i)$ to $k$, which is possible since $\pi$ is a bijection. Conversely, let $U$ be any unitary satisfying $\mU \Delta=  \Delta \mU$. Let $\ket{\psi_i}:= U \ket{i} =\sum_k c_k^{(i)} \ket{k}$, then $\mU \Delta=  \Delta \mU$ implies that $\Delta \ketbra{\psi_i}{\psi_i}= \ketbra{\psi_i}{\psi_i}.$
Since $\ket{\psi_i}$ is rank-one, this implies that for each $i$, exactly one $c_k^{(i)}$ is non-vanishing. Moreover, since $U$ is unitary, we have $\braket{\psi_i|\psi_j}=\delta_{i,j}$, and thus, there exists a permutation $\pi$ such that $\ket{\psi_i}=\omega_i \ket{\pi(i)}$. Hence, the unitary $U$ is composed of a diagonal unitary $D=\sum_i \omega_i \ketbra{i}{i}$ and a permutation matrix $P_\pi=\sum_i \ketbra{\pi(i)}{i}$ with $U=P_\pi D$. \qedhere
\end{proof}

Next, consider the shift and clock matrices~\cite{Schwinger1960}
\begin{align}\label{eq:def_shiftandclock}
    S:=\sum_{k=0}^{d-1} \ketbra{k+1}{k} \quad \text{and} \quad \Omega:=\sum_{k=0}^{d-1} \omega_d^k\ketbra{k}{k},
\end{align}
where $\omega_d=e^{2\pi i/d}$ and the addition is to be understood modulo $d$, i.e., $\ket{k+1}= \ket{k+1 \,\text{mod} \,d}$. From the shift and clock matrices, one can construct monomial unitaries known as Weyl–Heisenberg matrices~\cite{Schwinger1960} 
\begin{align}
    W_{n,m}:=S^n \Omega^m,
\end{align}
of which there are a total of $d^2$ distinct matrices. Importantly, this is one of the few places where the superscripts are to be understood as matrix exponents.  Let us define the unitaries 
\begin{align}\label{eq:U_nm}
    U_{n,m} := W_{n,m} \otimes F W_{n,m}^* F^\dagger,
\end{align}
which are clearly of the form $U_{n,m}=(\id \otimes F) (W_{n,m}\otimes W_{n,m}^*) (\id \otimes F)^\dagger$. Moreover, we define a twirling channel $\mT$ via
\begin{align}\label{eq:TwirlingChannel}
    \mT(X) := \frac{1}{d^2} \sum_{n,m} U_{n,m} X U_{n,m}^\dagger =\frac{1}{d^2} \sum_{n,m} (W_{n,m} \otimes F W_{n,m}^* F^\dagger) X (W_{n,m} \otimes F W_{n,m}^* F^\dagger )^\dagger.
\end{align}
To characterize -- and later use -- this twirling channel, we require the following Lemma.
\begin{lem}\label{lem:UnitaryRewriting}
Let $ \ket{\Psi_{k,l}}:= \frac{1}{d} \sum_{a,b} \omega_d^{a(b-k)-lb} \ket{a,b}$ and $\ket{\phi_k}:=\frac{1}{\sqrt{d}} \sum_{b} \omega_d^{bk}\ket{b}$, and define
\begin{subequations}
    \begin{align}
        \Pi_{k,l}:=&\ketbra{\Psi_{k,l}}{\Psi_{k,l}}, \label{eq:Pi_kl}\\
         P_l :=& \sum_a \ketbra{a}{a}_0 \otimes \ketbra{\phi_{a-l}}{\phi_{a-l}}_3, \\
         Q_k := & \sum_b\ketbra{\phi_{b-k}}{\phi_{b-k}}_0 \otimes \ketbra{b}{b}_3,
    \end{align}
\end{subequations}
where the addition in the indices is modulo $d$. Then, the unitary $U_{n,m}$ defined in Eq.~\eqref{eq:U_nm} can be written as
\begin{align}\label{eq:U_asMonomialUnitaries}
    U_{n,m}= W_{n,m} \otimes F W_{n,m}^* F^\dagger=\omega_d^{nm} (W_{n,m} \otimes W_{m,n})= \sum_{k,l} \omega_d^{nk+lm} \,\Pi_{k,l}.
\end{align}
Moreover, $\{ \Pi_{k,l} \}_{0\leq k,l\leq d-1}$,  $\{P_l \}_{0\leq l \leq d-1} $, and $\{Q_k \}_{0\leq k \leq d-1}$ are sets of complete and orthogonal projectors satisfying  
\begin{subequations}
   \begin{align}
       &\Delta_0 \Pi_{k,l}^{03} = \frac{1}{d} P_l^{03} = \frac{1}{d}\sum_k \Pi_{k,l}^{03} , \label{eq:P_l}\\
        &\Delta_3 \Pi_{k,l}^{03} = \frac{1}{d} Q_k^{03} =\frac{1}{d}\sum_l \Pi_{k,l}^{03}. \label{eq:Q_k}
   \end{align}
\end{subequations}
\end{lem}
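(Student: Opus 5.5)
The proof is a direct computation in the incoherent basis, organized in three stages: first the operator identity for $U_{n,m}$, then the spectral decomposition into the $\Pi_{k,l}$, and finally the dephasing identities.

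\textbf{Stage 1: $F W_{n,m}^* F^\dagger$.} Here $F=\frac{1}{\sqrt d}\sum_{a,b}\omega_d^{ab}\ketbra{a}{b}$, and $W_{n,m}^*=S^n\Omega^{-m}$ because $S$ is real. Conjugating by $F$ uses the standard relations $F S F^\dagger=\Omega$ and $F\Omega F^\dagger = S^{-1}$, up to the sign convention, which I would fix by checking on basis vectors. This gives
\[
F S^n\Omega^{-m}F^\dagger = \Omega^{n} S^{m}.
\]
Commuting with $\Omega S=\omega_d S\Omega$ turns this into $\omega_d^{nm} S^m\Omega^n=\omega_d^{nm}W_{m,n}$, which is the second expression in Eq.~\eqref{eq:U_asMonomialUnitaries}. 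Getting the phase exactly $\omega_d^{nm}$ and not $\omega_d^{-nm}$ is the fiddliest bookkeeping, and I would verify it by applying both sides to $\ket{b}$.

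\textbf{Stage 2: spectral form and the $\Pi_{k,l}$.} Next I would show that $\ket{\Psi_{k,l}}$ is a joint eigenvector of all $U_{n,m}$ with eigenvalue $\omega_d^{nk+lm}$. To do this, apply $W_{n,m}\otimes W_{m,n}$ to $\ket{a,b}$, which gives $\omega_d^{ma+nb}\ket{a+n,b+m}$. Then shift the summation variables $a\to a-n$, $b\to b-m$ and collect the phase, which should leave $\omega_d^{-nm}\omega_d^{nk+lm}$ times $\ket{\Psi_{k,l}}$.

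Orthonormality of the $\ket{\Psi_{k,l}}$ follows from geometric sums. Writing the overlap $\braket{\Psi_{k',l'}|\Psi_{k,l}}$ as $\frac{1}{d^2}\sum_{a,b}\omega_d^{a(k'-k)+b(l'-l)}$, the sum vanishes unless $k=k'$ and $l=l'$. Since there are $d^2$ of these vectors, they form a basis, so $\sum\Pi_{k,l}=\id$ and Eq.~\eqref{eq:U_asMonomialUnitaries} holds.

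For $P_l$ and $Q_k$, completeness and orthogonality are immediate because the $\ket{\phi_k}$ form the Fourier basis.

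\textbf{Stage 3: dephasing identities.} For Eq.~\eqref{eq:P_l}, I would factor the state as
\[
\ket{\Psi_{k,l}}=\frac{1}{\sqrt d}\sum_a \ket{a}\otimes \frac{1}{\sqrt d}\sum_b\omega_d^{b(a-l)}\omega_d^{-ak}\ket{b}.
\]
The second factor is $\omega_d^{-ak}\ket{\phi_{a-l}}$. Dephasing system $0$ kills the cross terms in $a$, and the phase $\omega_d^{-ak}$ drops out. This leaves $\frac1d\sum_a\ketbra{a}{a}\otimes\ketbra{\phi_{a-l}}{\phi_{a-l}}=\frac1d P_l$.

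The identity $\sum_k\Pi_{k,l}=P_l$ then follows because $P_l$ is a projector of rank $d$ with $\Pi_{k,l}\le P_l$. Indeed, each $\ket{\Psi_{k,l}}$ lies in the range of $P_l$, as the factorization shows, and the $d$ orthonormal vectors $\ket{\Psi_{k,l}}$ for fixed $l$ span that range.

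For Eq.~\eqref{eq:Q_k}, I would factor over $b$ symmetrically. Rewriting $\omega_d^{a(b-k)}$ shows the system-$0$ factor is $\ket{\phi_{b-k}}$ with the phase $\omega_d^{-lb}$, and dephasing system $3$ then gives $\frac1d Q_k$ in exactly the same way.
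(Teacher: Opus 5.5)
Your proposal is correct and follows essentially the same route as the paper. It uses the same conjugation identities $FSF^\dagger=\Omega$ and $F\Omega^{-1}F^\dagger=S$, the commutation $\Omega^nS^m=\omega_d^{nm}S^m\Omega^n$, an eigenvector check on $\ket{\Psi_{k,l}}$, geometric-sum orthonormality, and direct dephasing of $\Pi_{k,l}$. The differences are minor: you verify the eigenvalue by letting $W_{n,m}\otimes W_{m,n}$ act directly on $\ket{a,b}$ rather than through the generators $G_1=S\otimes\Omega$ and $G_2=\Omega\otimes S$, and you obtain completeness and $\sum_k\Pi_{k,l}=P_l$ by dimension and rank counting instead of by explicit summation.
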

\begin{proof}
That $\{ \Pi_{k,l} \}_{0\leq k,l\leq d-1}$,  $\{P_l \}_{0\leq l \leq d-1} $, and $\{Q_k \}_{0\leq k \leq d-1}$ are sets of complete and orthogonal projectors is essentially clear by inspection. More technically, note that for $0\leq k,k^\prime, l,l^\prime \leq d-1$ we have
\begin{subequations}
\begin{align}
    \braket{\Psi_{k,l}| \Psi_{k^\prime,l^\prime}}& = \frac{1}{d^2} \sum_{a,b} \omega_d^{-a(b-k)+lb} \omega_d^{a(b-k^\prime)-l^\prime b} =\frac{1}{d^2} \sum_{a,b}\omega_d^{a(k-k^\prime)+b(l-l^\prime)} = \delta_{k,k^\prime} \delta_{l,l^\prime}, \label{eq:ProjectorsOrthonormal} \\
    \braket{\phi_k|\phi_{k^\prime}} &=  \frac{1}{d} \sum_{b} \omega_d^{bk^\prime-b^\prime k} =\delta_{k,k^\prime},
\end{align}
\end{subequations}
and thus, $\Pi_{k,l} \Pi_{k^\prime,l^\prime}=\delta_{k,k^\prime} \delta_{l,l^\prime} \Pi_{k,l}$, $P_l P_{l^\prime}=\delta_{l,l^\prime}P_l$, and $Q_k Q_{k^\prime}=\delta_{k,k^\prime}Q_k$. Moreover, 
\begin{align}
    \sum_{k,l} \Pi_{k,l}&= \sum_{k,l}\frac{1}{d^2} \sum_{a,b} \sum_{a^\prime,b^\prime} \omega_d^{a(b-k)-lb}\omega_d^{-a^\prime(b^\prime-k)+lb^\prime} \ketbra{a,b}{a^\prime,b^\prime}=\sum_{a,b} \sum_{a^\prime,b^\prime} \omega_d^{ab-a^\prime b^\prime} \left( \frac{1}{d^2} \sum_{k,l} \omega_d^{k(a^\prime-a)+l(b^\prime-b)} \right) \ketbra{a,b}{a^\prime,b^\prime} \nonumber \\
    &= \id\otimes\id,
\end{align}
and analogously for $\{P_l \}_{0\leq l\leq d-1}$ and $\{Q_k \}_{0\leq k\leq d-1}$. Additionally, a straightforward calculation reveals that 
\begin{align}
    \sum_k \Pi_{k,l}^{03}\mkern-4mu=\mkern-4mu\frac{1}{d^2}\sum_k \sum_{a,a^\prime} \sum_{b,b^\prime} \omega_d^{a(b\mkern-1mu-\mkern-1muk)\mkern-1mu-\mkern-1mu lb}\omega_d^{-a^\prime(b^\prime-k)+lb^\prime} \ketbra{a}{a^\prime}_0\! \otimes\! \ketbra{b}{b^\prime}_3 \!=\!\frac{1}{d}\sum_a \ketbra{a}{a}_0 \!\otimes\! \sum_{b,b^\prime}\omega_d^{(a-l)(b-b^\prime)} \ketbra{b}{b^\prime}_3=P_l,
\end{align}
and analogously for $Q_k$ (summing the $\Pi_{k,l}$ over $l$ instead). Thus, the sets are complete since $\{\Pi_{k,l} \}_{0\leq k,l\leq d-1}$ is a complete set of projectors. Furthermore, it holds that
\begin{align}
    \Delta_0 \Pi_{k,l}^{03}&=\frac{1}{d^2} \sum_{a} \sum_{b,b^\prime} \omega_d^{a(b-k)-lb}\omega_d^{-a(b^\prime-k)+lb^\prime} \ketbra{a}{a}_0 \otimes \ketbra{b}{b^\prime}_3  = \frac{1}{d^2} \sum_{a} \sum_{b,b^\prime} \omega_d^{(a-l)(b-b^\prime)}\ketbra{a}{a}_0 \otimes \ketbra{b}{b^\prime}_3 \nonumber \\
    &= \frac{1}{d} \sum_a \ketbra{a}{a}_0 \otimes \ketbra{\phi_{a-l}}{\phi_{a-l}}_3 = \frac{1}{d} P_l^{03},
\end{align}
and analogously for $\Delta_3\Pi_{k,l}^{03}=\frac{1}{d} \sum_b\ketbra{\phi_{b-k}}{\phi_{b-k}}_0 \otimes \ketbra{b}{b}_3= \frac{1}{d} Q_k^{03}$. Next, we show that $U_{n,m}$ can be written as a product of two Heisenberg-Weyl matrices (up to a phase) as in Eq.~\eqref{eq:U_asMonomialUnitaries}. The straightforward calculation
\begin{subequations}
\begin{align}
    &F S F^\dagger \ket{n}\!=\! F \frac{1}{\sqrt{d}} \sum_k \omega_d^{-kn} \ket{k+1}= \frac{1}{d}\sum_{k,l}  \omega_d^{-kn} \omega_d^{l(k+1)} \ket{l} \!=\!\sum_l \omega_d^l\left( \frac{1}{d} \sum_k \omega_d^{k(l-n)} \right) \ket{l}\!= \!\omega^n \ket{n} \!=\!\Omega \ket{n}, \\
    &F \Omega^{-1} F^\dagger \ket{n} \!=\!F \frac{1}{\sqrt{d}} \sum_k \omega_d^{-kn} \omega_d^{-k}\ket{k}\!=\!\frac{1}{d}\sum_{k,l}  \omega_d^{-kn} \omega_d^{-k} \omega_d^{kl} \ket{l} \!=\!\sum_l \left( \frac{1}{d} \sum_k \omega_d^{k(l-(n+1))} \right) \ket{l}\!=\!\ket{n+1}\!=\! S \ket{n},
\end{align}
\end{subequations}
reveals that $F S F^\dagger = \Omega$ and $F \Omega^{-1} F^\dagger= S$. Using this to evaluate $F W_{n,m}^* F^\dagger$ yields
\begin{align}
    F W_{n,m}^* F^\dagger= F S^n \Omega^{-m} F^\dagger= (F S^n F^\dagger)(F\Omega^{-m} F^\dagger)= \Omega^n S^m.
\end{align}
Moreover, note that $\Omega S=\omega_d S \Omega$ since
\begin{align}
    \Omega S = \sum_k \omega_d^{k+1} \ketbra{k+1}{k}= \omega_d \sum_k \omega_d^{k} \ketbra{k+1}{k}= \omega_d S \Omega.
\end{align}
By repeated application of the last identity, this implies $\Omega^n S^m= \omega_d^{nm} S^m \Omega^n$, and thus
\begin{align}
    U_{n,m}=W_{n,m} \otimes F W_{n,m}^* F^\dagger=  \omega_d^{nm} (W_{n,m} \otimes W_{m,n}).
\end{align}
Next, we show that $U_{n,m}$ can be decomposed into the projectors $\Pi_{k,l}$. Recall that $\Omega^a S^b=\omega_d ^{ab}S^b\Omega^a$. Thus, 
\begin{align}\label{eq:WeylProduct}
    W_{n,m} W_{k,l} & = S^n \Omega^m S^k \Omega^l =\omega_d^{mk} S^{n+k} \Omega^{m+l}=\omega_d^{mk} W_{n+k,m+l}.
\end{align}
This implies that all $U_{n,m}$ commute since 
\begin{subequations}
\begin{align}
    &(W_{n,m} \otimes W_{m,n})(W_{k,l} \otimes W_{l,k})= (W_{n,m}W_{k,l} \otimes W_{m,n}W_{l,k})= \omega_d^{mk+nl} (W_{n+k,m+l} \otimes W_{m+l,n+k}),  \\
    &(W_{k,l} \otimes W_{l,k})(W_{n,m} \otimes W_{m,n})= (W_{k,l}W_{n,m} \otimes W_{l,k}W_{m,n})= \omega_d^{ln+mk} (W_{n+k,m+l} \otimes W_{m+l,n+k})
\end{align}
\end{subequations}
implies $[U_{n,m},U_{k,l}]=0 \, \forall n,m,k,l$. Next, we show that the states $\ket{\Psi_{k,l}}$ are eigenstates of $U_{n,m}$. To this end, note that the two operators
\begin{align}
    G_1:= S\otimes \Omega \quad \text{and} \quad G_2:= \Omega \otimes S
\end{align}
generate all possible $U_{n,m}$ since 
\begin{align}
    G_1^n G_2^m=S^n\Omega^m\otimes \Omega^n S^m =\omega_d^{nm} W_{n,m} \otimes W_{m,n} =U_{n,m}.
\end{align}
A straightforward calculation reveals that 
\begin{align}\label{eq:G_1_action}
    G_1 \ket{\Psi_{k,l}}&\!=\!\frac{1}{d} \sum_{a,b}\! \omega_d^{a(b-k)-lb} \omega_d^{b}\ket{a+1,b}\!=\!\frac{1}{d} \sum_{a^\prime,b}\! \omega_d^{(a^\prime-1)(b-k)-lb+b} \ket{a^\prime,b}\!=\!\frac{\omega_d^k}{d} \sum_{a^\prime,b}\! \omega_d^{a^\prime(b-k)-lb} \ket{a^\prime,b}  =\omega_d^k  \ket{\Psi_{k,l}},
\end{align}
and
\begin{align}\label{eq:G_2_action}
    G_2  \!\ket{\Psi_{k,l}} &\!=\!\frac{1}{d} \!\sum_{a,b}\!\omega_d^{a(b-k)-lb} \omega_d^{a}\ket{a,b\!+\!1}\!=\!\frac{1}{d} \!\sum_{a,b^\prime}\! \omega_d^{a(b^\prime-1-k)-l(b^\prime-1)+a} \ket{a,b^\prime}\!=\!\frac{\omega_d^l}{d} \!\sum_{a,b^\prime}\! \omega_d^{a(b^\prime-k)-lb^\prime} \ket{a,b^\prime} \!=\!\omega_d^l \ket{\Psi_{k,l}}.
\end{align}
Combining the two expressions, we find that $U_{n,m} \ket{\Psi_{k,l}}=G_1^nG_2^m \ket{\Psi_{k,l}}= \omega_d^{nk+lm}\ket{\Psi_{k,l}}$. 
Lastly, expanding $U_{n,m}$ in terms of the complete set of projectors $\Pi_{k,l}$ yields
\begin{align}
    U_{n,m}=\sum_{k,l} U_{n,m} \Pi_{k,l}= \sum_{k,l} \omega_d^{nk+lm} \Pi_{k,l}.
\end{align}
\end{proof}

We are now ready to provide a characterization of (Choi states of) superchannels and supermaps that are invariant under the action of the twirling channel $\mT$ defined in Eq.~\eqref{eq:TwirlingChannel} when applied to parts of its input and output systems. 

\begin{lem}\label{lem:TwirlingInvariantChannel}
    Let $d_{0}=d_{3}=d$. Then $J_{\mS_1}^{0123} \in \Comb_1$ and 
    \begin{align}
        \idChan^{12} \otimes \mT^{03} \left(J_{\mS_1}^{0123}\right) =J_{\mS_1}^{0123}
    \end{align}
    iff 
    \begin{align}
       J_{\mS_1}^{0123}= \sum_{k,l} Y_{k,l}^{12} \otimes \Pi_{k,l}^{03},
    \end{align}
    with $Y_{k,l}^{12} \geq 0 \, \forall k,l\in \{0,\ldots, d-1\}$ and $\sum_{k,l} Y_{k,l}^{12} = M^1 \otimes \id^2$ for some $M^1\geq 0$ with $ \Tr{M^1}=d$.
\end{lem}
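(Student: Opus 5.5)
The plan is to use Lem.~\ref{lem:UnitaryRewriting} to turn the twirl $\mT$ into a pinching, and then translate the comb constraints of Eq.~\eqref{eq:CombSet} into conditions on the blocks.

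\textbf{Step 1: $\mT$ is a pinching.} By Eq.~\eqref{eq:U_asMonomialUnitaries}, $U_{n,m}=\sum_{k,l}\omega_d^{nk+lm}\Pi_{k,l}$. Hence
\begin{align}
\mT(X)=\sum_{k,l,k',l'}\Big(\tfrac{1}{d^2}\sum_{n,m}\omega_d^{n(k-k')+m(l-l')}\Big)\Pi_{k,l}X\Pi_{k',l'}=\sum_{k,l}\Pi_{k,l}X\Pi_{k,l}.
\end{align}
Since each $\Pi_{k,l}$ is rank one, $\idChan^{12}\otimes\mT^{03}$ maps $J^{0123}$ to $\sum_{k,l}Y_{k,l}^{12}\otimes\Pi_{k,l}^{03}$, where
\begin{align}
Y_{k,l}^{12}:=\bra{\Psi_{k,l}}_{03}J^{0123}\ket{\Psi_{k,l}}_{03}.
\end{align}
Thus invariance under the twirl is equivalent to $J$ having this block form. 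Moreover, $Y_{k,l}\ge 0$ whenever $J\ge0$. Conversely, if $J$ has this form with all $Y_{k,l}\ge0$, then $J\ge0$, because the $\Pi_{k,l}$ are orthogonal projectors.

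\textbf{Step 2: the comb constraints.} Since $d_0=d_3=d$, the conditions of Eq.~\eqref{eq:CombSet} for $N=1$ read: there is $J^{(0)}\ge0$ with
\begin{align}
\partTr{3}{J}=J^{(0),01}\otimes\id^2 \quad\text{and}\quad \partTr{1}{J^{(0)}}=\id^0 .
\end{align}
Now use Eq.~\eqref{eq:Q_k}. Because $\Pi_{k,l}$ is a pure state, $\partTr{3}{\Pi_{k,l}^{03}}=\tfrac{1}{d}\id^0$; this holds since the reduced state of $\ket{\Psi_{k,l}}$ on system $0$ is $\tfrac{1}{d}\sum_a\ketbra{a}{a}$, which is immediate from the explicit form. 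Therefore
\begin{align}
\partTr{3}{J}=\tfrac{1}{d}\,\id^0\otimes\sum_{k,l}Y_{k,l}^{12}.
\end{align}
This must equal $J^{(0),01}\otimes\id^2$. Tracing out system $0$ on both sides, and noting that a product with $\id^0$ can only match such an operator if the system-0 factor splits off, we get $\sum_{k,l}Y_{k,l}=M^1\otimes\id^2$ and $J^{(0)}=\tfrac1d\id^0\otimes M^1$. The final condition $\partTr{1}{J^{(0)}}=\id^0$ then becomes $\Tr{M^1}=d$. Positivity $M^1\ge0$ follows from $\sum_{k,l}Y_{k,l}\ge0$.

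\textbf{Step 3: the converse.} Given the stated form, set $J^{(0)}:=\tfrac1d\id^0\otimes M^1$ and reverse each computation above. Invariance under the twirl then holds because $\Pi_{k,l}\Pi_{k',l'}=\delta_{kk'}\delta_{ll'}\Pi_{k,l}$.

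\textbf{Main obstacle.} The only delicate point is the matching in Step 2. From $\tfrac1d\id^0\otimes Z^{12}=J^{(0),01}\otimes\id^2$ we must conclude $Z=M\otimes\id^2$. This follows by applying $\partTr{0}{\cdot}/d$ to both sides: the left side gives $\tfrac1d Z$, and the right side gives $\tfrac1d\partTr{0}{J^{(0)}}\otimes\id^2$. Hence $Z=\partTr{0}{J^{(0)}}\otimes\id^2$, and we can define $M:=\partTr{0}{J^{(0)}}$. Besides this, the remaining work is checking the reduced state of $\ket{\Psi_{k,l}}$, which is routine.
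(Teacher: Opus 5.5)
Your proof is correct. It differs from the paper's only in how you characterize the twirl-invariant operators.

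The paper expands $J_{\mS_1}^{0123}$ in the Weyl--Heisenberg basis $W_{a,b}^{0}\otimes W_{c,d}^{3}$. It then uses the conjugation rule $W_{n,m}W_{a,b}W_{n,m}^\dagger=\omega_d^{ma-bn}W_{a,b}$ and character orthogonality to show that only the terms $W_{a,b}\otimes W_{b,a}$ survive. Only at the end does it convert these into the projectors $\Pi_{k,l}$ via Eq.~\eqref{eq:U_asMonomialUnitaries}.

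You instead insert the spectral decomposition $U_{n,m}=\sum_{k,l}\omega_d^{nk+lm}\Pi_{k,l}$ directly into $\mT$ and recognize the twirl as the pinching onto the joint eigenbasis $\{\ket{\Psi_{k,l}}\}$. This has two advantages:
\begin{itemize}
\item it bypasses the Weyl-basis expansion and the conjugation identity entirely;
\item it yields the explicit formula $Y_{k,l}=\bra{\Psi_{k,l}}J\ket{\Psi_{k,l}}$, from which $Y_{k,l}\geq 0$ is immediate.
\end{itemize}
The paper instead obtains positivity of the $Y_{k,l}$ from the orthogonality of the projectors, which is equivalent.

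Your normalization step matches the paper's. Both arguments use $\partTr{3}{\Pi_{k,l}^{03}}=\id^0/d$ and then take partial traces to split off $M^1$.

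One small wording point: purity of $\Pi_{k,l}$ is not what gives $\partTr{3}{\Pi_{k,l}^{03}}=\id^0/d$. The reason is that $\ket{\Psi_{k,l}}$ has a maximally mixed marginal, which you do verify. Equivalently, it follows from Eq.~\eqref{eq:Q_k} together with $\partTr{3}{\Delta_3 X}=\partTr{3}{X}$.
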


\begin{proof}
  The Weyl-Heisenberg matrices form a basis of $\mathcal{L}(\mathbb{C}^d)$~\cite{Schwinger1960}, and thus, we can expand any $J_{\mS_1}^{0123}$ as
\begin{align}
    J_{\mS_1}^{0123}= \sum_{a,b,c,d} J_{abcd}^{12} \otimes W_{a,b}^{0} \otimes W_{c,d}^{3}.
\end{align}
 Moreover, note that Eq.~\eqref{eq:WeylProduct} implies 
\begin{align}\label{eq:WeylSandwiched}
    W_{n,m} W_{a,b} W_{n,m}^\dagger& = \omega_d^{ma} S^{n+a} \Omega^{m+b} \Omega^{-m} S^{-n}= \omega_d^{ma} S^{n+a} \Omega^{b}  S^{-n} = \omega_d^{ma-bn} S^a \Omega^b = \omega_d^{ma-bn} W_{a,b}.
\end{align}
Thus, the action of the twirling channel $\idChan^{12} \otimes \mT^{03}$ can be written as (omitting system indices for better readability)
\begin{align}
    \idChan^{12} \otimes \mT^{03} \left(J_{\mS_1}^{0123}\right) \!&=\! \frac{1}{d^2} \sum_{n,m}\sum_{a,b,c,d}\! J_{abcd} \otimes W_{n,m}W_{a,b}W_{n,m}^\dagger \otimes W_{m,n}W_{c,d}W_{m,n}^\dagger \nonumber \\
    &=\sum_{a,b,c,d}  \left(\!\frac{1}{d^2}\sum_{n,m}\!\omega_d^{ma-bn} \omega_d^{nc-dm}\! \right) J_{abcd}\otimes W_{a,b} \otimes W_{c,d} =\sum_{a,b}  J_{abba}\otimes W_{a,b} \otimes W_{b,a}.
\end{align}
This implies that any operator $J_{\mS_1}^{0123}$ that is invariant under the action of the twirling can be written as
\begin{align}
    J_{\mS_1}^{0123}=\sum_{a,b} X_{ab}^{12} \otimes W_{a,b}^{0} \otimes W_{b,a}^{3}
\end{align}
for some $X_{ab}^{12}$. Using Lem.~\ref{lem:UnitaryRewriting}, and in particular, Eq.~\eqref{eq:U_asMonomialUnitaries}, we obtain 
\begin{align}\label{eq:J_Twirled}
    J_{\mS_1}^{0123}=\sum_{k,l} \left(\sum_{a,b}  \omega_d^{-ab}\omega_d^{ak+lb}X_{ab}^{12}\right) \otimes  \Pi_{k,l}^{03} =: \sum_{k,l} Y_{k,l}^{12} \otimes \Pi_{k,l}^{03}.
\end{align}
Now, we derive the conditions that $J_{\mS_1}^{0123}$ must satisfy in order to correspond to a superchannel $\mS_1$. Since the $\Pi_{k,l}^{03}$ are orthogonal projectors, we have that $J_{\mS_1}^{0123} \geq 0$ iff $Y_{k,l}^{12}\geq 0 \, \forall k,l$. Furthermore, using that 
\begin{align}\label{eq:partTr_Pi_kl}
    \partTr{3}{\Pi_{k,l}^{03}}&= \frac{1}{d^2} \sum_{a,a^\prime} \sum_{b} \omega^{a(b-k)-lb}\omega^{-a^\prime(b-k)+lb}  \ketbra{a}{a^\prime}_0 = \frac{1}{d^2} \sum_{a,a^\prime} \left(\sum_{b}\omega^{b(a-a^\prime)} \right) \omega^{k(a^\prime -a)}\ketbra{a}{a^\prime}_0 =\frac{\id_0}{d},
\end{align}
we see that the normalization constraint $\partTr{3}{J_{\mS_1}^{0123}}=L^{01} \otimes \id^2$, for some $L^{01}\geq 0$ with $\partTr{1}{L^{01}}=\id^0$, is satisfied iff
\begin{align}\label{eq:NormalizationConstrained_symmetry}
    \sum_{k,l} Y_{k,l}^{12} \otimes \id^0 = d\, L^{01} \otimes \id^2.
\end{align}
Next, we show that 
\begin{align}\label{eq:NormalizationConstrained_reduced}
    \sum_{k,l} Y_{k,l}^{12} = M^1\otimes \id^2, \quad \text{and}\quad L^{01}= \frac{1}{d} \id^0 \otimes M^1.
\end{align}
for some $M \geq 0$ with $\Tr{M}=d$. To this end, assume that Eq.~\eqref{eq:NormalizationConstrained_symmetry} holds for a feasible $L^{01}$. Taking partial traces over systems $0$ and $2$ of Eq.~\eqref{eq:NormalizationConstrained_symmetry} respectively yields
\begin{align}\label{eq:NormalizationConstraint_partialTRaces}
    \sum_{k,l} Y_{k,l}^{12}= \partTr{0}{L^{01}} \otimes \id^2, \quad \text{and} \quad \partTr{2}{\sum_{k,l}Y_{k,l}^{12}} \otimes \id^0= d d_2 L^{01},
\end{align}
where $d_X$ denotes the dimension of system $X$ (and $d_0=d_3=d$). Inserting the former into the latter gives
\begin{align}
    dd_2 \,L^{01}= \partTr{2}{\partTr{0}{L^{01}} \otimes \id^2} \otimes \id^0=  d_2 \,\partTr{0}{L^{01}}\otimes \id^0,
\end{align}
and thus $L^{01}$ must be of the form $L^{01}=\frac{1}{d} M^1 \otimes \id^0$ for some $M\geq 0$ with $\Tr{M}=d.$ Inserting this back into Eq.~\eqref{eq:NormalizationConstraint_partialTRaces} yields
\begin{align}
    \sum_{k,l} Y_{k,l}^{12} = M^1 \otimes \id^2.
\end{align}
Conversely, if Eq.~\eqref{eq:NormalizationConstrained_reduced} holds, then Eq.~\eqref{eq:NormalizationConstrained_symmetry} is obviously satisfied.
\end{proof}

\subsubsection{Dilution of dynamical coherence}\label{sec:coherece_dilution}
In this section, we provide the optimal conversion distance between multiple copies of a Fourier transform $\mF_d$ and an arbitrary target channel $\mE$ in the sequential and parallel case. To this end, we need the following two Lemmas.

\begin{lem}\label{cor:R_Fd}
For any channel $\mE^{A\to B}$, it holds that 
\begin{align}\label{eq:R_max_bounds}
    R_{\max,\MIO}(\mE^{A\to B}) \leq d_B \qquad\text{and} \qquad  R_{\max,\DI}(\mE^{A\to B}) \leq d_A,
\end{align}
where $R_{\max,\cdot}$ is defined in Eq.~\eqref{eq:def_R_max}. Furthermore, for $\O \in \{\MIO,\DI\}$ and $\mF_d$ the $d$-dimensional Fourier transform, 
    \begin{align}\label{eq:R_min_R_max_d}
         R_{\max,\O}(\mF_d)=d, 
    \end{align}
    and
    \begin{align}
        R_{\max,\MIO}(\mF_d)=R_{\min,\aff(\MIO)}(\mF_d),
    \end{align}
    where $R_{\min,\aff(\MIO)}$ is defined in Eq.~\eqref{eq:min_entropy_aff}.
\end{lem}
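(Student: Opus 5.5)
The plan is to prove the upper bounds in Eq.~\eqref{eq:R_max_bounds} by explicit decompositions, and the equalities for $\mF_d$ by pairing these decompositions with matching dual bounds.

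\textbf{Upper bounds.} For MIO, write $\mE+(d_B-1)\mM=d_B\mK$. A natural choice is $\mK(\rho)=\Tr{\rho}\,\id/d_B$, which is free in MIO; it is also in DI because its output is constant. We then need
\begin{align}
\mM=\frac{d_B\mK-\mE}{d_B-1}
\end{align}
to be a channel. Trace preservation is automatic. For complete positivity we use $J_{\mE}\le \id_A\otimes\id_B$, which holds because $\partTr{B}{J_\mE}=\id_A$ and, for $X\ge0$ with $\partTr{B}{X}=\id$, we have $X\le \norm{\cdot}$-bounded by $\id_A\otimes\id_B$; concretely, $X\le \Tr{\cdot}$-type bounds give $X\le d_A$ in general, so this route needs care. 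Since $\lambda_{\max}(J_\mE)$ can be as large as $d_A$, $\mK$ should instead be $\mK=\Delta_B\circ\mE'$ for a suitable $\mE'$.

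The cleaner choice is $\mK(\rho)=\sum_i \Tr{\rho}\ketbra{i}{i}/d_B$ only when $d_A$ is small. I will therefore switch to the standard argument: any state satisfies $\sigma\le d_B\,\Delta(\sigma)$, since $\Delta(\sigma)=\frac1{d_B}\sum_k Z^k\sigma Z^{-k}$ is an average of positive terms that includes $\sigma$ itself. Applying this at the channel level, $J_\mE\le d_B\,\Delta_B J_\mE$. So $\mK=\Delta_B\circ\mE$, which is MIO because its outputs are incoherent, gives $\mE\le d_B\mK$, and $\mM=(d_B\mK-\mE)/(d_B-1)$ is CP and TP.

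For DI the analogous choice is $\mK=\mE\circ\Delta_A$, which is DI because $\Delta_B\mK=\Delta_B\mK\Delta_A$. Here we need $J_\mE\le d_A\,\Delta_A J_\mE$, which follows from the same twirl identity applied on system $A$ of the Choi state.

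\textbf{Equalities for $\mF_d$.} For the lower bounds, I take $\lambda$ from the max-robustness decomposition. Fix a probe $\psi$ and projector $P$ with $\Tr{P(\idChan\otimes\mF_d)(\psi)}=1$ and $\Tr{P(\idChan\otimes\mM)(\psi)}=1/d$ for all free $\mM$. Then $1\le(1+\lambda)\Tr{P\,\mK(\psi)}=(1+\lambda)/d$, which gives $R_{\max}\ge d$.

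For MIO, take $\psi=\ketbra{0}{0}$ and $P=\ketbra{\phi_0}{\phi_0}$ with $\ket{\phi_0}=F_d\ket0$. Every MIO channel outputs an incoherent state, whose overlap with the maximally coherent $\phi_0$ is exactly $1/d$. This is the affine condition, so it also shows $R_{\min,\aff(\MIO)}(\mF_d)\ge d$. Combined with the general $R_{\min,\aff}\le R_{\max}$ relation (or a direct upper bound from Eq.~\eqref{eq:min_entropy_aff}), this yields equality.

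For DI, take the input $F_d^\dagger$-type coherent state $\ket{\phi_k}$ and measure $\ketbra{k}{k}$: since $\mF_d$ maps it to $\ket{k}$, a DI channel's diagonal output depends only on $\Delta(\phi_k)=\id/d$. Averaging over $k$ with a classically correlated reference, i.e.\ $\psi=\frac1d\sum_k\ketbra{k}{k}_R\otimes\phi_k$ and $P=\sum_k\ketbra{kk}{kk}$, gives free value $\frac1d\sum_k\langle k|\mM(\id/d)|k\rangle=1/d$.

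\textbf{Main obstacle.} The hardest step is making the operator inequalities $J_\mE\le d\,\Delta J_\mE$ rigorous on the correct subsystem, and checking that the DI witness is valid at the Choi/purification level.
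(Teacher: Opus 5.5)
Your proposal is correct and follows essentially the paper's proof: the pinching bound $J_\mE\le d_X\Delta_X J_\mE$ with $\mK=\Delta\circ\mE$ (MIO) or $\mK=\mE\circ\Delta$ (DI) gives the upper bounds, the witness $\ket{0}\mapsto F_d\ket{0}$ gives the MIO lower bound and doubles as the affine certificate for $R_{\min,\aff(\MIO)}(\mF_d)\ge d$, and the states $F_d^\dagger\ket{k}$ averaged over $k$ give the DI lower bound. Your classically correlated probe state for DI is only a repackaging of that average. In the write-up, delete the abandoned first attempt, since the claim $J_\mE\le\id_A\otimes\id_B$ is false in general (e.g.\ for the identity channel).
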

\begin{proof}
The generalized robustness can be rewritten as~\cite{Takagi2019}
\begin{align}\label{eq:R_max_rewritten}
    R_{\max,\O}(\mN) \!= \!\inf\left\{ 1+\lambda :\mathcal{N}+\lambda\mathcal{L} \!= \!(1+\lambda)\mathcal{M}, \mathcal{L}\in\CPTP,\mathcal{M}\in\O,\lambda\geq 0\right\}\! = \!\inf\left\{ s\geq 1: J_\mN \leq sJ_\mM,\mM\in \O \right\}.
\end{align}
For completeness, the last equality follows from the following argument: Consider a feasible point such that $\mathcal{N}+\lambda\mathcal{L} = (1+\lambda)\mathcal{M}$. With $s:=1+\lambda$, the Choi states satisfy $s J_\mM =J_\mN+\lambda J_\mL\geq J_\mN $. Conversely, assume $J_\mN\leq s J_\mM$ for some $s\geq 1$ and $\mM\in\O$. Without loss of generality, assume that $s>1$, otherwise there is nothing to show (since $0\leq J_\mN\leq  J_\mM$ together with $\Tr{J_\mN- J_\mM}=0$ implies that $J_\mN=  J_\mM$). Then, define the (by assumption) positive semidefinite operator $J_\mL=(s-1)^{-1}(sJ_\mM-J_\mN) \geq 0$, which corresponds to the Choi state of a channel $\mL$. Substituting $\lambda=s-1$, we obtain $J_{\mN}+\lambda J_{\mL}=sJ_{\mM}=(1+\lambda)J_{\mM}$. 

Next, we proceed to show the bound in Eq.~\eqref{eq:R_max_bounds}. To this end, notice that from the pinching inequality~\cite{Bhatia2000, Hayashi_2002, Winter2025}, it follows that 
\begin{align}\label{eq:Pinching}
    \left(\sum_{i=0}^{d_A-1} \ketbra{i}{i}_A\otimes \id_B\right) J_\mE^{AB}  \left(\sum_{i=0}^{d_A-1} \ketbra{i}{i}_A\otimes \id_B\right) \leq d_A \sum_{i=0}^{d_A-1} (\ketbra{i}{i}_A\otimes \id_B) J_\mE^{AB}(\ketbra{i}{i}_A\otimes \id_B) = d_A \Delta_A J_\mE^{AB},
\end{align}
and analogously for a projection on system $B$. Thus, $J_\mE^{AB} \leq d_X \Delta_X J_{\mE}^{AB}$ for $X \in \{A,B\}$. Clearly, the channel defined by $J_{\mE\circ \Delta}^{AB}=\Delta_A J_{\mE}^{AB}$ is detection-incoherent, and $J_{\Delta \circ \mE}^{AB}=\Delta_B J_{\mE}^{AB}$ is creation-incoherent. Inserting this as a feasible point into Eq.~\eqref{eq:R_max_rewritten} yields the desired upper bounds for $R_{\max, \MIO}(\mE)$ and $R_{\max, \DI}(\mE)$, respectively. 

 As a direct consequence, we have that $R_{\max, \O}(\mF_d)\leq d$ for $\O\in \{\MIO,\DI\}$. To get the reverse inequality, let $s\geq 1$ and $\mM\in \O$ denote a feasible pair $(s,\mM)$ such that $J_{\mF_d} \leq sJ_{\mM}$. This implies that $\mF_d(\rho)\leq s\mM(\rho)$ for every state $\rho$. In particular, for $\O=\MIO$, choose $\rho=\ketbra{0}{0}$ and let $\ket{\phi}=F_d \ket{0}$. Thus 
\begin{align}
    1=\braket{\phi|\mF_d(\rho)|\phi}\leq s\braket{\phi|\mM(\ketbra{0}{0})|\phi}= s \braket{\phi|\Delta\mM(\ketbra{0}{0})|\phi} = \frac{s}{d},
\end{align}  
where we used that $\mM(\ketbra{0}{0})$ is diagonal and has trace one together with $\Delta(\phi)=\tfrac{\id}{d}.$ Since $s$ was feasible but arbitrary, we have $R_{\max,\MIO}(\mF_d)\geq d$. For $\O=\DI$, let $\rho_i=F_d^\dagger\ketbra{i}{i} F_d$, which satisfies $\Delta(\rho_i)=\tfrac{\id}{d}$. We obtain
\begin{align}
    1=\braket{i|\mF_d(\rho_i)|i}\leq s\braket{i|\mM(\rho_i)|i} =s\braket{i|\mM(\Delta\rho_i)|i}= \frac{s}{d}\braket{i|\mM(\id)|i},
\end{align}
which holds for arbitrary $i$. Thus, we can average over all of them, yielding $d\leq s$. This establishes $R_{\max,\O}(\mF_d)=d$ for $\O\in \{\MIO,\DI\}$.

Next, we show that $R_{\max,\MIO}(\mF_d)=R_{\min,\aff(\MIO)}(\mF_d)$. To this end, let us choose the state $\psi=\ketbra{0}{0}\otimes \ketbra{0}{0}$ in Eq.~\eqref{eq:min_entropy_aff}. Then, we have $\Tr{(\idChan \otimes \mF_d)(\psi)(\idChan \otimes \mM)(\psi)}=\braket{\phi|\mM(\ketbra{0}{0})| \phi}=\braket{\phi|\Delta \mM(\ketbra{0}{0})| \phi}=\tfrac{1}{d} \, \forall \mM\in \MIO$, and as such
\begin{align}
    R_{\min,\aff(\MIO)}(\mF_d)^{-1}\leq \frac{1}{d} \Leftrightarrow  R_{\min,\aff(\MIO)}(\mF_d) \geq d.
\end{align}
To conclude, note that the reverse inequality $R_{\min,\aff(\MIO)}(\mF_d)\leq R_{\max,\MIO}(\mF_d)=d$ follows directly from Eqs.~\eqref{eq:min_entropy_combined}.
\end{proof}

In particular, in the creation-incoherent setting, the fact that $R_{\max,\MIO}(\mF_d) = R_{\min,\aff(\MIO)}(\mF_d)$ will become relevant in the distillation of dynamical resources under maximally free superchannels because it allows us to directly apply the results from Sec.~\ref{sec:Appen_GeneralConversion}. However, in the detection-incoherent setting, we cannot do the same since $R_{\max,\DI}(\mF_d) \neq R_{\min,\aff(\DI)}(\mF_d)$ in general, as we will demonstrate in a second. Nevertheless, we will still be able to find a separate argument to characterize the optimal conversion distances under maximally free superchannels in the detection-incoherent setting. As promised, we now show that in general, $R_{\max,\DI}(\mF_d)\neq R_{\min,\aff(\DI)}(\mF_d)$, and in particular that $R_{\max,\DI}(\mF_2)=2 \neq 0=R_{\min,\aff(\DI)}(\mF_2)$. To see that this is true, consider
Eq.~\eqref{eq:min_entropy_aff} and assume that $(\lambda,\psi_{RA})$ denotes a feasible point, which must satisfy
\begin{align}\label{eq:DI_affine_cond}
    \Tr{(\idChan\otimes \mF_2)(\psi_{RA})(\idChan\otimes \mM)(\psi_{RA})}=\lambda \quad \forall \mM\in \DI.
\end{align}
We now show that this leads to a contradiction, and thus there exists no such feasible point. In particular, Eq.~\eqref{eq:DI_affine_cond} would need to hold for the replacement channel $\mR_\tau(X)=\Tr{X} \tau$, where $\tau$ is an arbitrary but fixed state since $\mR_\tau\in \DI$ for every $\tau$. With $\rho_A:=\partTr{R}{\psi_{RA}}$ and $\rho_R:=\partTr{A}{\psi_{RA}}$, this would imply that 
\begin{align}
    \lambda&=  \Tr{(\idChan\otimes \mF_2)(\psi_{RA})(\idChan\otimes \mR_\tau)(\psi_{RA})}= \Tr{(\idChan\otimes \mF_2)(\psi_{RA})( \rho_R \otimes \tau_A)}=\Tr{\psi_{RA}(\rho_R\otimes H^\dagger\tau_A H)}  \nonumber \\
    &= \Tr{\partTr{R}{(\rho_R\otimes \id_A)\psi_{RA})} H^\dagger \tau H} = \Tr{H\rho_A^2 H^\dagger \tau_A},
\end{align}
where we used that  $\rho_A^2=\partTr{R}{(\rho_R \otimes \id) \psi_{AR}}$ (following, for instance, from the Schmidt decomposition of $\psi$). Since $\tau$ was arbitrary, choosing $\tau$ to be any eigenstate of $H \rho_A^2 H^\dagger$ implies that all eigenvalues of $H \rho_A^2 H^\dagger$ are equal to $\lambda$. As such, $\rho_A^2=\lambda \id$. Thus, positive semidefiniteness and the normalization of $\rho_A$ yields $\rho_A=\id/2$ and $\lambda=1/4$. However, in Eq.~\eqref{eq:DI_affine_cond}, we could also choose the identity channel, which results in 
\begin{align}
    \lambda=\Tr{(\idChan\otimes\mF_2)(\psi_{RA}) \,\psi_{RA}}=\left|\bra{\psi_{RA}}(\id\otimes H)\ket{\psi_{RA}}\right|^2=\left|\Tr{\rho_A H}\right|^2=\frac{1}{4}\left|\Tr{H}\right|^2=0,
\end{align}
and is a contradiction to $\lambda=1/4$. Thus, the feasible set is empty, and by our convention, $R_{\min,\aff(\DI)}(\mF_2)=0$.

\begin{lemma}\label{lem:DI_xiBound}
    Let $\mF_d$ denote the $d$-dimensional quantum Fourier transform, and let $\mU=U \cdot U^\dagger$ denote a $D$-dimensional unitary channel $\mU$. Suppose $D>d$, then every $\mS \in \nDI$ satisfies
    \begin{align}
        \frac{1}{2}\norm{\mS[\mF_d]-\mU}_\diamond \geq \xi(U):= \min_{0\leq i,j\leq D-1} |\braket{i|U|j}|^2.  
    \end{align}
\end{lemma}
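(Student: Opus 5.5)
The plan is to push the resource bound for $\mF_d$ through the free network. Let $\mS\in\nDI_1$ be implemented by $\mM_0^{C\to AR},\mM_1^{BR\to D'}\in\DI$ with $d_A=d_B=d$, and set $\mN:=\mS[\mF_d]=\mM_1\circ(\mF_d\otimes\idChan^R)\circ\mM_0$. The bound should come from a single-channel quantity.

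\emph{Step 1: generalized robustness.} By Lem.~\ref{cor:R_Fd} and Eq.~\eqref{eq:R_max_rewritten}, $J_{\mF_d}\le d\,J_{\mK}$ for some $\mK\in\DI$. Free networks preserve the Choi-state inequality, since the link product with positive Choi states is positivity preserving. Also $\mM_1\circ(\mK\otimes\idChan)\circ\mM_0\in\DI$ by assumptions \ref{assump:Composition} and \ref{assumpt:complfree}. Hence
\begin{align*}
J_{\mN}\le d\,J_{\mM}\quad\text{for some }\mM\in\DI,
\end{align*}
that is, $R_{\max,\DI}(\mN)\le d$. Equivalently, $\mN=d\mM-(d-1)\mL$ for some channel $\mL$.

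\emph{Step 2: a test exploiting detection-incoherence.} Use the unentangled inputs $\psi_i:=U^\dagger\ketbra{i}{i}U$ and the incoherent measurement $\{\ketbra{i}{i}\}$ at the output. The ideal channel $\mU$ returns outcome $i$ with certainty, so Eq.~\eqref{eq:boundTraceDist} gives
\begin{align*}
\tfrac12\norm{\mN-\mU}_\diamond\ \ge\ 1-q_i,\qquad q_i:=\braket{i|\mN(\psi_i)|i},
\end{align*}
for every $i$. Since $\mM$ is DI, $\braket{i|\mM(\psi_i)|i}=\braket{i|\mM(\Delta\psi_i)|i}=\sum_j|U_{ij}|^2\,p(i|j)$, where $p(i|j):=\braket{i|\mM(\ketbra{j}{j})|i}$ is a classical channel on $D$ symbols. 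Step 1 then yields $q_i\le d\sum_j|U_{ij}|^2p(i|j)$.

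I would also use the same inequality for the incoherent inputs $\ketbra{j}{j}$. The constraint $\braket{i|\mN(\ketbra{j}{j})|i}\le d\,p(i|j)$ together with $\sum_i p(i|j)=1$ limits how concentrated $p$ can be. For the ideal channel one needs output distribution $|U_{ij}|^2$, so these incoherent inputs provide additional lower bounds on the distance.

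\emph{Step 3 (main obstacle): combinatorial extraction of $\xi(U)$.} We must show that $q_i>1-\xi$ for all $i$ is impossible when $D>d$. The natural route is averaging: $\sum_i q_i\le d\sum_j\sum_i|U_{ij}|^2p(i|j)$. However, bounding the inner sum by $\max_i|U_{ij}|^2\le 1-(D-1)\xi$ is too weak.

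I would therefore sharpen Step 2 in one of two ways. The first is to use the full decomposition $q_i=d\,m_i-(d-1)\ell_i$ with $\ell_i\ge0$ obtained from the same kind of test on $\mL$. The second is to use the DI structure more finely: dephasing the memory $R$ in $\mM_0$ shows that coherence in $\mN$ only flows through the $d$-dimensional system. Concretely, I expect the right statement to be the following. For each $i$, the operator $\Delta_R\mM_0(\psi_i)$ has at most $d$-dimensional coherent blocks. Consequently, $\braket{i|\mN(\rho)|i}$ is a sum over at most $d$ coherence terms, and with $D>d$ input basis states at least one weight $|U_{ij}|^2\ge\xi$ must be routed to the wrong outcome. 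Making this pigeonhole argument precise is where I expect the real work to lie. If it fails, I would fall back on an entangled test state $\ket{\phi}$ instead of the $\psi_i$.
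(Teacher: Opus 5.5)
There is a genuine gap, and it lies in the premise of Step~1 that the bound can come from a single-channel quantity. Your inequality $J_{\mN}\le d\,J_{\mM}$ with $\mM\in\DI$, i.e.\ $R_{\max,\DI}(\mN)\le d$, is true. But it holds for $\mS[\mF_d]$ with \emph{any} $\mS\in\mDI_1$. Moreover, by Thm.~\ref{thm:Cost}, every channel with $R_{\max,\DI}\le d$ is produced exactly from $\mF_d$ by some superchannel in $\cDI_1$, and the lemma exists precisely to separate $\nDI_1$ from $\cDI_1$. Concretely, the unitary in the proof of Thm.~\ref{thm:Cost} ($d=2$, $D=3$) has $R_{\max,\DI}(\mU)\le 1.89<2$, yet $\xi(U)=2/45$. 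Hence $\mN=\mU$ satisfies every constraint you derive in Steps~1--2 with $q_i=1$ for all $i$, while $\tfrac12\norm{\mN-\mU}_\diamond=0$. This covers the bound $q_i\le d\sum_j|U_{ij}|^2p(i|j)$, the incoherent-input bounds, and the decomposition $\mN=d\mM-(d-1)\mL$. So Step~3 as posed cannot succeed, and your first proposed sharpening is a dead end as well.

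The missing idea is to keep the two DI channels of the network separate instead of merging them into one DI channel. Write $\mN=\mM_1\circ(\mF_d\otimes\idChan^R)\circ\mM_0$.
\begin{itemize}
\item Detection-incoherence of $\mM_1$ makes $\mM_1^\dagger(\ketbra{i}{i})=\sum_{j,r}p_{i|j,r}\ketbra{j,r}{j,r}$ diagonal on slot and memory. Hence $q_i=\sum_{j,r}p_{i|j,r}\Tr{X_{j,r}\psi_i}$, where $X_{j,r}:=\mM_0^\dagger(F_d^\dagger\ketbra{j}{j}F_d\otimes\ketbra{r}{r})$.
\item Detection-incoherence of $\mM_0$ makes $Y_r:=\sum_jX_{j,r}=\mM_0^\dagger(\id\otimes\ketbra{r}{r})$ diagonal, with $\sum_r\Tr{Y_r}=D$.
\item Average over $i$; your per-$i$ tests suffice, since $\max_i(1-q_i)\ge 1-\frac1D\sum_iq_i$. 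Replacing $p_{i|j,r}$ by the best guess $i_{j,r}$ gives $\frac1D\sum_iq_i\le\frac1D\sum_{j,r}\Tr{X_{j,r}\psi_{i_{j,r}}}$.
\item The pigeonhole is now clean. For fixed $r$ there are only $d<D$ outcomes $j$, so some index $k_r$ is never guessed, and $\psi_{i_{j,r}}\le\id-\psi_{k_r}$. Therefore $\sum_j\Tr{X_{j,r}\psi_{i_{j,r}}}\le\Tr{Y_r}-\Tr{Y_r\psi_{k_r}}\le(1-\xi)\Tr{Y_r}$, because $Y_r$ is diagonal and $\Delta\psi_{k_r}\ge\xi\,\id$.
\item Summing over $r$ yields $\frac1D\sum_iq_i\le 1-\xi$.
\end{itemize}
Your ``second way'' points in this direction, but the operative facts are the diagonality of $\mM_1^\dagger(\ketbra{i}{i})$ and of $Y_r$. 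The block structure of $\Delta_R\mM_0(\psi_i)$ holds for any channel and carries no information.
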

\begin{proof}
Let $R$ denote an arbitrary finite-dimensional auxiliary system and set
    \begin{align}\label{eq:N_def}
         \mN:=\mS[\mF_d]=\mM_1 \circ \left(\mF_d\otimes \idChan^R\right)\circ \mM_0 \quad \mM_0,\mM_1 \in \DI.
    \end{align}
 Define $\phi_i:=U^\dagger\ketbra{i}{i}U \, \forall i\in \{0,\ldots D-1\}$. Then, $\mU(\phi_i)=\ketbra{i}{i}$ and $\sum_{i=0}^{D-1} \phi_i=\id$. Choose the state $\psi=\tfrac{1}{D} \sum_{i=0}^{D-1} \ketbra{i}{i}_{R^\prime} \otimes \phi_i$, then
    \begin{align}\label{eq:Diamond_lowerBound_N}
         \frac{1}{2}\norm{\mN-\mU}_\diamond &\geq \frac{1}{2}\norm{\frac{1}{D}\sum_{i=0}^{D-1} \ketbra{i}{i}_{R^\prime} \otimes \left(\mN(\phi_i)-\ketbra{i}{i}\right)}_1 = \frac{1}{2D} \sum_{i=0}^{D-1}\norm{\left(\mN(\phi_i)-\ketbra{i}{i}\right)}_1  \nonumber \\
         &\overset{\eqref{eq:boundTraceDist}}{\geq }  1-  \frac{1}{D} \sum_{i=0}^{D-1} \Tr{\ketbra{i}{i}\mN(\phi_i)} =:1- B(\mN,\mU).
    \end{align}
We now proceed to further bound $B(\mN,\mU)$. To this end, notice that $\mM_1\in\DI$ implies that $\mM_1^\dagger(\ketbra{i}{i})=\Delta(\mM_1^\dagger(\ketbra{i}{i}))$ and $\sum_i \mM_1^\dagger(\ketbra{i}{i})=\id$. Thus, there exists a conditional probability distribution $p_{i|j,r}\geq0$ with $\sum_i p_{i|j,r}=1$ such that we can express
\begin{align}
    \mM_1^\dagger(\ketbra{i}{i}) =\sum_{j=0}^{d-1} \sum_{r=0}^{d_R-1} p_{i|j,r} \ketbra{j,r}{j,r}_{2R}.
\end{align}
Also define the operators
\begin{align}
    X_{j,r}:= \mM_0^\dagger\left(F_d^\dagger \ketbra{j}{j}_2 F_d \otimes \ketbra{r}{r}_R\right), \quad \text{and} \quad Y_r:= \sum_{j=0}^{d-1}X_{j,r},
\end{align}
which satisfy $X_{j,r}\geq 0$, $\sum_{j,r} X_{j,r}=\id$, $Y_r\geq 0$. Additionally, because $ Y_r=\sum_{j=0}^{d-1} \mM_0^\dagger\left(F_d^\dagger \ketbra{j}{j}_2 F_d \otimes \ketbra{r}{r}_R\right) =  \mM_0^\dagger\left(\id^2 \otimes \ketbra{r}{r}_R\right)= \Delta \mM_0^\dagger\left(\id^2 \otimes \ketbra{r}{r}_R\right)$ as $\mM_0\in \DI$, it follows that
\begin{align}\label{eq:Y_normalization}
    Y_r=\Delta(Y_r) \quad \forall r\in\{0,\ldots, d_R-1\} \quad \text{and} \quad \sum_{r=0}^{d_R-1} \Tr{Y_r}=D.
\end{align}
Moreover, note that for each fixed pair $(j,r)$ and each $i$ we can bound 
\begin{align}
    \Tr{X_{j,r} \phi_i} \leq \Tr{X_{j,r} \phi_{i_{j,r}}},
\end{align}
where $i_{j,r}:= \min \argmax_{0\leq i\leq D-1} \Tr{X_{j,r} \phi_i}$. Picking up from Eq.~\eqref{eq:Diamond_lowerBound_N}, we can now insert Eq.~\eqref{eq:N_def} to rewrite and bound $B(\mN,\mU)$ (using that $p_{i|j,r}$ is a conditional probability distribution) as
\begin{align}\label{eq:B_bound_intermediate}
    B(\mN,\mU)&=\frac{1}{D} \sum_{i=0}^{D-1} \Tr{\ketbra{i}{i}\mN(\phi_i)}= \frac{1}{D} \sum_{i=0}^{D-1} \Tr{\left(\mM_0^\dagger\circ \left(\idChan^R\otimes \mF_d^\dagger\right) \mM_1^\dagger(\ketbra{i}{i}) \right)\phi_i} = \frac{1}{D} \sum_{i=0}^{D-1} \sum_{j=0}^{d-1}\sum_{r=0}^{d_R-1} p_{i|j,r} \Tr{X_{j,r} \phi_i} \nonumber \\\
    &\leq \frac{1}{D}\sum_{j=0}^{d-1}\sum_{r=0}^{d_R-1}\Tr{X_{j,r} \phi_{i_{j,r}}}.
\end{align}
For a fixed $r\in \{0,\ldots, d_R-1\}$ let $S_r:= \{i_{j,r}: j\in \{0\ldots,d-1\}\}$. Hence, $|S_r|\leq d < D$, and thus there exists an integer $k_r \in \{0,\ldots, D-1\}\setminus S_r$. As such, for every fixed $r$, we have $k_r \neq i_{j,r} \, \forall j\in \{0,\ldots,d-1\}$. Recall that $\{\phi_i\}$ are mutually orthogonal rank-one projectors, and thus $\phi_{i_{j,r}} \leq \id - \phi_{k_r}$ for each $r$. Moreover, since $Y_r$ is diagonal according to Eq.~\eqref{eq:Y_normalization} we have $\Tr{Y_r \phi_{k_r}}=\Tr{Y_r \Delta(\phi_{k_r})} \geq \xi(U) \Tr{Y_r} $ because $ \Delta(\phi_{k_r})=\sum_{n=0}^{D-1} |\braket{k_r|U|n}|^2 \ketbra{n}{n} \geq \xi(U) \id$. Inserting this into Eq.~\eqref{eq:B_bound_intermediate} yields
\begin{align}
    B(\mN,\mU)&\overset{\eqref{eq:B_bound_intermediate}}{\leq}\frac{1}{D}\sum_{r=0}^{d_R-1} \sum_{j=0}^{d-1}\Tr{X_{j,r} \phi_{i_{j,r}}} \leq \frac{1}{D}\sum_{r=0}^{d_R-1} \sum_{j=0}^{d-1}  \left(\Tr{X_{j,r}}-\Tr{X_{j,r}\phi_{k_r}} \right) =1-\frac{1}{D}\sum_{r=0}^{d_R-1}\Tr{Y_r\phi_{k_r}}  \nonumber \\
    &\leq 1-\frac{1}{D}\sum_{r=0}^{d_R-1}\Tr{Y_r} \xi(U) \overset{\eqref{eq:Y_normalization}}{=} 1 - \xi(U),
\end{align}
where we used $Y_r \geq 0$. Together with Eq.~\eqref{eq:Diamond_lowerBound_N}, this finishes the proof.
\end{proof}

\DynamicalCoherenceCost*
\begin{proof}
We start with the parallel case. Let $\O \in\{\MIO,\DI\}$, and note that from $\nO_1 \subseteq \cO_1 \subseteq \mO_1$ we obtain
\begin{align}\label{eq:cost_inequality_chain}
      \inf_{\mS_1 \in \nO_1} \tfrac{1}{2}\norm{\mS_1[\mF_d^{\,\otimes N}]-\mE}_\diamond\geq \min_{\mS_1 \in \cO_1} \tfrac{1}{2}\norm{\mS_1[\mF_d^{\,\otimes N}]-\mE}_\diamond \geq  \min_{\mS_1 \in \mO_1} \tfrac{1}{2}\norm{\mS_1[\mF_d^{\,\otimes N}]-\mE}_\diamond ,
\end{align}
and analogously, the chain of inequalities holds for the worst-case fidelity. We now relax the optimization problem over maximally free superchannels further. To this end, take a feasible superchannel $\mS_1\in \mO_1$,
where $\mS_1$ is a superchannel from systems $1,2$ to $0,3$, with $d_1=d_2=d^N$ (since we are considering the parallel setting) and $d_0, d_3$ fixed by the target channel $\mE^{0\to 3}$. Then, using that $R_{\max , \O}$ is a sub-multiplicative monotone (which follows directly from its definition in Eq.~\eqref{eq:def_R_max}), Lem.~\ref{cor:R_Fd} yields 
\begin{align}
    R_{\max , \O}(\mS_1[\mF_d^{\otimes N}]) \leq R_{\max , \O}(\mF_d^{\otimes N}) \leq R_{\max , \O}(\mF_d)^N \overset{\ref{cor:R_Fd}}{=}d^N,
\end{align}
and thus $ \mS_1[\mF_d^{\otimes N}]\in  \Xi_{\O}^{(d,N)}:= \{ \mM \in \CPTP, R_{\max , \O}(\mM)\leq d^N\}$. Therefore, for any $\mS_1 \in \mO_1$ it holds that $\tfrac{1}{2}\norm{\mS_1[\mF_d^{\otimes N}]-\mE}_\diamond \geq \min_{\mM \in \Xi_{\O}^{(d,N)}} \tfrac{1}{2}\norm{\mM-\mE}_\diamond $. Since $\mS_1 \in \mO_1$ was arbitrary, take the minimum over all such feasible $\mS_1$ to obtain
\begin{align}\label{eq:XI_bound_cost_a}
    \min \left\{\tfrac{1}{2}\norm{\mM-\mE}_\diamond: R_{\max,\O}(\mM)\leq d^N, \mM \in \CPTP \right\} &\leq \min_{\mS_1\in \mO_1}\tfrac{1}{2}\norm{\mS_1[\mF_d^{\otimes N}]-\mE}_\diamond.
\end{align}
Analogously, for the worst-case fidelity it follows that
\begin{align}\label{eq:XI_bound_cost_b}
    \max \left\{F_{\min}(\mM,\mE): R_{\max,\O}(\mM)\leq d^N, \mM \in \CPTP \right\} &\geq \max_{\mS_1\in \mO_1}  F_{\min}(\mS_1[\mF_d^{\otimes N}],\mE)
\end{align}
To show that this is also achievable, let $\mM \in  \Xi_{\O}^{(d,N)}$ be arbitrary but fixed. We will now construct a compatible superchannel satisfying $\mS[\mF_d^{\otimes N}]=\mM$. To this end, note that by the definition of $R_{\max,\O}(\mM)$, there exists $\mK \in \O$ such that $J_\mM\leq d^N J_\mK =: D J_\mK$ (see the proof of Lem.~\ref{cor:R_Fd}). Define the quantum channel $\mQ$ in the Choi representation
\begin{align}\label{eq:Q_def}
    J_\mQ =\frac{D J_\mK-J_\mM}{D-1},
\end{align}
where $\mK\in \O$ for each $\O\in \{\MIO,\DI\}$ respectively. For $\O=\MIO$, let $\omega_0=\ketbra{0}{0}^{\otimes N}$ and $\psi= (F_d\ketbra{0}{0}F_d^\dagger)^{\otimes N}$. Define the channels $\mN_0,\mN_1$ acting as
\begin{align}
    \mN_0^{0\to 1R}(\rho^0) \! &:= \! \omega_0^1\otimes\rho^R,\\
    \mN_1^{2R\to 3}(X^{2R})\! &:= \!\mM^{R\to 3}\!\left(
        \partTr{2}{(\psi^2\!\otimes\!\id_R)\, X^{2R}\, (\psi^2\!\otimes\!\id_R)}\right)\!+\!\mQ^{R\to3}\!\left(\partTr{2}{((\id_2\!-\!\psi^2)\!\otimes\!\id_R)\, X^{2R}\,((\id_2\!-\!\psi^2)\!\otimes\!\id_R)}\right),
\end{align}
where $R\cong 0$ denotes the memory system. The action of $\mN_1$ on an arbitrary incoherent basis element is 
\begin{align}
    \mN_1^{2R\to 3}(\ketbra{i}{i}_2 \otimes \ketbra{j}{j}_R)= \frac{1}{D} \mM^{R\to 3}(\ketbra{j}{j}_R) +\frac{D-1}{D} \mQ^{R\to 3}(\ketbra{j}{j}_R) \overset{\eqref{eq:Q_def}}{=}\mK^{R\to 3}(\ketbra{j}{j}_R) = \Delta_3\mK^{R\to 3}(\ketbra{j}{j}_R),
\end{align}
where the last line follows since $\mK\in \MIO$. Thus, $\mN_1 \in \MIO$ and since clearly $\mN_0 \in \MIO$, we have that the superchannel $\mS_1$ acting as
\begin{align}
    \mS_1[\mL^{1\to 2}]:=\mN_1^{2R\to 3}\circ(\mL^{1\to 2}\otimes\idChan^R)\circ\mN_0^{0\to 1R},
\end{align}
is implemented by a MIO network. Moreover, for any state $\rho$ we have
\begin{align}
    \mS[\mF_d^{\otimes N}](\rho^0)=\mN_1^{2R\to 3}\!\left[ (\mF_d^{\otimes N}\otimes\idChan^R)(\omega_0^1\otimes\rho^R) \right]=\mN_1^{2R\to 3}(\psi^2\otimes\rho^R)=   \mM^{R\to 3}(\rho^R)=\mM^{0\to 3}(\rho^0).
\end{align}

Next, for $\O=\DI$ consider 
\begin{align}
    P^{12}:=\frac{1}{D} J_{\mF_d^{\otimes N}}^{12} \qquad \text{and} \qquad Q^{12}:= D \Delta_2 P^{12}.
\end{align}
Now, $P$ is a rank-one projector with $\partTr{1}{P}=\id^2/D$. Moreover, $\Delta_2 P$ is block-diagonal with each block positive semidefinite, of rank one, and of trace $1/D$. Its non-zero eigenvalues are therefore all equal to $1/D$, and as such, $Q=D\Delta_2 P$ is a projector too. Using the pinching inequality~\cite{Bhatia2000, Hayashi_2002, Winter2025} gives
\begin{align}
    P \leq D\sum_j \left(\id^1\otimes \ketbra{j}{j}_2\right) P\left(\id^1\otimes \ketbra{j}{j}_2\right) = D\Delta_2 P=Q.
\end{align}
Therefore, we have $0\leq P\leq D\Delta_2(P)=Q\leq\id$. Since $P$ and $Q$ are projectors, $P\leq Q$ implies that the range of $P$ is contained in the range of $Q$, and thus
\begin{align}\label{eq:P_Q}
    QP=P, \qquad \text{and} \qquad P(Q-P)=P(\id-Q)=0.
\end{align}
Additionally, from its definition it follows that $\Delta_{12}P^{12}=\id^{12}/D^2$ and idempotence and commutativity
of the local dephasing maps yields
\begin{align}
    \Delta_2Q^{12}=Q^{12},
    \qquad
    \Delta_1Q^{12}=D\Delta_{12}P^{12}=\tfrac{\id^{12}}{D}.
\end{align}
Then, define the superchannel $\tilde{\mS}_1$ in the Choi representation as
\begin{align}
    J_{\tilde{\mS}_1}^{0123}:= \frac{1}{D} \left((P^{12})^T \otimes J_{\mM}^{03} +(Q^{12}-P^{12})^T \otimes \Delta_3 J_{\mM}^{03} +(\id^{12}-Q^{12})^T \otimes \Delta_3 J_{\mQ}^{03}\right).
\end{align}
Since $0\leq P\leq Q\leq \id$, it follows that $J_{\tilde{\mS}_1}^{0123} \geq 0$. Trace-preservation of $\mM,\mQ$ yields $\partTr{3}{J_{\tilde{\mS}_1}^{0123}}=\tfrac{\id^{01}}{D} \otimes \id^2$, so $\tilde{\mS}_1$ is a valid superchannel. Moreover, using $\Delta_2 Q^{12}=Q^{12}$ yields
\begin{align}
    \Delta_3 J_{\tilde{\mS}_1}^{0123}=\frac{1}{D} \left((Q^{12})^T \otimes \Delta_3 J_{\mM}^{03} +(\id^{12}-Q^{12})^T \otimes \Delta_3 J_{\mQ}^{03}\right) = \Delta_{32} J_{\tilde{\mS}_1}^{0123},
\end{align}
and additionally we have that
\begin{align}
    \Delta_{321} J_{\tilde{\mS}_1}^{0123}= \frac{\id^{12}}{D^2}\otimes \Delta_3\left( J_{\mM}^{03} +(D-1) J_{\mQ}^{03}\right) =\frac{\id^{12}}{D}\otimes \Delta_{3}J_{\mK}^{03}= \frac{\id^{12}}{D}\otimes \Delta_{30}J_{\mK}^{03},
\end{align}
where we used that $\mK\in \DI$. This implies that $\tilde{\mS}_1\in \cDI$ (cf. Thm.~\ref{thm:CompatibleSupermaps}). Finally, the action of $\tilde{\mS}_1$ on $J_{\mF_d^{\otimes N}}= D P$ is given by
\begin{align}
    J_{\tilde{\mS}_1[\mF_d^{\otimes N}]}=\Tr{PP} J_\mM +\Tr{P(Q-P)} \Delta_3 J_{\mM} +\Tr{P(\id-Q)} \Delta_3 J_\mQ \overset{\eqref{eq:P_Q}}{=} J_\mM,
\end{align}
where we used Eq.~\eqref{eq:P_Q} and $\Tr{P P}=1$ since $P$ is a normalized rank-one projector. We have therefore constructed a feasible superchannel $\mS_1$ and $\tilde{\mS}_1$ such that for $\O\in\{\MIO,\DI\}$ we have 
\begin{subequations}
\begin{align}
     \min_{\mS_1\in \mO_1}\tfrac{1}{2}\norm{\mS_1[\mF_d^{\otimes N}]-\mE}_\diamond, &\leq \tfrac{1}{2} \norm{\mM-\mE}_\diamond,  \qquad \text{and} \qquad
    \max_{\mS_1\in \mO_1}  F_{\min}(\mS_1[\mF_d^{\otimes N}],\mE) \geq F_{\min}(\mM,\mE).
\end{align}
\end{subequations}
Since $\mM \in \Xi_{\O}^{(d,N)}$ was arbitrary, we can take the minimum and maximum over all such channels respectively, which yields
\begin{subequations}\label{eq:Cost_MIO_Bound}
\begin{align}
    \min \left\{\tfrac{1}{2}\norm{\mM-\mE}_\diamond: R_{\max,\O}(\mM)\leq d^N, \mM \in \CPTP \right\} &\geq \min_{\mS_1\in \mO_1}\tfrac{1}{2}\norm{\mS_1[\mF_d^{\otimes N}]-\mE}_\diamond, \\
    \max \left\{F_{\min}(\mM,\mE): R_{\max,\O}(\mM)\leq d^N, \mM \in \CPTP \right\} &\leq \max_{\mS_1\in \mO_1}  F_{\min}(\mS_1[\mF_d^{\otimes N}],\mE),
\end{align}
\end{subequations}
which, together with Eq.~\eqref{eq:XI_bound_cost_a} and Eq.~\eqref{eq:XI_bound_cost_b}, implies the equality in the latter expressions. Since the constructed superchannels belong to $\nMIO_1$ and $\cDI_1$, respectively, Eq.~\eqref{eq:cost_inequality_chain} implies that conversion under $\nMIO_1$ and $\cDI_1$ yields the same optimal conversion error.

In the parallel case, it remains to show that
    \begin{align}
       \inf_{\mS_1 \in \nDI_1} \tfrac{1}{2}\norm{\mS_1[\mF_2]-\mU}_\diamond > \min_{\mS_1 \in \cDI} \tfrac{1}{2}\norm{\mS_1[\mF_2]-\mU}_\diamond.  
    \end{align}
To this end, let $d=2$, $D=3$, and let $\ket{\mu}=\tfrac{1}{\sqrt{3}}(\ket{0}+\ket{1}+\ket{2})$. Consider the unitary
\begin{align}
    U:= (\id-\ketbra{\mu}{\mu}) + e^{i\arctan(3/4)} \ketbra{\mu}{\mu}.
\end{align}
Since $\tilde{P}:=\ketbra{\mu}{\mu}$ and $\id-\tilde{P}$ are orthogonal projectors, $U$ is unitary, and a straightforward calculation reveals that $U= \alpha \id +\beta \sum_{i\neq j} \ketbra{i}{j}$, where
$\alpha=\frac{14+3i}{15}$ and $\beta=\frac{-1+3i}{15}$. Invoking Lem.~\ref{lem:DI_xiBound}, yields $\inf_{\mS_1\in \nDI_1} \tfrac{1}{2}\norm{\mS_1[\mF_2]-\mU}_\diamond  \geq \xi(U)= \min\{\tfrac{41}{45},\tfrac{2}{45}\}= \tfrac{2}{45}$. On the contrary, we first show that $R_{\max,\DI}(\mU)\leq 2$, from which we conclude that $ \min \left\{\tfrac{1}{2}\norm{\mM-\mU}_\diamond: R_{\max,\O}(\mM)\leq d^N, \mM \in \CPTP \right\}=0$. To this end, set
\begin{subequations}
\begin{align}
    s&:= \sum_{j=0}^{2} |U_{ij}|\!=\! \sum_{j=0}^{2} |U_{ji}|\!=\! |\alpha| + 2|\beta|= \frac{\sqrt{205}+2\sqrt{10}}{15} \\
    t&:= s^2= \frac{49+4\sqrt{82}}{45} \approx 1.89, \\
    H_i&:= \frac{1}{s}\sum_{j=0}^{2}|U_{ij}|\,\ketbra{j}{j}
\end{align}
\end{subequations}
Then $H_i \geq 0$ and $\sum_i H_i=\id$ by construction. Using $\phi_i= U^\dagger \ketbra{i}{i} U$, the Cauchy-Schwarz inequality implies $\bra{x}\phi_i\ket{x}= \left|\sum_{j=0}^{2} U_{ij}x_j  \right|^2 \leq s\sum_{j=0}^{2} |U_{ij}|\,|x_j|^2 = t \bra{x}H_i\ket{x}$, and thus $K_i:= t H_i-\phi_i \geq 0$. Notice that $t>1$ and define the channels
\begin{align}
    \mM(\rho):=\frac{1}{t-1}\sum_{i=0}^{2} \Tr{(tH_i-\phi_i)\rho}\ketbra{i}{i}, \quad \text{and} \quad  \mK:=\frac{1}{t}\left(\mU+(t-1)\mM\right).
\end{align}
Now $\mM$ is a valid channel since $tH_i-\phi_i\geq 0$, and $\tfrac{1}{t-1}\sum_{i=0}^{2}(tH_i-\phi_i)=\id$. Moreover, $\mK$ is a channel since it is a convex combination of channels. Additionally, from $\braket{i|\mU(\rho)|i}=\Tr{\ketbra{i}{i}U\rho U^\dagger}=\Tr{\phi_i \rho }$, we obtain
\begin{align*}
    \Delta\mK(\rho)=\frac{1}{t}\left(\Delta \mU(\rho)+\sum_{i=0}^{2}\Tr{(tH_i-\phi_i)\rho}\ketbra{i}{i} \right)=\frac{1}{t}\sum_{i=0}^{2}\Tr{(\phi_i+tH_i-\phi_i)\rho}\ketbra{i}{i} =\sum_{i=0}^{2}\Tr{H_i\rho}\ketbra{i}{i} = \Delta\mK(\Delta\rho),
\end{align*}
where the last equality follows since all $H_i$ are diagonal. Now let $\lambda:=t-1$, and thus $\mU+\lambda\mM=(1+\lambda)\mK$. As such, we have constructed a feasible solution to $R_{\max,\DI}(\mU)= \inf_{\lambda\geq 0} \big\{ 1\!+\!\lambda :  \mU+\lambda \mM = (1\!+\!\lambda) \mK, \, \mM\in \CPTP, \mK \in \DI\big\}$, and thus
\begin{align}
    R_{\max,\DI}(\mU) \leq 1+\lambda= t =\frac{49+4\sqrt{82}}{45}\approx 1.89<2.
\end{align}
This implies that $\mU$ is a feasible point in
\begin{align}
    0= \tfrac{1}{2}\norm{\mU-\mU}_\diamond \geq \min_{\mL \in \CPTP} \left\{\tfrac{1}{2}\norm{\mL-\mU}_\diamond: R_{\max,\DI}(\mL)\leq 2 \right\}  \geq 0,
\end{align}
which concludes the proof with $\inf_{\mS_1 \in \nDI} \tfrac{1}{2}\norm{\mS_1[\mF_2]-\mU}_\diamond \geq \tfrac{2}{45}>0=\min_{\mS_1 \in \cDI} \tfrac{1}{2}\norm{\mS_1[\mF_2]-\mU}_\diamond$. 

Moving on to the sequential case, we now show that the one-shot cost is identical to the one-shot cost in the parallel case. Let $\mS_N\in \cO_N$ be arbitrary but fixed. For fixed channels in all $N$ slots of the supermap, the supermap is multilinear, i.e., for any $1\leq j\leq N$ we have
\begin{align}
    \mS_N\bigl[\mL_1,\ldots,a\mK_j+b\mT_j,\ldots,\mL_N\bigr]  =a\,\mS_N\bigl[\mL_1,\ldots,\mK_j,\ldots,\mL_N \bigr] +b\,\mS_N\bigl[ \mL_1,\ldots,\mT_j,\ldots,\mL_N\bigr],
\end{align}
where $\mK_j$ and $\mT_j$ denotes that the channel is inserted into the $j$-th slot. This multilinearity follows directly from the Choi representation in Eq.~\eqref{eq:defLinkProduct}, which is linear in each Choi operator of $\mL_j$. Let $\mQ \in \CPTP$, and let $\mK\in \O$ such that $\mF_d+(d-1)\mQ=d\mK$. Then,  multilinearity gives
\begin{align}\label{eq:Action_S_N_Q}
    d^N\mS_N[\mK,\cldots,\mK] &=\mS_N\left[\mF_d+(d-1)\mQ,\cldots,\mF_d+(d-1)\mQ\right] \nonumber \\
    &= \mS_N\left[\mF_d,\cldots,\mF_d\right]\! +\!(d\!-\!1) \!\sum_{1\leq j\leq N}\!\mS_N\left[\mF_d,\cldots,\mQ_j,\cldots, \mF_d\right] \!+\!(d\!-\!1)^2 \sum_{1\leq j<k \leq N}\! \mS_N\left[\mF_d,\cldots,\mQ_j,\cldots, \mQ_k,\cldots, \mF_d\right] \nonumber \\
    & + \cldots + (d-1)^N \mS_N[\mQ,\cldots, \mQ],
\end{align}
where $\mQ_j$ again denotes that $\mQ$ is inserted into slot j. Now, define 
   \begin{align}
       \mM:=\mS_N[\mF_d,\ldots, \mF_d],
   \end{align} 
    \begin{align*}
        \tilde{\mQ}\!:=\! \tfrac{1}{d^N\!-\!1}\Bigr(  (d\!-\!1) \!\sum_{1\leq j\leq N}\mS_N\left[\mF_d,\cldots,\mQ_j,\cldots, \mF_d\right] \!+\!(d\!-\!1)^2\! \sum_{1\leq j<k \leq N}\! \mS_N\left[\mF_d,\cldots,\mQ_j,\cldots, \mQ_k,\cldots, \mF_d\right]\!+\! \cldots \!+\! (d\!-\!1)^N \mS_N[\mQ,\ldots, \mQ]\!\Bigr),
    \end{align*}
    where each term proportional to $(d-1)^k$ is the sum over all $\binom{N}{k}$ possible placements of $k$ copies of $\mQ$ and $N-k$ copies of $\mF_d$ into $\mS$. Since
    \begin{align}
        \sum_{k=1}^{N} \binom Nk (d-1)^k=(1+(d-1))^N-1=d^N-1,
    \end{align}
    we have that $\tilde{\mQ}$ is a convex combination of channels, and thus $\tilde{\mQ}$ is a channel. Rewriting Eq.~\eqref{eq:Action_S_N_Q} as $d^N\mS_N[\mK,\ldots,\mK]=\mM+(d^N-1)\tilde{\mQ}$, notice that $\tilde{\mK}:=\mS_N[\mK,\ldots,\mK] \in \O$ as $\mK\in \O$ and $\mS_N \in \cO$. Hence, $\mM+(d^N-1)\tilde{\mQ}=d^N \tilde{\mK}$ implies that $R_{\max, \O}(\mM)\leq d^N$, where we recall that $\mM=\mS_N[\mF_d,\ldots, \mF_d]$. Therefore, for any choice of $\mS_N\in \cO_N$ we find
    \begin{align}
        \min_{\mS_N \in \cO} \tfrac{1}{2}\norm{\mS_N[\mF_d,\ldots, \mF_d]-\mE}_\diamond \geq  \min \{ \tfrac{1}{2}\norm{\mM-\mE}_\diamond: R_{\max, \O}(\mM)\leq d^N, \mM\in \CPTP\}.
    \end{align}

    Lastly, the reverse inequality follows by noting that on the left hand-side of the latter expression, we can always choose a protocol that implements a parallel protocol (which is always possible via a sequence of SWAP and identity channels). Therefore, we obtain
        \begin{align}
        \min_{\mS_N \in \cO} \tfrac{1}{2}\norm{\mS_N[\mF_d,\ldots, \mF_d]-\mE}_\diamond \leq \min_{\mS_1 \in \cO} \tfrac{1}{2}\norm{\mS_1[\mF_d^{\otimes N}]-\mE} _\diamond =\min \{ \tfrac{1}{2}\norm{\mM-\mE}_\diamond: R_{\max, \O}(\mM)\leq d^N, \mM\in \CPTP\} \nonumber,
    \end{align}
    where the last equality has been shown in the parallel case. The same argument applies if we use the worst-case fidelity instead of the diamond distance.\qedhere
\end{proof}

\subsubsection{Optimal conversion distances to the Fourier transform under compatible supermaps}

In this section, we prove that the optimal conversion distance between an arbitrary quantum channel $\mE^{A\to B}$ and the quantum Fourier transform, with respect to the diamond distance, worst-case fidelity, and the (normalized) Choi state fidelity, all lead to the same conversion error in the one-shot regime. Recall that the monotones governing the optimal conversion error are given by
\begin{subequations}
\begin{align}
    M_{\cMIO,d}(\mE^{A\to B}) &:=\max\left\{\Tr{X^B \mE^{A\to B}(\ketbra{j}{j}_A)} : 0\leq X\leq d\id, \quad \Delta X=\id, 0\le j \le d_A-1\right\}, \label{eq:C_D_appen} \\
    M_{\cDI,d}(\mE^{A\to B}) &:= \max \left\{ \sum_{i=0}^{d_B-1}\braket{i|\mE^{A\to B}(\rho_i^A)| i}_B \!: \rho_i,\sigma \in \D, \rho_i \leq d \sigma, \Delta \rho_i =\Delta \sigma \, \forall i: 0\leq i \leq d_B-1\right\}. \label{eq:D_d_appen}
\end{align}
\end{subequations}
To this end, we rely on the twirling channel as defined in Sec.~\ref{sec:Twirling}.

\CoherenceOptimalConversion*
\begin{proof}
    We start by proving the optimality part of the statement, i.e., we first show the last inequality of 
    \begin{align}\label{eq:InequalityChain_conversion_distance}
        \min_{\mS \in \cO} \frac{1}{2}\norm{\mS[\mE]-\mF_d}_\diamond \overset{\eqref{eq:FuchsVanDeGraaf_channels_sharp}}{\geq}  1-\max_{\mS \in \cO}F_{\min}(\mS[\mE],\mF_d) & \overset{\eqref{eq:InequalityChain}}{\geq} 1-\max_{\mS \in \cO}F_{\ChoiF}(\mS[\mE],\mF_d) \geq 1-\frac{1}{d} M_{\cO,d}(\mE).
    \end{align}
To this end, we will employ the twirling channel introduced in Sec.~\ref{sec:Twirling}, and specifically Eq.~\eqref{eq:TwirlingChannel}. Let us define $\tilde{\mT}^{0123}=\idChan^{12} \otimes \mT^{03}$. Since $\tilde{\mT}^{0123}((\cdot)^{12}\otimes J_{\mF_d}^{03})=(\cdot)^{12}\otimes J_{\mF_d}^{03}$, the objective function in the Choi fidelity optimization is left invariant as
\begin{align}\label{eq:lowBound_objFunc}
    \max_{\mS \in \cO_1}F_{\ChoiF}(\mS[\mE],\mF_d)\!= \frac{1}{d^2} \max_{\mS \in \cO_1}  \Tr{J_{\mS}^{0123} \left(\left(J_{\mE}^{12}\right)^T \otimes J_{\mF_d}^{03}\right)} =\frac{1}{d^2}\max_{\mS \in \cO_1}  \Tr{\tilde{\mathcal{T}}^{0123}\left(J_{\mS}^{0123}\right) \left(\left(J_{\mE}^{12}\right)^T \otimes J_{\mF_d}^{03}\right)},
\end{align}
where we used that $\tilde{\mT}$ is self-dual. The twirling therefore leaves the objective function invariant. It remains to show that the twirling preserves the feasible set. The comb constraints in Eq.~\eqref{eq:CombSet} are automatically satisfied since conjugation by local unitaries preserves positive semidefiniteness and the partial trace constraints Eq.~\eqref{eq:TPcondition}.  Moreover, the coherence constraints are also preserved: Combining Lem.~\ref{lem:MonomialUnitary} with Lem.~\ref{lem:UnitaryRewriting}, and in particular, that according to Eq.~\eqref{eq:U_asMonomialUnitaries} the unitary $U_{n,m}$ (defining the twirling channel $\mT$) can be written as a tensor product of monomial unitaries, implies that
\begin{align}
    \Delta_X \tilde{\mT}^{0123}(J_{\mS}^{0123})= \tilde{\mT}(\Delta_X J_{\mS}^{0123}),
\end{align}
for any subsystem X. Hence, if $\mS \in \cO_1$, then the channel defined by the Choi state $\tilde{\mT}^{0123}(J_{\mS}^{0123})$ is contained in $\cO_1$. Since $\tilde{\mT}$ is self-adjoint, we can thus restrict the optimization over all superchannels to superchannels that are invariant under the action of the extended twirling channel $\tilde{\mT}$. This allows us to invoke Lem.~\ref{lem:TwirlingInvariantChannel}, which forces the superchannel to be of the form
\begin{align}
       J_{\mS}^{0123}= \sum_{k,l=0}^{d-1} Y_{k,l}^{12} \otimes \Pi_{k,l}^{03},
\end{align}
where $\Pi_{k,l}$ are orthonormal projectors as in Eq.~\eqref{eq:Pi_kl}, and $Y_{k,l}$ must satisfy $Y_{k,l}^{12} \geq 0 \, \forall k,l \in \{0,\ldots, d-1\}$, and $\sum_{k,l} Y_{k,l}^{12} = M^1 \otimes \id^2$ for some $M\geq 0,\Tr{M}=d$. We still have to incorporate the coherence constraints. To this end, we use Lem.~\ref{lem:UnitaryRewriting}, and in particular Eq.~\eqref{eq:P_l} and Eq.~\eqref{eq:Q_k}, yielding 
\begin{align}\label{eq:CohCOnst_0}
   &\Delta_0 J_{\mS}= \Delta_{01} J_{\mS} \quad \Leftrightarrow \quad  \frac{1}{d} \sum_{l} \left( \sum_k Y_{k,l}^{12} \right) \otimes P_l^{03} = \frac{1}{d} \sum_{l} \left( \Delta_1 \sum_k Y_{k,l}^{12} \right) \otimes P_l^{03}, \\
   &\Delta_3 J_{\mS}= \Delta_{32} J_{\mS} \quad \Leftrightarrow \quad \frac{1}{d} \sum_{k} \left( \sum_l Y_{k,l}^{12} \right) \otimes Q_k^{03} = \frac{1}{d} \sum_{k} \left( \Delta_2 \sum_l Y_{k,l}^{12} \right) \otimes Q_k^{03}.
\end{align}
Since the projectors $\{P_l^{03} \}_{0\leq l\leq d-1}$ and $\{Q_k^{03} \}_{0\leq k\leq d-1}$ are mutually orthogonal within the respective set (cf. Lem.~\ref{lem:UnitaryRewriting}), we find that 
\begin{align}
   &\Delta_0 J_{\mS}= \Delta_{01} J_{\mS} \quad \Leftrightarrow \quad  \sum_k Y_{k,l}^{12}= \Delta_1 \sum_k   Y_{k,l}^{12} \quad \forall l: 0\leq l \leq d-1 \label{eq:Twirled_Delta01},\\
   &\Delta_3 J_{\mS}= \Delta_{32} J_{\mS} \quad \Leftrightarrow \quad\sum_l Y_{k,l}^{12}= \Delta_2 \sum_l   Y_{k,l}^{12} \quad \forall k: 0\leq k \leq d-1. \label{eq:Twirled_Delta32}
\end{align}
Moreover, the second coherence constraint to ensure that $\mS \in \cO_1$ translates to
\begin{align}
    &\Delta_{012} J_{\mS}= \Delta_{0123}J_{\mS} \quad \Leftrightarrow \quad      \frac{1}{d} \sum_{l} \left(\sum_k \Delta_{12} Y_{k,l}^{12} \right) \otimes P_l^{03} = \frac{1}{d^2} \sum_{k,l}  \Delta_{12}  Y_{k,l}^{12}  \otimes \id^{03}, \\
    &\Delta_{321} J_{\mS}= \Delta_{3210}J_{\mS} \quad \Leftrightarrow \quad  \frac{1}{d} \sum_{k} \left(\sum_l \Delta_{12} Y_{k,l}^{12} \right) \otimes Q_k^{03} = \frac{1}{d^2} \sum_{k,l}  \Delta_{12}  Y_{k,l}^{12}  \otimes \id^{03},
\end{align}
which can be simplified to (again using the orthogonality of the projectors) 
\begin{align}
    &\Delta_{012} J_{\mS}= \Delta_{0123}J_{\mS} \quad \Leftrightarrow \quad   \sum_{k} \Delta_{12} Y_{k,l}^{12} = \frac{1}{d} \sum_l \left(\sum_{k} \Delta_{12} Y_{k,l}^{12} \right) \quad \forall l: 0\leq l \leq d-1,    \\
    &\Delta_{321} J_{\mS}= \Delta_{3210}J_{\mS} \quad \Leftrightarrow \quad  \sum_{l} \Delta_{12} Y_{k,l}^{12} = \frac{1}{d} \sum_k \left(\sum_{l} \Delta_{12} Y_{k,l}^{12} \right) \quad \forall k: 0\leq k \leq d-1.
\end{align}
For each $S_l:=\sum_{k} Y_{k,l}^{12}$, and $S_k:=\sum_{l} Y_{k,l}^{12}$, this implies that $\Delta_{12} S_l$ (and respectively  $\Delta_{12} S_k$) must be equal to its arithmetic average over all $l$ (respectively $k$), and thus,
\begin{align}
    &\Delta_{012} J_{\mS}= \Delta_{0123}J_{\mS} \quad \Leftrightarrow \quad \sum_{k} \Delta_{12} Y_{k,l}^{12} = \sum_{k} \Delta_{12} Y_{k,l^\prime}^{12} \quad \forall l,l^\prime: 0\leq l,l^\prime \leq d-1, \\
    &\Delta_{321} J_{\mS}= \Delta_{3210}J_{\mS} \quad \Leftrightarrow \quad \sum_{l} \Delta_{12} Y_{k,l}^{12} = \sum_{l} \Delta_{12} Y_{k^\prime,l}^{12} \quad \forall k,k^\prime: 0\leq k,k^\prime \leq d-1, \label{eq:CohCOnst_321}
\end{align}
or in other words, all $\Delta_{12} S_l=\Delta_{12}\sum_{k} Y_{k,l}^{12}$  and  $\Delta_{12} S_k= \Delta_{12}\sum_{l} Y_{k,l}^{12}$ must be equal. Lastly, note that the projector $\Pi_{0,0}^{03}$ is proportional to $J_{\mF}^{03}$ since
\begin{align}
    \Pi_{0,0}^{03}= \frac{1}{d^2} \left( \sum_{a,b} \omega_d^{ab} \ket{ab}\right)\left(\sum_{a^\prime,b^\prime} \omega_d^{-a^\prime b^\prime} \bra{a^\prime b^\prime}\right)= \frac{1}{d} J_{\mF_d}^{03},
\end{align}
and as such, the objective function of Eq.~\eqref{eq:lowBound_objFunc} takes the simple form  
\begin{align}\label{eq:ObjectiveFunction_Twirled}
    \Tr{\tilde{\mT}(J_{\mS}^{0123}) \left(\left(J_{\mE}^{12}\right)^T \otimes J_{\mF_d}^{03}\right)} &= \sum_{k,l} \Tr{Y_{k,l}^{12} \left(J_{\mE}^{12}\right)^T} \Tr{\Pi_{k,l}^{03} J_{\mF_d}^{03}}=  \sum_{k,l} \Tr{Y_{k,l}^{12} \left(J_{\mE}^{12}\right)^T} \, d\Tr{\Pi_{k,l}^{03} \Pi_{0,0}^{03} } \nonumber \\
    &= d\, \Tr{Y_{0,0}^{12} \left(J_{\mE}^{12}\right)^T},
\end{align}
where we used that the projectors $\Pi_{k,l}^{03}$ are orthonormal according to Eq.~\eqref{eq:ProjectorsOrthonormal}. Thus, the lower bound in Eq.~\eqref{eq:InequalityChain} can be written as follows
\begin{align}\label{eq:LowerBound_Twirled_b}
    &d^2\! \max_{\mS \in \cO_1}F_{\ChoiF}(\mS[\mE],\mF_d)= \max_{\mS \in \cO_1}  \Tr{J_{\mS}^{0123} \left(\left(J_{\mE}^{12}\right)^T \otimes J_{\mF_d}^{03}\right)} \nonumber \\
    &= \max \Bigr\{ d \Tr{Y_{0,0}^{12} \left(J_{\mE}^{12}\right)^T}: \! Y_{k,l}^{12}, M^1 \!\geq \!0, \, \forall\, k,l, \Tr{M^1}\!=\!d, \sum_{k,l} \!Y_{k,l}^{12}\!=\! M^1 \otimes \id^2, \sum_{k,l} Y_{k,l}^{12} \otimes \Pi_{k,l}^{03} \!=\!J_{\mS}^{0123} \!\in\! \cO_1 \Bigr\}
\end{align}

We now proceed case by case with the different sets of free superchannels to show that the quantities Eq.~\eqref{eq:LowerBound_Twirled_b} reduce to the monotones as defined in Eq.~\eqref{eq:C_D} and Eq.~\eqref{eq:D_d}.

\vspace{0.25cm}
\begin{minipage}{0.48\textwidth}
\begin{center}
   \textbf{cMIO-superchannels}
\end{center}
\begin{subequations}\label{eq:LowerBound_Twirled}
\begin{alignat}{2} 
    &\text{maximize:}\quad && d\,\Tr{Y_{0,0}^{12} \left(J_{\mE}^{12}\right)^T}\\
    &\text{subject to:}\quad && Y_{k,l}^{12} \geq 0, \quad \forall\, k,l  \\
    & && M^1\geq 0\\
    & && \Tr{M^1}=d \\
    & && \sum_{k,l} Y_{k,l}^{12}= M^1 \otimes \id^2\\
    & && \sum_k Y_{k,l}^{12}= \Delta_1\sum_k   Y_{k,l}^{12} \quad \forall\, l  \\
    & && \Delta_{12}\sum_{k}  Y_{k,l}^{12} = \Delta_{12} \sum_{k}  Y_{k,l^\prime}^{12} \, \forall\, l,l^\prime 
\end{alignat}
\end{subequations}
\end{minipage}
\hfill
\begin{minipage}{0.48\textwidth}
\begin{center}
    \textbf{cDI-superchannels}
\end{center}
\begin{subequations}\label{eq:LowerBound_Twirled_detection}
\begin{alignat}{2} 
    &\text{maximize:}\quad && d\,\Tr{Y_{0,0}^{12} \left(J_{\mE}^{12}\right)^T}\\
    &\text{subject to:}\quad && Y_{k,l}^{12} \geq 0 \forall\, k,l  \\
    & && M^1\geq 0\\
    & && \Tr{M^1}=d \\
    & && \sum_{k,l} Y_{k,l}^{12}= M^1 \otimes \id^2\\
    & &&    \sum_l Y_{k,l}^{12}= \Delta_2 \sum_l  Y_{k,l}^{12} \quad \forall\, k\\
    & && \Delta_{12}\sum_{l}  Y_{k,l}^{12} = \Delta_{12} \sum_{l} Y_{k^\prime,l}^{12} \, \forall\, k,k^\prime
\end{alignat}
\end{subequations}
\end{minipage}
\vspace{0.25cm}

Note that the two SDPs only differ in their second-to-last constraint. Let $\{Y_{k,l}\}_{0\leq k,l\leq d-1}$ denote a feasible point of Eq.~\eqref{eq:LowerBound_Twirled} (or Eq.~\eqref{eq:LowerBound_Twirled_detection}, respectively).  For the cMIO case, we can introduce a new set of variables  $\{Y,T,\{S_l\}_{1\leq l \leq d-1}\}$ by setting
\begin{align}
    Y=Y_{0,0},\quad
    T= \sum_{k\geq 1} Y_{k,0}, \quad \text{and}\quad S_l=\sum_k Y_{k,l} \, \forall l\geq 1.
\end{align}
And analogously for cDI, set variables  $\{Y,T,\{S_k\}_{1\leq k \leq d-1}\}$ defined via $Y=Y_{0,0}$, $T= \sum_{l\geq 1} Y_{0,l}$, and $ S_k=\sum_l Y_{k,l} \, \forall k\geq 1.$ Then, Eq.~\eqref{eq:LowerBound_Twirled} (and Eq.~\eqref{eq:LowerBound_Twirled_detection}, respectively) can equivalently be written as (by simply grouping variables $Y_{k,l}$ into sums over $k$ and $l$, respectively)

\vspace{0.25cm}
\begin{minipage}{0.48\textwidth}
\begin{center}
   \textbf{cMIO-superchannels}
\end{center}
\begin{subequations}\label{eq:twirledSDP_S_l}
\begin{alignat}{2} 
    &\text{maximize:}\quad && d\,\Tr{Y^{12} \left(J_{\mE}^{12}\right)^T}\\
    &\text{subject to:}\quad && Y \geq 0, T \geq 0, S_l \geq 0 \quad \forall \, l \\
    & && M^1\geq 0\\
    & && \Tr{M^1}=d \\
    & && Y^{12}+T^{12}+\sum_{l\geq 1} S_l^{12}= M^1 \otimes \id^2\\
    & && Y^{12}+T^{12}=\Delta_1\left(Y^{12}+T^{12}\right)\\
    & && S_l^{12}=\Delta_1 S_l^{12}\,\ \forall l\geq 1\\
    & &&\Delta_{12}\left(Y^{12}\!+\!T^{12}\right)=\Delta_{12} S_l^{12} \,\forall \,l\geq 1.
\end{alignat}
\end{subequations}
\end{minipage}
\hfill
\begin{minipage}{0.48\textwidth}
\begin{center}
    \textbf{cDI-superchannels}
\end{center}
\begin{subequations}\label{eq:twirledSDP_S_k}
\begin{alignat}{2} 
    &\text{maximize:}\quad && d\,\Tr{Y^{12} \left(J_{\mE}^{12}\right)^T}\\
    &\text{subject to:}\quad && Y \geq 0, T \geq 0, S_k \geq 0 \quad \forall \, k \\
    & && M^1\geq 0\\
    & && \Tr{M^1}=d \\
    & && Y^{12}+T^{12}+\sum_{k\geq 1} S_k^{12}= M^1 \otimes \id^2\\
    & && Y^{12}+T^{12}=\Delta_2\left(Y^{12}+T^{12}\right)\\
    & && S_k^{12}=\Delta_2 S_k^{12}\, \forall k \geq 1\\
    & &&\Delta_{12}\left(Y^{12}\!+\!T^{12}\right)=\Delta_{12} S_k^{12} \,\forall k\geq 1.
\end{alignat}
\end{subequations}
\end{minipage}
\vspace{0.25cm}

Firstly, note in both SDPs we can always choose $T=0$ by introducing a new variable $\tilde{Y}=Y+T$, for which, the constraints remain unchanged, and the objective function is bounded by 
\begin{align}
    \Tr{Y J_{\mE}^T}=\Tr{\tilde{Y} J_{\mE}^T}-\Tr{TJ_{\mE}^T}\leq \Tr{\tilde{Y} J_{\mE}^T},
\end{align}
since $T\geq 0$ (and $J_{\mE}\geq 0$). Moreover, we can rewrite the SDPs again, first by relaxing it as follows: Let $\{Y,T,\{S_l \}_{1\leq l\leq d-1}\}$ and $\{Y,T,\{S_k \}_{1\leq k\leq d-1}\}$ be a feasible point in Eq.~\eqref{eq:twirledSDP_S_l} and Eq.~\eqref{eq:twirledSDP_S_k}, respectively. Then, defining $S:=\sum_{l\geq 1} S_l$  (and $S:=\sum_{k\geq 1} S_k$, respectively) we find that $\{Y,S\}$ is a feasible point of

\vspace{0.25cm}
\begin{minipage}{0.48\textwidth}
\begin{center}
   \textbf{cMIO-superchannels}
\end{center}
\begin{subequations}\label{eq:twirledSDP_S_final*}
\begin{alignat}{2} 
    &\text{maximize:}\quad && d\,\Tr{Y^{12} \left(J_{\mE}^{12}\right)^T} \label{eq:MIO_SDP_obj}\\
    &\text{subject to:}\quad && Y^{12}\geq 0, S^{12}\geq 0, M^1 \geq 0 \\
    & && \Tr{M^1}=d \\
    & && Y^{12}+S^{12}= M^1 \otimes \id^2 \label{eq:Y+S=M} \\
    & && Y^{12}=\Delta_1Y^{12}\label{eq:Delta1_Y}\\
    & && S^{12}=\Delta_1 S^{12}\label{eq:Delta1_S} \\
    & &&(d-1)\Delta_{12}Y^{12}=\Delta_{12} S^{12}.
\end{alignat}
\end{subequations}
\end{minipage}
\hfill
\begin{minipage}{0.48\textwidth}
\begin{center}
    \textbf{cDI-superchannels}
\end{center}
\begin{subequations}\label{eq:twirledSDP_S_final_detection}
\begin{alignat}{2} 
    &\text{maximize:}\quad && d\,\Tr{Y^{12} \left(J_{\mE}^{12}\right)^T}\\
    &\text{subject to:}\quad && Y^{12}\geq 0, S^{12}\geq 0, M^1\geq 0 \\
    & && \Tr{M^1}=d \\
    & && Y^{12}+S^{12}= M^1 \otimes \id^2 \\
    & && Y^{12}=\Delta_2Y^{12} \\
    & && S^{12}=\Delta_2 S^{12} \\
    & &&(d-1)\Delta_{12}Y^{12}=\Delta_{12} S^{12}.
\end{alignat}
\end{subequations}
\end{minipage}
\vspace{0.25cm}

Conversely, in Eq.~\eqref{eq:twirledSDP_S_l} we can always choose all $S_l$ identical, thus, Eq.~\eqref{eq:twirledSDP_S_final*} and Eq.~\eqref{eq:twirledSDP_S_l} are equal. The same applies to the detection-incoherent setting.

Next, consider the MIO setting. Starting from Eq.~\eqref{eq:twirledSDP_S_final*}, and in particular Eq.~\eqref{eq:Delta1_Y} and Eq.~\eqref{eq:Delta1_S}, we can, without loss of generality, take 
\begin{align}\label{eq:Block_diag_SDP_variable}
     Y^{12}=\sum_{i=0}^{d_1-1} \ketbra{i}{i}_1 \otimes X_i^2, \quad  S^{12}=\sum_{i=0}^{d_1-1} \ketbra{i}{i}_1 \otimes Z_i^2, \quad \text{and} \quad  M^1=\sum_{i=0}^{d_1-1} m_i \ketbra{i}{i}_1,
\end{align}
for some $X_i$, $Z_i$, and for some $m_i\geq 0$ with $ \sum_i m_i=d$. Here, $M$ being diagonal follows from Eq.~\eqref{eq:Y+S=M}. The remaining constraints of Eq.~\eqref{eq:twirledSDP_S_final*} imply 
\begin{align}\label{eq:Delta_X_I}
     X_i^2+Z_i^2= m_i \id^2, \quad \text{and} \quad  (d-1) \Delta X_i^2= \Delta Z_i^2 \qquad \forall i \in\{0,\ldots, d_1-1 \},
\end{align}
where we recall that in our notation the superscripts label systems (and not matrix exponents). Since $Z_i^2= m_i \id^2-X_i^2$, positive semidefiniteness of $Z_i$ (and $X_i$) implies
\begin{align}\label{eq:X_i_inequ}
    0\leq X_i\leq m_i \id.
\end{align}
Moreover, substituting $Z_i^2= m_i \id^2-X_i^2$ into the second equation in Eq.~\eqref{eq:Delta_X_I} yields
\begin{align}
    (d-1) \Delta X_i^2= \Delta Z_i^2 = m_i \id^2 -\Delta X_i^2,
\end{align}
and hence
\begin{align}
    \Delta X_i^2=\frac{m_i}{d} \id^2.
\end{align}
Conversely, let $X_i^2$ satisfy the latter condition and Eq.~\eqref{eq:X_i_inequ}. Then, define $Z_i^2=m_i\id^2-X_i$, and the corresponding variables in Eq.~\eqref{eq:Block_diag_SDP_variable}, which clearly satisfy the constraints in Eq.~\eqref{eq:twirledSDP_S_final*}. Then, the constraints can equivalently be expressed as $0\leq X_i\leq m_i \id$ and $\Delta X_i= \frac{m_i}{d}\id$ for $m_i\geq 0$ with $\sum_i m_i=d$.

Moreover, note that the objective function in Eq.~\eqref{eq:MIO_SDP_obj} can be written as 
\begin{align}
    d\,\Tr{Y^{12} \left(J_{\mE}^{12}\right)^T}= d\sum_i\Tr{(\ketbra{i}{i}_1\otimes X_i^2)^T J_{\mE}^{12}}=d\sum_i\Tr{ X_i^T \mE(\ketbra{i}{i})}.
\end{align}
Thus, after substituting the variable $X_i$ by its transpose (which preserves all constraints), and after eliminating $Z_i^2$, we can rewrite Eq.~\eqref{eq:twirledSDP_S_final*} as
\begin{align}\label{eq:L_modified_again}
     d^2\! \max_{\mS \in \cMIO}F_{\ChoiF}(\mS[\mE],\mF_d)\!&=\! \max \Big\{d \! \sum_{i=0}^{d_1-1} \Tr{X_i^2\mE(\ketbra{i}{i}_1)} :0\!\leq\! X_i \!\leq\!  m_i \id, \Delta(X_i)\!=\!\tfrac{m_i}{d}\id, m_i\geq \!0 \, \forall i, \sum_{i=0}^{d_1-1} \!m_i\!=\!d \Big\}, \nonumber \\
     &=\max\left\{ \sum_{i=0}^{d_1-1}\! \Tr{m_i \tilde{X}_i^2\mE(\ketbra{i}{i}_1)} : 0\!\leq\! \tilde{X}_i \!\leq \!d \id, \Delta \tilde{X}_i \!=\!\id, m_i\!\geq\! 0 \, \forall i, \sum_{i=0}^{d_1-1} m_i\!=\!d \right\},
\end{align}
where the second line follows from first multiplying $0\leq X_i\leq m_i \id$ with $d/m_i$, the constraint is equivalent to $0\leq (\idChan-\Delta)\left(X_i\frac{d}{m_i}\right)+\id \leq d \id$ for all $i: m_i>0$ (all $X_i$ with $m_i=0$ are irrelevant since this enforces $X_i=0$). Defining $\tilde{X}_i=X_i\frac{d}{m_i}$, we clearly have that $\Delta \tilde{X}_i= \id$. Next, we show that the latter expression reduces to $d M_{\cMIO,d}(\mE)$. To this end, notice that for any feasible point in the latter expression, we have
\begin{align}
      \sum_{i=0}^{d_1-1} m_i \Tr{\tilde{X}_i \mE(\ketbra{i}{i})} \leq \left(\max_{j} \Tr{\tilde{X}_j \mE(\ketbra{j}{j})}\right) \sum_{i=0}^{d_1-1}m_i = d\max_{j} \Tr{\tilde{X}_j \mE(\ketbra{j}{j})}.
\end{align}
Dropping the dependence of $\tilde{X}_i$ on $i$, which we can do from here on, we can relax 
\begin{align}\label{eq:MIO_upperBound}
       d^2 \max_{\mS \in \cMIO}F_{\ChoiF}(\mS[\mE],\mF_d)&=\max\left\{ \sum_{i=0}^{d_1-1} \Tr{m_i \tilde{X}_i^2\mE(\ketbra{i}{i}_1)} : 0\leq \tilde{X}_i \leq d \id, \Delta(\tilde{X}_i)=\id, m_i\geq 0 \, \forall i, \sum_{i=0}^{d_1-1} m_i=d \right\} \nonumber \\
      &\leq d  \max\left\{ \Tr{\tilde{X} \mE(\ketbra{j}{j})}: 0\!\leq \!\tilde{X}\!\leq\! d\id, \Delta \tilde{X}=\id,0\leq j\leq d_1-1\right\}\\ \nonumber 
      &=d M_{\cMIO,d}(\mE).
\end{align}
Conversely, let $(j^\star,\tilde{X}^\star)$ be an optimal feasible point of the SDP for $ M_{\cMIO,d}(\mE)$. In Eq.~\eqref{eq:L_modified_again} choose the variables
\begin{align}
    m_i=\delta_{i,j^\star} d \quad\text{and}\quad     X_i= \delta_{i,j^\star} \tilde{X}^\star.
\end{align}
Then, clearly $m_i\geq 0, \sum_i m_i=d$ and $0\leq X_i =\delta_{i,j^\star} \tilde{X} \leq \delta_{i,j^\star} d\id =m_i \id$, and $\Delta X_i =\delta_{i,j^\star} \id= \tfrac{m_i}{d} \id$. Thus, this is a feasible point of Eq.~\eqref{eq:L_modified_again}, and this choice of variables establishes the equality with $d M_{\cMIO,d}(\mE)$.

Next, consider the DI setting. Starting from Eq.~\eqref{eq:twirledSDP_S_final_detection}, eliminate the variable $S^{12}$ by using $S^{12}= M^1 \otimes \id^2-Y^{12}$ and thus $\Delta_{12}S^{12}= \Delta M^1 \otimes \id^2-\Delta_{12} Y^{12}$, which allows us to reduce Eq.~\eqref{eq:twirledSDP_S_final_detection} to
\begin{align}\label{eq:twirledSDP_DI}
     &d \max_{\mS \in \cDI}F_{\ChoiF}(\mS[\mE],\mF_d)\nonumber\\
     &=\max \left\{ \Tr{Y^{12} \left(J_{\mE}^{12}\right)^T}:\Tr{M^1}=d, Y^{12}=\Delta_2Y^{12}, M^1\!\otimes\!\id^2 \geq Y^{12} \geq 0, d \Delta_{12} Y^{12}=\Delta M^1\!\otimes\!\id^2 \right\}.
\end{align}
Without loss of generality, one can take $Y^{12}=\sum_{i=0}^{d_2-1} X_i^1 \otimes \ketbra{i}{i}_2$. The constraints of Eq.~\eqref{eq:twirledSDP_DI} are equivalent to $0\leq X_i^1 \leq M^1 $, $d \Delta X_i^1=\Delta M$ for some $M^1\geq 0$ with $\Tr{M^1}=d$. Thus, Eq.~\eqref{eq:twirledSDP_DI} can equivalently be written as
\begin{align}\label{eq:twirledSDP_DI*}
   d \max_{\mS \in \cDI}F_{\ChoiF}(\mS[\mE],\mF_d)=\max \left\{ \sum_{i=0}^{d_2-1} \braket{i|\mE(X_i^1)|i} : 0\leq X_i^1 \leq M^1, \, \forall i, \Tr{M^1}=d, d\Delta_{1}X_i^1=\Delta M^1 \right\}.
\end{align}
From the last two constraints, it follows that $\Tr{X_i^1}=\Tr{\Delta X_i^1}=\frac{1}{d}\Tr{M^1}=1\quad \forall i$, and $X_i^1=:\rho_i$ are therefore quantum states. Introducing the new variable $\sigma^1:=M^1/d$, which is also a state, Eq.~\eqref{eq:twirledSDP_DI*} finally takes the form
\begin{align}
     d \max_{\mS \in \cDI}F_{\ChoiF}(\mS[\mE],\mF_d)=\max \left\{ \sum_{i=0}^{d_2-1} \braket{i|\mE(\rho_i^1)| i}_2 : \rho_i,\sigma \in \mD, \rho_i \leq d \sigma,\Delta \rho_i =\Delta \sigma \, \forall i\right\}.
\end{align}

This finishes the optimality part of the proof, and we now show that this optimal conversion error is achievable. To this end, consider the left-hand side of Eq.~\eqref{eq:InequalityChain_conversion_distance}, and choose the variable $Z^{03}$ for the optimization problem for the diamond distance in Eq.~\eqref{eq:SPD_DiamondDistance} as
\begin{align}
    Z^{03}&:=\Big[\partTr{12}{J_{\mS}^{0123}\left(\left( \id^{03}\otimes J_{\mE}^{12} \right)^T \right)}-J_{\mF_d}^{03} \Big]_+.
\end{align}
This is clearly a feasible choice as $Z^{03} \geq 0$, and $Z^{03} \geq \partTr{12}{J_{\mS}^{0123}\left(\left( \id^{03}\otimes J_{\mE}^{12} \right)^T \right)}-J_{\mF_d}^{03}$. In the case of $\O=\MIO$, let $(j^\star,X^\star)$ be an optimal feasible point of $M_{\cMIO,d}(\mE)=\max\big\{ \Tr{X \mE(\ketbra{j}{j})}: 0\leq X\leq d\id,\Delta(X)=\id ,0\leq j\leq d_1-1\big\}$. Define a POVM $\mathbb{M}:=\{P,\id-P\}$ with $P:=\tfrac{1}{d}(X^\star)^T$ that satisfies $\Delta P=\frac{1}{d}\id$. Now, consider the channel and superchannel 
\begin{subequations}
  \begin{align}
    J_{\mN}^{03}&:=  \frac{d}{d-1} \sum_{l=1}^{d-1} \Pi_{0,l}^{03}=\frac{d}{d-1}\left( (F_d\otimes \id) \tilde{\Pi}^{03}(F_d\otimes \id)^\dagger - \frac{1}{d}J_{\mF}^{03}\right), \label{eq:J_M_rewriting} \\
    J_{\mS}^{0123}&=\ketbra{j^\star}{j^\star}_1\otimes \Big(P^2 \otimes J_{\mF_d}^{03} +\left(\id^2-P^2 \right) \otimes J_{\mN}^{03} \Big), \label{eq:J_optimal_measureAct}
\end{align}  
\end{subequations}
where $\tilde{\Pi}=\sum_n \ketbra{nn}{nn}$ denotes the unnormalized maximally correlated state. That $\mN$ is a quantum channel follows since $\partTr{3}{J_{\mN}^{03}}=\tfrac{d}{d-1}(\id^0-\frac{1}{d} \id^0)= \id^0$ and $J_{\mN}^{03}\geq 0$ (recall $\Pi_{k,l}$ are rank-one projectors)\footnote{Alternatively, the latter also follows from  $\tilde\Pi \geq \ketbra{\phi}{\phi}$ (and the Fourier transform being symmetric), which allows us to write $J_{\mF_d}=(F_d\otimes\id)\ketbra{\phi}{\phi} (F_d\otimes \id)^\dagger $. Let $\ket{\psi}=\sum_{k,l}c_{k,l} \ket{kl}$ be arbitrary and let $C=\sum_n c_{n,n}$. Then, 
\begin{align}
    \bra{\psi}(\tilde\Pi- \ketbra{\phi}{\phi}) \ket{\psi}&=\sum_j |c_{j,j}|^2-\frac{1}{d} \left(\sum_n c_{n,n}\right)^*\left(\sum_m c_{m,m}\right)= \sum_j |c_{j,j}|^2-\frac{1}{d} |C|^2 \geq 0, \nonumber
\end{align}
where the last inequality follows from the Cauchy-Schwarz inequality.}.
For this superchannel $\mS$, we obtain
\begin{align}\label{eq:evaluate_Z}
    Z^{03}&=\Big[\partTr{12}{J_{\mS}^{0123}\left(\left( \id^{03}\otimes J_{\mE}^{12} \right)^T \right)}-J_{\mF_d}^{03} \Big]_+= \Big[ \Tr{P^T\mE(\ketbra{j^\star}{j^\star})}-1\Big]_+ J_{\mF_d}^{03}+\Tr{(\id-P)^T\mE(\ketbra{j^\star}{j^\star})} J_{\mN}^{03} \nonumber \\
    &= \left(1-\Tr{P^T\mE(\ketbra{j^\star}{j^\star})}\right) J_{\mN}^{03}=\left(1- \frac{1}{d}\Tr{X^\star\mE(\ketbra{j^\star}{j^\star})}\right) J_{\mN}^{03}= \left(1-\frac{1}{d}M_{\cMIO,d}(\mE)\right) J_{\mN}^{03},
\end{align}
and thus $\norm{\partTr{3}{Z^{03}}}_\infty=1-\frac{1}{d}M_{\cMIO,d}(\mE) $. It remains to show that $\mS\in\cMIO$, which we show by proving $\mS \in \nMIO \subset \cMIO$. To this end, let $\mE_1$ and $\mE_2$ be the quantum channels defined in Fig.~\ref{fig:freeNetwork_combined}. Then, the superchannel in Eq.~\eqref{eq:J_optimal_measureAct} is implemented by the network in Fig.~\ref{fig:freeNetwork_combined}. Moreover, clearly $\mE_1^{0\to 1,R} \in \MIO$ and for $\mE_2^{2,R \to 3}$ we find that
\begin{align}
    \mE_2^{2,R \to 3} (\ketbra{i}{i}_2 \otimes \ketbra{j}{j}_R) &= \Tr{P\ketbra{i}{i}} \mF_d(\ketbra{j}{j}) + \Tr{(\id-P)\ketbra{i}{i}} \mN(\ketbra{j}{j}) \nonumber \\
    &=\frac{1}{d} \mF_d(\ketbra{j}{j}) + \frac{d-1}{d} \mN(\ketbra{j}{j}) =\frac{1}{d} \mF_d(\ketbra{j}{j}) + \left(\frac{1}{d}\id^3 - \frac{1}{d} \mF_d(\ketbra{j}{j}) \right) = \frac{1}{d} \id^3,
\end{align}
where  we used that 
\begin{align}
    \mN(\ketbra{j}{j})&= \frac{d}{d-1} \partTr{R}{\left( (F_d\otimes \id) \tilde{\Pi}^{R3}(F_d\otimes \id)^\dagger- \frac{1}{d} J_{\mF_d^{R3}} \right) \left(\ketbra{j}{j}_R\otimes \id^3\right)} =\frac{d}{d-1} \left(\frac{1}{d} \id -\frac{1}{d} \mF_d(\ketbra{j}{j}\right).
\end{align}
Thus, $\mE_2^{2,R\to 3} \in \MIO$. As such $\mS \in \nMIO$, and thus the optimal conversion distance (under cMIO superchannels and MIO networks) is upper bounded by 
\begin{align}
    \min_{\mS \in \cMIO} \frac{1}{2}\norm{\mS[\mE]-\mF_d}_\diamond \leq \min_{\mS \in \nMIO} \frac{1}{2}\norm{\mS[\mE]-\mF_d}_\diamond \leq \norm{\partTr{3}{Z^{03}}}_\infty =1-\frac{1}{d}M_{\cMIO,d}(\mE),
\end{align}
which together with Eq.~\eqref{eq:MIO_upperBound}, proves the claim that the optimal conversion distance is determined by $\tfrac{1}{d}M_{\cMIO,d}(\mE)$, and it is also achievable by a $\MIO$ network. 
\begin{figure}[ht]
    \centering
    \scalebox{0.4}{\includegraphics[width=0.99\linewidth]{ 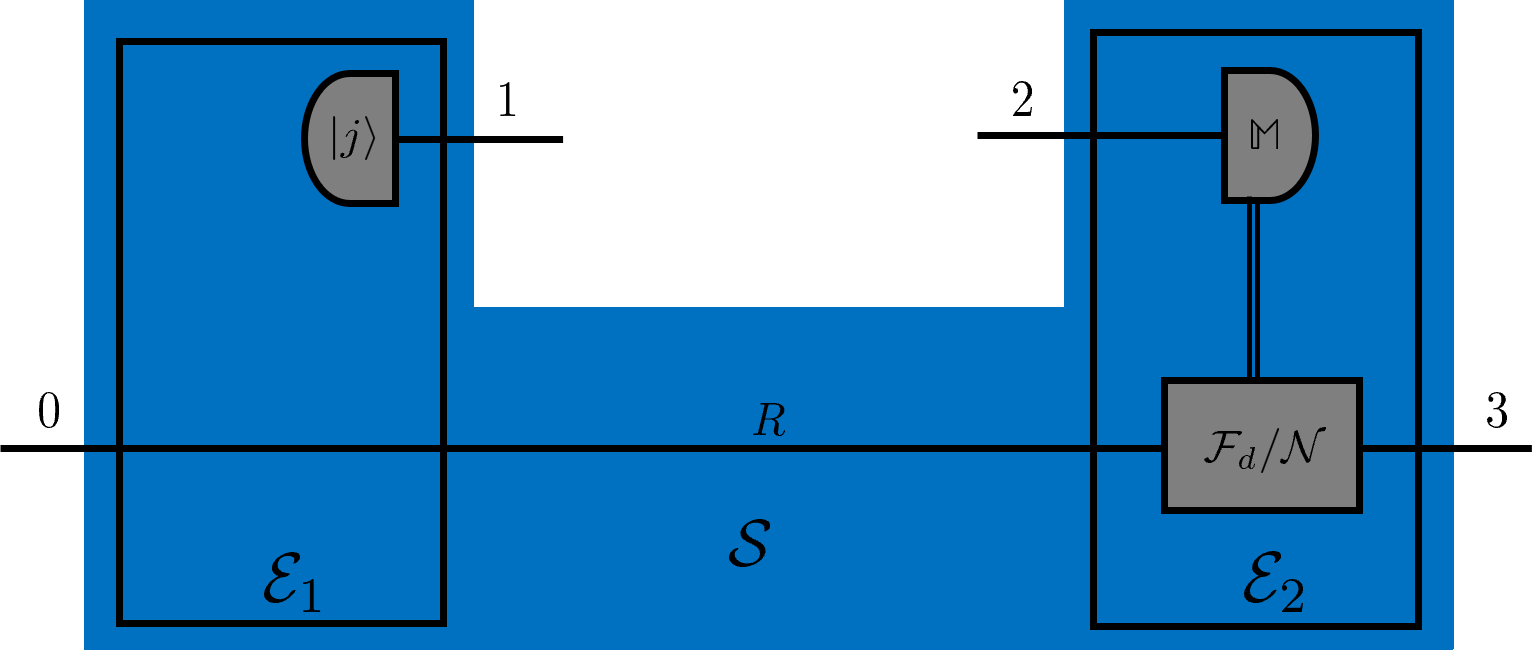}}
    \caption{Implementation of the superchannels $\mS$ via ``measure-and-act" networks, where upon measuring $\mathbb{M}=\{P^T,\id-P^T\}$ we apply $\mF_d$ or $\mN$ as defined in Eq.~\eqref{eq:J_M_rewriting}.}
    \label{fig:freeNetwork_combined}
\end{figure}

In the detection-incoherent setting, let $\{\rho_i^\star,\sigma^\star \}_i$ denote an optimal feasible point of $M_{\cDI,d}(\mE) \!=\! \max \Big\{ \sum_{i=0}^{d_B-1}\braket{i|\mE(\rho_i)| i}_2 \!:\! \rho_i,\sigma \in \D, \rho_i\! \leq\! d \sigma, \Delta \rho_i \!=\!\Delta \sigma \, \forall i: 0\leq i \leq d_B-1\Big\}$. Define
\begin{subequations}
    \begin{align}
            J_{\mG}^{03}&:= \frac{d}{d-1} \sum_{k=1}^{d-1} \Pi_{k,0}^{03} = \frac{d}{d-1} \left( (\id\otimes F_d) \tilde{\Pi}^{03}(\id\otimes F_d)^\dagger -\frac{1}{d} J_{\mF_d}^{03}\right), \label{eq:mG} \\
            J_{\tilde{\mS}}^{0123}&:= \frac{1}{d} X^{12} \otimes J_{\mF_d}^{03} + \left( (\sigma^\star)^1 \otimes \id^2 -\frac{1}{d}X^{12}\right) \otimes J_{\mG}^{03}, \label{eq:J_optimal_measureAct_detection}
    \end{align}
\end{subequations}
where $X^{12}:=\sum_{i=0}^{d_2-1} (\rho_i^\star)^1 \otimes \ketbra{i}{i}_2$ with $\Delta_1 X^{12}=\Delta \sigma^\star \otimes \id$ since $\Delta \rho_i^\star=\Delta \sigma^\star$. This is a valid choice of a superchannel $\tilde{\mS}\in \cDI$, since $\Delta_{3}J_{\tilde{\mS}}^{0123}= \Delta_{32}J_{\tilde{\mS}}^{0123}$ by construction, and 
\begin{align}
    \Delta_{31}J_{\tilde{\mS}}^{0123} &= \id^0 \otimes \Delta(\sigma^\star)^1 \otimes \id^2 \otimes  \frac{1}{d} \id^3,
\end{align}
which is already diagonal, and thus $\Delta_{31}J_{\tilde{\mS}}^{0123}=\Delta_{3210}J_{\tilde{\mS}}^{0123}$. Analogously to the consideration in the creation-incoherent setting in Eq.~\eqref{eq:evaluate_Z}, this yields 
\begin{align}
    \min_{\mS \in \cDI} \frac{1}{2}\norm{\mS[\mE]-\mF_d}_\diamond  \leq \norm{\partTr{3}{Z^{03}}}_\infty =1-\frac{1}{d}M_{\cDI,d}(\mE),
\end{align}
Analogously to the creation-incoherent case, this provides an upper bound to the left-hand side of Eq.~\eqref{eq:InequalityChain_conversion_distance}. \qedhere
\end{proof}

\begin{cor}\label{cor:Monotones_properties}
Let $\mE^{A\to B}$ be a quantum channel, let $\O \in \{\MIO,\DI\}$, and let $d\geq 2$. Then, the functionals $M_{\cO,d}$ are convex monotones under $\cO$. Moreover, the monotones are bounded by $1\leq M_{\cMIO,d}(\mE) \leq \min\{d,d_B\}$ and $1\leq M_{\cDI,d}(\mE) \leq \min\{d,d_A,d_B\}$, and are faithful (for $d>1$) in the sense that $M_{\cO,d}\geq 1$ with $M_{\cO,d}=1$ iff $\mE\in\O$. Additionally, $M_{\cO,d}(\mE)/d$ is non-increasing in $d$ for every channel $\mE$.
\end{cor}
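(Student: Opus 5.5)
Most of the claimed properties follow from Thm.~\ref{thm:CoherenceOptimalConversion_main}, which gives $\tfrac{1}{d}M_{\cO,d}(\mE)=\max_{\mS\in\cO_1}F_{\ChoiF}(\mS[\mE],\mF_d)$; I will use this identity throughout. \textbf{Monotonicity:} for $\mT\in\cO_1$, compatibility gives $\mS*\mT\in\cO_1$ for every $\mS\in\cO_1$, so the maximum for $\mT[\mE]$ is taken over a subset of the superchannels available for $\mE$; hence $M_{\cO,d}(\mT[\mE])\le M_{\cO,d}(\mE)$. \textbf{Convexity:} the map $\mE\mapsto \Tr{J_\mS((J_\mE)^T\otimes J_{\mF_d})}$ is linear, so the maximum over $\mS$ is convex in $\mE$. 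The same conclusion can be read off directly from the SDPs in Eqs.~\eqref{eq:C_D_appen} and~\eqref{eq:D_d_appen}, where the objective is linear in $\mE$ and the feasible set does not depend on $\mE$.

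\textbf{Bounds:} For the lower bound in the MIO case I take $X=\id$ in Eq.~\eqref{eq:C_D_appen}, which gives the value $1$. In the DI case I take $\rho_i=\sigma=\id/d_A$ for all $i$; the objective is then $\tfrac{1}{d_A}\Tr{\mE(\id)}=1$. For the upper bounds in the MIO case, $X\le d\id$ gives $M_{\cMIO,d}\le d$, and $\Tr{X\mE(\ketbra{j}{j})}\le \norm{X}_\infty$. To get $\le d_B$ I would instead use the pinching bound $X\le d_B\Delta X=d_B\id$, as in Eq.~\eqref{eq:Pinching}. In the DI case, $\sum_i\braket{i|\mE(\rho_i)|i}\le d_B$ holds trivially. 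Moreover, $\rho_i\le d\sigma$ gives $\sum_i \braket{i|\mE(\rho_i)|i}\le d\sum_i\braket{i|\mE(\sigma)|i}=d$. For $d_A$ I use $\rho_i\le d_A\Delta\rho_i=d_A\Delta\sigma$ (pinching again), which yields $\le d_A\Tr{\mE(\Delta\sigma)}=d_A$.

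\textbf{Faithfulness:} If $\mE\in\MIO$, then $\mE(\ketbra{j}{j})$ is diagonal, so $\Tr{X\mE(\ketbra{j}{j})}=\Tr{\Delta X\,\mE(\ketbra{j}{j})}=1$. If $\mE\in\DI$, then $\braket{i|\mE(\rho_i)|i}=\braket{i|\mE(\Delta\sigma)|i}$, and summing over $i$ gives $1$. For the converse direction, suppose $\mE\notin\MIO$. Then some $\mE(\ketbra{j}{j})$ has a nonzero off-diagonal element. I take $X=\id+\epsilon H$ with $H$ Hermitian, zero on the diagonal, and aligned with that off-diagonal part; for small $\epsilon$ we still have $0\le X\le d\id$ because $d>1$. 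This gives a value strictly above $1$. For $\mE\notin\DI$ there is a coherent $\rho$ with $\Delta\mE(\rho)\ne\Delta\mE(\Delta\rho)$. I choose $\sigma$ full-rank and diagonal, and set $\rho_i=\Delta\sigma+\epsilon_i C$ with $C$ off-diagonal and $\epsilon_i$ of appropriate signs. For small $\epsilon_i$ the constraint $\rho_i\le d\sigma$ holds because $d>1$, and the objective gains a linear term that is nonzero for a suitable choice. I expect this to be the main technical point: showing that the linear functional $C\mapsto\sum_i\pm\braket{i|\mE(C)|i}$ does not vanish identically when $\mE\notin\DI$. This follows because $\Delta\circ\mE\circ(\idChan-\Delta)\neq0$ means some $\braket{i|\mE(C)|i}\ne0$, and I can pick the sign separately for that $i$ and take $\epsilon_{i'}=0$ for all $i'\ne i$.

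\textbf{Monotonicity in $d$:} For $d_1\le d_2$, the MIO constraint set satisfies $\{X:0\le X\le d_1\id,\Delta X=\id\}\subseteq\{X:0\le X\le d_2\id,\Delta X=\id\}$. This only shows that $M$ itself is nondecreasing, so for $M/d$ I need a rescaling. Given $X$ feasible for $d_2$, I set $X'=\id+\tfrac{d_1}{d_2}(X-\id)$. Then $\Delta X'=\id$, $X'\ge \id-\tfrac{d_1}{d_2}\id\ge0$, and $X'\le \id+\tfrac{d_1}{d_2}(d_2-1)\id\le d_1\id$. Hence $M_{d_1}\ge 1-\tfrac{d_1}{d_2}+\tfrac{d_1}{d_2}M_{d_2}\ge\tfrac{d_1}{d_2}M_{d_2}$. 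In the DI case I use the analogous convex combination $\rho_i'=\tfrac{d_1}{d_2}\rho_i+(1-\tfrac{d_1}{d_2})\sigma$, which keeps the same $\sigma$. This is feasible for $d_1$ because $\rho_i'\le d_1\sigma+(1-\tfrac{d_1}{d_2})\sigma\le d_1\sigma$ (using $d_1\ge1$), and the objective is at least $\tfrac{d_1}{d_2}M_{d_2}$.
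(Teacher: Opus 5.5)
Apart from the last step, your proof is correct and follows the paper's closely. Monotonicity comes from Thm.~\ref{thm:CoherenceOptimalConversion_main}; your explicit use of compatibility, $\mS*\mT\in\cO_1$, is exactly what makes it immediate. Convexity is a pointwise maximum of functionals linear in $\mE$, and you use the same feasible points for the lower bounds. Faithfulness uses the same perturbation idea: your $\rho_i=\sigma+\epsilon_i C$ is the Schr\"odinger-picture version of the paper's perturbation along $M_{i'}-\Delta(M_{i'})$ with $M_{i'}=\mE^\dagger(\ketbra{i'}{i'})$. You should state that $C$ can be taken Hermitian, which holds because $\Delta\circ\mE\circ(\idChan-\Delta)$ is Hermiticity preserving. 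For the bound by $d_A$ you use pinching, $\rho_i\le d_A\Delta\sigma$, while the paper uses $\rho_i\le\id$ and $\sum_i\mE^\dagger(\ketbra{i}{i})=\id$; both work.

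The monotonicity of $M_{\cO,d}/d$ in $d$ fails as written, because mixing with weight $\tfrac{d_1}{d_2}$ does not produce feasible points for $d_1$. Your own estimate gives $X'\le\bigl(1+\tfrac{d_1}{d_2}(d_2-1)\bigr)\id=\bigl(d_1+1-\tfrac{d_1}{d_2}\bigr)\id$, which exceeds $d_1\id$ whenever $d_1<d_2$. Likewise $d_1\sigma+(1-\tfrac{d_1}{d_2})\sigma\ge d_1\sigma$, not $\le$, since $1-\tfrac{d_1}{d_2}\ge 0$; the appeal to $d_1\ge1$ does not help. This is a real failure, not a loose estimate. Take $d_1=2$, $d_2=d_B=4$ and $X=4\ketbra{\psi}{\psi}$ with $\psi$ maximally coherent, which is feasible for $d_2$. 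Then $X'=\tfrac12\id+2\ketbra{\psi}{\psi}$ has eigenvalue $\tfrac52>2$. The DI version fails the same way for $\sigma=\id/d_A$, $d_A=d_2$, and all $\rho_i$ equal to a maximally coherent pure state.

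The largest admissible weight is $\xi=\tfrac{d_1-1}{d_2-1}\le\tfrac{d_1}{d_2}$, which is what the paper uses. Then $X'=(1-\xi)\id+\xi X\le(1+\xi(d_2-1))\id=d_1\id$ and $\rho_i'=(1-\xi)\sigma+\xi\rho_i\le d_1\sigma$. This gives only $M_{\cO,d_1}(\mE)\ge 1+\xi\bigl(M_{\cO,d_2}(\mE)-1\bigr)$. To conclude $\ge\tfrac{d_1}{d_2}M_{\cO,d_2}(\mE)$ you also need the upper bound $M_{\cO,d_2}(\mE)\le d_2$ from your bounds part. The function $M\mapsto 1+\xi(M-1)-\tfrac{d_1}{d_2}M$ is non-increasing (as $\xi\le\tfrac{d_1}{d_2}$) and vanishes at $M=d_2$, so it is nonnegative on $M\le d_2$.
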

\begin{proof}
    That the functionals are monotones follows immediately from Thm.~\ref{thm:CoherenceOptimalConversion_main}. Convexity follows immediately since (for fixed $d$) the feasible sets are independent of the channel, while the corresponding objective functions are linear in $\mE$. Hence, $M_{\cO,d}$ is a pointwise maximum of linear functionals of $\mE$, and therefore convex. Before we show faithfulness, we derive the maximal values of the monotones. Note that in $M_{\cMIO,d}(\mE^{A\to B})=\max\big\{\Tr{X^B \mE^{A\to B}(\ketbra{j}{j}_A)} : 0\leq X^B\leq d\id^B, \quad \Delta(X^B)=\id^B, 0\le j \le d_A-1\big\}$, the first constraint can equivalently be written as $0\leq X^B\leq \min\{d,d_B\}\id^B$. Thus, $\min\{d,d_B\} \geq M_{\cMIO,d}(\mE^{A\to B})\geq 1 $, where the last inequality follows since $X^B=\id^B$ is feasible. 
    For  $M_{\cDI,d}(\mE^{A\to B}) = \max \big\{ \sum_{i=0}^{d_B-1}\braket{i|\mE^{A\to B}(\rho_i^A)| i}_B \!: \rho_i,\sigma \in \D, \rho_i \leq d \sigma, \Delta \rho_i =\Delta \sigma \, \forall i: 0\leq i \leq d_B-1\big\}$, let $M_i^A:= \mE^\dagger(\ketbra{i}{i})$ which satisfies $M_i^A\geq 0$ and $\sum_i M_i^A =\id$. As such, the objective function can be written as $\sum_{i=0}^{d_B-1}\braket{i|\mE(\rho_i)| i}_B=\sum_{i=0}^{d_B-1} \Tr{\rho_i^A M_i^A}$. From $\rho_i\leq d\sigma$, $\rho_i^A\leq \id^A$, and $M_i^A\leq \id^A$, we obtain $M_{\cDI,d}(\mE)\leq \min \{d,d_A,d_B \}$. Conversely, choosing $\rho_i=\sigma\, \forall i$ yields $M_{\cDI,d}(\mE)\geq 1$. 
    
    Next, note that $M_{\cO,d}(\mE)= 1$ for every $\mE\in \O$ is clear from inspection. Conversely, let $\mE\notin \MIO$, then there exists some $j\in \{0,\ldots, d_A-1\}$ such that $\tau_j^B:=\mE^{A\to B}(\ketbra{j}{j}_A) \notin \I$. Now let $H:= \tau_j -\Delta(\tau_j)$, and choose $X:= \id+\epsilon H$ where $\epsilon:= \tfrac{1}{2\norm{H}_\infty} $. Then, $0\leq X\leq d\id$, and 
    \begin{align}
        M_{\cMIO,d}(\mE^{A\to B})\geq 1+ \epsilon \Tr{\tau_j H} = 1+ \epsilon \Tr{ H^2} >1,
    \end{align}
    where we used that $H$ is off-diagonal, and as such, only the off-diagonal part of $\tau_j$ contributes. If $\mE\notin \DI$, then there exists $i^\prime \in \{0,\ldots, d_B-1\}$ such that $M_{i^\prime}\neq\Delta(M_{i^\prime})$. Analogously to before, define $\tilde{H}:=M_{i^\prime}-\Delta(M_{i^\prime})$. Now choose $\sigma=\id^A/d_A$ and $\rho_i= \sigma \, \forall i\neq i^\prime$, and $\rho_{i^\prime}= \sigma +\tilde{\epsilon} \tilde{H}$ with $\tilde{\epsilon}=\tfrac{1}{2d_A\norm{\tilde{H}}_\infty}$. Then, $\{\sigma,\rho_i\}$ is in the feasible set, and we again obtain
    \begin{align}
        M_{\cDI,d}(\mE^{A\to B})\geq 1+ \tilde{\epsilon} \Tr{\tilde{H}^2} >1.
    \end{align}
    Lastly, let $1< d_1\leq d_2$. Then, $0\leq \xi:=\tfrac{d_1-1}{d_2-1} \leq 1$. Now let $X$ denote a feasible point of $M_{\cMIO,d_2}(\mE)$ and let $\{\sigma, \rho_i\}$ denote a feasible point of $M_{\cDI,d_2}(\mE)$. Now define
    \begin{subequations}
    \begin{align}
        X^\prime:= (1-\xi)\id+\xi X, \quad \text{and} \quad \rho_i^\prime := (1-\xi) \sigma +\xi \rho_i \, \forall i\in \{0,\ldots, d_B-1\}.
    \end{align}
    \end{subequations}
    Then, clearly $\Delta(X^\prime)=\id$, and $0\leq X^\prime \leq (1+\xi(d_2-1))\id=d_1\id$. Analogously, $\rho_i^\prime \in \D$, and $\rho_i^\prime\leq (1+\xi(d_2-1))\sigma=d_1\sigma$. As such, these are feasible points in $M_{\cO,d_1}(\mE)$ respectively, yielding
    \begin{align}
        M_{\cO,d_1}(\mE) \geq 1+\xi\left(M_{\cO,d_2}(\mE)-1 \right) = 1+\frac{d_1-1}{d_2-1}\left(M_{\cO,d_2}(\mE)-1 \right) \geq \frac{d_1}{d_2} M_{\cO,d_2}(\mE),
    \end{align}
    where the last inequality follows from $d_1\leq d_2$ and $M_{\cO,d_2}(\mE)\leq d_2$.
\end{proof}

As mentioned in the main text, this result can be directly extended to the case of creation-detection-incoherent operations. Here, we require the following monotones
\begin{subequations}\label{eq:DIO_d}
\begin{alignat}{2} 
    M_{\cDIO,d}(\mE):=&\text{ maximize}\quad && \Tr{Y^{12} \left(J_{\mE}^{12}\right)^T}\\
    &\text{ subject to}\quad && Y^{12},S^{12},T^{12} \geq 0, \sigma \in \I\\
    & &&  Y^{12}+T^{12}+S^{12} \leq d\sigma^1 \otimes \id^2\\
    & && \Delta_1 (Y^{12}+T^{12})=Y^{12}+T^{12} \\
    & && \Delta_2 (Y^{12}+S^{12})=Y^{12}+S^{12} \\
    & && \Delta_{12} (Y^{12}+T^{12})=\Delta_{12} (Y^{12}+S^{12})= \sigma^1\otimes \id^2,
\end{alignat}
\end{subequations}

to establish the analogous Corollary.
\begin{proposition}\label{cor:DIO_Conv}
Let $\mE$ be any quantum channel. Then, for $d\geq 2$
    \begin{align}
        \min_{\mS \in \cDIO_1} \tfrac{1}{2}\norm{\mS[\mE]-\mF_d}_\diamond &=  1-\max_{\mS \in \cDIO_1}F_{\min}(\mS[\mE],\mF_d) = 1-\max_{\mS \in \cDIO_1}F_{\ChoiF}(\mS[\mE],\mF_d) =1-\tfrac{1}{d} M_{\cDIO,d}(\mE).
    \end{align}    
\end{proposition}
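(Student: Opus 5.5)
We follow the same scheme as in the proof of Thm.~\ref{thm:CoherenceOptimalConversion_main}, now with both families of dephasing constraints of Thm.~\ref{thm:CompatibleSupermaps}c) imposed simultaneously. The proof has an optimality part and an achievability part.

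\textbf{Optimality.} By Eq.~\eqref{eq:FuchsVanDeGraaf_channels_sharp} and Eq.~\eqref{eq:InequalityChain}, the diamond conversion distance is at least $1-\max F_{\min}$, which in turn is at least $1-\max F_{\ChoiF}$. So it suffices to show that
\begin{align*}
d^2\max_{\mS\in\cDIO_1}F_{\ChoiF}(\mS[\mE],\mF_d)\le d\,M_{\cDIO,d}(\mE).
\end{align*}
We twirl with $\idChan^{12}\otimes\mT^{03}$. The objective is invariant under this twirl. The unitaries $U_{n,m}$ are tensor products of monomial unitaries (Lem.~\ref{lem:UnitaryRewriting}), so by Lem.~\ref{lem:MonomialUnitary} the twirl commutes with every partial dephasing. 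Hence the feasible set $\cDIO_1$ is also preserved, and we may assume $J_\mS=\sum_{k,l}Y_{k,l}\otimes\Pi_{k,l}$ as in Lem.~\ref{lem:TwirlingInvariantChannel}. The objective then becomes $d\,\Tr{Y_{0,0}J_\mE^T}$.

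Next we translate the four constraints using Eqs.~\eqref{eq:P_l} and~\eqref{eq:Q_k}:
\begin{itemize}
\item From Eq.~\eqref{eq:Twirled_Delta01}: $\sum_kY_{k,l}$ is $\Delta_1$-invariant for every $l$.
\item From Eq.~\eqref{eq:Twirled_Delta32}: $\sum_lY_{k,l}$ is $\Delta_2$-invariant for every $k$.
\item The $\Delta_{12}$-dephased row sums all coincide.
\item The $\Delta_{12}$-dephased column sums all coincide.
\end{itemize}
We group the variables as $Y:=Y_{0,0}$, $T:=\sum_{k\ge1}Y_{k,0}$ and $S:=\sum_{l\ge1}Y_{0,l}$, and put the remaining blocks into an extra slack term. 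With $\sigma:=\Delta M/d$, the constraints reduce to those in Eq.~\eqref{eq:DIO_d}: $Y+T$ is $\Delta_1$-diagonal, $Y+S$ is $\Delta_2$-diagonal, $\Delta_{12}(Y+T)=\Delta_{12}(Y+S)=\sigma\otimes\id$, and $Y+T+S\le M\otimes\id$. The last equalities follow because the row sum indexed by $0$ equals $1/d$ of the total, and the total's full dephasing is $\Delta M\otimes\id$.

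The main obstacle in this direction is replacing $M\otimes\id$ by $d\sigma\otimes\id$ with $\sigma$ incoherent. Unlike the MIO/DI cases, $M$ need not be diagonal a priori. I would try to show that one can dephase $M$ (the relaxation uses $Y+T+S\le M\otimes\id$) or else redefine the monotone's constraint accordingly. It may be necessary to check that the definition in Eq.~\eqref{eq:DIO_d} was designed precisely so that this inequality form absorbs the leftover blocks $Y_{k,l}$ with $k,l\ge1$.

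\textbf{Achievability.} Conversely, from an optimal $(Y,S,T,\sigma)$ of Eq.~\eqref{eq:DIO_d} we build a twirl-invariant comb. We set $Y_{0,0}=Y$, spread $T$ evenly over $Y_{k,0}$ for $k\ge1$, spread $S$ evenly over $Y_{0,l}$ for $l\ge1$, and fill the blocks $Y_{k,l}$ with $k,l\ge1$ with the nonnegative remainder $(d\sigma\otimes\id-Y-S-T)/(d-1)^2$. This remainder is proportional to a product with incoherent $\sigma$ only after dephasing, so this is where one checks the dephasing constraints. Finally we verify the comb normalization of Lem.~\ref{lem:TwirlingInvariantChannel} and the $\cDIO$ conditions.

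We then take $Z:=[\,J_{\mS[\mE]}-J_{\mF_d}]_+$ in Eq.~\eqref{eq:SPD_DiamondDistance}. Because $\mS[\mE]$ is a mixture of $\mF_d$ with Choi weight on $\Pi_{0,0}$ and orthogonal projector blocks, we obtain $\|\trace_3 Z\|_\infty=1-M_{\cDIO,d}(\mE)/d$. This closes the chain of inequalities, so all four quantities coincide.
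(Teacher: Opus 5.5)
Your route is the same as the paper's. You twirl with $\idChan^{12}\otimes\mT^{03}$, group $Y=Y_{0,0}$, $T=\sum_{k\ge1}Y_{k,0}$, $S=\sum_{l\ge1}Y_{0,l}$ and treat the $k,l\ge1$ blocks as slack. For achievability you spread $T$, $S$ and $R=d\sigma\otimes\id-Y-S-T$ evenly over the remaining $\Pi_{k,l}$, then certify with $Z=[J_{\mS[\mE]}-J_{\mF_d}]_+$. In the optimality part you prove only the relaxation inequality, whereas the paper proves equality of the two SDPs. That is fine, because your achievability argument closes the chain.

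The gap is the step you call the main obstacle and leave open.

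\emph{Why it matters.} Without knowing that $M$ is diagonal, $Y+T+S\le M\otimes\id$ does not give $Y+T+S\le \Delta M\otimes\id=d\sigma\otimes\id$. You cannot simply dephase the inequality, because $S$ is not constrained to be $\Delta_1$-invariant. So your grouped variables are never shown to be feasible for Eq.~\eqref{eq:DIO_d}. Redefining the monotone is not an option either: the statement fixes $M_{\cDIO,d}$ with $\sigma\in\I$.

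\emph{Your premise is also inaccurate.} $M$ is diagonal in the MIO case too; only in the DI case can it be coherent. DIO inherits the MIO-type condition $\Delta_0J_\mS=\Delta_{01}J_\mS$.

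\emph{The missing line.} Sum your first bullet over $l$:
\begin{align*}
M\otimes\id=\sum_{k,l}Y_{k,l}=\Delta_1\sum_{k,l}Y_{k,l}=\Delta M\otimes\id .
\end{align*}
Hence $M=\Delta M$, and $\sigma:=M/d\in\I$ since $\Tr{M}=d$. This is exactly the observation the paper's proof uses. With it, your relaxation is feasible and the optimality bound follows.

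\emph{Achievability check.} Here, too, you should state the dephasing check explicitly instead of the vague remark about the remainder. Incoherence of $\sigma$ is what makes the $l\ge1$ row sums $(d\sigma\otimes\id-Y-T)/(d-1)$ $\Delta_1$-invariant. It likewise makes the $k\ge1$ column sums $(d\sigma\otimes\id-Y-S)/(d-1)$ $\Delta_2$-invariant. Finally, all $\Delta_{12}$-dephased row and column sums then equal $\sigma\otimes\id$.
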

\begin{proof}
Let $d\geq2$.For creation-detection-incoherent operations, we can just repeat the argumentation from Eq.~\eqref{eq:InequalityChain_conversion_distance} to Eq.~\eqref{eq:LowerBound_Twirled_b}. Following Thm.~\ref{thm:CompatibleSupermaps}, combining the coherence constraints from Eq.~\eqref{eq:CohCOnst_0} to Eq.~\eqref{eq:CohCOnst_321} yields
\begin{subequations}\label{eq:DIO_SymmetrySDP}
\begin{alignat}{2}
    d \max_{\mS \in \cDIO}F_{\ChoiF}(\mS[\mE],\mF_d)=& \, \text{maximize}\quad && \Tr{Y_{0,0}^{12} \left(J_{\mE}^{12}\right)^T}\\
    &\,\text{subject to}\quad && M^1,Y_{k,l}^{12} \geq 0, \quad \forall\, k,l: 0\leq k,l\leq d-1  \\
    & && \Tr{M^1}=d \quad \text{and} \quad \sum_{k,l=0}^{d-1} Y_{k,l}^{12}= M^1 \otimes \id^2\\
    & && \sum_{k=0}^{d-1} Y_{k,l}^{12}= \Delta_1\sum_{k=0}^{d-1}  Y_{k,l}^{12} \quad \text{and} \quad \sum_{l=0}^{d-1} Y_{k,l}^{12}= \Delta_2\sum_{l=0}^{d-1}
    Y_{k,l}^{12} \quad \forall\, k,l \label{eq:Delta_1Constraint}\\
    & && \Delta_{12}\!\sum_{k=0}^{d-1} \!Y_{k,l}^{12} \!=\! \Delta_{12}\! \sum_{k=0}^{d-1}  \!Y_{k,l^\prime}^{12} \,\,\, \text{and} \,\,\, \Delta_{12}\!\sum_{l=0}^{d-1}\!  Y_{k,l}^{12} \!=\!\Delta_{12}\! \sum_{l=0}^{d-1}  Y_{k^\prime,l}^{12} \,\,\forall\, l,l^\prime, k,k^\prime.
\end{alignat}
\end{subequations}
Next, we show that this SDP equals
\begin{subequations}\label{eq:DIO_SymmetrySDP*}
\begin{alignat}{2}
    d \max_{\mS \in \cDIO}F_{\ChoiF}(\mS[\mE],\mF_d)=&\,\text{maximize}\quad && \Tr{Y^{12} \left(J_{\mE}^{12}\right)^T}\\
    &\,\text{subject to}\quad && Y^{12},M^1,S^{12},T^{12},C^{12} \geq 0, \\
    & && \Tr{M^1}=d \\
    & &&  Y^{12}+T^{12}+S^{12}+C^{12}= M^1 \otimes \id^2\\
    & && \Delta_1 (Y^{12}+T^{12})=Y^{12}+T^{12} \\
    & && \Delta_1 (S^{12}+C^{12})=S^{12}+C^{12} \label{eq:const_S+C-deph} \\
    & && \Delta_2 (Y^{12}+S^{12})=Y^{12}+S^{12} \\
    & && \Delta_2 (T^{12}+C^{12})=T^{12}+C^{12} \label{eq:const_T+C-deph} \\
    & && (d-1)\Delta_{12} (Y^{12}+T^{12})= \Delta_{12} (S^{12}+C^{12}) \label{eq:Delta_12Y+T}  \\
    & && (d-1)\Delta_{12} (Y^{12}+S^{12})= \Delta_{12} (T^{12}+C^{12}). \label{eq:Delta_12Y+S}
\end{alignat}
\end{subequations}
To this end, let $\{Y_{k,l} \}_{0\leq k,l\leq d-1}$ denote a set of feasible variables in Eq.~\eqref{eq:DIO_SymmetrySDP}, then define
\begin{align}
    Y:=Y_{0,0},\quad T:=\sum_{k=1}^{d-1} Y_{k,0}, \quad  S:=\sum_{l=1}^{d-1} Y_{0,l}, \quad
    A:= \sum_{l=1}^{d-1}\sum_{k=0}^{d-1} Y_{k,l}, \quad B:= \sum_{k=1}^{d-1}\sum_{l=0}^{d-1} Y_{k,l}, \quad \text{and}\quad C:= \sum_{l=1}^{d-1}\sum_{k=1}^{d-1} Y_{k,l}.
\end{align}
Clearly, $Y,T,S,A,B,C \geq 0$ and the normalization constraint of
\begin{align}
    M^1\otimes \id^2 = \sum_{k,l} Y_{k,l} = Y+T+S+C.
\end{align}
is satisfied. From Eq.~\eqref{eq:Delta_1Constraint} follows that $\Delta_1 A= A$ and $\Delta_1 (Y+T)= Y+T$. Analogously, $\Delta_2 B= B$ and $\Delta_2(Y+S)=Y+S$ follows. Moreover, since $A=S+C$ and $B=T+C$ holds by construction, this can equivalently be written as $\Delta_1 (C+S)= C+S$ and likewise $\Delta_2 (C+T)= C+T$. Lastly,
\begin{align}
    \Delta_{12} A^{12}= \Delta_{12}\left(\sum_{l=1}^{d-1}\sum_{k=0}^{d-1} Y_{k,l}\right) =  \sum_{l=1}^{d-1} \Delta_{12} (Y+T) =(d-1) \Delta_{12} (Y+T),
\end{align}
which is again equivalent to $\Delta_{12}(S+C)=(d-1) \Delta_{12} (Y+T)$, and analogously for $\Delta_{12}(T+C)=(d-1) \Delta_{12} (Y+S)$. Thus, $(M,Y,T,S,C)$ is a feasible point in Eq.~\eqref{eq:DIO_SymmetrySDP*}. Conversely, let $(M^\star, Y^\star, T^\star, S^\star, C^\star)$ denote an optimal feasible point of Eq.~\eqref{eq:DIO_SymmetrySDP*}. Then, choosing $M=M^\star$ and optimization variables in Eq.~\eqref{eq:DIO_SymmetrySDP} of
\begin{align}
    Y_{0,0}=Y^\star, \quad Y_{k,0}=\frac{T^\star}{d-1}, \quad Y_{0,l}=\frac{S^\star}{d-1}, \quad Y_{k,l}=\frac{C^\star}{(d-1)^2} \quad \forall k,l: 1\leq k,l\leq d-1,
\end{align}
provides a feasible solution to Eq.~\eqref{eq:DIO_SymmetrySDP}, which implies the equality of Eq.~\eqref{eq:DIO_SymmetrySDP} and Eq.~\eqref{eq:DIO_SymmetrySDP*}. Next, notice that $M^1=\Delta_1 M^1$ must hold since
\begin{align}
    M^1\otimes \id^2= Y^{12}+T^{12}+S^{12}+C^{12} =\Delta_1\left(Y^{12}+T^{12}+S^{12}+C^{12}\right).
\end{align}
From $\Delta_1 M^1=M^1$ together with $\Delta_1(Y^{12}+T^{12})=Y^{12}+T^{12}$ and $\Delta_2(Y^{12}+S^{12})=Y^{12}+S^{12}$, it follows that the constraints in Eq.~\eqref{eq:const_S+C-deph} and Eq.~\eqref{eq:const_T+C-deph} are redundant. Furthermore, the constraints in Eqs.~\eqref{eq:Delta_12Y+T} and \eqref{eq:Delta_12Y+S} can be expressed as $d \Delta_{12}(Y^{12}+S^{12})= M^1 \otimes \id^2$ and $d \Delta_{12}(Y^{12}+T^{12})= M^1 \otimes \id^2$. Lastly, we can eliminate the slack variable $C^{12}$ by noting that $0\leq C^{12}= \Delta_1(M^1) \otimes \id^2-(Y^{12}+T^{12}+S^{12})$. Thus, we can replace the normalization constraint by $Y^{12}+S^{12}+T^{12}\leq M^1\otimes \id^2$, and if we define $\sigma^1=\frac{1}{d} M^1$ we can write the optimization as
\begin{align}
     d \max_{\mS \in \cDIO}F_{\ChoiF}(\mS[\mE],\mF_d)=M_{\cDIO,d}(\mE).
\end{align}
It remains to show that this is achievable: Let $(Y,S,T,\sigma)$ denote an optimal feasible point of Eq.~\eqref{eq:DIO_d}, and set $R^{12}:= d\,\sigma^1 \otimes \id^2-Y^{12}-S^{12}-T^{12}$
\begin{align}
J_{\mS}^{0123}&=Y^{12}\otimes \Pi_{0,0}^{03}+ \frac{T^{12}}{d-1}\otimes\sum_{k\neq 0}\Pi_{k,0}^{03}+\frac{S^{12}}{d-1}\otimes\sum_{l\neq 0}\Pi_{0,l}^{03}+\frac{R^{12}}{(d-1)^2}\otimes\sum_{k,l \neq 0}\Pi_{k,l}^{03} \nonumber \\
&= \frac{Y^{12}}{d}\otimes J_{\mF}^{03}+ \frac{T^{12}}{d} \otimes J_{\mG}^{03} +\frac{S^{12}}{d} \otimes J_{\mN}^{03}+\frac{R^{12}}{d} \otimes J_{\mL}^{03},
\end{align}
where the channels $\mN$ and $\mG$ are defined as in Eq.~\eqref{eq:J_M_rewriting} and Eq.~\eqref{eq:mG}, and
\begin{align}
    J_{\mL}^{03} := \frac{d}{(d-1)^2} \sum_{k,l \neq 0}\Pi_{k,l}^{03} = \frac{d}{(d-1)^2} \left( \id^{03}-(\id\otimes F) \tilde{\Pi}(\id\otimes F)^\dagger -(F\otimes \id) \tilde{\Pi}(F\otimes \id)^\dagger+\frac{1}{d}J_{\mF}\right).
\end{align}
This defines a valid superchannel. Moreover, $\mS\in \cDIO$ since $\Delta_1(Y+T)=Y+T$ implies $\Delta_0 J_{\mS}^{0123}= \Delta_{01} J_{\mS}^{0123} $ (see Eq.~\eqref{eq:Twirled_Delta01} and the following discussion). Analogously,  $\Delta_3 J_{\mS}^{0123}= \Delta_{32} J_{\mS}^{0123} $, $\Delta_{02} J_{\mS}^{0123}= \Delta_{0123} J_{\mS}^{0123} $, and $\Delta_{31}J_{\mS}^{0123}= \Delta_{3210} J_{\mS}^{0123} $ follow from $\Delta_2(Y+S)=Y+S$ and $\Delta_{12}(Y+T)=\Delta_{12}(Y+S)=\sigma \otimes \id$
respectively. Then,
\begin{align}
    Z^{03}= \left[J_{\mS[\mE]}^{03}-J_{\mF}^{03}\right]_+ =\frac{1}{d}\Tr{T^{12}\left(J_{\mE}^{12}\right)^T}J_{\mG}^{03}+\frac{1}{d}\Tr{S^{12}\left(J_{\mE}^{12}\right)^T}J_{\mN}^{03}+\frac{1}{d}\Tr{R^{12}\left(J_{\mE}^{12}\right)^T}J_{\mL}^{03},
\end{align}
where we used that the Choi operators have mutually orthogonal support. Thus,
\begin{align}
    \min_{\mS \in \cDIO} \frac{1}{2}\norm{\mS[\mE]-\mF_d}_\diamond &\leq \frac{1}{d} \left(\Tr{\left(T^{12}+S^{12}+R^{12}\right) \left(J_{\mE}^{12}\right)^T} \right)=  \frac{1}{d} \left(\Tr{\left( d\,\sigma^1 \otimes \id^2 -Y^{12} \right) \left(J_{\mE}^{12}\right)^T} \right) \nonumber \\
    &= 1-\frac{1}{d} M_{\cDIO,d}(\mE).
\end{align}

\end{proof}

So far, we have only considered distillation in parallel protocols. We now address the question of whether adaptive protocols offer an advantage compared to parallel protocols. Already in the case of $\cMIO$ superchannels, such an advantage exists, as we show next.

\SequentialAdvantage*
\begin{proof}
    Consider two copies of the unitary qubit channel $\mU$ defined by
    \begin{align}
        U:= \begin{pmatrix}
             \cos(\pi/8) & -\sin(\pi/8) \\
             \sin(\pi/8) & \cos(\pi/8)
        \end{pmatrix},
    \end{align}
    for which we now show that sequential distillation (to the Hadamard gate) outperforms parallel distillation. Note that 
    \begin{align}
        U^2Z= \frac{1}{\sqrt{2}}\begin{pmatrix}
            1 & -1 \\ 
            1 & 1
        \end{pmatrix} Z = H, 
    \end{align}
    and thus a sequential protocol can achieve the conversion to a Hadamard gate exactly by simply applying a Z gate followed by the unitary $U$ twice. Next, we show that this is impossible in a parallel protocol. 
    Recall that according to Thm.~\ref{thm:CoherenceOptimalConversion_main}, the parallel conversion distance to a Hadamard gate is determined by Eq.~\eqref{eq:C_D}, which takes the form of
    \begin{align}
        M_{\cMIO,2}(\mU^{\otimes 2}) &= \max\Big\{ \Tr{A \,\psi_{i,j}}: 0\leq A\leq 2\id, \Delta(A)=\id, 0\le i,j \leq 1 \Big\},
    \end{align}
    where $\ket{\psi_{i,j}}= U^{\otimes 2} \ket{i,j}$. The four states are related to each other by basis permutations combined with sign flips. Since the feasible set is invariant under such operations, the optimization problem is simply given by 
    \begin{align}\label{eq:M_MIO_U_primal}
        M_{\cMIO,2}(\mU^{\otimes 2}) &= \max\Big\{ \Tr{A \,\psi}: 0\leq A\leq 2\id, \Delta(A)=\id \Big\},
    \end{align}
    where $\ket{\psi}:=\ket{\psi_{0,0}}= \cos^2(\pi/8)\ket{0,0}+\sin(\pi/8)\cos(\pi/8) \left(\ket{1,0}+\ket{0,1}\right)+\sin^2(\pi/8) \ket{1,1}$. Switching to the dual problem, introduce dual variables $Y, D \in \Herm$, such that the Lagrangian of Eq.~\eqref{eq:M_MIO_U_primal} is 
    \begin{align}
        L(A;Y,D)\!=\!\Tr{\ketbra{\psi}{\psi} A}\!+\!\Tr{Y\left(2\id\!-\!A\right)}\!+\!\Tr{D\left(\id\!-\!\Delta(A)\right)}\!=\!2 \Tr{Y}\!+\!\Tr{D}\!+\!\Tr{A \left(\ketbra{\psi}{\psi}\!-Y-\Delta(D)\right)}.
    \end{align}
    The dual, for which strong duality holds, is thus
    \begin{align}
        M_{\cMIO,2}(\mU^{\otimes 2}) =\min\left\{2 \Tr{Y}+\Tr{D}: Y\geq 0, Y+D \geq \ketbra{\psi}{\psi},\Delta D= D, \, Y,D\in \Herm \right\}.
    \end{align}
    Now let $c:=\cos^2(\pi/8)$, and define
    \begin{align}
        \ket{\xi} := \ket{0,0}, \quad \text{and} \quad \ket{\mu} := \frac{1}{\sqrt{1-c^2}} \left( \ket{\psi} -c \ket{0,0}\right)\quad \text{and} \quad \ket{\phi_{\pm}} :=\frac{1}{\sqrt{2}} (\ket{\xi} \pm \ket{\mu}),
    \end{align}
    from which follows that $\braket{\xi|\mu}=0$, $\Delta(\ketbra{\xi}{\mu})=0$, and $\ket{\psi}=c \ket{\xi} +\sqrt{1-c^2} \ket{\mu}$. We can now choose variables
    \begin{subequations}
    \begin{align}
        Y^\star&:= (1-c^2+ c\sqrt{1-c^2} )\ketbra{\phi_+}{\phi_+}, \\
        D^\star&:= (2c^2-1) \ketbra{\xi}{\xi}.
    \end{align}
    \end{subequations}
    Clearly, $Y^\star\geq 0$, and $\Delta D^\star= D^\star$. Next, note that
    \begin{align}
        D^\star-\ketbra{\psi}{\psi}&=(2c^2-1) \ketbra{\xi}{\xi} - c^2 \ketbra{\xi}{\xi}-(1-c^2) \ketbra{\mu}{\mu}- c\sqrt{1-c^2}\left(\ketbra{\mu}{\xi}+\ketbra{\xi}{\mu}\right) \nonumber \\
        &= -(1-c^2) \ketbra{\xi}{\xi}-(1-c^2) \ketbra{\mu}{\mu}- c\sqrt{1-c^2}\left(\ketbra{\mu}{\xi}+\ketbra{\xi}{\mu}\right) \nonumber \\
        &= -\left(1-c^2+c\sqrt{1-c^2)}\right) \ketbra{\phi_+}{\phi_+} - \left(1-c^2-c\sqrt{1-c^2)}\right) \ketbra{\phi_-}{\phi_-},
    \end{align}
    where the last line follows from a straightforward calculation. Thus, 
    \begin{align}
        Y^\star+D^\star-\ketbra{\psi}{\psi}= -\left(1-c^2-c\sqrt{1-c^2)}\right) \ketbra{\phi_-}{\phi_-} \geq 0,
    \end{align}
    where the last inequality holds since for $c=\cos^2(\pi/8)$,  $1-c^2-c\sqrt{1-c^2)}<0$. From this dually feasible point, we obtain
    \begin{align}
        1+2c\sqrt{1-c^2}=2\Tr{Y^\star} +\Tr{D^\star}\geq M_{\cMIO,2}(\mU^{\otimes 2}).
    \end{align}
    Lastly, reinserting $c=\cos^2(\pi/8)$ yields 
    \begin{align}
        M_{\cMIO,2}(\mU^{\otimes 2}) \leq 1+2c\sqrt{1-c^2}=\frac{1}{4} \left(4+\sqrt{7+4\sqrt{2}}\right) \approx 1,89 <2,
    \end{align}
    where we note that the inequality is, in fact, an equality, which can be shown from the primal problem. However, to conclude the proof by Thm.~\ref{thm:CoherenceOptimalConversion_main}, we do only need the inequality to see that
    \begin{align}
        \min_{\mS \in \cMIO} \frac{1}{2}\norm{\mS\left[\mU^{\otimes 2}\right]-\mF_2}_\diamond =1-\frac{1}{2} M_{\cMIO,2}\left(\mU^{\otimes 2}\right)>0,
    \end{align}
    i.e., there exists no parallel protocol achieving this conversion exactly.\qedhere
\end{proof}

\subsubsection{Optimal conversion distances to the Fourier transform under maximally free superchannels}\label{sec:ConvMaxFree}

In this section, we provide the optimal conversion distance under maximally free superchannels for dynamical coherence. To this end, we require the following Lemma.

\begin{lem}\label{lem:dualCones}
Let $A,B$ denote arbitrary finite-dimensional quantum systems. Let $\O \in \{ \MIO, \DI, \DIO\}$, and let $\aff\left(\O(A\to B)\right)^*=\left\{X^{AB}: \Tr{X^{AB}J_\mM^{AB}} = 0 \, \forall \mM^{A\to B} \in\O(A\to B)\right\}$ as in Eq.~\eqref{eq:dualCones_appen_aff}. Then, 
\begin{subequations}\label{eq:dualCones_Coherence}
    \begin{align}
      \aff\left(\MIO(A\to B)\right)^*&= \left\{X^{AB}: X^{AB}= P^A\otimes \id^B +\sum_{i=0}^{d_A-1} \ketbra{i}{i}_A\otimes Q_i^B, \Tr{P}=0, \Delta Q_i =0, P,Q_i \in \Herm \right\}, \\
      \aff\left(\DI(A\to B)\right)^*&= \left\{ X^{AB}: X^{AB}= P^A\otimes \id^B +\sum_{i=0}^{d_B-1} Q_i^A \otimes \ketbra{i}{i}_B, \Tr{P}=0, \Delta Q_i =0, P,Q_i \in \Herm \right\}, \\
      \aff\left(\DIO(A\to B)\right)^*&=\left\{X^{AB}: X^{AB}= P^A\otimes \id^B + (\Delta_A-\Delta_B) Q^{AB}, \,\Tr{P}=0, P,Q \in \Herm \right\}.
    \end{align}
\end{subequations}
\end{lem}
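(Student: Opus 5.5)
The plan is to compute the annihilator of the linear span of the Choi operators of each free set, using the Choi characterizations from the main text: $\mM\in\MIO$ iff $\Delta_A J_\mM=\Delta_{AB}J_\mM$, $\mM\in\DI$ iff $\Delta_B J_\mM=\Delta_{AB}J_\mM$, and $\mM\in\DIO$ iff $\Delta_A J_\mM=\Delta_B J_\mM$. Since $\aff(\O)^*$ only constrains $\Tr{XJ_\mM}=0$ on $\O$, it equals the orthogonal complement (in $\Herm$ with the Hilbert--Schmidt product) of $\operatorname{span}\{J_\mM:\mM\in\O\}$. So the first step is to identify this span as the set $\mathcal{V}_\O$ of Hermitian $J$ satisfying the respective linear dephasing condition together with $\partTr{B}{J}\propto\id_A$. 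I would prove this by showing that every such $J$ is a real linear combination of free Choi states. Concretely, take a full-rank free Choi state, e.g.\ $J_0=\id_{AB}/d_B$ (the completely depolarizing channel, which lies in $\DIO$), and note that for $J\in\mathcal{V}_\O$ with $\partTr{B}{J}=c\,\id_A$, the operator $J_0+\epsilon(J-c J_0)$ is positive for small $\epsilon$, is trace preserving, and satisfies the linear dephasing condition. It is therefore a free Choi state, and $J$ lies in the span.

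The second step is to compute $\mathcal{V}_\O^{\perp}$. Write $\mathcal{V}_\O=\mathcal{K}_\O\cap\mathcal{T}$, where $\mathcal{K}_\O$ is the kernel of the dephasing condition and $\mathcal{T}=\{J:\partTr{B}{J}\propto\id_A\}$. Then $\mathcal{V}_\O^\perp=\mathcal{K}_\O^\perp+\mathcal{T}^\perp$. For the trace-preservation part, $\mathcal{T}^\perp=\{P^A\otimes\id^B:\Tr{P}=0\}$, because $\Tr{(P\otimes\id)J}=\Tr{P\,\partTr{B}{J}}$. For the dephasing part:
\begin{itemize}
\item In the MIO case $\mathcal{K}=\ker(\Delta_A-\Delta_{AB})$. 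The map $L=\Delta_A-\Delta_{AB}=\Delta_A(\idChan-\Delta_B)$ is self-adjoint, so $\mathcal{K}^\perp=\operatorname{ran}L$. This range consists of the operators $\sum_i\ketbra{i}{i}_A\otimes Q_i^B$ with $\Delta Q_i=0$.
\item The DI case is symmetric, giving $\sum_i Q_i^A\otimes\ketbra{i}{i}_B$ with $\Delta Q_i=0$.
\item In the DIO case $\Delta_A-\Delta_B$ is self-adjoint, and its range gives $(\Delta_A-\Delta_B)Q$.
\end{itemize}
Adding $\mathcal{T}^\perp$ then yields the three stated forms.

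The main obstacle is the first step, namely making sure the span of free Choi states is exactly $\mathcal{K}_\O\cap\mathcal{T}$ rather than something smaller. The perturbation argument around the maximally mixed Choi state handles this. I only need to check that $J-cJ_0$ still satisfies the dephasing condition (true, since $J_0$ is fully diagonal and the condition is linear) and has vanishing partial trace over $B$. Hermiticity is automatic throughout because all maps involved preserve $\Herm$. Finally, I would note that $\mathcal{V}_\O^\perp=\mathcal{K}_\O^\perp+\mathcal{T}^\perp$ is standard linear algebra in finite dimensions.
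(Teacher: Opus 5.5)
Your proposal is correct and follows essentially the paper's route. You characterize the free Choi operators by perturbing around the fully depolarizing Choi state $\id^{AB}/d_B$, and then use $(U\cap W_\O)^\perp=U^\perp+W_\O^\perp$, where $W_\O^\perp$ is the range of the self-adjoint dephasing map. The only cosmetic difference is that you work with the linear span and absorb the trace condition directly into $\mathcal{T}^\perp=\{P\otimes\id:\Tr{P}=0\}$. The paper instead works with the affine hull $\Xi_\O$ and imposes $\Tr{XJ_{\mP}}=0$ afterwards.
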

\begin{proof}
Define the linear and self-adjoint maps
\begin{align}\label{eq:freeSet_annahilatorMap}
    \mD_{\O}^{AB} =
    \begin{cases}
        \Delta_A\otimes \idChan^B-\Delta_{AB}, \quad \quad \text{if} \, \O=\MIO, \\
        \idChan^A\otimes \Delta_B-\Delta_{AB}, \quad \quad  \text{if} \,\O=\DI, \\
        \Delta_A\otimes \idChan^B-\idChan^A\otimes \Delta_{B},  \text{if}  \,\O=\DIO. \\
    \end{cases}
\end{align}
For every $\O\in \{\MIO,\DI,\DIO\}$ and every quantum channel $\mM^{A\to B}$ we have $ \mM^{A\to B} \in \O(A\to B)$ iff $\mD_{\O}^{AB}(J_\mM^{AB})=0$. We first characterize the affine hull of the Choi operators of free quantum channels, and we claim that
\begin{align}\label{eq:S_1,S_2}
    \{J_Y^{AB} \!\in\! \Herm: Y^{A\to B} \!\in\! \aff(\O(A\to B))\} \! = \!\{J_Y^{AB}\!\in\! \Herm: \partTr{B}{J_Y^{AB}}\!=\!\id^A, \mD_{\O}^{AB}(J_Y^{AB})\!=\!0 \}=:\Xi_{\O},
\end{align}
where $J_Y^{AB}$ denotes the Choi operator of the linear map $Y^{A\to B}$. To see that this holds, take $Y^{A\to B} \!\in\! \aff(\O(A\to B))$ with $Y=\sum_j \xi_j \mM_j$ with $\mM_j^{A\to B} \in \O(A\to B)$, $\xi_j\in \mathbb{R}$, and $\sum_j \xi_j=1$. Then, we have that $\partTr{B}{J_{Y}^{AB}}=\id^A$ and $\mD_{\O}^{AB}(J_{Y}^{AB})=0$, which proves the inclusion from left to right in Eq.~\eqref{eq:S_1,S_2}. Conversely, take any $J_Y^{AB}\in \Xi_{\O}$, and let
\begin{align}
    J_{\mathcal{P}}^{AB}=\id^{AB}/d_B
\end{align}
be the Choi state of the (total) depolarizing channel $\mP^{A\to B} \in \O(A\to B)$. Now $J_{\mathcal{P}}^{AB}>0$, $ \partTr{B}{J_{\mathcal{P}}^{AB}}=\id^A$, and $ \mD_{\O}^{AB}(J_{\mathcal{P}}^{AB})=0$. The operator $J^{AB}:=J_Y^{AB}-J_{\mathcal{P}}^{AB}$ satisfies $\partTr{B}{J^{AB}}=0$ and  $\mD_{\O}^{AB}(J^{AB})=0$. Now let $\lambda > d_B \norm{J^{AB}}_\infty$, then define the positive semidefinite operator
\begin{align}
    J_{\mK}^{AB} := J_{\mathcal{P}}^{AB}+\frac{1}{\lambda} J^{AB} =  \left(1-\frac{1}{\lambda}\right) \frac{\id^{AB}}{d_B}+\frac{1}{\lambda} J_{Y}^{AB}\geq 0,
\end{align}
which follows from $J^{AB}\geq -\norm{J^{AB}}_\infty \id^{AB}$ and the choice of $\lambda$. Additionally, from $\partTr{B}{J_{\mK}^{AB}}=\id^A$ it follows that $\mK^{A\to B}$ defines a channel and $\mD_{\O}^{AB}(J_{\mK}^{AB})=0$ is equivalent to $ \mK^{A\to B} \in \O(A\to B)$. Lastly, since we can write $J_Y^{AB}=\lambda J_{\mK}^{AB}+(1-\lambda) J_{\mP}^{AB}$, we have  $J_{Y}{AB} \in\{J_Y^{AB} \!\in\! \Herm: Y^{A\to B} \!\in\! \aff(\O(A\to B))\}$, and using the {\Choi} isomorphism we know that $Y^{A\to B} \in \aff(\O(A\to B))$. This proves the equality in Eq.~\eqref{eq:S_1,S_2}.

Next, because the Choi representation is linear, a linear functional vanishes on all free Choi operators iff it vanishes on their affine hull. Therefore, 
\begin{align}\label{eq:aff_O_intermediate}
    \aff(\O(A\to B))^* &=\left\{X^{AB} \in \Herm: \Tr{X^{AB}J_{\mM}^{AB}} = 0 \, \forall \mM^{A\to B} \in\O(A\to B)\right\} \nonumber \\
    &= \left\{X^{AB} \in \Herm: \Tr{X^{AB}J_{Y}^{AB}} = 0 \, \forall Y^{A\to B} \in \aff(\O(A\to B))\right\} \nonumber \\
    &\overset{\eqref{eq:S_1,S_2}}{=}\left\{X^{AB} \in \Herm: \Tr{X^{AB}J^{AB}} = 0 \, \forall J^{AB} \in \Xi_{\O}\right\}
\end{align}
More technically, $X^{AB}\in \left\{X^{AB} \in \Herm: \Tr{X^{AB}J_{\mM}^{AB}} = 0 \, \forall \mM^{A\to B} \in\O(A\to B)\right\}$, take $Y^{A\to B}=\sum_j \xi_j \mM_j^{A\to B} \in \aff(\O(A\to B))$, then $\Tr{X^{AB} J_{Y}^{AB}}=\sum_j \xi_j \Tr{ X^{AB} J_{\mM_j}^{AB} }=0$ since $\mM_j^{A\to B} \in \O(A\to B) \, \forall j$. Conversely, if $\Tr{X^{AB} J_{Y}^{AB}}=0$ for all  $Y^{A\to B}\in\aff(\O(A\to B))$, in particular it must vanish for all  $Y^{A\to B} \in \O(A\to B)$ itself. Next, define the subspaces
\begin{align}
    U:=\{J_{Y}^{AB} \in  \Herm: \partTr{B}{J_{Y}^{AB}}=0 \}, \quad \text{and} \quad W_{\O}:=\{J_{Y}^{AB} \in \Herm: \mD_{\O}^{AB}(J_{Y}^{AB})=0 \}.
\end{align}
Since $J_{\mathcal P}^{AB}\in \Xi_{\O}$, Eq.~\eqref{eq:S_1,S_2} yields
\begin{align}
     \Xi_{\O}=J_{\mathcal P}^{AB}+ \left(U\cap W_{\O}\right),
\end{align}
where $ J_{\mathcal P}^{AB}+\left(U\cap W_{\O}\right):=\left\{J_{\mathcal P}^{AB}+J_{Y}^{AB}:J_{Y}^{AB}\in U\cap W_{\O}\right\}$. Combining this with Eq.~\eqref{eq:aff_O_intermediate} yields
\begin{align}\label{eq:annihilator_conditions}
    X^{AB}\in\aff\left(\O(A\to B)\right)^*\quad\Longleftrightarrow\quad\Tr{X^{AB}\left(J_{\mathcal P}^{AB}+J_{Y}^{AB}\right)}=0\quad \forall\,J_{Y}^{AB}\in U\cap W_{\O},
\end{align}
We next rewrite the condition on the right. Taking $J_{Y}^{AB}=0 \in U\cap W_{\O}$  implies $\Tr{X^{AB}J_{\mathcal P}^{AB}}=0$. Eq.~\eqref{eq:annihilator_conditions} then implies 
\begin{align}
    0=\Tr{X^{AB}\left(J_{\mathcal P}^{AB}+J_{Y}^{AB}\right)}=\Tr{X^{AB}J_{\mathcal P}^{AB}}+\Tr{X^{AB}J_{Y}^{AB}}=\Tr{X^{AB}J_{Y}^{AB}} \quad J_{Y}^{AB}\in U\cap W_{\O},
\end{align}
which is equivalent to $X^{AB} \in(U\cap W_{\O})^\perp $, where the orthogonal complements are taken with respect to the Hilbert–Schmidt inner product. This, together with $\Tr{X^{AB}J_{\mathcal P}^{AB}}=0$ clearly imply Eq.~\eqref{eq:annihilator_conditions}, and thus we can equivalently write
\begin{align}
    X^{AB}\in\aff\left(\O(A\to B)\right)^*\quad\Longleftrightarrow \quad \Tr{X^{AB}J_{\mathcal P}^{AB}}=0 \quad \text{and} \quad X^{AB} \in(U\cap W_{\O})^\perp.
\end{align}
Since we are considering finite-dimensional spaces, we have $\left(U\cap W_{\O}\right)^\perp=U^\perp+W_{\O}^\perp= \{u+v: u \in U^\perp, v\in W_{\O}^\perp \}$. Using the adjoint of the partial trace, and the self-adjointness of $\mD_{\O}^{AB}$, we obtain
\begin{align}
    U^\perp=\left\{P^A\otimes\id^B: P^A\in\Herm \right\} \quad \text{and} \quad W_{\O}^\perp=\left\{\mD_{\O}^{AB}(Q^{AB}):Q^{AB}\in\Herm\right\}.
\end{align}
It follows that $X^{AB}\in\left(U\cap W_{\O}\right)^\perp\quad\Longleftrightarrow\quad X^{AB}=P^A\otimes\id^B+\mD_{\O}^{AB}(Q^{AB})$ for some $P^A,Q^{AB}\in\Herm$. It remains to impose the condition that $\Tr{X^{AB}J_{\mathcal P}^{AB}}=0$. Using $J_{\mathcal P}^{AB}=\id^{AB}/d_B$ and $\mD_{\O}^{AB}(J_{\mathcal P}^{AB})=0$, we find
\begin{align}
    0\!=\!\Tr{X^{AB}J_{\mathcal P}^{AB}}\!=\!\Tr{\left(P^A\otimes\id^B\right)J_{\mathcal P}^{AB}} \!+\!\Tr{\mD_{\O}^{AB}(Q^{AB})J_{\mathcal P}^{AB}}\!=\!\Tr{P^A}\!+\!\Tr{Q^{AB}\mD_{\O}^{AB}(J_{\mathcal P}^{AB})} \!=\! \Tr{P^A}.
\end{align}
Hence,
\begin{align}\label{eq:unified_dual_characterization}
    \aff\left(\O(A\to B)\right)^*=\left\{ X^{AB}\in \Herm:X^{AB} = P^A\otimes\id^B +  \mD_{\O}^{AB}(Q^{AB}),  \ \Tr{P^A}=0, \ P^A,Q^{AB}\in\Herm\right\}.
\end{align}
Lastly, the block-diagonal form in the MIO and DI case in Eq.~\eqref{eq:dualCones_Coherence} follows directly from Eq.~\eqref{eq:freeSet_annahilatorMap}.
\end{proof}

For the proof of the following Propositions, recall that the relevant monotones are given by 
 
\begin{subequations}
\begin{align}
    M_{\cMIO,d}(\mE) &=\max\big\{\Tr{X \mE(\ketbra{j}{j})} : 0\leq X\leq d\id, \Delta X=\id, 0\le j \le d_A\!-\!1\big\} \label{eq:C_D_appen2} , \\
    M_{\mMIO,d}(\mE) &=\max\Big\{\sum_{i=0}^{d_A-1} \Tr{X_i\mE(\ketbra{i}{i}_A)}: p,q\in \Prob_{d_A}, 0\leq X_i\leq d q_i \id, \Delta X_i=p_i \id \Big \}, \label{eq:M_mMIO,d_appen} \\
    M_{\cDI,d}(\mE) &= \max \Big\{ \sum_{i=0}^{d_B-1}\braket{i|\mE(\rho_i)| i}_2 \!:\! \rho_i,\sigma \in \D, \rho_i\! \leq\! d \sigma,\Delta \rho_i \!=\!\Delta \sigma \, \forall i: 0\leq i \leq d_B\!-\!1\Big\}, \label{eq:D_d_appen2} \\
    M_{\mDI,d}(\mE) &= \max \Big\{\!\sum_{i=0}^{d_B-1} \!\braket{i|\mE(\rho_i)| i}_2 \!:\! \rho_i,\sigma \!\in \!\D, \rho_i\! \leq\! d \sigma,\Delta \rho_i \!=\!\Delta \rho_j \, \forall i,j: 0\!\leq\! i,j \!\leq \!d_B\!-\!1\Big\}. \label{eq:M_mDI,d_appen}
\end{align}
\end{subequations}

\ConversionMaximalSet*
\begin{proof}
For MIO, Thm.~\ref{thm:DiamondBound_Bounds_G} and Lem.~\ref{cor:R_Fd} reduce the claim to a reformulation of $G_{\aff(\MIO)}$. For DI, we first derive a direct bound on the diamond distance and worst-case fidelity, and then construct a superchannel attaining it. For $\O= \MIO$, we have 
\begin{align}
    \min_{\mS \in \mMIO_1}\tfrac{1}{2}\norm{\mS[\mE]\!-\!\mF_d}_\diamond \!\overset{\ref{thm:DiamondBound_Bounds_G}}{=}\! 1\!-\!\max_{\mS \in \mMIO_1} F_{\min}(\mS[\mE],\mF_d)\!\overset{\ref{thm:DiamondBound_Bounds_G}}{=}\! 1\!-\!G_{\aff(\MIO)}(\mE,R_{\max,\MIO}(\mF_d)) \overset{\ref{cor:R_Fd}}{=} 1\!-\!G_{\aff(\MIO)}(\mE,d),
\end{align}
where $G_{\aff(\MIO)}$ is defined in Eq.~\eqref{eq:G_aff_O}. It remains to show that $G_{\aff(\MIO)}(\mE,d)=\tfrac{1}{d}M_{\mO_1,d}(\mE)$. Invoking Lem.~\ref{lem:dualCones} yields
    \begin{align}\label{eq:G_O_affMIO}
    G_{\aff(\MIO)}(\mE,d)\! &= \!\sup \big\{ \!\Tr{WJ_\mE}: 0\leq W \leq \rho\otimes \id, \rho \in \D, \tfrac{ \id}{d_A}-d W \in \aff(\MIO)^*\big\}  \\
    &= \! \sup \big\{\! \Tr{WJ_\mE}: 0\!\leq \!W \!\leq\! \rho\otimes \id,\rho \in \D, \tfrac{ \id}{d_A}\!-\!d W \!=\! P\otimes \id \!+\!\sum_{i=0}^{d_A-1} \!\ketbra{i}{i}\otimes Q_i, \Tr{P}=0, \Delta Q_i\!=0\big\} \nonumber,
\end{align}
where we suppressed system indices being ordered as $A$, $B$ throughout. Now let $W$ be any feasible point of Eq.~\eqref{eq:G_O_affMIO}. Then, define
\begin{align}
    \tilde{W}:=\Delta_A W =\frac{1}{d}\left(\left(\frac{\id}{d_A}-\Delta(P)\right)\otimes \id -\sum_{i=0}^{d_A-1} \ketbra{i}{i}\otimes Q_i \right),
\end{align}
which satisfies $W-\tilde{W}= -\frac{1}{d}\left(P-\Delta(P)\right) \otimes \id$. Hence, from
\begin{align}
    \Tr{(W-\tilde{W})J_\mE}=-\frac{1}{d}\Tr{\left(P-\Delta(P)\right) \otimes \id J_\mE }= -\frac{1}{d}\Tr{P-\Delta(P)}=0,
\end{align}
where we used that $\mE$ is a channel and $\partTr{B}{J_{\mE}^{AB}}=\id^A$, it follows that the objective function remains invariant. Moreover, since $0\leq \tilde{W}\leq \Delta(\rho) \otimes \id$ because $\Delta$ is order-preserving, we can relax the optimization in Eq.~\eqref{eq:G_O_affMIO} to
    \begin{align}
    G_{\aff(\MIO)}(\mE,d)\mkern-4mu\leq  \sup \Bigr\{\!\Tr{\tilde{W}J_\mE}\!: \tilde{W}\mkern-4mu=\!\tfrac{1}{d}\!\Big(\!\tfrac{\id}{d_A}\!-\!\Delta P\otimes \id \!-\mkern-4mu\sum_{i=0}^{d_A-1}\mkern-4mu\ketbra{i}{i}\otimes Q_i \Big), 0\!\leq\! \tilde{W}\!\leq \Delta\rho\otimes \! \id, \Tr{P}\!=\!0, \Delta Q_i\!=\!0\Bigr\}.
\end{align}
Without loss of generality, for some $\alpha_i\in\mathbb{R}$ such that $\sum_i \alpha_i=0$, and a probability distribution $q_i$ take 
\begin{align}
    P=\sum_i \alpha_i \ketbra{i}{i}, \quad \text{and} \quad \Delta(\rho)=\sum_i q_i \ketbra{i}{i}.
\end{align}
 Then, let $ p_i=\frac{1}{d_A}-\alpha_i$ allows us to rewrite any feasible $\tilde{W}$ as $\tilde{W} =\frac{1}{d}\sum_i \ketbra{i}{i} \otimes \left( \id \left( \frac{1}{d_A}-\alpha_i\right)-Q_i\right) = \frac{1}{d}\sum_i \ketbra{i}{i}\otimes X_i$ with $X_i:= p_i \id -Q_i$. Thus, for any feasible $\tilde{W}$, we find
 \begin{align}
     \Tr{\tilde{W}J_\mE}=\frac{1}{d} \sum_{i=0}^{d_A-1} \Tr{X_i \mE(\ketbra{i}{i})}.
 \end{align}
 We now show that the $\{X_i\}$ form a feasible point of Eq.~\eqref{eq:M_mMIO,d_appen}. The constraint $0\leq \tilde{W}\leq \Delta(\rho)\otimes \id$ is equivalent to $0\leq X_i \leq d q_i \id$. Because $\Delta Q_i =0$, it follows that $\Delta X_i=p_i \id$. Moreover, $X_i \geq 0$ implies $p_i\geq 0$ and $\Tr{P}=0$ implies $\sum_i p_i=\sum_i \left(\frac{1}{d_A}-\alpha_i\right)=1$. As such, both $p_i,q_i$ are probability distributions, and thus
 \begin{align}\label{eq:G_upperBound}
     G_{\aff(\MIO)}(\mE,d)\leq \frac{1}{d}M_{\mMIO,d}(\mE).
 \end{align}
Conversely, let $(\{p_i\},\{q_i\},\{X_i\})$ be feasible in Eq.~\eqref{eq:M_mMIO,d_appen}. Then, define
\begin{align}
    \rho=\sum_{i=0}^{d_A-1} q_i \ketbra{i}{i}, \quad P:=\sum_{i=0}^{d_A-1} \left(\tfrac{1}{d_A}-p_i\right)\ketbra{i}{i} ,\quad  Q_i=p_i\id-X_i, \quad \text{and} \quad W= \frac{1}{d} \sum_{i=0}^{d_A-1}  \ketbra{i}{i} \otimes X_i.
\end{align}
Then, $0\leq X_i \leq d q_i\id$ implies $0\leq W\leq \rho\otimes \id$. And from $\Tr{P}=0$ and $\Delta (Q_i)=0$ (together with Lem.~\ref{lem:dualCones}) we have $\tfrac{ \id}{d_A}-d W \in \aff(\MIO)^*$, and as such $\frac{1}{d}M_{\mMIO,d}(\mE) \leq G_{\aff(\MIO)}(\mE,d)$, which, together with Eq.~\eqref{eq:G_upperBound}, finishes the proof for $\O=\MIO$.

We now turn to the case of $\O=\DI$. Let $\mS_1^{(A\to B) \to (C\to D)}\in \mDI_1$ with $d_C=d_D=d$. Let $\rho_i=F_d^\dagger \ketbra{i}{i}F_d$ for $i\in\{0,\ldots,d-1\}$. Using the sharpened Fuchs-van de Graaf inequality in Eq.~\eqref{eq:FuchsVanDeGraaf_channels_sharp_appen} gives
\begin{align}\label{eq:DI_mDI_bound}
    \tfrac{1}{2}\norm{\mS[\mE]-\mF_d}_\diamond&\geq  1- F_{\min}(\mS[\mE],\mF_d) =1- \inf_\rho F(\idChan\otimes \mS[\mE](\rho),\idChan\otimes\mF_d(\rho)) \geq 1-\Tr{\mS[\mE](\rho_i) \mF_d(\rho_i)} \nonumber \\
    &= 1-\braket{i|\mS[\mE](\rho_i)|i},
\end{align}
where we used that $\mF_d$ is unitary and that $\mF_d(\rho_i)=\ketbra{i}{i}$. This holds for any $i\in\{0,\ldots,d-1\}$, and we can thus average over all of them to obtain $\tfrac{1}{2}\norm{\mS[\mE]-\mF_d}_\diamond\geq 1-\tfrac{1}{d}\sum_{i=0}^{d-1} \braket{i|\mS[\mE](\rho_i)|i}$. Define
\begin{align}
    T^{CD}=  \frac{1}{d}\sum_{i=0}^{d-1} (\rho_i^C)^T\otimes \ketbra{i}{i}_D,\qquad \text{and} \qquad W_{\mS}^{AB}:= \left(\partTr{CD}{J_{\mS}^{ABCD}(\id^{AB}\otimes T^{CD})}\right)^T.
\end{align}
A straightforward calculation reveals that
\begin{align}\label{eq:S_i_detection}
    \frac{1}{d}\sum_{i=0}^{d-1} \braket{i|\mS[\mE](\rho_i)|i} &= \frac{1}{d}\sum_{i=0}^{d-1} \Tr{\left((\rho_i^C)^T\otimes \ketbra{i}{i}_D\right) J_{\mS[\mE]}^{CD}} = \Tr{T^{CD} J_{\mS[\mE]}^{CD}} = \Tr{J_{\mS}^{ABCD}\left(T^{CD}\otimes (J_{\mE}^{AB})^T\right)} \nonumber \\
    &=\Tr{W_{\mS}^{AB} J_{\mE}^{AB}}.
\end{align}
As such, we can rewrite the bound in Eq.~\eqref{eq:DI_mDI_bound} as
\begin{align}
    \tfrac{1}{2}\norm{\mS[\mE]-\mF_d}_\diamond&\geq 1- \Tr{W_{\mS}^{AB} J_{\mE}^{AB}}.
\end{align}
Next, we show that this $W$ is feasible in
\begin{align}\label{eq:G-aff_DI_rest}
    G_{\aff(\DI)}(\mE^{A\to B}, d)\!=\!\sup\left\{ \Tr{J_\mE^{AB} W^{AB}}\!:\! 0\!\leq\! W^{AB} \!\leq\! \rho^A\otimes \id^B, \rho\!\in\! \D, \Tr{W^{AB} J_\mM^{AB}}\!=\!\tfrac{1}{d} \,\forall \mM^{A\to B} \in \O(A\!\to\! B)\right\}.
\end{align}
To this end, first note that $\partTr{CD}{J_{\mS}^{ABCD}(\id^{AB}\otimes T^{CD})}=\partTr{CD}{(\id^{AB}\otimes \sqrt{T^{CD}})J_{\mS}^{ABCD}(\id^{AB}\otimes \sqrt{T^{CD}})} \geq 0 $, and thus, $W\geq 0$. Moreover, we have that $T^{CD}\leq \tfrac{1}{d} \id^{CD}$, and thus
\begin{align}
    W^{AB}\leq \left(\partTr{CD}{J_{\mS}^{ABCD}(\id^{AB}\otimes \tfrac{1}{d} \id^{CD})}\right)^T = \frac{\partTr{CD}{J_{\mS}^{ABCD}}^T}{d}= \tfrac{1}{d}\partTr{C}{M^{CA}}^T \otimes \id^B=\sigma^A\otimes \id^B,
\end{align}
where we used that $\partTr{D}{J_{\mS}^{ABCD}}=M^{CA}\otimes \id^B$ for some $\partTr{A}{M^{CA}}=\id^C$, and we can thus introduce the state $\sigma^A:=\tfrac{1}{d}\partTr{C}{M^{CA}}^T$. To show that $\Tr{J_{\mM}^{AB} W_{\mS}^{AB}}=\frac{1}{d} \, \forall \mM\in \DI$, note that we can use Eq.~\eqref{eq:S_i_detection} to express (simply using an channel $\mM$ instead of $\mE$) this as
\begin{align}
    \Tr{J_{\mM}^{AB} W_{\mS}^{AB}}= \frac{1}{d}\sum_{i=0}^{d-1} \braket{i|\mS[\mM](\rho_i)|i}= \frac{1}{d}\sum_{i=0}^{d-1} \braket{i|\tilde{\mM}(\Delta \rho_i)|i}= \frac{1}{d}\sum_{i=0}^{d-1} \braket{i|\tilde{\mM}(\tfrac{1}{d}\id)|i}= \frac{1}{d}
\end{align}
where we used that $\tilde{\mM}:=\mS[\mM] \in \DI$ since $\mS \in \mDI_1$, and $\Delta(\rho_i)=\frac{1}{d}\id$. Thus, this $W_{\mS}^{AB}$ is feasible in Eq.~\eqref{eq:G-aff_DI_rest}, and $ G_{\aff(\DI)}(\mE, d)\geq \Tr{W_{\mS}^{AB} J_{\mE}^{AB}}$. Therefore, combining this with Eq.~\eqref{eq:DI_mDI_bound} and the fact that we can take the minimum over all free superchanels yields
\begin{align}
    \min_{\mS\in \mDI_1}\tfrac{1}{2}\norm{\mS[\mE]-\mF_d}_\diamond&\geq 1- G_{\aff(\DI)}(\mE,d).
\end{align}
Conversely, define 
\begin{align}
    J_{\mK}^{CD} := (\id^C\otimes F_d^D) \tilde{\Pi}^{CD}(\id^C\otimes F_d^D)^\dagger \qquad \text{and} \qquad J_{\mQ}^{CD}:=\frac{dJ_{\mK}^{CD}-J_{\mF_d}^{CD}}{d-1}
\end{align}
where $\tilde{\Pi}^{CD}$ denotes the unnormalized maximally correlated state as in Eq.\eqref{eq:mG}. These are Choi states since the corresponding channel is $\mK=\mF_d\circ\Delta$, and as shown around Eq.~\eqref{eq:mG} we have $J_{\mG}^{CD}\geq 0$ and $(d-1) J_{\mG}^{CD}=dJ_{\mK}-J_{\mF_d}$ . Additionally, $\partTr{D}{J_{\mQ}}=\id^C$. Now, take a feasible point $(W,\sigma)$ in Eq.~\eqref{eq:G-aff_DI_rest}. Now define the superchannel
\begin{align}
    J_{\mS}^{ABCD} = (W^{AB})^T \otimes J_{\mF_d}^{CD} +(\sigma^A\otimes \id^B-W^{AB})^T\otimes J_{\mQ}^{CD}.
\end{align}
For any channel $\mL$, the superchannel acts as
\begin{align}
    \mS[\mL] &=\Tr{W^{AB} J_{\mL}^{AB}} \mF_d\!+\!\left(\Tr{(\sigma^A\otimes \id^B)J_{\mL}^{AB}}\!-\!\Tr{W^{AB} J_{\mL}^{AB}}\right) \mQ \nonumber \\
    &=\Tr{W^{AB} J_{\mL}^{AB}} \mF_d+\left(1-\Tr{W^{AB} J_{\mL}^{AB}}\right) \mQ.
\end{align}
In particular, for any $\mM\in \DI$ we have $\Tr{W^{AB} J_{\mM}^{AB}} =\frac{1}{d}$, and thus, $\mS[\mM]=\tfrac{1}{d}(\mF_d+(d-1)\mQ) =\mK \in \DI$. Additionally, we have
\begin{align}
    \min_{\mS\in \mDI_1}\tfrac{1}{2}\norm{\mS[\mE]-\mF_d}_\diamond \leq \tfrac{1}{2} \left(1-\Tr{W^{AB} J_{\mE}^{AB}}\right) \norm{\mQ-\mF_d}_\diamond \leq  1-\Tr{W^{AB} J_{\mE}^{AB}},
\end{align}
and taking the infimum over all feasible points $(W,\sigma)$ yields $\min_{\mS\in \mDI_1}\tfrac{1}{2}\norm{\mS[\mE]-\mF_d}_\diamond\leq 1- G_{\aff(\DI)}(\mE,d)$.

Next, using Lem.~\ref{lem:dualCones} yields
\begin{align}
    G_{\aff(\DI)}(\mE,d) \!&=\! \sup\! \Big\{ \Tr{WJ_\mE}: 0\leq W \leq \sigma\otimes \id, \tfrac{ \id}{d_A}-d W \in \aff(\DI)^*,\sigma \in \D\Big\}  \nonumber \\
    &= \max \Big\{\!\Tr{WJ_\mE}:\!0\!\leq\! W \!\leq \sigma\otimes \id, \frac{\id}{d_A}\!-\!d W\!=\!P\!\otimes\!\id\!+\!\sum_{i=0}^{d_B-1}\!Q_i\!\otimes\!\ketbra{i}{i}, \Tr{P}\!=\!0, \Delta Q_i\!=\!0,\sigma \in \D\Big\},
\end{align}
where $P,Q \in \Herm$. This allows us to write 
\begin{align}
    W=\frac{1}{d}\sum_i \left(\frac{\id}{d_A}-P-Q_i\right) \otimes \ketbra{i}{i} =: \frac{1}{d}\sum_i \rho_i^T \otimes \ketbra{i}{i},
\end{align}
where the $\rho_i$ are states since $W\geq 0$ implies non-negativity and $\Tr{\rho_i}=1-\Tr{P}-\Tr{Q_i}=1$. Moreover,  $\Delta(\rho_i)=\frac{\id}{d_A}-\Delta(P)$, i.e., the diagonals of all $\rho_i$ are identical. Moreover, the constraint $0\leq W\leq \sigma \otimes \id$ becomes equivalent to $0\leq \rho_i \leq d\sigma^T \, \forall i$. Thus,
    \begin{align}
    G_{\aff(\DI)}(\mE,d)&= \frac{1}{d} \max \left\{ \sum_{i=0}^{d_B-1} \braket{i|\mE(\rho_i)|i}: 0\leq \rho_i \leq d\sigma^T, \Delta \rho_i=\Delta \rho_j , \rho_i,\sigma \in \mD \, \forall i,j\right\} = \frac{1}{d} M_{\mDI,d}(\mE),
\end{align}
where the last equality follows from substituting the variable $\sigma$ with its transpose.\qedhere
\end{proof}

Lastly, we show that the conversion distances under compatible superchannels and maximally free superchannels differs.

\DIConversionDistance*
\begin{proof}
In the $\O =\MIO$ case, let $d_A=2$ and $d_B=3$. Furthermore, let $\ket{\psi}=\tfrac{1}{\sqrt{3}}(\ket{0}+\ket{1}+\ket{2})$, and define the ``measure-and-prepare" channel
    \begin{align}
        \mE^{A\to B}(Y)=\braket{0|Y|0} \tau_0^B+\braket{1|Y|1} \tau_1^B, \quad \text{where} \quad  \tau_0^B=\frac{\id^B}{6}+\frac{\ketbra{\psi}{\psi}_B}{2}, \quad \text{and} \quad \tau_1^B=\frac{\id^B-\ketbra{\psi}{\psi}_B}{2}.
    \end{align}
We will now prove the claim for this specific channel. To this end, recall that for $M_{\cMIO,2}(\mE)$ in Eq.~\eqref{eq:C_D_appen2} we have $0\leq X^B\leq 2\id^B$ and $\Delta X^B=\id^B$. Therefore, $\Tr{X}=3$ and $0\leq\braket{\psi|X^B|\psi}\leq 2$, and thus
\begin{align}
    M_{\cMIO,2}(\mE) =\max \left\{ \max\left\{\frac{1+\braket{\psi|X^B|\psi}}{2}, \frac{3-\braket{\psi|X^B|\psi}}{2}\right\}: 0\leq X^B\leq 2\id^B, \Delta X^B=\id^B\right\} \leq \frac{3}{2}.
\end{align}
On the other hand, consider $M_{\mMIO,2}(\mE)$ and choose the variables $p=(\tfrac{1}{3},\tfrac{2}{3})$, $q=(\tfrac{1}{2},\tfrac{1}{2})$, $X_0^B=\ketbra{\psi}{\psi}_B$, and $X_1^B= \id^B-\ketbra{\psi}{\psi}_B$. Then, $0\leq X_i^B \leq 2q_i \id^B$, $\Delta X_0^B=\tfrac{1}{3} \id^B$, and $\Delta X_1^B=\tfrac{2}{3} \id^B$. Thus, using Eq.~\eqref{eq:M_mMIO,d_appen} yields
\begin{align}
    M_{\mMIO,2}(\mE) \geq \Tr{X_0^B\tau_0^B}+\Tr{X_1^B\tau_1^B}= \frac{2}{3}+1 =\frac{5}{3} > \frac{3}{2} \geq M_{\cMIO,2}(\mE).
\end{align}

For $\O=\DI$, we proceed in a similar manner. First, we construct a specific channel and then we bound the relevant monotones on it. To this end, let $d_A=4$, $d_B=3$, and 
\begin{align}
    \ket{\phi_0}=\tfrac{1}{\sqrt{2}} (\ket{0}+\ket{1}), \quad \ket{\phi_1}=\tfrac{1}{\sqrt{2}} (\ket{0}-\ket{1}),\quad \ket{\phi_2}=\tfrac{1}{\sqrt{2}} (\ket{2}+\ket{3}), \quad \ket{\phi_3}=\tfrac{1}{\sqrt{2}} (\ket{2}-\ket{3}).
\end{align}
Set $\Pi_i^A=\ketbra{\phi_i}{\phi_i}_A$ and define $K^A:= 2 (\ketbra{\phi_1}{\phi_3}_A+\ketbra{\phi_3}{\phi_1}_A)$ and $L^A:=\Pi_0^A-\Pi_1^A-\Pi_2^A+\Pi_3^A$. Define 
\begin{align}
    M_0^A:= \frac{\id^A}{3}+\frac{K^A+L^A}{8}, \quad M_1^A:=\frac{\id^A}{3}+\frac{L^A-K^A}{8}, \quad \text{and} \quad  M_2^A:= \frac{\id^A}{3}-\frac{L^A}{4}.
\end{align}
Since $(L^A)^2 = I^A $, $ (K^A)^2 = 4(\Pi_1+\Pi_3)$, and $KL+LK = 0$,
we have
\begin{align}
 (L^A\pm K^A)^2 &= (L^A)^2+(K^A)^2\pm(L^AK^A+K^AL^A)=I^A+4(\Pi_1^A+\Pi_3^A)\leq5I^A.
\end{align}
As $L^A\pm K^A$ are Hermitian, this implies $-\sqrt{5}\,I^A\leq L^A\pm K^A\leq\sqrt{5}\,I^A$, and therefore $ M_j\geq\left(\tfrac13-\tfrac{\sqrt{5}}8\right)I^A\geq0,\quad j\in\{0,1\}$ as well as $M_2^A=\frac{I^A}{3}-\frac{L^A}{4}\geq\frac{1}{12}I^A\geq0$, following from $(L^A)^2=I^A$, and thus $-I^A\leq L^A\leq I^A$. Together with $M_0^A+M_1^A+M_2^A=I^A$, this shows that $\{M_0,M_1,M_2\}$ is a POVM. Now define the channel
\begin{align}
    \mE^{A\to B}(X^A):= \sum_{i=0}^{2} \Tr{M_i^A X^A} \ketbra{i}{i}_B.
\end{align}
Next, we provide a lower bound on $M_{\mDI,2}(\mE)$. Consider the states
\begin{align}
 \sigma &=\frac9{44}(\Pi_0+\Pi_2)+\frac{13}{44}(\Pi_1+\Pi_3), \quad \text{and}\quad \rho_0 =\frac9{22}\Pi_0+\frac1{22}\bigl(2\ket{\phi_1}+3\ket{\phi_3}\bigr)\bigl(2\bra{\phi_1}+3\bra{\phi_3}\bigr), \\
 \rho_1 &=\frac9{22}\Pi_0+\frac1{22}\bigl(2\ket{\phi_1}-3\ket{\phi_3}\bigr)\bigl(2\bra{\phi_1}-3\bra{\phi_3}\bigr), \quad \text{and}\quad \rho_2 =\frac{13}{22}\Pi_1+\frac9{22}\Pi_2.
\end{align}
To verify the domination constraints $2\sigma\geq \rho_i$, note that
\begin{align}
 2\sigma\!-\!\rho_0&\!=\!\frac9{22}\Pi_2+\frac{13}{22}(\Pi_1\!+\!\Pi_3)-\frac1{22}\bigl(2\ket{\phi_1}+3\ket{\phi_3}\bigr)\bigl(2\bra{\phi_1}+3\bra{\phi_3}\bigr) \nonumber \\
 &=\frac9{22}\Pi_2+\frac1{22}\bigl(9\Pi_1+4\Pi_3-6\ket{\phi_1}\!\bra{\phi_3}-6\ket{\phi_3}\!\bra{\phi_1}\bigr)\!=\!\frac9{22}\Pi_2\!+\!\frac1{22}\bigl(3\ket{\phi_1}\!-\!2\ket{\phi_3}\bigr)\bigl(3\bra{\phi_1}\!-\!2\bra{\phi_3}\bigr)\!\geq\!0,
\end{align}
where we used $\Pi_j=\ket{\phi_j}\!\bra{\phi_j}$. Analogously, by reversing the signs of the off-diagonal terms we find $2\sigma-\rho_1\geq0$. Additionally,  $2\sigma-\rho_2=\frac9{22}\Pi_0+\frac{13}{22}\Pi_3\geq0$. Thus, $\rho_i\leq2\sigma$ for all $i\in\{0,1,2\}$, and clearly $\Delta \rho_0 =\Delta \rho_1 =\Delta \rho_2$. A straightforward calculation yields $\Tr{K\rho_0} =\frac{12}{11}$, $ \Tr{L\rho_0} =\frac{7}{11}$, $\Tr{K\rho_1} =-\frac{12}{11}$, $\Tr{L\rho_1} =\frac{7}{11}$, and $\Tr{L\rho_2} =-1$. As such, $\Tr{M_0\rho_0} =\frac{1}{3}+\frac{1}{8}\left(\frac{12}{11}+\frac{7}{11}\right)=\frac{145}{264}$, $\Tr{M_1\rho_1} =\frac{1}{3}+\frac{1}{8}\left(\frac{12}{11}+\frac{7}{11}\right)=\frac{145}{264}$, and $\Tr{M_2\rho_2} =\frac{1}{3}-\frac{1}{4}(-1)=\frac{7}{12}$. This yields
\begin{align}\label{eq:M_mDI_bound}
 M_{\mDI,2}(\mE)
 &\geq\sum_{i=0}^{2}\bra{i}\mE_(\rho_i)\ket{i}
 =\sum_{i=0}^{2}\Tr{M_i\rho_i}
 =\frac{145}{264}+\frac{145}{264}+\frac7{12}
 =\frac{37}{22}.
\end{align}

Next, we provide an upper bound to $M_{\cDI,2}(\mE)$. To this end, we switch Eq.~\eqref{eq:D_d_appen2} to its dual problem. Introduce the dual variables $Y_i\geq 0$ for $d\sigma-\rho_i\geq 0$, diagonal Hermitian $H_i=\Delta(H_i)$ for $\Delta(\rho_i)-\Delta(\sigma)=0$, and $\lambda_i,\mu\in\mathbb R$ for the trace constraints. Define $M_i=\mE^\dagger(\ketbra{i}{i})$, then the Lagrangian is
\begin{align}
L(\sigma,\{\rho_i\}; \{Y_i\}, \{\lambda_i \}, \mu)&=\sum_i \Tr{M_i\rho_i}
+\sum_i \operatorname{Tr}\!\left[Y_i(d\sigma-\rho_i)\right]+\sum_i \Tr{H_i\bigl(\Delta(\rho_i)-\Delta(\sigma)}\bigr)+\sum_i\lambda_i\left(1-\Tr{\rho_i} \right) \nonumber  \\
& \qquad +\mu\left(1-\Tr{\sigma}\right) \nonumber\\
&=\mu+\sum_i\lambda_i+\sum_i\Tr{\left(M_i-Y_i+H_i-\lambda_i \id\right)\rho_i}+\Tr{\left(d\sum_iY_i-\sum_iH_i-\mu \id\right)\sigma},
\end{align}
where we used that each $H_i$ is diagonal. Taking the supremum over $\rho_i\geq 0$ and $\sigma\geq 0$ is finite iff
\begin{align}
M_i-Y_i+H_i-\lambda_i \id\leq0, \quad\text{and}\quad d\sum_iY_i-\sum_iH_i-\mu \id \leq0.
\end{align}
Thus, the dual problem is (all sums over $i$ run over $0\leq i\leq d_2-1$)
\begin{align}
    \min\left\{\!\mu\!+\!\sum_{i}\!\lambda_i: \lambda_i \id\!+\!Y_i\!-H_i\! \geq M_i, \mu \id\geq d\sum_{i}\!Y_i-\sum_i H_i, Y_i \geq 0, H_i=\Delta(H_i), H_i \in \Herm ,\lambda_i,\mu \in \mathbb{R}\right\},
\end{align}
where strong duality holds by Slater's conditions since $\rho_i=\sigma=\frac{\id}{d_A} \,\forall i$ is strictly feasible in the primal problem. In the dual problem, now choose $\lambda_i=0$ for every $i$, $\mu=13/8$, together with
\begin{align}
 Y_0=\frac1{16}\left[\Pi_0+\bigl(\ket{\phi_1}+2\ket{\phi_3}\bigr)\bigl(\bra{\phi_1}+2\bra{\phi_3}\bigr)\right],\quad Y_1=\frac1{16}\left[\Pi_0+\bigl(\ket{\phi_1}-2\ket{\phi_3}\bigr)\bigl(\bra{\phi_1}-2\bra{\phi_3}\bigr)\right], \quad Y_2=\frac12\Pi_2,
\end{align}
and
\begin{align}
 H_0=H_1&=-\frac{19}{48}(\Pi_0+\Pi_1)-\frac{13}{48}(\Pi_2+\Pi_3),\quad \text{and} \quad H_2=-\frac7{12}(\Pi_0+\Pi_1)-\frac1{12}(\Pi_2+\Pi_3).
\end{align}
The operators $Y_i$ are positive by construction. Moreover, the $H_i$ are diagonal in the computational basis because
$\Pi_0+\Pi_1=\ket{0}\!\bra{0}+\ket{1}\!\bra{1}$ and
$\Pi_2+\Pi_3=\ket{2}\!\bra{2}+\ket{3}\!\bra{3}$. A direct calculation yields
\begin{subequations}
\begin{align}
 Y_0-H_0-M_0&=\frac1{16}\Pi_2+\frac1{16}\bigl(2\ket{\phi_1}-\ket{\phi_3}\bigr)\bigl(2\bra{\phi_1}-\bra{\phi_3}\bigr)\geq0,\\
 Y_1-H_1-M_1&=\frac1{16}\Pi_2+\frac1{16}\bigl(2\ket{\phi_1}+\ket{\phi_3}\bigr)\bigl(2\bra{\phi_1}+\bra{\phi_3}\bigr)\geq0,\\
 Y_2-H_2-M_2&=\frac12\Pi_0\geq0.
\end{align}
\end{subequations}
Furthermore, 
\begin{align}
 2\sum_{i=0}^{2}Y_i-\sum_{i=0}^{2}H_i=\frac{13}{8}\id=\mu\id.
\end{align}
Hence this is a dual-feasible point with objective value $\mu+\sum_{i=0}^{2}\lambda_i=\frac{13}{8}.$ As such, this choice yields
\begin{align}
 M_{\cDI,2}(\mE)\leq\frac{13}{8}=1.625 < 1.68 \approx \frac{37}{22} \overset{\eqref{eq:M_mDI_bound}}{\leq} M_{\mDI,2}(\mE),
\end{align}
which proves the separation between the two monotones.
\end{proof}

\end{document}